\def\withnotes{0}
\newcommand{\e}{\varepsilon}
\newcommand{\iprod}[1]{\langle #1 \rangle}
\newcommand{\mper}{\, .}
\newcommand{\mcl}{\mathcal}
\DeclareMathOperator{\Gram}{Gram}
\DeclareMathOperator{\Sum}{Sum}
\newcommand{\fS}{\mathfrak{S}}
\newcommand{\bone}{\mathbf{1}}
\newcommand{\Filter}{\textsf{Filter}}
\newcommand{\sidebysidecaption}[4]{\raggedright \begin{minipage}[t]{#1}
    \vspace*{0pt}
    #3
  \end{minipage}
  \hfill \begin{minipage}[t]{#2}
    \vspace*{0pt}
    #4
\end{minipage}}
\definecolor{identifiercolor}{rgb}{.4,.6,.56}
\definecolor{stringcolor}{gray}{0.5}
\definecolor{inactivecolor}{rgb}{0.15,0.15,0.5}
\algrenewcommand\algorithmicrequire{\textbf{Input:}}
\algrenewcommand\algorithmicreturn{\textbf{Output:}}
\title{The Full Landscape of Robust Mean Testing: Sharp Separations between Oblivious and Adaptive Contamination
}
\author{Cl\'{e}ment Canonne\thanks{\texttt{clement.canonne@sydney.edu.au}. Supported by an ARC DECRA (DE230101329) and an unrestricted gift from Google Research.} \\ University of Sydney \and Samuel B. Hopkins\thanks{\texttt{samhop@mit.edu}. Supported by NSF Award No. 2238080 and MLA@CSAIL} \\ MIT  \and Jerry Li\thanks{\texttt{jerrl@microsoft.com}.} \\ Microsoft Research \and Allen Liu\thanks{\texttt{cliu568@mit.edu} Supported by an NSF Graduate Research Fellowship and a Fannie and John Hertz Foundation Fellowship.} \\ MIT \and Shyam Narayanan\thanks{\texttt{shyamsn@mit.edu}. Supported by an NSF Graduate Fellowship and a Google Fellowship.} \\ MIT}
\date{\today}
\begin{document}

\maketitle

\begin{abstract}
We consider the question of Gaussian mean testing, a fundamental task in high-dimensional distribution testing and signal processing, subject to adversarial corruptions of the samples. We focus on the relative power of different adversaries, and show that, in contrast to the common wisdom in robust statistics, there exists a strict separation between adaptive adversaries (strong contamination) and oblivious ones (weak contamination) for this task. Specifically, we resolve both the information-theoretic and computational landscapes for robust mean testing. In the exponential-time setting, we establish the tight sample complexity of testing $\cN(0,I)$ against $\cN(\alpha v, I)$, where $\|v\|_2 = 1$, with an $\varepsilon$-fraction of adversarial corruptions, to be
\[
    \tilde{\Theta}\!\left(\max\left(\frac{\sqrt{d}}{\alpha^2}, \frac{d\varepsilon^3}{\alpha^4},\min\left(\frac{d^{2/3}\varepsilon^{2/3}}{\alpha^{8/3}}, \frac{d \varepsilon}{\alpha^2}\right)\right) \right) \,,
\]
while the complexity against adaptive adversaries is
\[
    \tilde{\Theta}\!\left(\max\left(\frac{\sqrt{d}}{\alpha^2}, \frac{d\varepsilon^2}{\alpha^4} \right)\right) \,,
\]
which is strictly worse for a large range of vanishing $\varepsilon,\alpha$. To the best of our knowledge, ours is the first separation in sample complexity between the strong and weak contamination models. 

In the polynomial-time setting, we close a gap in the literature by providing a polynomial-time algorithm against adaptive adversaries achieving the above sample complexity $\tilde{\Theta}(\max({\sqrt{d}}/{\alpha^2}, {d\varepsilon^2}/{\alpha^4} ))$, and a low-degree lower bound (which complements an existing reduction from planted clique) suggesting that all efficient algorithms require this many samples, even in the oblivious-adversary setting.
\end{abstract}

\thispagestyle{empty}

\if\withnotes=1
\listoftodos
\fi

\newpage

\setcounter{tocdepth}{2}

\thispagestyle{empty}
\bgroup\small
\tableofcontents
\egroup
\newpage 
\setcounter{page}{1}
\section{Introduction}
Among all high-dimensional distribution testing (i.e., hypothesis testing) problems, \emph{Gaussian mean testing} is one of the most basic, with connections to signal processing where it corresponds to \emph{signal detection} under white noise. Given $n$ independent samples $X_1,\ldots,X_n \in \R^d$, the goal is to decide between two hypotheses:
\begin{quote}
    $\mathbf{H}_0$: $X_1,\ldots,X_n$ were drawn from $\cN(0,I)$, an origin-centered identity-covariance Gaussian.\\
    $\mathbf{H}_1$: $X_1,\ldots,X_n$ were drawn from $\cN(\mu,I)$ for some vector $\mu$ with $\normtwo{\mu} \geq \dst$.
\end{quote}
The following simple tester uses only $\Theta(\sqrt{\dims}/\dst^2)$ samples, the information-theoretic optimum: reject the null iff the norm of the empirical mean \smash{$\big\|\frac 1 n \sum_{i=1}^n X_i \big\|_2$} is larger than some well-chosen threshold. The number of samples scales as the square root of the dimension: in contrast, $\Theta(\dims/\dst^2)$ samples (linear in the dimension) are needed to \emph{learn} the mean $\mu$ of a Gaussian $\cN(\mu,I)$ up to $\lp[2]$ error $\dst$.
This $d$-vs-$\sqrt d$ gap is a prime example of a core theme in the literature on distribution testing: testing requires fewer samples than learning.

This simple tester is not robust to even a small fraction of adversarially corrupted samples.
Concretely, suppose that an $\cor$-fraction of the samples $X_1,\ldots,X_n$ are chosen by a malicious adversary.
Even after preprocessing the dataset by removing obvious outliers~--~say, $X_i$ such that $\normtwo{X_i} \gg \E \normtwo{X_i} \approx \sqrt{d}$ -- the simple tester with $\Theta(\sqrt{\dims}/\dst^2)$ samples can be fooled by just a single corrupted sample.

Robust distribution testing has been extensively studied in robust statistics (the sub-field of statistics dealing with adversarially-corrupted data) \cite{DiakonikolasKS17,DiakonikolasK:ARS}, and yet basic questions about robust mean testing remain open. 
Most importantly: \emph{what is the sample-optimal robust mean tester?}
As we show, the answer to this question is intimately intertwined with another unanswered question in robust statistics: \emph{how much does it matter if the adversary sees the uncorrupted portion of the dataset?}

We find the latter question interesting for (at least) two reasons.
First, it is a foundational question about the power of statistical adversaries -- since modeling assumptions can have a strong effect on algorithm design, it is important to understand the consequences of basic assumptions.
We are not the first to ask the question from this perspective; see also recent work of Blanc, Lange, Malik, and Tan~\cite{blanc2022power}.
Second, the question is pertinent to \emph{data poisoning attacks} in machine learning
\cite{deng2021separation,goldblum2022dataset}, where an adversary injects a small amount of malicious training data into a machine learning pipeline.
Such attacks can be feasible in practice 
 and hence are a significant concern \cite{kumar2020adversarial}.
If an \emph{oblivious} adversary is strictly less powerful than an \emph{adaptive} one, then keeping the training data secret is a potential (partial) defense against data poisoning.

It turns out that oblivious and adaptive adversaries have equal power for robust mean testing's close (and intensely studied \cite{DiakonikolasK:ARS}) cousin, robust mean \emph{estimation}.\footnote{Here we mean that the \emph{sample complexity} of robust mean estimation is insensitive to details of the adversary's power. However, some separations are known, for instance between \emph{additive} versus \emph{additive and subtractive} adversaries in the polynomial-time setting \cite{diakonikolas2018robustly}. See \cref{sec:related} for further discussion.}
Here, the goal is to estimate $\mu$ up to $\lp[2]$ error $\dst$ -- in both adaptive and oblivious cases this requires $\Theta(\tfrac{\dims}{\dst^2})$ samples.
Indeed, this appears to be the case for a range of robust estimation problems, including covariance estimation and linear regression.
This suggests a conventional wisdom in robust statistics: adaptivity does not buy statistical adversaries additional power.

Returning to robust mean testing, recent work by Narayanan \cite{narayanan2022privatetesting} shows that the sample complexity of robust mean testing against an adaptive adversary is $\tilde{\Theta}(\max({\sqrt{\dims}}/{\dst^2}, {\dims \cor^2}/{\dst^4}))$.\footnote{Narayanan's work focuses on differentially private mean testing, but this result can be extracted using known reductions between robustness and privacy.}
This brings us to:
\vspace{-0.35em}
\begin{center}
    \textbf{Main Question:} \emph{What is the optimal robust mean tester against an \textbf{oblivious} adversary? Are the sample complexities of testing against adaptive and oblivious adversaries the same, as they are in robust estimation?}
\end{center}

We answer this question by showing that the common wisdom~--~being resilient to stronger adversaries comes essentially ``for free''~--~does \emph{not} extend to mean testing, where being robust against an oblivious adversary is strictly easier than against a fully adaptive one (\cref{theo:oblivious:informal})!
In fact, we resolve (up to log factors) the sample complexity of robust Gaussian mean testing in the presence of an oblivious adversary, by designing a new robust mean tester and proving a nearly-matching information-theoretic lower bound.

To make the landscape even more interesting, we also show that this separation vanishes when one requires the tester to be \emph{computationally efficient}.
We first give a polynomial-time (in fact, quadratic time) variant of Narayanan's tester, and then we obtain a lower bound against a large class of efficient algorithms (``low-degree algorithms'') which shows a matching sample complexity against both oblivious and adaptive adversaries (\cref{theo:sos-lb:informal}).
(This complements a reduction from planted clique by Brennan and Bresler \cite{brennan2020reducibility} which also suggests that efficient algorithms require $\tfrac{\dims \cor^2}{\dst^4}$ samples even against oblivious adversaries.)
One consequence is a new statistical-computational gap for robust mean testing against an oblivious adversary.

In order to discuss our results in more detail, we describe in the next section the standard adversarial corruption models we consider in our work, and how they relate.
Then we state our results and provide an overview of the new techniques and ideas that underlie our proofs and algorithms.

\subsection{Types of Adversaries} 
    \label{ssec:adv-models}

We focus on two main types of adversarial corruptions: namely, the \emph{adaptive} (strong) and \emph{oblivious} corruption models. These have a long history in Statistics and Algorithmic Robust Statistics; see~\cite{DiakonikolasK19,DiakonikolasK:ARS} for a more thorough discussion. In what follows, we assume that the corruption rate $\cor$ is provided to the algorithm. Note that this is without loss of generality, as, given $\dims$, $\dst$, and the expressions of the sample complexities, the algorithm can compute the largest value of $\cor$ it can tolerate for a given number $\ns$ of samples.

The first corruption model allows an \emph{adaptive} adversary to look at the samples, and choose an $\cor$-fraction of them to alter arbitrarily. Which subset of the samples was corrupted is unknown to the algorithm.
\begin{definition}[Strong contamination model] 
    \label{def:adaptive}
In the strong contamination model, $\ns$ \iid samples $X'_1,\dots,X'_\ns$ are drawn from the underlying unknown distribution $\cD$. The adversary, upon observing $X'_1,\dots,X'_\ns$, chooses $\cor \ns$ indices $i_1,\dots, i_{\cor\ns}$ and values $X''_{i_1}, \dots, X''_{i_{\cor\ns}}$. The algorithm then receives the sequence $X_1,\dots, X_\ns$, where
$X_{i_j} = X''_{i_j}$ for all $j\in[\cor\ns]$, and 
$X_i=X'_i$ otherwise.
Crucially, both the $\cor\ns$ indices and the values $X''_i$ can depend on the ``uncorrupted'' samples $X'_1,\dots,X'_\ns$.
\end{definition}

In contrast, in the \emph{oblivious} contamination model, the adversary must commit to which fraction of the samples it will corrupt, and how, \emph{before} observing the actual realization of the samples. (It is, however, allowed knowledge of both the specification of the algorithm and the underlying distribution.)
\begin{definition}[Oblivious contamination model] 
    \label{def:oblivious}
The adversary chooses $\cor \ns$ indices $i_1,\dots, i_{\cor\ns}$ and values $X''_{i_1}, \dots, X''_{i_{\cor\ns}}$. Then $\ns$ \iid samples $X'_1,\dots,X'_\ns$ are drawn from the underlying unknown distribution $\cD$, and the algorithm is provided with the sequence $X_1,\dots, X_\ns$, as in~\cref{def:adaptive}.
\end{definition}
This definition does allow the corrupted samples to be chosen in a correlated fashion; however, they cannot depend on the realizations of the uncorrupted points themselves. 
This oblivious model can be further weakened, leading to what is known as the \emph{Huber contamination model} where the corrupted data points themselves must be chosen independently of each other:
\begin{definition}[Huber contamination model] 
In the Huber contamination model, the adversary chooses a corruption distribution $\tilde{\cD}$ (possibly a function of the algorithm and underlying unknown distribution $\cD$). Then $\ns$ \iid samples $X_1,\dots,X_\ns$ are drawn from the mixture $(1-\cor)\cD + \cor \tilde{\cD}$, and provided to the algorithm.
\end{definition}
While the focus of our work is on the adaptive and oblivious contamination models, some of our lower bounds apply even to the weaker Huber contamination model.
\vspace{-0.5em}
\subsection{Our Results}
Our main result settles the sample complexity of robust mean testing under oblivious contamination, and establishes a strict separation between oblivious and adaptive contamination models.
In what follows, $\tilde{O},\tilde{\Theta},\tilde{\Omega}$ hide polylogarithmic factors in the argument, and we always assume\footnote{We note that, for identity-covariance Gaussian distributions, mean $\lp[2]$ distance $\dst$ corresponds (for small $\dst$) to total variation (TV) distance $\Theta(\dst)$. Thus, $\lp[2]$ mean testing corresponds to TV testing, which motivates the regime $\dst \ll 1$ as of particular interest.} $\dst \leq O(1)$ and $\cor \leq \dst / (\log \ns)^{O(1)}$ (except in \cref{thm:poly-time-intro}), which is information-theoretically necessary, up to the factor $(\log \ns)^{O(1)}$.\todonote{@shyam: can we add your way of removing this assumption in a note somewhere. LOL not anymore...}

\begin{theorem}[Obliviously-robust mean testing (Informal; see~\cref{thm:oblivious-ub,thm:huber-lb,thm:oblivious-lb})]
    \label{theo:oblivious:informal}
    In the \emph{oblivious} contamination model, there is a mean tester which is robust to $\cor$-contamination, which uses
    \begin{equation}
        \label{eq:main:oblivious:ub}
        \tilde{\Theta}\Paren{\max\Paren{\frac{\sqrt{\dims}}{\dst^2}, \frac{\dims\cor^3}{\dst^4},\min\Paren{\frac{\dims^{2/3}\cor^{2/3}}{\dst^{8/3}}, \frac{\dims \cor}{\dst^2}}} }\,,
    \end{equation}
    samples in the \emph{oblivious} contamination model, and this is information-theoretically tight up to logarithmic factors.
     Moreover, $\tilde{\Omega}\Paren{\max\Paren{\frac{\sqrt{\dims}}{\dst^2}, \frac{\dims\cor^3}{\dst^4}}}$ samples are needed even in the weaker Huber contamination model.
\end{theorem}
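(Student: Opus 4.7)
The statement bundles three separate claims: an oblivious-model algorithmic upper bound realizing the max-of-three expression, a matching oblivious-model information-theoretic lower bound, and a strictly stronger lower bound in the Huber model for two of the three terms. The plan is to attack each component separately and then combine.

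For the \emph{upper bound}, I would design a composite tester that runs three sub-tests in parallel and rejects $\mathbf{H}_0$ if any of them rejects. The first sub-test is the classical empirical-mean-norm test thresholding $\lVert \bar{X} \rVert_2^2$, delivering the $\sqrt{\dims}/\dst^2$ term by the standard chi-squared analysis. The second sub-test targets the $\dims\cor^3/\dst^4$ regime: since a Huber-like adversary can perfectly cancel the first moment of the planted $\cN(\dst v,I)$ signal but cannot simultaneously erase higher moments, I would use a symmetric degree-three U-statistic in the $X_i$'s, analyzed via Chebyshev; the cubic exponent on $\cor$ strongly suggests that a third-moment statistic is the right sweet spot. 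The third sub-test exploits obliviousness directly: because the corruptions are independent of the realized clean samples, the clean$\times$clean pairwise inner products $\iprod{X_i,X_j}$ still concentrate around $\dst^2$ under $\mathbf{H}_1$ versus $0$ under $\mathbf{H}_0$; a clipped pairwise U-statistic $\sum_{i\neq j} f(X_i,X_j)$, with $f$ a truncated bilinear form, should capture this signal while remaining robust to the $O(\cor\ns)$ pairs touching a corrupted sample, and the choice of the truncation radius is what drives the interpolation inside the inner $\min$.

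For the \emph{lower bounds}, I would use Le Cam / Ingster two-point arguments with the signal direction $v$ averaged uniformly over the sphere, and control the $\chi^2$ divergence between the two mixture product distributions. The $\sqrt{\dims}/\dst^2$ bound is standard. For the Huber $\dims\cor^3/\dst^4$ lower bound, the idea is to choose the contamination distribution $\tilde{\cD}_v$ so that the $\cor$-mixture cancels the first two moments of $\cN(\dst v,I)$ against $\cN(0,I)$; the residual $\chi^2$ at the product level then expands into tensor overlaps of order $\geq 3$ in $v$, which by standard spherical integration produce the claimed threshold. The purely oblivious lower bound is fundamentally different, since no marginally-i.i.d.\ Huber distribution can achieve it: I would plant the $\cor \ns$ corrupted points \emph{jointly}, e.g.\ along a random low-dimensional subspace whose law depends on $v$, so that each corrupted sample's marginal still looks like $\cN(0,I)$ but the joint structure partly mimics (under $\mathbf{H}_0$) or hides (under $\mathbf{H}_1$) the mean signal carried by the clean block.

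The main obstacle, on both the algorithmic and information-theoretic sides, will be realizing the inner $\min\bigl(\dims^{2/3}\cor^{2/3}/\dst^{8/3},\,\dims\cor/\dst^2\bigr)$ term. The exponents $2/3$ and $8/3$ do not drop out of any single moment tester or vanilla $\chi^2$ calculation, so one genuinely needs a test that mixes a global norm-type statistic with a local pairwise/clustering one, and the matching oblivious lower bound requires a correlated-but-marginally-Gaussian-looking corruption that is delicately tuned to neutralize precisely that combined statistic. Matching constants and thresholds across the transition between the two branches of the $\min$, so that the upper and lower bounds agree up to polylogarithmic factors, is where most of the technical work will live.
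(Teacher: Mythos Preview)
Your proposal has a real gap on the upper-bound side. The cubic exponent on $\cor$ in $\dims\cor^3/\dst^4$ does \emph{not} come from a degree-three moment: identity-covariance Gaussians have vanishing third cumulants, and more to the point, the paper's tester for this regime is just the sum test $\normtwo{\Sum([\ns])}^2-\ns\dims$ together with a \emph{variance} test in the direction of $\Sum([\ns])$. Both the $\dims\cor^3/\dst^4$ and the $\dims^{2/3}\cor^{2/3}/\dst^{8/3}$ thresholds fall out of that analysis, but \emph{only after} the adversary has been forced to be ``friendly'', meaning every pair of bad subsets $S,T$ satisfies $|\iprod{\Sum(S),\Sum(T)}|\le\tilde O(\sqrt{|S||T|\cor\ns\dims})$. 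Enforcing friendliness is the paper's main technical contribution and the piece you are missing entirely: any data-dependent removal of samples risks deleting good points in a way that depends on the bad ones, which destroys the independence that obliviousness gave you. The paper's fix is an \emph{obliviousness-preserving filter}: split the data into $U,V$, learn filter directions from $U$ alone (via a small-witness/compression lemma bounding how many directions are needed), and apply them to $V$; with high probability no good point in $V$ is removed, so obliviousness survives. Without this step your clipped pairwise statistic can be defeated by an adversary who stacks all $\cor\ns$ bad points on one vector; the $3/2$ power $(\cor\ns)^{3/2}\sqrt{\dims}$ that produces $\cor^3$ is exactly the residual error on $\normtwo{\Sum(B)}^2$ \emph{after} friendliness is imposed.

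On the lower-bound side, your Huber plan of cancelling two moments cannot work in the relevant regime: matching the second moment of $(1-\cor)\cN(\dst v,I)+\cor\tilde\cD$ to $I$ forces $\tilde\cD$ to have covariance $I-\Theta(\dst^2/\cor^2)vv^\top$, which is not PSD once $\dst\gg\cor$. The paper instead keeps the one-moment-cancelling mixture of Diakonikolas--Kane--Stewart but sharpens the $\chi^2$ bound to $\dims\cor^3/\dst^4$ via a \emph{conditional} second-moment method (conditioning on which indices came from each mixture component and restricting to typical-size sets). For the oblivious-only $\min(\dims^{2/3}\cor^{2/3}/\dst^{8/3},\dims\cor/\dst^2)$ term, the paper's construction is quite different from a subspace planting: draw the bad points $\{X_i\}_{i\in B}$ i.i.d.\ from $\cN(0,I)$ \emph{first}, then set $\mu=-\beta\,\Sum(B)-z$ with $z\sim\cN(0,\tfrac{\dst^2}{\dims}I)$ and a carefully chosen $\beta$. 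Each bad point is marginally $\cN(0,I)$, but the joint correlation between $\Sum(B)$ and $\mu$ is what hides the signal; the unusual exponents drop out of an explicit determinant calculation for the resulting $\chi^2$.
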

We offer a little interpretation of the (surprisingly complex) expression \eqref{eq:main:oblivious:ub}.
If $\dims$ dominates the other parameters, \ie $\dims \gg  1/{\poly(\dst)}, 1/{\poly(\cor)}$, then $\tfrac{\dims \cor^3}{\dst^4}$ is the dominant term.
But if $d,1/{\dst}, 1/{\cor}$ are within small polynomial factors, any of the four terms in \eqref{eq:main:oblivious:ub} can dominate.
\begin{figure}[H]\centering
\label{fig:oblivious:complexity}
\sidebysidecaption{0.555\linewidth}{0.42\linewidth}{\includegraphics[width=1.0\textwidth]{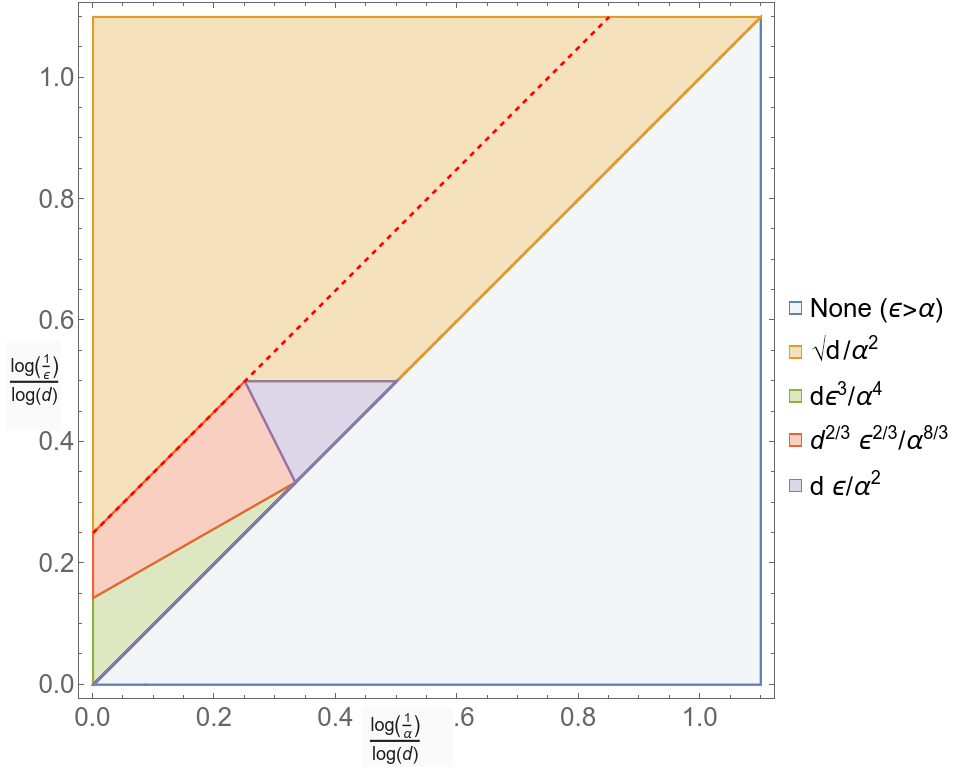}}{\caption{\small{}The various phases of the sample complexity of robust mean testing in the oblivious contamination model, as stated in~\cref{theo:oblivious:informal}: each area of this plot corresponds to which term of the sample complexity dominates, as a function of $\dims,\cor,\dst$. The separation between adaptive and oblivious contamination occurs at the red dashed line (to the right, the oblivious sample complexity is strictly smaller). The lower half corresponds to $\dst < \cor$, where testing is information-theoretically impossible.}
}
\end{figure}
\vspace{-1em} 
To see that \cref{theo:oblivious:informal} implies a strict separation between the oblivious and adaptive models, we recall:
\begin{theorem}[\cite{narayanan2022privatetesting}, see also \cref{thm:adaptive-sample-complexity-main}]\todonote{Alternatively, "against adaptive adversaries" saves some vertical space}\label{theo:adaptive:informal}
    In the \emph{adaptive} contamination model, the optimal sample complexity of $\cor$-robust mean testing is 
    \begin{equation}
        \label{eq:main:adaptive:ub}
        \tilde{\Theta}\Paren{\max\Paren{\frac{\sqrt{\dims}}{\dst^2}, \frac{\dims\cor^2}{\dst^4}}}
    \end{equation}
\end{theorem}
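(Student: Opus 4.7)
The theorem asserts matching upper and lower bounds $\tilde\Theta(\sqrt{\dims}/\dst^2 + \dims\cor^2/\dst^4)$ against adaptive contamination. Since the first term is the classical (uncorrupted) Gaussian mean-testing sample complexity, achieved by thresholding $\|\bar X\|_2^2 - \dims/\ns$, and is information-theoretically necessary even without corruption, the only nontrivial regime is $\cor \gtrsim \dst/\dims^{1/4}$, where the second term dominates and must be attained both algorithmically and information-theoretically.

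\emph{Upper bound.} The plan is to augment the classical norm test with a filtering-based preprocessing step: (i) remove gross outliers, discarding every $X_i$ with $\|X_i\|_2 \geq C\sqrt{\dims\log\ns}$; (ii) iteratively inspect the empirical covariance $\widehat\Sigma$ of the retained samples, and while $\|\widehat\Sigma - I\|_{\mathrm{op}}$ is too large, down-weight or remove samples contributing disproportionately along the top eigendirection, in the style of the Diakonikolas--Kamath--Kane--Li--Moitra--Stewart filter; (iii) on the retained samples, compute $T = \|\bar X\|_2^2 - \dims/\ns$ and reject iff $T > \dst^2/2$. The key technical claim to establish is that once the filter terminates, the net bias contributed by surviving adversarial points to $\bar X$ is at most $O(\cor\sqrt{\dims/\ns})$ in every direction, yielding bias $O(\cor^2\dims/\ns + \cor\dst\sqrt{\dims/\ns})$ on $T$ while its variance remains $O(\dims/\ns^2)$. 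Forcing the signal $\dst^2$ to dominate both bias and fluctuations recovers $\ns = \tilde\Omega(\sqrt{\dims}/\dst^2 + \dims\cor^2/\dst^4)$.

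\emph{Lower bound.} Apply Le Cam's two-point method with prior $\pi$ uniform on the sphere of radius $\dst$. For each $\mu \sim \pi$, consider the adaptive adversary that observes the clean samples from $\cN(\mu, I)$ and replaces an $\cor$-fraction with draws from a carefully-chosen cancelling distribution (a shifted, truncated Gaussian pointing opposite to $\mu$) engineered to zero out the empirical first moment of the mixture. The resulting corrupted $H_1$ has empirical first moment matching the corrupted $H_0$ to leading order, leaving only a second-moment signature of magnitude $\sim \dst^2/\cor$ along the unknown direction $\mu$. Bounding the $\chi^2$-divergence between the corrupted null and the $\pi$-averaged corrupted alternative on $\ns$ samples by $O(\ns^2 \dst^4/(\dims\cor^2))$ via a standard Gaussian integral over $\pi$, and invoking Le Cam's inequality, then yields $\ns = \Omega(\dims\cor^2/\dst^4)$.

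\emph{Main obstacle.} The filter analysis for the upper bound is the most delicate step: in our target regime $\cor \ll \dst$, the available sample size $\ns \asymp \dims\cor^2/\dst^4$ is strictly smaller than the $\dims/\cor^2$ samples required by off-the-shelf robust mean estimation, so we cannot simply plug in a black-box robust estimator and threshold its norm. The argument must exploit that the final test statistic is a single scalar $\|\bar X\|_2^2$, so the filter needs only prevent adversarial bias in the \emph{norm} of $\bar X$ rather than in every direction simultaneously, giving the extra slack over estimation. On the lower bound side, the main care lies in ensuring the cancelling distribution is a valid adaptive corruption (i.e., can be described as replacing a specific $\cor\ns$ clean samples) while coupling it tightly to the clean realization so that the $\chi^2$ bound goes through.
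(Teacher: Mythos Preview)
The paper's proof is entirely different from yours: both directions come from black-box reductions between robustness and differential privacy (\cref{sec:strong-sample-complexity}). For the upper bound, a $(0,\delta)$-DP decision algorithm is automatically $\cor$-robust for $\cor \asymp 1/(\delta \ns)$; substituting $\delta = 1/(\cor\ns)$ into Narayanan's DP mean-testing sample complexity $\tilde{O}(\sqrt{\dims}/\dst^2 + \dims^{1/3}/(\dst^{4/3}\delta^{2/3}) + 1/(\dst\delta))$ and solving for $\ns$ gives the claimed bound. For the lower bound, a robust tester is converted to a $(0,1/(\cor\ns))$-DP tester via the score-threshold construction of \cite{hopkins2023robusttoprivate,asi2023robusttoprivate}, and Narayanan's matching DP lower bound is invoked. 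No filter analysis and no direct $\chi^2$ computation appears.

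Your upper-bound plan is viable and is essentially what the paper carries out separately in \cref{sec:poly-time} for its polynomial-time tester, so that part is fine (and in fact stronger, being efficient). Your lower-bound plan, however, has a real gap. If the cancelling distribution is fixed once $\mu$ is fixed (the adversary does not look at the realized samples), the construction is exactly the Huber instance of \cref{sec:huber-lb}, and a tight $\chi^2$ analysis of that instance yields only $\Omega(\dims\cor^3/\dst^4)$; the paper proves this is sharp for oblivious adversaries (\cref{theo:oblivious:informal}), so the missing factor of $\cor$ is not slack in the analysis but a genuine limitation of any non-adaptive strategy. If instead the adversary adaptively forces the empirical mean to (near) zero, the corrupted alternative is no longer a product measure, the likelihood ratio has no closed form, and the ``standard Gaussian integral over $\pi$'' does not apply; worse, forcing the mean to be \emph{exactly} zero makes the alternative more, not less, distinguishable from the null, whose empirical mean is a nondegenerate Gaussian. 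The adaptive lower bound genuinely requires a Hamming-coupling argument (the paper extracts \cref{lem:adaptive-main-lemma} from Narayanan) rather than a $\chi^2$ bound.
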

\noindent The sample complexity \eqref{eq:main:oblivious:ub} is strictly smaller than \eqref{eq:main:adaptive:ub} for a range of vanishing $\cor,\dst$, \eg with $\cor = \bigOmega{\frac{\dst}{\dims^{1/4}}}$.

For completeness, in \cref{sec:strong-sample-complexity} we show explicitly how to obtain \cref{theo:adaptive:informal} by combining Narayanan's result on differentially-private mean testing with known robust-privacy equivalence results (as in e.g. \cite{georgiev2022privatetorobust, hopkins2023robusttoprivate, asi2023robusttoprivate}). We further conjecture that a similar separation holds between the oblivious and Huber contamination models; to establish such a separation, it would be enough to prove a (non-efficient) $\tilde{O}(\max({\sqrt{\dims}}/{\dst^2}, {\dims\cor^3}/{\dst^4}))$ sample complexity upper bound in the latter, which in light of~\cref{theo:oblivious:informal} would be nearly tight. We leave this as an interesting open problem.

A subtle difference between our strong and oblivious contamination models concerns which ``good'' samples are \emph{removed} by the adversary.
In the strong model, the adversary chooses adaptively which of the good samples to remove, whereas the oblivious adversary can only choose good samples to remove at random.
Thus, the oblivious adversary could be equivalently defined as merely \emph{adding} samples and doing no removals at all.
One might ask whether the separation in sample complexities we establish between adaptive and oblivious adversaries actually arises from the ability of the adaptive adversary to remove samples, rather than from adaptivity itself.\footnote{For instance, one could consider an oblivious adversary which is allowed to replace the good distribution $\cD$ with $\cD$ conditioned on any event of probability $1-\eps$, thus obliviously ``removing'' part of $\cD$. We thank Guy Blanc for pointing this out.}
We show in \cref{sec:strong-sample-complexity} that the lower bound of \cref{theo:adaptive:informal} actually holds even against adaptive adversaries that may only \emph{add} data points, meaning that the sample complexity separation between adaptive and oblivious adversaries really is caused by the difference in addaptivity for the \emph{added} samples.
This extension to additive-only adaptive adversaries also readily follows from results proven in~\cite{narayanan2022privatetesting}.
\smallskip

Turning now to efficient algorithms, we provide the first polynomial-time algorithm which nearly matches the optimal sample complexity in the adaptive model.
Prior to our work, the best polynomial-time approach was to learn the mean using $O({\dims}/{\dst^2})$ samples, or to apply a polynomial-time algorithm of Narayanan~\cite{narayanan2022privatetesting} which works only when $\cor \leq \alpha \cdot \dims^{-1/4}$.
\begin{theorem}[Adaptively-robust efficient mean testing (Informal; see \cref{thm:poly-time-main})]
\label{thm:poly-time-intro}
    In the {adaptive} contamination model, there is a \emph{quadratic-time} algorithm for $\cor$-robust mean testing with sample complexity $\tilde{O}(\max( \tfrac{\sqrt{\dims}}{\dst^2}, \tfrac{\dims \cor^2}{\dst^4}))$, as long as $\alpha \ge O(\eps \sqrt{\log (1/\eps)})$.
\end{theorem}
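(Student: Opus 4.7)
The plan is to replace the exponential-time outlier-selection subroutine implicit in Narayanan's tester~\cite{narayanan2022privatetesting} with an efficient spectral filter, and then analyze the filtered degree-two $U$-statistic $T = \frac{1}{|S|(|S|-1)} \sum_{i \ne j \in S} \langle X_i, X_j\rangle$. The algorithm proceeds in three phases: (i) coarse preprocessing, discarding any $X_i$ with $\|X_i\|_2 \ge C\sqrt{d \log n}$ (clean samples survive with high probability); (ii) an iterative spectral (SIEVE-style) filter, which at each round computes by power iteration the top eigenvector $v$ of the weighted centered empirical covariance $\hat\Sigma - I$ and downweights samples with anomalously large $|\langle X_i - \hat\mu, v\rangle|$---each round costs $\tilde O(nd)$ and $\tilde O(1)$ rounds suffice, yielding overall $\tilde O(n^2 d)$ runtime; and (iii) thresholding $T$ at $\alpha^2/2$.

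For correctness I would combine two ingredients. First, the standard stability-to-mean-error lemma from robust statistics (see, e.g., \cite{DiakonikolasK:ARS}) implies that after the filter the weighted mean $\bar X_S$ satisfies $\|\bar X_S - \mu^*\|_2 \le O(\varepsilon \sqrt{\log(1/\varepsilon)})$, where $\mu^*$ is the true mean. Second, standard $U$-statistic concentration bounds the fluctuation of $T$ on clean Gaussian data by $O(\sqrt{d}/n + \alpha/\sqrt n)$. Expanding
\[
T \;=\; \|\mu^*\|_2^2 \;+\; 2\langle \mu^*, \bar X_S - \mu^*\rangle \;+\; \|\bar X_S - \mu^*\|_2^2 \;-\; \frac{d}{|S|} \;+\; (\text{noise}),
\]
one obtains $|T| \le O(\varepsilon^2 \log(1/\varepsilon)) + O(\sqrt d/n)$ under $H_0$ and $T \ge \alpha^2 - O(\alpha \varepsilon \sqrt{\log(1/\varepsilon)}) - O(\sqrt d/n)$ under $H_1$. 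The hypothesis $\alpha \ge C\varepsilon \sqrt{\log(1/\varepsilon)}$ makes the bias terms at most $\alpha^2/4$, and requiring $\sqrt d/n \le \alpha^2/4$ gives the first branch $n \ge \Omega(\sqrt d/\alpha^2)$.

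The second branch $n \ge \Omega(d\varepsilon^2/\alpha^4)$ enters through the filter: for its stability guarantee to hold with parameters strong enough that the mean-estimation error translates into a $T$-bias of at most $\alpha^2/4$, the clean empirical covariance must concentrate at the rate $\|\hat\Sigma_{\text{clean}} - I\| \le O(\sqrt{d/n})$ matched to the target bias level. Balancing these via random-matrix concentration yields precisely this branch.

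The main obstacle is handling the sparse regime $n \ll d/\varepsilon^2$, in which the clean empirical covariance is not itself close to $I$ in spectral norm, so the filter cannot be analyzed as a black-box robust mean estimator. In this regime one must instead establish a weaker, \emph{one-direction} stability guarantee: the filter need not recover $\mu^*$ in $\ell_2$, only certify that $\langle \mu^*, \bar X_S - \mu^*\rangle$ is small---the sole component of the bias that couples into $T$. Making this argument quantitative enough to match $\tilde O(d\varepsilon^2/\alpha^4)$, while preserving the quadratic running time and showing that any adversarial strategy that shifts $T$ by more than $\alpha^2/2$ leaves a spectral signature the filter can detect, is the key technical challenge.
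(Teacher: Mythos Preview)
Your high-level plan---spectral filtering followed by thresholding $T \approx \|\Sum(S)\|^2 - |S|d$---is the same as the paper's, and you correctly identify that black-box robust mean estimation fails in the sparse regime $n \ll d/\eps^2$. But the diagnosis of \emph{what} must be controlled in that regime is off, and this leads to a genuine gap.

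Under $H_0$ one has $\mu^* = 0$, so your proposed ``one-direction'' quantity $\langle \mu^*, \bar X_S - \mu^*\rangle$ vanishes trivially; the term that actually hurts is $\langle \bar X_G, \bar X_S - \bar X_G\rangle$, i.e.\ the correlation between the adversarial shift and the \emph{empirical} (not true) good mean. An adaptive adversary can exploit this after a covariance-only filter: take bad points $X_i = Z_i + c\,v$ with $Z_i \sim \cN(0,I)$, $v = \Sum(G)/\|\Sum(G)\|$, and $c \asymp (d/n)^{1/4}/\sqrt{\eps}$. Then $\eps c^2 \asymp \sqrt{d/n}$, so the bad points add only $O(\sqrt{d/n})$ to the top eigenvalue of $\hat\Sigma - I$ and survive your spectral filter; yet the cross term $\langle \Sum(G),\Sum(B)\rangle \asymp \sqrt{\eps}\,n^{5/4}d^{3/4}$, so $T \asymp \sqrt{\eps}(d/n)^{3/4}$. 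Requiring $T \le \alpha^2/4$ then forces $n \gtrsim d\eps^{2/3}/\alpha^{8/3}$, which is larger than $d\eps^2/\alpha^4$ by a factor $(\alpha/\eps)^{4/3}$.

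The paper closes this gap with a second filter (Algorithm~\ref{alg:rowsumfilter}) that removes the $\eps n$ points with largest $|\langle X_i, \Sum(w,S)\rangle - w_i d|$. This directly bounds $\langle \Sum(w,B),\Sum(w,S)\rangle = \|w_B\|_1 d \pm \tilde O(\eps n^{3/2}\sqrt{d})$, which together with the spectral bound on $\|\Sum(w,B)\|^2$ pins down $\|\Sum(w,S)\|^2$ to within $\tilde O(\eps n^{3/2}\sqrt{d}) \le \alpha^2 n^2$ exactly when $n \gtrsim d\eps^2/\alpha^4$. The analysis is not via a mean-error bound at all, but via ``regularity conditions'' (\cref{def:regularity}) that the two filters jointly certify and that feed straight into the test statistic (\cref{lem:regular-weights-imply-tester}). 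The row-sum step is the missing ingredient in your proposal.
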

This computationally efficient analogue of \cref{theo:adaptive:informal} raises the question of whether a similar analogue of \cref{theo:oblivious:informal} is possible.
(The tester described in \cref{theo:oblivious:informal} relies on a computationally inefficient ``filtering step''; see~\cref{ssec:techniques}). 
Our next result shows strong evidence that this is not possible, and that the separation between adaptive and oblivious contamination models vanishes when restricting oneself to computationally efficient algorithms.
\begin{theorem}[Computational lower bound (Informal; see~\cref{thm:lb:lowdegree})]
    \label{theo:sos-lb:informal}
    In the oblivious contamination model, any $\cor$-robust \emph{low-degree} mean testing algorithm in the Huber contamination model has sample complexity
    \begin{equation}
        \bigOmega{\max\Paren{\frac{\sqrt{\dims}}{\dst^2}, \frac{\dims\cor^2}{\dst^4}}} \, .
    \end{equation}
\end{theorem}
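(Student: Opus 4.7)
The plan is to apply the low-degree polynomial method: exhibit a prior $\pi$ over Huber alternatives (with hidden direction $v \in S^{\dims-1}$ and a noise distribution $Q_v$) and show that the degree-$D$ truncated likelihood ratio of the planted mixture against the null $\cN(0,I)^{\otimes \ns}$ has squared $L^2$-norm $O(1)$ whenever $\ns$ is below either target threshold and $D$ is polylogarithmic in $\ns$. The standard low-degree second-moment criterion then rules out any nontrivial degree-$D$ polynomial advantage. For the $\sqrt{\dims}/\dst^2$ branch we can take $Q_v$ trivial and $v$ uniform on the sphere: the one-sample chi-squared is $\E_{X \sim \cN(0,I)}[L_v(X) L_{v'}(X)] = e^{\dst^2 \iprod{v, v'}}$, and sphere-averaging $e^{\ns \dst^2 \iprod{v, v'}}$ is $O(1)$ iff $\ns \dst^2 \lesssim \sqrt{\dims}$. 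This recovers the classical low-degree bound for non-robust Gaussian mean testing, and truncating at any degree $D \geq 2$ only decreases the norm.

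For the $\dims \cor^2/\dst^4$ branch, draw $v$ uniformly on the sphere but craft $Q_v$ to cancel the lowest-order Hermite coefficients of the single-sample density ratio $L_v$. Expanding along $v$ in the univariate Hermite basis $\tilde h_k$ gives $L_v(x) = \sum_{k \geq 0} (c_k(v)/\sqrt{k!})\, \tilde h_k(v^\top x)$ with $c_0 = 1$, and the simple choice $Q_v = \cN(-\beta v, I)$ with $\beta = \dst(1-\cor)/\cor$ enforces $c_1 = 0$ (first-moment cancellation), leaving $|c_k| \lesssim \dst^k / \cor^{k-1}$ for $k \geq 2$. Using orthogonality $\E_{X \sim \cN(0,I)}[\tilde h_k(v^\top X)\tilde h_{k'}(v'^\top X)] = \delta_{k,k'}\, \rho^k$ with $\rho := \iprod{v, v'}$, one computes $\E_{X \sim \cN(0,I)}[L_v(X) L_{v'}(X)] = 1 + \sum_{k \geq 2} (c_k^2/k!)\, \rho^k$, and the $\ns$-fold truncated LDLR is bounded by $\E_{v, v' \sim S^{\dims-1}} \exp\!\bigl(\ns \sum_{2 \leq k \leq D} (c_k^2/k!)\, \rho^k\bigr)$. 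Under sphere concentration $\rho^2 \approx 1/\dims$, the $k=2$ contribution gives $(1 - \ns \dst^4/(\dims \cor^2))^{-1/2}$-type mass, hence $O(1)$ precisely when $\ns \lesssim \dims \cor^2/\dst^4$.

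The main technical obstacle is that the two-Gaussian $Q_v$ only cancels $c_1$, while the higher coefficients $|c_k| \lesssim \dst^k/\cor^{k-1}$ may overwhelm the degree-2 term once $\dst \gg \cor \cdot \dims^{1/4}$. To cover this regime we upgrade $Q_v$ to a probability mixture $\sum_j w_j \cN(t_j v, I)$ along $v$ with nonnegative weights, with the $t_j, w_j$ chosen via a Chebyshev-style moment-matching system that zeros $c_1, \dots, c_K$ simultaneously for a sufficiently large constant $K = K(\dst/\cor)$. Standard moment-matching constructions from the robust-statistics lower-bound literature guarantee that in the Huber range $\cor \leq \dst/\mathrm{polylog}(\ns)$ assumed by the theorem, a valid probability distribution $Q_v$ with these moments exists. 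Plugging back in, the leading residual becomes the $k = K+1$ Hermite term, which under sphere concentration is controlled by the $\ns \lesssim \dims \cor^2/\dst^4$ threshold up to $\mathrm{polylog}$ factors absorbed by the polylogarithmic slack in $D$. Verifying positivity of $Q_v$ uniformly across the parameter regime, and carefully bounding the degree-$D$ tail of the Hermite sum (to make sure contributions from $k$ near $D$ do not blow up), is where the bulk of the technical calculation sits.
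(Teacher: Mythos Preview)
Your core construction---the Huber adversary $Q_v = \cN(-\dst(1-\cor)\cor^{-1} v, I)$ that zeroes the first Hermite coefficient---is exactly what the paper uses, and your degree-$2$ analysis giving the $\dims\cor^2/\dst^4$ threshold is correct. The paper packages the same computation via the Gaussian additive model formula $\|L^{\leq D}-1\|^2 = \sum_{t\leq D}\tfrac{1}{t!}\,\E_{X,X'\sim P}\iprod{X,X'}^t$ from~\cite{kunisky2022notes}, which avoids the explicit Hermite bookkeeping but is equivalent.

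Where you go astray is the ``main technical obstacle.'' You correctly identify that the higher Hermite terms $|c_k|\lesssim \dst^k/\cor^{k-1}$ dominate once $\dst \gg \cor\,\dims^{1/4}$, but this regime is precisely where $\sqrt{\dims}/\dst^2 \geq \dims\cor^2/\dst^4$, so the non-robust branch of the $\max$ already delivers the claimed bound and there is nothing further to prove. The paper makes this explicit: its final estimate carries a factor $\exp(\dst^2/(\sqrt{\dims}\cor^2))$, which is $O(1)$ exactly when $\dst^2 \leq \sqrt{\dims}\cor^2$, and this inequality is \emph{forced} by the simultaneous assumptions $\ns \geq \sqrt{\dims}/\dst^2$ and $\ns \leq o(\dims\cor^2/\dst^4)$. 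So first-moment cancellation alone suffices. Moreover, your proposed fix is infeasible as stated: with $c_1=0$ already enforced (so $\E_Q[v^\top X] = -(1-\cor)\dst/\cor$), cancelling $c_2$ would require the mixture to have unit second moment along $v$, which forces $\Var_Q(v^\top X) = 1 - (1-\cor)\dst^2/\cor^2 < 0$ whenever $\dst > \cor$---impossible for any probability distribution, let alone a mixture $\sum_j w_j\,\cN(t_j v, I)$ (for which $\E_Q[(v^\top X)^2]\geq 1$ always). Drop the moment-matching upgrade entirely; the two branches of the $\max$ partition the parameter space so that the simple construction handles each.
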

\cref{theo:sos-lb:informal} complements a reduction from planted clique \cite{brennan2020reducibility} which suggests that $n^{\Omega(\log n)}$ time is required to beat $\tfrac{\dims \cor^2}{\dst^4}$ samples, even in the Huber model.
The quantitative version of our result (\cref{thm:lb:lowdegree}) suggests something stronger (albeit for a restricted class of algorithms, rather than via reduction) -- namely, that $\exp(\ns^{\Omega(1)})$ time is needed to use $(\tfrac{\dims \cor^2}{\dst^4})^{1 - \Omega(1)}$ samples, even in the Huber model.
We hope that our results, by uncovering a richer landscape in robust statistics than previously known and showing that the choice of contamination setting is much less innocuous than commonly believed, will spark interest in revisiting these modelling assumptions for various other tasks.

\subsection{Related Work}
\label{sec:related}
\noindent\textbf{Gaussian Mean Testing.}
Gaussian mean testing is known in statistics as the Gaussian sequence model~\cite{ermakov1991minimax,baraud2002non,ingster2003nonparametric}; the understanding that it is possible to use fewer samples than dimensions appears relatively recent \cite{srivastava2008test}.
A recent influential work, \cite{DiakonikolasKS17}, records the sample-optimal mean tester and the ``folklore'' $\Omega(\sqrt{\dims}/\dst^2)$ lower bound, and initiates the study of the complexity of \emph{robust} mean testing.
More recent work focuses on variants such as mean testing under a sparsity assumption \cite{GeorgeC22}, testing with unknown covariance \cite{CanonneCKLW21,DiakonikolasKP23},
 testing subject to differential privacy \cite{Canonne0MUZ20,narayanan2022privatetesting}, robustly testing the covariance \cite{DiakonikolasK21}, or (distributed) testing giving partial observations from each sample~\cite{pmlr-v125-acharya20b,SzaboVZ22}.

\medskip\noindent\textbf{(Algorithmic) Robust Statistics.}
Algorithmic robust statistics, especially in high dimensions, has experienced a recent renaissance following a range of algorithmic breakthroughs; see the book \cite{DiakonikolasK:ARS}.
Robust mean \emph{estimation} has played a fundamental role; the quest for efficient algorithms for robust mean estimation led to the invention of the \emph{filter} technique~\cite{diakonikolas2019robust}.

\medskip\noindent\textbf{Connection to (Differential) Privacy.} A recent line of work~\cite{georgiev2022privatetorobust,hopkins2023robusttoprivate,asi2023robusttoprivate} established a (two-way) correspondence between adversarially robust and differentially private algorithms for a range of tasks, a connection we use to obtain~\cref{theo:adaptive:informal}. Importantly, this correspondence applies to \emph{adaptive} adversaries, and does not, to the best of our knowledge, differentiate between oblivious and adaptive adversaries.

\medskip\noindent\textbf{Noise Models in Statistics and Learning.}
Many developments in computational learning theory have been guided by the mission to design algorithms which work in an array of noise models \cite{balcan2020noise}.
For instance, the statistical query model was invented to capture a class of PAC learning algorithms which tolerate \emph{random classification noise} \cite{kearns1998efficient}.
A full survey is out of scope, but some highlights include \emph{nasty noise}, which is essentially the adaptive contamination model we consider here \cite{bshouty2002pac,diakonikolas2018learning}, and Massart noise, which has led to exciting recent algorithmic advances~\cite{diakonikolas2019distribution,diakonikolas2022cryptographic,nasser2022optimal}.
While \emph{computational} separations are known between these noise models in classification settings (e.g., random classification noise is much easier to handle algorithmically than adversarial label noise), separations in sample complexity seem unlikely, because empirical risk minimization handles even the nastiest noise models.

Two works in particular study questions related to ours.
First, \cite{blanc2022power} shows some equivalences between adaptive and oblivious adversaries up to polynomial factors in sample complexity, for restricted classes of algorithms (SQ) or adversaries (additive).
\cite{diakonikolas2018robustly,DiakonikolasKS17} together show a computational separation between what error $\dst$ is achievable for robustly learning a high-dimensional Gaussian when the adversary can only add samples versus when they can add and remove samples. We emphasize that while previous work showed evidence for a computational gap, we believe ours is the first demonstration of an (unconditional) information-theoretic separation in a natural robust statistics setting.

\subsection{Overview of Techniques}
\label{ssec:techniques}

\subsubsection{Exploiting Obliviousness to Robustly Test with Fewer Samples}

\paragraph{Our Approach.}
We focus first on our main technical contribution, the mean tester from \cref{theo:oblivious:informal}.
To get an improved testing algorithm for oblivious contaminations (compared to adaptive contaminations), we need to exploit that the adversary must commit to the contaminated points before the remaining datapoints are drawn.
A consequence is that the correlation between the sums of good points ($G$) and bad points ($B$) is comparable to independent random vectors of comparable norm:
\[
\ip{\frac{\sum_{i \in B} X_i}{\normtwo{\sum_{i \in B} X_i}}}{\frac{\sum_{i \in G} X_i}{\normtwo{\sum_{i \in G} X_i}}} \approx \frac{\pm 1}{\sqrt{\dims}} \, .
\]
By contrast, an adaptive adversary can make this correlation as large as $1$.

Hence,
the only way the adversary can have a substantial effect on $\big\|\sum_{i \in [\ns]} X_i\big\|_2$ is by making $\normtwo{\sum_{i \in B} X_i}$ larger than it would be for a set of $\cor \ns$ good samples.
Building on this idea, we can design a tester using $\tilde{\Theta}\Paren{\max\Paren{\frac{\sqrt{\dims}}{\dst^2}, \frac{\dims\cor^3}{\dst^4}, \min \Paren{\frac{\dims^{2/3}\cor^{2/3}}{\dst^{8/3}}, \frac{\dims \cor}{\dst^2}}}}$ samples under (roughly) the additional assumption that the sum of every subset of the adversary's vectors has about the same norm it would if the samples were uncorrupted.

The second challenge is to remove this additional assumption.
The standard approach in robust statistics to make bad samples ``look like'' good ones according to some tests (e.g. norms of sums of subsets of points) is to remove samples in subsets which violate those tests; this is often called ``filtering''.
This risks removing about $\cor \ns$ good samples as well, but in many settings this isn't an issue.

However, \emph{removing any good samples after looking at all the samples potentially breaks obliviousness by introducing dependencies between good and bad samples!}
We develop a novel \emph{obliviousness-preserving} filtering technique.
We (iteratively) split the samples into two subsets, $U,V$.
Looking only at $U$, we devise a rule for which samples to keep and which to remove (keeping those contained in a certain intersection of halfspaces); then we apply this rule to $V$ and show that it preserves obliviousness while ensuring that $V$ now satisfies the  assumption about sums of subsets of corrupted vectors. 
We turn now to a more detailed overview.

\medskip\noindent\textbf{Background: Narayanan's Robust Tester.} To understand quantitatively how we can exploit obliviousness of the adversary, we first review a robust mean tester which uses $\tilde{O}(\max( {\sqrt{\dims}}/{\dst^2}, {d \cor^2}/{\dst^4}))$ samples in the strong contamination model, as long as $\cor \ll \dst$ (all of which is information-theoretically necessary).\footnote{A similar tester can be extracted from \cite{narayanan2022privatetesting}.
While Narayanan's paper focuses on differentially private mean testing, the tester can be shown to be robust by virtue of its privacy guarantees; see \cref{sec:strong-sample-complexity}. The tester we describe here is simpler than Narayanan's original tester, in part because we need only robustness, not privacy.}
Our polynomial-time algorithm is also an adaptation of the following robust tester.

As in many robust statistics settings, the overall scheme relies on finding a ``good enough'' subset of $(1-\cor)\ns$ samples $S \subseteq [\ns]$, to then apply a non-robust algorithm on $S$ -- in this case, the simple tester based on $\normtwo{\sum_{i \in S} X_i}^2$.
For $X_1,\ldots,X_{\ns} \in\R^{\dims}$ which are clear from context and $T \subseteq [\ns]$, let $\Sum(T) = \sum_{i \in T} X_i$.
\begin{definition}[Good Enough Subset (Informal)]
\label{def:intro-good-enough}
    For $X_1,\ldots,X_\ns \in \R^\dims$, we say $S \subseteq [\ns]$, $|S| = (1-\cor)\ns$ is \emph{good enough} if, for every $T \subseteq S$ with $|T| \leq \cor \ns$,
\[
    \normtwo{ \Sum(T) }^2 \leq |T|\dims + \tilde{O}(\cor^{1.5} \ns^{1.5} \sqrt{\dims} + \cor^2 \ns^2) \text{ and } \abs{\ip{\Sum(S \setminus T)} {\Sum(T) }} \leq \tilde{O}( \cor \ns^{1.5} \sqrt{\dims} + \cor^2 \ns^2 ) \,.
    \]
\end{definition}\noindent
The choice of parameters in the definition guarantees that any subset of size $(1-\cor)\ns$ of $\ns$ independent samples from $\cN(0,I)$ or $\cN(\mu,I)$, for small-enough $\mu$, is good enough with high probability.
To see why this holds intuitively, observe that if $S$ consists of good samples only, then
$\abs{\ip{\Sum(S \setminus T)}{\Sum(T)}}$ is roughly distributed as $\cN(0, \cor \ns^2 \dims)$, and we need a union bound over $\approx \ns^{\cor \ns}$ choices of $T$.

\begin{definition}[Narayanan's tester]Given $\ns$ $\cor$-contaminated samples, Narayanan's tester finds any good enough subset $S$ and outputs $\mathbf{H}_0$ if $\normtwo{\Sum(S)}^2 - (1-\cor)\ns \dims \ll \dst^2 \ns^2$ and $\mathbf{H}_1$ otherwise. \vspace{-.75em}
\end{definition}
\subparagraph{Analysis Sketch.} Let $X_1,\ldots,X_\ns$ be an $\cor$-contaminated draw from either $\cN(0,I)$ or $\cN(\mu,I)$ for some $\normtwo{\mu} = \dst$.
Let $G \subseteq [\ns]$ be the uncorrupted samples.
(For simplicity, in this overview we assume the adversary has only added samples; removed samples can be handled without much more difficulty.)\cmargin{Removed: ``also for simplicity we will assume $\dst \leq 1$'', isn't it everywhere?}
Let $S \subseteq [\ns]$ be any good enough subset; we want to show $\normtwo{\Sum(S)}^2 - (1-\cor)\dims \geq \Omega(\dst^2 \ns^2)$ in the alternative case, and $\normtwo{\Sum(S)}^2 - (1-\cor)\dims \ll \dst^2 \ns^2$ in the null.
First,
\[
  \E \normtwo{\Sum(G)}^2 - (1-\cor) \dims = \begin{cases}
  \E \sum_{i \neq j \in G} \ip{X_i}{X_j} \approx \dst^2 \ns^2 & \text{ in the alternative case}\\
  0 & \text{ in the null case}
  \end{cases}
\]
and standard concentration arguments show that this holds with high probability so long as $\ns \gg \sqrt{\dims}/\dst^2$.
So we just have to show that $|\normtwo{\Sum(S)}^2 - \normtwo{\Sum(G)}^2| \ll \dst^2 \ns^2$.
This is doable using the following lemma.

\begin{lemma}[Main Lemma for Narayanan's Tester]
  For any two good-enough subsets $S,S'$ of $X_1,\ldots,X_{\ns} \in \R^{\dims}$, $\abs{\normtwo{\Sum(S)}^2 - \normtwo{\Sum(S')}^2} \ll \dst^2 \ns^2$, so long as $\ns \gg \dims \cor^2 / \dst^4$.
\end{lemma}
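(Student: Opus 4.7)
The plan is to decompose $S$ and $S'$ into their common part and their symmetric differences, apply the good-enough definition piece by piece, and exploit that $|S| = |S'|$ forces $|S \setminus S'| = |S' \setminus S|$. Concretely, let $C = S \cap S'$, $A = S \setminus S'$, and $B = S' \setminus S$, so that $A$ and $B$ are disjoint with $|A| = |B| \leq \cor\ns$. Writing $\Sum(S) = \Sum(C) + \Sum(A)$ and $\Sum(S') = \Sum(C) + \Sum(B)$ and expanding the squared norms yields
\[
  \normtwo{\Sum(S)}^2 - \normtwo{\Sum(S')}^2 \;=\; \bigl(\normtwo{\Sum(A)}^2 - \normtwo{\Sum(B)}^2\bigr) \;+\; 2 \iprod{\Sum(C), \Sum(A) - \Sum(B)} \, .
\]

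The cross-term is handled directly by the inner-product clause of the good-enough definition: since $A \subseteq S$ with $|A| \leq \cor\ns$ and $S \setminus A = C$, taking $T = A$ gives $|\iprod{\Sum(C), \Sum(A)}| \leq \tilde{O}(\cor\ns^{1.5}\sqrt{\dims} + \cor^2\ns^2)$, and applying the analogous bound to $S'$ with $T = B$ controls $\iprod{\Sum(C), \Sum(B)}$ identically. For the norm-difference term, the norm clause of good-enough gives $\normtwo{\Sum(A)}^2 \leq |A|\dims + \tilde{O}(\cor^{1.5}\ns^{1.5}\sqrt{\dims} + \cor^2\ns^2)$ and likewise for $B$, together with the matching lower bound of the same form that good-enough morally carries as a two-sided statement (since the upper bound comes from a concentration property of subsets of genuine samples, and that same concentration equally rules out the sum being too small). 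Because $|A|\dims = |B|\dims$, the two baselines cancel and $|\normtwo{\Sum(A)}^2 - \normtwo{\Sum(B)}^2| \leq \tilde{O}(\cor^{1.5}\ns^{1.5}\sqrt{\dims} + \cor^2\ns^2)$.

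Combining both pieces, $|\normtwo{\Sum(S)}^2 - \normtwo{\Sum(S')}^2| \leq \tilde{O}(\cor\ns^{1.5}\sqrt{\dims} + \cor^2\ns^2)$, and it remains to check this is $\ll \dst^2\ns^2$. The first term is $\ll \dst^2\ns^2$ precisely when $\ns \gg \cor^2\dims/\dst^4$, matching the hypothesis, while the second term is $\ll \dst^2\ns^2$ under the standing assumption $\cor \ll \dst$. The main obstacle is the cancellation step for the norm difference: a one-sided good-enough bound alone would leave an uncancelled $|A|\dims \leq \cor\ns\dims$ contribution requiring the strictly larger $\ns \gg \cor\dims/\dst^2$, so the claimed sample complexity really hinges on (i) the equality $|A| = |B|$, and (ii) good-enough being a two-sided concentration statement around the baseline $|T|\dims$ rather than a bare upper bound.
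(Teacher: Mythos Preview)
Your proposal is correct and follows essentially the same approach as the paper: decompose into the common part $C = S \cap S'$ and the symmetric differences $A = S \setminus S'$, $B = S' \setminus S$, bound the two cross terms $\iprod{\Sum(C),\Sum(A)}$ and $\iprod{\Sum(C),\Sum(B)}$ via the inner-product clause of good-enough applied to $S$ and $S'$ respectively, and cancel the $|A|\dims = |B|\dims$ baselines in the norm term using the norm clause and $|A| = |B|$. Your explicit flag that the cancellation requires a two-sided version of the norm clause is apt; the paper's informal definition states only an upper bound but its proof (like yours) silently uses the two-sided form, which is indeed what holds for subsets of genuine Gaussian samples.
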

\begin{proof}
  We divide $S$ into $S \cap S'$ and $S \setminus S'$ and $S'$ into $S' \cap S$ and $S' \setminus S$, so we have
  \begin{align*}
  \normtwo{\Sum(S)}^2 - \normtwo{\Sum(S')}^2 & = \normtwo{\Sum(S \cap S')}^2 + 2 \ip{\Sum(S \cap S')}{\Sum(S \setminus S')} + \normtwo{\Sum(S \setminus S')}^2 \\
  & - \normtwo{\Sum(S' \cap S)}^2 - 2 \ip{\Sum(S' \cap S)}{\Sum(S' \setminus S)} - \normtwo{\Sum(S' \setminus S)}^2 \, .
  \end{align*}
  Now, $\normtwo{\Sum(S \cap S')}^2 - \normtwo{\Sum(S' \cap S)}^2 = 0$, and since $|S \setminus S'| = |S' \setminus S|$, also $|\normtwo{\Sum(S \setminus S')}^2 - \normtwo{\Sum(S' \setminus S)}^2| \leq \tilde{O}(\cor^{1.5} \ns^{1.5} \sqrt{\dims} + \cor^2 \ns^2)$, using good-enough-ness.
  By using good-enough-ness again, both $\abs{\ip{\Sum(S \cap S')} {\Sum(S \setminus S')}}$ and $\abs{\ip{\Sum(S' \cap S)}{\Sum(S' \setminus S)}}$ are at most $\tilde{O}(\cor \ns^{1.5} \sqrt{\dims} + \cor^2 \ns^2)$.
  Since $\cor \ll \dst$, we have $\cor^2 \ns^2 \ll \dst^2 \ns^2$, and since $\ns \gg \dims \cor^2 / \dst^4$, we have $\cor \ns^{1.5} \sqrt{\dims} \ll \dst^2 \ns^2$. 
\end{proof}

\noindent This completes the analysis of Narayanan's tester.
We record two important observations:
\begin{enumerate}
    \item The reason that the tester requires $\dims \cor^2 / \dst^4$ samples lies in the term $\ip{\Sum(S \cap S')}{\Sum(S \setminus S')}$.
    Let's think of $S' = G$, the good samples, and $S$ as some good-enough subset which contains around $\cor \ns$ corrupted samples, $S \setminus G$.
    The adaptive adversary could choose the samples in $S \setminus G$ to make $\Sum(S \setminus G)$ too (anti)-correlated with $\Sum(S \cap G)$.
    There is a limit to how large he can make the (anti)correlation before $S$ is no longer ``good enough'' -- namely, he can make $\ip{\Sum(S \cap G)}{\Sum(S \setminus G)}$ as large as the largest inner product of the form $\ip{\Sum(G \setminus T)}{\Sum(T)}$ for $T \subseteq G$ with $|T| = \cor \ns$, which is around $\cor \ns^{1.5} \sqrt{\dims}$ by standard concentration.

    \item Narayanan's tester requires finding a good-enough subset of $(1-\cor)\ns$ samples; \emph{prima facie} this requires exponential-time brute-force search, but we describe a polynomial-time variant of his approach later.
\end{enumerate}

\medskip\noindent\textbf{Using Only $\dims \cor / \dst^2$ Samples if the Adversary is Oblivious and Not ``Too Big''.}
Narayanan's tester is information-theoretically optimal (up to log factors) against adaptive adversaries.
As our first taste of improved testing against an oblivious adversary, consider the following toy setup.
Suppose the adversary is not only oblivious but also promises us that the $\cor \ns \dims$ bad samples $B$ will satisfy $\normtwo{\Sum(B)}^2 \leq O(\cor \ns \dims)$; roughly, this constraints the adversary to add $\cor n$ vectors of norm $\sqrt{\dims}$ which are approximately pairwise orthogonal.
(If the adversary adds any vector of norm much larger, we can remove it before proceeding.)
We will show how to test using $\sqrt{\dims}/\dst^2 + \dims \cor / \dst^2$ samples, improving on Narayanan's tester for $\cor \gg \dst^2$.

We revisit the simple tester using just $\normtwo{\Sum([n])}^2$.
Dividing $[n]$ into good and corrupted samples $G, B$,
\[
\normtwo{\Sum([n])}^2  - \ns \dims = \Paren{\normtwo{\Sum(G)}^2 - (1-\cor) \ns \dims} + 2 \ip{\Sum(G)}{\Sum(B)} + \normtwo{\Sum(B)}^2 - \cor \ns \dims \, .
\]
As usual, $\normtwo{\Sum(G)}^2 - (1-\cor) \ns \dims \geq \Omega(\dst^2 \ns^2)$ in the alternative case and $\ll \dst^2 \ns^2$ in the null; we want to show the remaining terms are $\ll \dst^2 \ns^2$ in magnitude.
Trivially, $|\normtwo{\Sum(B)}^2 - \cor \ns \dims| \leq O(\cor \ns \dims) \ll \dst^2 \ns^2$ when $\dims \cor / \dst^2 \ll \ns$, using our promise on $\normtwo{\Sum(B)}^2$.

Now let's look at the term where we make the improvement over Narayanan's tester: $\ip{\Sum(G)}{\Sum(B)}$; we are looking to use obliviousness to beat the bound $\cor \ns^{1.5} \sqrt{\dims}$.
We fix $\Sum(B)$ and then sample the random vector $\Sum(G)$, which is distributed either as $\cN(0,(1-\cor) \ns I)$ or $\cN((1-\cor) \ns \mu, (1-\cor) \ns I)$, meaning
\[
\ip{\Sum(G)}{\Sum(B)} \sim \begin{cases}
  \cN \Paren{ (1-\cor) \ns \ip{\mu}{\Sum(B)}, (1-\cor)\ns \normtwo{\Sum(B)}^2}  & \text{ in the alternative case}\\
  \cN \Paren{ 0, (1-\cor)\ns \normtwo{\Sum(B)}^2} & \text{ in the null case}
\end{cases} \, .
\]
So, $\abs{\ip{\Sum(G)}{\Sum(B)}} \leq O(\ns \dst \cdot \sqrt{\cor \ns \dims} + \ns \sqrt{\cor \dims}) \ll \dst^2 \ns^2$, as $\normtwo{\Sum(B)}^2 \leq O(\cor \ns \dims)$ and $\ns \gg \dims \cor / \dst^2$.

\noindent From this simple reasoning, we draw the following important conclusion:
\begin{quote}
    \begin{center}
        \emph{If the adversary is oblivious and is constrained to add samples $B$ which aren't ``too big'', then we can test using fewer samples than against an adaptive adversary.}
    \end{center}
\end{quote}

This leads us to two key questions, whose answers form the main technical ingredients in our oblivious tester.
Can we take an obliviously-corrupted dataset and remove samples in some way to ensure that in the resulting \emph{filtered} dataset, the adversary has added samples $B$ which aren't ``too big'', but do so in a way which doesn't introduce dependencies between good and bad samples which would break the obliviousness we're relying on?
And, what is the right definition for ``too big'' -- could a more refined definition lead to a tester using fewer than $\dims \cor / \dst^2$ samples? 

\medskip\noindent\textbf{Friendly Oblivious Adversaries and The Sum+Variance Tester.}
We will tackle the above questions in reverse order.
We introduce a key definition:
\begin{definition}[Informal, see~\cref{ass:friendly}] A \emph{friendly} oblivious adversary introduces $\{X_i\}_{i \in B}$ such that
\begin{enumerate}
        \item \label{itm:intro-friendliness-1} For disjoint $S,T \subseteq B$ with $|S|, |T| \leq \cor \ns$, $\abs{\ip{\Sum(S)}{\Sum(T)}} \leq \tilde{O}(\sqrt{|S| \cdot |T|} \cdot (\sqrt{\cor \ns \dims} + \cor \ns ))$.
        \item \label{itm:intro-friendliness-2} For distinct $i,j \in B$, $\abs{\ip{X_i}{X_j}} \leq \tilde{O}(\sqrt{\dims})$, and for every $i \in B$, $\normtwo{X_i}^2 = \dims \pm \tilde{O}(\sqrt{\dims})$.
    \end{enumerate}
The parameters are chosen so that every pair of subsets $S,T$ of \emph{good} samples would satisfy these conditions.
\end{definition}
To clarify why friendliness refines the ``not too big'' condition $\normtwo{\Sum(B)}^2 \leq O(\cor \ns \dims)$ from above, observe that subject to friendliness, for any $S \subseteq B$,
\[
\normtwo{\Sum(S)}^2 = |S| \cdot (\dims \pm \tilde{O}(\sqrt{\dims})) + O \Paren{ \E_{S_1,S_2} \ip{\Sum(S_1)}{\Sum(S_2)}} = \dims |S| + \tilde{O}(|S| \sqrt{\cor \ns \dims} + |S| \sqrt{\dims})
\]
where $S_1,S_2$ is a random partition of $S$.
In particular, $\normtwo{\Sum(B)}^2 = \cor \ns \dims \pm o(\dst^2 \ns^2)$ whenever $\ns \gg \dims \cor^3 / \dst^4$.

Now we can introduce our robust mean tester which uses $\tfrac{\sqrt{\dims}}{\dst^2} + \tfrac{\dims \cor^3}{\dst^4} + \tfrac{\dims^{2/3} \cor^{2/3}}{\dst^2}$ samples (up to log factors) in the presence of a friendly oblivious adversary.

\medskip\noindent\textbf{The Sum+Variance Tester} (Algorithm~\ref{alg:oblivious-tester}): Given $X_1,\ldots,X_{\ns} \in \R^{\dims}$, if $\normtwo{\Sum([\ns])}^2 - \ns \dims \geq \Omega(\dst^2 \ns^2)$, or if
    \[
      \frac 1 {\ns} \sum_{i \in [\ns]} \Paren{\frac{\ip{X_i}{\Sum([\ns])} - \dims }{\normtwo{\Sum([\ns])}}}^2 \geq 1 + \Omega \Paren{ \frac{\dst^4 \ns}{\cor \dims}} \, ,
    \]
    return $\mathbf{H}_1$, otherwise return $\mathbf{H}_0$.

\vspace{-0.5em}

\subparagraph{Analysis Sketch.}
For starters, we need to make sure that in the null case, $\normtwo{\Sum([\ns])}^2 - \ns \dims \ll \dst^2 \ns^2$.
Splitting $S$ into good samples $G$ and corrupted samples $B$, we know $\normtwo{\Sum(G)}^2 = (1-\cor) \ns \dims \pm O(\ns \sqrt{\dims})$ and $|\ip{\Sum(G)}{\Sum(B)}| \leq O(\ns \sqrt{\cor \dims})$ using standard concentration tools and obliviousness, and $\normtwo{\Sum(B)}^2 = \cor \ns \dims + \tilde{O}(\cor^{1.5} \ns^{1.5} \sqrt{\dims} + \cor \ns \sqrt{\dims})$ by friendliness.
All together,
\begin{align*}
\normtwo{\Sum([\ns])}^2 - nd & = \normtwo{\Sum(G)}^2 + 2 \ip{\Sum(G)}{\Sum(B)} + \normtwo{\Sum(B)}^2 - nd = \tilde{O}(\ns \sqrt{\dims} +  \cor^{1.5} \ns^{1.5} \sqrt{\dims})
\end{align*} 
which is at most $\dst^2 \ns^2$ exactly when $\ns \gg \tfrac{\sqrt{\dims}}{\dst^2} + \tfrac{\dims \cor^3}{\dst^4}$.

Ideally, we would show next that in the alternative case $\normtwo{\Sum([\ns])}^2 - \ns \dims \gg \dst^2 \ns^2$, but even a friendly, oblivious adversary can ensure this doesn't happen when $\ns \ll \tfrac{\dims \cor}{ \dst^2}$.
With knowledge of the vector $\mu$, he can introduce samples $\{X_i\}_{i \in B}$ such that $\ip{X_i}{\mu} \approx - \tfrac{\dst^2}{\cor}$, which introduces cancellations with $\E \Sum(G)$ that reduce $\normtwo{\Sum([\ns])}^2$.
Overall, he can ensure $\abs{\normtwo{\Sum([\ns])}^2 - \ns \dims} \ll \dst^2 \ns^2$.

But now we encounter a typical theme in robust statistics: the adversary has had to introduce a small set of $X_i$'s such that $\ip{X_i}{\Sum([\ns])}$ is more negative than typical, thereby increasing the variance among $\{ \ip{X_i}{\Sum([\ns])} \}_{i \in [\ns]}$.
For $i \in B$, we expect $\ip{X_i}{\Sum([\ns])}$ to be $\tfrac{\ns \dst^2}{\cor}$ smaller than usual, so heuristically, 
\[
\frac 1 {\ns} \sum_{i \in B} \Paren{\frac{\ip{X_i}{\Sum([\ns])} - \dims }{\normtwo{\Sum([\ns])}}}^2 \gtrsim \frac 1 {\ns} \cdot \cor \ns \cdot \frac{\dst^4 \ns^2}{\cor^2 \ns \dims} = \frac{\dst^4\ns}{\cor \dims} \, ,
\]
where we used $\normtwo{\Sum([\ns])}^2 \approx \ns \dims$.
Adding the contribution from the samples in $G$ gives us $1 + \Omega(\tfrac{\dst^4 \ns}{\cor \dims})$.
We make this idea rigorous in \cref{sec:obliviously-robust-tester}.

Of course, outputting $\mathbf{H}_1$ when $\tfrac 1 {\ns} \sum_{i \in B} \Paren{\frac{\ip{X_i}{\Sum([\ns])} - \dims }{\normtwo{\Sum([\ns])}}}^2 = 1 + \Omega(\tfrac{\dst^4 \ns}{\cor \dims})$ only makes sense if the adversary cannot make this happen in the null model.
We show (\cref{sec:oblivious-tester-null-variance}) that no friendly oblivious adversary can make $\tfrac 1 {\ns} \sum_{i \in B} \Paren{\frac{\ip{X_i}{\Sum([\ns])} - \dims }{\normtwo{\Sum([\ns])}}}^2 = 1 + \Omega(\tfrac{\dst^4 \ns}{\cor \dims})$ if $n \gg \tfrac{\dims^{2/3} \cor^{2/3}}{\dst^{8/3}}$.\todonote{sam: should add an explanation of this variance bound here. (except it's not gonna fit)}

\medskip\noindent\textbf{Friendliness via Obliviousness-Preserving Filtering.} 
We're still missing a key ingredient: how can we force an oblivious adversary to be friendly?
Ensuring condition~\ref{itm:intro-friendliness-2} of friendliness is straightforward.
If we see any $|\normtwo{X_i}^2 -\dims| \gg \sqrt{\dims}$, that $X_i$ must have been introduced by the adversary and can be safely removed, and similarly if any pair $i,j$ has $|\ip{X_i}{X_j}| \gg \sqrt{\dims}$ then (by obliviousness) both $X_i,X_j$ must be corrupted samples and can be removed.
(We are using $\gg$ to hide logarithmic factors.)

But what about condition \ref{itm:intro-friendliness-1}?
A natural idea is to preprocess $X_1,\ldots,X_{\ns}$ by removing any subsets $S,T$ of size at most $\cor \ns$ which violate condition \ref{itm:intro-friendliness-1}.
If we had a subset $S$ which grossly violated \ref{itm:intro-friendliness-1} in the sense that $\normtwo{\Sum(S)}^2 \geq 100 \cor \ns \dims$, we could conclude that $S$ contains at least $99\%$ bad samples.
This might seem good enough -- indeed, a common paradigm in robust statistics is \emph{filtering}, removing samples in way which removes at least as many bad samples as good ones, since any such procedure can ultimately remove at most $\cor \ns$ good samples. 
\emph{However, removing any good samples after looking at all the samples, including the corrupted ones, creates dependencies between good and bad samples, thus breaking obliviousness!}

\vspace{-0.75em}

\subparagraph{Sample-Splitting to Preserve Obliviousness.}
We introduce an \emph{obliviousness-preserving} filter. We: 
\begin{enumerate}
    \item Randomly split $X_1,\ldots,X_{\ns}$ into $U$ and $V$.
    \item Using only $U$, identify a set of unit vectors $v_1,\ldots,v_\ell \in \R^{\dims}$. \label{itm:intro-find-directions}
    \item For all $j \leq \ell$, remove from $V$ any $X_i$ such that $|\ip{X_i}{v_j}| \gg \sqrt{\log \ns}$, then return $V$.
\end{enumerate}
The idea is that the returned $V$ will (with high probability) be a set of samples corrupted by a friendly oblivious adversary.
The threshold $\sqrt{\log \ns}$ is chosen so that with high probability no good sample is removed from $V$.
This means that with high probability the scheme preserves obliviousness, since we could have gotten the same outcome by drawing the good samples in $V$ only after performing filtering.\footnote{In reality we will perform several rounds of obliviousness-preserving filtering, splitting $V$ again into $U',V'$ and so on; as rounds progress we ensure friendliness for pairs of subsets $S,T$ of increasing size.
We will ignore this detail in our technical overview.}

The challenge is ensuring friendliness, which of course rests on the implementation of step~\ref{itm:intro-find-directions}.
In this step, the basic idea is to find a family of subsets $T_1,\ldots,T_\ell \subseteq U$ such that, for each $i\in[\ell]$,
\begin{itemize}
    \item $|T_i| \ll \cor \ns / (\log \ns)^{O(1)}$ (here $\ll$ hides constants; the $(\log \ns)^{O(1)}$ is crucial, as explained below), and
    \item if we choose $v_i = \Sum(T_i)/\normtwo{\Sum(T_i)}$ and remove from $U$ any $X_j$ such that $\iprod{X_j, v_i} \gg \sqrt{\log \ns}$, then $U$ satisfies condition~\ref{itm:intro-friendliness-1} of friendliness.
    If this happens, we'll say that $T_1,\ldots,T_m$ ``cleans'' $U$.
\end{itemize}

We need to establish two things: first, that such a family $T_1,\ldots,T_\ell$ which cleans $U$ exists, and second, with high probability over the random split $U,V$, any $T_1,\ldots,T_\ell \subseteq \{X_1,\ldots,X_{\ns}\}$ which cleans $U$ also cleans $V$.
However, these are in tension.
For the first, we would like to be able to choose the sets $T_1,\ldots,T_\ell$ as large as possible, as this gives more flexibility in the choice of filtering directions and hence makes it easier to clean $U$.
But, for the second, we need tight control over how many different choices of $T_1,\ldots,T_\ell$ the cleaning algorithm could make, because we will need to make a union bound over all such choices; the smaller the sets $T_1,\ldots,T_\ell$ have to be, the fewer choices there are.

\vspace{-0.75em}

\subparagraph{Compression and Small Witnesses.}
The key idea to balance these concerns is to show that if $S_1,S_2$ violate $\theta$-friendliness condition~\ref{itm:intro-friendliness-1}, then we can compress $S_1$ to a smaller set $S_1'$ such that removing all $X_i \in S_2$ with $\Big\langle X_i,\tfrac{\Sum(S_1')}{\normtwo{\Sum(S_1')}}\Big\rangle$ makes progress in cleaning $U$, which means we can add $S_1'$ to our list of $T_i$s.
The following lemma shows this, as long as $S_1 \cup S_2$ already satisfy $\lambda$-friendliness for some $\lambda \gg \theta$ -- we will be able to ensure that they already do via induction.

\begin{lemma}[Small Witness Lemma, Basic Version of \cref{lem:filter-compression}]
\label{lem:small-witness-intro}
  Let $S_1,S_2 \subseteq \R^{\dims}$ have $|S_1|,|S_2| = \cor \ns$ and $\ip{\Sum(S_1)}{\Sum(S_2)} \geq \cor \ns \cdot \sqrt{\theta d}$.
  Suppose $S_1 \cup S_2$ is $\lambda$-friendly, for some $\lambda \gg \theta$, and that there is some parameter $C > 0$ such that $|\ip{X}{X'}| \leq \theta \sqrt{\dims}/C$ and $\|X_i\|^2 = d \pm \theta \sqrt{d} / C$ for all $X,X' \in S_1 \cup S_2$.
  Then there is $S_1' \subseteq S_1$ with $|S_1'| \leq \cor \ns /C$ and $\Omega(\cor \ns)$ vectors $X \in S_2$ such that
  $\Big\langle X_i,\tfrac{\Sum(S_1')}{\normtwo{\Sum(S_1')}}\Big\rangle \geq \Omega\Big(\sqrt{\tfrac{\theta}{C \cor \ns}}\Big)$.\end{lemma}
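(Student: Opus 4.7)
My plan is a random sampling argument: include each $Y \in S_1$ in $S_1'$ independently with probability $1/C$, so $|S_1'| \leq \varepsilon n / C$ with high probability (one can truncate to enforce this deterministically at the end). First I would collect the baseline magnitude estimates. Combining $\|X\|^2 = d \pm \theta\sqrt{d}/C$ with $\lambda$-friendliness to control off-diagonal contributions gives $\|\Sum(S_j)\|^2 = \varepsilon n \cdot d \cdot (1 + o(1))$ for $j \in \{1,2\}$ in the assumed regime, and the same diagonal-plus-friendliness computation taken in expectation gives $\mathbb{E}[\|\Sum(S_1')\|^2] = \Theta(\varepsilon n d / C)$, concentrating around its mean by Bernstein. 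For every $X \in S_2$, linearity gives $\mathbb{E}[\ip{X}{\Sum(S_1')}] = \ip{X}{\Sum(S_1)}/C$, while summing over $X$ yields $\mathbb{E}[\ip{\Sum(S_2)}{\Sum(S_1')}] \geq \varepsilon n \sqrt{\theta d}/C$ by the hypothesis.

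Define the count
\[
W := \big|\{X \in S_2 : \ip{X}{\Sum(S_1')} \geq c\sqrt{\theta d}/C\}\big|.
\]
The lemma follows if $W = \Omega(\varepsilon n)$ with positive probability: every such $X$ satisfies $\ip{X}{\Sum(S_1')/\|\Sum(S_1')\|} \geq \Omega(\sqrt{\theta/(C\varepsilon n)})$ by the norm bound above. To lower bound $\mathbb{E}[W]$, I would first identify a good subset $S_2^{\star} \subseteq S_2$ of size $\Omega(\varepsilon n)$ on which $\ip{X}{\Sum(S_1)} \geq 2c\sqrt{\theta d}$, obtained via a reverse-Markov argument on the hypothesis combined with the individual ceiling $|\ip{X}{\Sum(S_1)}| \leq \tilde{O}(\sqrt{\lambda \varepsilon n d})$ that follows from $\lambda$-friendliness on the pair $(\{X\}, S_1)$ (the hypothesis $\lambda \gg \theta$ keeps ceiling versus target in useful proportion). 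For each $X \in S_2^{\star}$, a per-$X$ Paley--Zygmund-style anti-concentration then gives $\Pr[\ip{X}{\Sum(S_1')} \geq c\sqrt{\theta d}/C] = \Omega(1)$, using the pointwise cap $|\ip{X}{Y}| \leq \theta\sqrt{d}/C$ to bound $\mathbb{E}[\ip{X}{\Sum(S_1')}^2]$.

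The main obstacle is the concentration of $W$ around $\mathbb{E}[W]$. Per-$X$ concentration of $\ip{X}{\Sum(S_1')}$ is too weak to support a union bound over $S_2^{\star}$, since for a fixed $X$ the variance can be comparable to the square of the mean. I therefore plan to proceed via a direct second-moment bound on $W$: the covariance $\mathrm{Cov}(\mathbf{1}_{X_1 \in W}, \mathbf{1}_{X_2 \in W})$ for distinct $X_1, X_2 \in S_2^{\star}$ is to be controlled by pairing the pointwise cap $|\ip{X_i}{Y}| \leq \theta\sqrt{d}/C$ (which bounds the contribution of each summand in $\ip{X_i}{\Sum(S_1')}$) with $\lambda$-friendliness applied to $(\{X_1, X_2\}, S_1)$ (which bounds the cross-correlation between the two partial sums), so that $\mathrm{Var}(W) \ll \mathbb{E}[W]^2$ and Chebyshev delivers $W \geq \mathbb{E}[W]/2$ with positive probability. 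The delicate part is managing the interplay between the pointwise bound and $\lambda$-friendliness: the factor of $C$ in the lemma conclusion emerges precisely from this tradeoff, and making every step quantitative in the right way is where the technical work lies.
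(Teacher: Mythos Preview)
Your overall strategy matches the paper's proof of the full lemma (\cref{lem:filter-compression}) quite closely: random subsample $S_1' \subseteq S_1$, first identify a ``good'' subset of $S_2$ on which $\ip{X}{\Sum(S_1)}$ is individually large (the paper isolates this as \cref{claim:inner-product-uniformity}), then run a per-vector second-moment argument on $\ip{X}{\Sum(S_1')}$ to show each such $X$ survives the subsampling with constant probability.

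However, what you flag as ``the main obstacle'' is not an obstacle at all, and this is exactly where the paper's argument is simpler than yours. Once you know that for every $X \in S_2^{\star}$ individually
\[
\Pr_{S_1'}\bigl[\ip{X}{\Sum(S_1')} \geq c\sqrt{\theta d}/C\bigr] \geq 0.1,
\]
linearity of expectation gives $\E_{S_1'}[W] \geq 0.1\,|S_2^{\star}|$, and since $W$ is an integer in $[0,|S_2^{\star}|]$ there \emph{exists} a deterministic choice of $S_1'$ achieving $W \geq 0.1\,|S_2^{\star}|$. That is the entire argument: you never need concentration of $W$, nor any covariance bound between the indicators. Your proposed second-moment analysis of $W$ would probably go through, but it is genuinely extra work the paper does not do.

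Two smaller points. First, Paley--Zygmund is for nonnegative random variables; here $\ip{X}{\Sum(S_1')}$ can be negative. The paper simply uses Chebyshev: it shows the mean dominates the standard deviation (mean $\geq 5\sqrt{\mathrm{Var}}$), which yields the constant-probability lower tail directly. Second, your reverse-Markov for $S_2^{\star}$ using only the single-element ceiling $|\ip{X}{\Sum(S_1)}| \leq \tilde{O}(\sqrt{\lambda \varepsilon n d})$ does not yield $|S_2^\star| = \Omega(\varepsilon n)$; the paper's \cref{claim:inner-product-uniformity} instead applies balancedness to \emph{subsets} of $S_2$ (first to cap the contribution of the ``very large'' elements, then to the remaining sum) and obtains $|S_2^{\star}| \geq \tfrac{\theta}{8\lambda}\,|S_2|$. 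The intro lemma's ``$\Omega(\varepsilon n)$'' is informal and hides the $\theta/\lambda$ polylogarithmic factor.
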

\vspace{-0.5em}
In \cref{lem:small-witness-intro}, we think of $\theta \approx \cor \ns (\log \ns)^{O(1)}$, so that $\ip{\Sum(S_1)}{\Sum(S_2)} \geq \cor \ns \sqrt{\theta d}$ is a violation of friendliness, and $C \approx (\log \ns)^{O(1)}$ so that $S_1'$ is significantly smaller than $S_1$. 
Proving \cref{lem:small-witness-intro} is outside the scope of this overview, but the strategy is to first show that a large number of vectors in $S_2$ are correlated with $\Sum(S_1)$ (\cref{claim:inner-product-uniformity}), and then show this is preserved when we replace $S_1$ with a random subset $S_1' \subset S_1$.
\cref{lem:small-witness-intro} shows that adding $S_1'$ to the list of $T_i$'s will result in removing $\Omega(\cor \ns)$ vectors; this can only happen $O(1)$ times before all bad samples would be removed, so that we can think of $\ell = O(1)$.

\vspace{-0.75em}

\subparagraph{Small Filters Generalize from $U$ to $V$.}
Lastly, we need to establish that, if we find a short list of small $T_1,\ldots,T_\ell$ which cleans $U$, then with high probability it also cleans $V$.
Consider the set $\mathcal{T}$ of all possible $(T_1,\ldots,T_\ell) \in \binom{\ns}{\cor \ns / (\log \ns)^{O(1)}}^\ell$; note that $|\mathcal{T}| \leq 2^{\cor \ns / (\log \ns)^{O(1)}}$ because $\ell = O(1)$.

Fixing some $(T_1,\ldots,T_\ell) \in \mathcal{T}$, our goal is to show that with probability at least $1-2^{-\Omega(\cor \ns)}$ over the random split $U,V$, if $T_1,\ldots,T_\ell$ cleans $U$ then it cleans $[\ns]$; then we can take a union bound over all of $\mathcal{T}$.
By contrapositive, it is enough to show that, if after removing all $X_i$ from $X_1,\ldots,X_\ns$ such that $\ip{\Sum(T_j)}{X_i} \gg \sqrt{\log n}$, some subsets $S_1,S_2 \subseteq [\ns]$ remain which violate $\lambda$-friendliness, then with probability $1 - 2^{-\Omega(\cor \ns)}$ the random set $U$ also contains some $S_1',S_2'$ which violate $\theta$-friendliness, for some $\theta$ not too much less than $\lambda$.
(This distinction between $\theta,\lambda$ is the origin of the two different friendliness levels in the small witness lemma.)

For the latter, standard concentration arguments show that, with probability $1-2^{-\Omega(\cor \ns)}$, the offending sets $S_1,S_2$ get split evenly between $U$ and $V$, and this in turn is enough to show that some subsets of $U \cap S_1, U \cap S_2$ also violate friendliness.

\vspace{-0.75em}

\subsubsection{Lower Bounds}
\noindent\textbf{Information-Theoretic Lower Bound for Obliviously-Robust Testing.}
Among our lower bounds, the greatest conceptual innovation lies in our proof that robust mean testing with an oblivious adversary requires $\tilde{\Omega}\Paren{\min\Paren{\frac{\dims^{2/3}\cor^{2/3}}{\dst^{8/3}}, \frac{\dims \cor}{\dst^2}}}$ samples.
The remaining terms in the lower bound, $\tfrac{\sqrt{\dims}}{\alpha^2}$ and $\tfrac{d \cor^3}{\dst^4}$, come respectively from the complexity of non-robust mean testing and from a simpler argument using a Huber adversary, respectively. (The latter we describe below.)

To prove the lower bound, we will describe a distribution over mean vectors $\mu$ and adversarial vectors $\{X_i\}_{i \in B}$ such that the joint distribution of $\{X_i\}_{i \in B}$ together with $(1-\cor)n$ samples from $\cN(\mu,I)$ is close in total variation to $\cN(0,I)^{\otimes n}$.
The key trick in designing this distribution is to \emph{correlate}, but \emph{not perfectly align}, $\Sum(B)$ with $-\mu$.
Concretely, we:
\begin{enumerate}
    \item Draw $X_i \sim \cN(0,I)$ for $i \in B$.
    \item Draw $\mu = - \beta \Sum(B) - z$, where $\beta = \beta(\ns,\dims,\cor,\dst) > 0$ is a suitable constant and $z \sim \cN(0, \tfrac{\dst^2}{\dims} I)$.
\end{enumerate}
We show via direct calculation in \cref{sec:oblivious-lb} that the $\chi^2$ divergence, and hence total variation distance, between these two distributions on sets of $\ns$ samples is $o(1)$ so long as $\ns \ll \tilde{\Omega}\Paren{\min\Paren{\frac{\dims^{2/3}\cor^{2/3}}{\dst^{8/3}}, \frac{\dims \cor}{\dst^2}}}$.
The trick above of sampling the corrupted samples $\{X_i\}_{i \in B}$ \emph{before} drawing $\mu$ keeps these calculations tractable.

\medskip\noindent\textbf{Information-Theoretic Lower Bound for Huber-Robust Testing.}
Our final information-theoretic lower bound shows that $\Omega(\dims \cor^3 / \dst^4)$ samples are needed in the presence of a Huber adversary.
Here we borrow the lower-bound instance from \cite{DiakonikolasKS17} -- the adversary just adds samples from $\cN(-\beta \cdot \mu,I)$ for some well-chosen $\beta > 0$.
We tighten the analysis of this instance from \cite{DiakonikolasKS17} by using a \emph{conditional} second moment (a.k.a. conditional $\chi^2$ divergence) approach. (\cite{DiakonikolasKS17} use a vanilla $\chi^2$-divergence analysis of their lower bound instance; this method can prove at best a ${\dims \cor^4}/{\dst^4}$ lower bound, which they obtain.) 

\medskip\noindent\textbf{Low-Degree Lower Bound for Huber-Robust Testing.}
Finally, we show a \emph{low-degree} lower bound in the Huber model (essentially equivalent to an SQ lower bound \cite{brennan2020statistical}) using the same instance from~\cite{DiakonikolasKS17}; this is a direct computation using now-standard techniques from \cite{kunisky2022notes}.

\subsubsection{A Quadratic-Time Tester}
Now we turn to our quadratic-time algorithm for robust mean testing against adaptive adversaries using $\tfrac{\sqrt{\dims}}{\dst^2} + \tfrac{d \cor^2}{\dst^4}$ (up to logarithmic factors) samples, matching Narayanan's tester.
Up to logarithmic factors, our bound matches our low-degree lower bound mentioned above.
Together, these bounds give strong evidence that computationally bounded algorithms must pay a factor of $\tfrac{\dims \cor^2}{\dst^4}$ in the sample complexity, and therefore cannot witness the improved rates described elsewhere in this paper, for any model of contamination.
Recall that Narayanan's tester requires finding a good-enough subset (\cref{def:intro-good-enough}).
Since good-enough-ness involves all subsets of $\cor \ns$ samples, even checking whether some $S \subseteq [\ns]$ is good enough seems to require $\ns^{\cor \ns}$ time.

Borrowing a technique from the robust \emph{estimation}, we show that, at least for the good samples $G \subseteq [\ns]$, there's an efficiently-computable witness to their good-enough-ness.
This witness is the top eigenvalue of the covariance matrix $\E_{i \sim G} (X_i - \E_{j \sim G} X_j)(X_i - \E_{j \sim G} X_j)^\top$, together with a uniform upper bound on the magnitude of the row-sums of the Gram matrix of $\{X_i \, : \, i \in G\}$.

For illustration here, consider the null case and imagine that $\ns \leq \dims$.
Then it turns out to be nicer to consider the Gram matrix $M \in \R^{(1-\cor)\ns \times (1-\cor)\ns}$ with entries $M_{ij} = \ip{X_i}{X_j}$; up to zeros it has the same eigenvalues as the covariance.
Since $X_i \sim \cN(0,I)$ for $i \in G$, we have $M = d \cdot I \pm O(\sqrt{nd})$.
If $1_T$ is the $0/1$ indicator vector for $T \subseteq G$ with $|T| \leq \cor \ns$, then $1_T^\top M 1_T$ certifies the first part of good-enough-ness:
\[
  \normtwo{\Sum(T)}^2 = 1_T^\top M 1_T = d \cdot \normtwo{1_T}^2 \pm O(\sqrt{\ns \dims} \normtwo{1_T}^2) = |T| \dims + O(\cor \ns^{1.5} \sqrt{\dims}) \, .
\]

For the second part, note that $\ip{\Sum(G \setminus T)}{\sum(T)} \approx \sum_{i \in T} \sum_{j \neq i} M_{ij}$ is roughly the row-sums of the (off-diagonals of the) matrix $M$ for $i \in T$.
Each row sum is at most $\tilde{O}(\sqrt{nd})$, so the sum is $\tilde{O}(\cor \ns^{1.5} \sqrt{\dims})$.

These arguments (at least in the case $n \leq d$; $n > d$ is not very different) show that it is enough to find $S \subseteq [\ns]$ with $|S| = (1-\cor)\ns$ and whose Gram matrix has eigenvalues $d \pm O(\sqrt{nd})$ and off-diagonal row-sums at most $\tilde{O}(\cor \ns^{1.5} \sqrt{\dims})$.
In \cref{sec:poly-time} we design a filtering algorithm which does this by starting with $[n]$ and iteratively removing samples $X_i$ with large projection onto too-large or small eigenvectors of the Gram matrix, or whose row-sum is too large, until all the row-sums and eigenvalues are as we desire.

\section{Preliminaries and Notation}
\subsection{Basic Definitions}

Given two distributions $\cD_1, \cD_2$, we recall the definitions of total variation distance and $\chi^2$-divergence.

\begin{definition}[Total Variation Distance]
    Given two probability distributions $\cD_1, \cD_2$ over a measurable space $(\Omega, \mathcal{F})$, the \emph{total variation distance} between $\cD_1, \cD_2$, denoted $\dtv(\cD_1, \cD_2)$, is $\sup_{A \in \mathcal{F}} |\cD_1(A)-\cD_2(A)|$.
\end{definition}

\begin{definition}[$\chi^2$-divergence]
    Given two distributions $\cD_1, \cD_2$ over a measurable space $(\Omega, \mathcal{F})$ with well-defined probability density functions $p_1, p_2$, the \emph{$\chi^2$-divergence} between $\cD_1$ and $\cD_2$, denoted $\dchi(\cD_1 \| \cD_2)$, is given by $\BE_{X \sim \cD_2} \Paren{\frac{p_1(X)}{p_2(X)}-1}^2$. (Note that this is not symmetric in $\cD_1,\cD_2$.)
\end{definition}
\noindent We recall the following standard relation between total variation distance and $\chi^2$-divergence.
\begin{fact} \label{prop:chi-tv}
    For any distributions $\cD_1, \cD_2$ with probability density functions, $\dtv(\cD_1, \cD_2)^2 \leq \frac{1}{4}\dchi(\cD_1 \| \cD_2)$.
\end{fact}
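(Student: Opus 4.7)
The plan is to prove this by writing total variation distance in its integral form, converting the integrand into a likelihood ratio times the density of $\cD_2$, and then applying the Cauchy--Schwarz (equivalently Jensen's) inequality. This reduces the bound on $\dtv$ to the square root of $\dchi$, and squaring yields the stated inequality.

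Concretely, I would first recall the standard identity $\dtv(\cD_1,\cD_2) = \tfrac12 \int |p_1(x) - p_2(x)|\,dx$, where the integral is with respect to a common dominating measure (Lebesgue will do, since both $\cD_1,\cD_2$ are assumed to have densities). Next I would rewrite the integrand as $\bigl|\tfrac{p_1(x)}{p_2(x)} - 1\bigr|\,p_2(x)$ on the support of $p_2$; values of $x$ where $p_2(x)=0$ contribute only via $\tfrac12\int_{\{p_2=0\}} p_1(x)\,dx$, and one checks this is dominated by the $\chi^2$ bound by a separate short argument (or, more cleanly, one takes the convention that $p_1/p_2 = \infty$ there, which forces $\dchi = \infty$ and makes the inequality vacuous). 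Assuming $\dchi(\cD_1\|\cD_2) < \infty$, we thus obtain $\dtv(\cD_1,\cD_2) = \tfrac12\,\E_{X \sim \cD_2}\bigl[\,\bigl|\tfrac{p_1(X)}{p_2(X)} - 1\bigr|\,\bigr]$.

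Then I would apply Jensen's inequality (or Cauchy--Schwarz with the constant function $1$) to the concave function $t \mapsto \sqrt{t}$, giving
\[
\E_{X \sim \cD_2}\!\left[\,\left|\tfrac{p_1(X)}{p_2(X)} - 1\right|\,\right] \le \sqrt{\E_{X \sim \cD_2}\!\left[\left(\tfrac{p_1(X)}{p_2(X)} - 1\right)^{\!2}\right]} = \sqrt{\dchi(\cD_1 \| \cD_2)}\,.
\]
Combining with the previous display yields $\dtv(\cD_1,\cD_2) \le \tfrac12 \sqrt{\dchi(\cD_1\|\cD_2)}$, and squaring gives the claim.

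There is no real obstacle here; the only point requiring care is the measure-theoretic one about sets where $p_2 = 0$, which is handled by the convention above (or, equivalently, by absolute continuity: if $\cD_1$ is not absolutely continuous with respect to $\cD_2$, the $\chi^2$-divergence is taken to be $+\infty$ and the inequality is trivial). Everything else is a direct application of Jensen's inequality.
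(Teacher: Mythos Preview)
Your argument is correct and is exactly the standard proof via Cauchy--Schwarz/Jensen; the paper itself simply states this as a well-known fact without proof, so there is nothing further to compare.
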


\subsection{Useful Probabilistic Tools and Inequalities}
In this subsection, we recall several basic but useful concentration inequalities and moment bounds which we will rely on.
\paragraph{Gaussian Concentration.} First, we note a well-known proposition regarding univariate Gaussians.
\begin{fact} \label{prop:mgf_squared_gaussian}
    For $X \sim \mathcal{N}(0, 1)$ and $a, b \in \R$ such that $b < \frac{1}{2}$, we have that
\[\E\left[e^{a X + b X^2}\right] = \frac{\exp\left(\frac{a^2}{2-4b}\right)}{\sqrt{1-2b}}.\]
    In the special case $a = 0$, this becomes
$\E\left[e^{b X^2}\right] = \frac{1}{\sqrt{1-2b}}.$
\end{fact}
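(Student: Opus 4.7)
The plan is a direct computation: write the expectation as a one-dimensional Gaussian integral and complete the square. Starting from the standard normal density, I would express $\E[e^{aX + bX^2}] = \frac{1}{\sqrt{2\pi}} \int_{-\infty}^{\infty} e^{aX + bX^2 - X^2/2}\, dX$, so that the exponent is the quadratic $-\frac{1-2b}{2} X^2 + aX$. The hypothesis $b < 1/2$ is exactly what makes the leading coefficient $-(1-2b)/2$ strictly negative, which is needed for convergence.

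Next I would set $c := 1 - 2b > 0$ and complete the square, rewriting the exponent as $-\frac{c}{2}\bigl(X - \tfrac{a}{c}\bigr)^2 + \frac{a^2}{2c}$. Pulling the $X$-independent factor $e^{a^2/(2c)}$ out and substituting $Y = X - a/c$ reduces the remaining integral to $\int e^{-cY^2/2}\, dY = \sqrt{2\pi/c}$. The two $\sqrt{2\pi}$ factors cancel, yielding $\frac{1}{\sqrt{c}}\, e^{a^2/(2c)} = \frac{1}{\sqrt{1-2b}}\exp\!\bigl(\tfrac{a^2}{2-4b}\bigr)$, which is the claimed identity. Setting $a = 0$ gives the special case immediately.

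There is no genuine obstacle in this argument; the only subtlety worth flagging is convergence, and this is handled entirely by the assumption $b < 1/2$, which guarantees that $c > 0$ so the Gaussian integral is finite. One could alternatively invoke the MGF formula for a $\mathcal{N}(0, 1/c)$ random variable evaluated at $a$, but the explicit completion of the square is the most transparent route and requires no appeal to outside facts.
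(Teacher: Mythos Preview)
Your proof is correct. The paper itself does not supply a proof of this statement---it is labeled a ``Fact'' and introduced as ``a well-known proposition regarding univariate Gaussians''---so there is no argument in the paper to compare against; your completion-of-the-square computation is the standard derivation and is exactly what one would expect here.
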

The following provides a generalization of~\cref{prop:mgf_squared_gaussian} to multivariate Gaussians: we include a proof for completeness.\cmargin{Do we need to?}\todonote{Do you know of a reference?}
\begin{proposition}\label{prop:shifted-Gaussian-integral}
Let $z \in \R^d$ be drawn from the Gaussian $\cN(0, \delta^2 I)$.  Then for parameters $a \geq 0$ and $s \in \R^d$, we have
\[
\E_z\left[ \exp\left( -\frac{a}{2} \normtwo{z}^2 - \langle s , z\rangle \right) \right] =  \frac{1}{\Paren{a\delta^2 + 1}^{d/2}} \exp\left( \frac{\normtwo{s}^2}{2(a + 1/\delta^2)}\right) 
\]
\end{proposition}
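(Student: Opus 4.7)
The plan is to prove the identity by writing out the Gaussian density, combining exponents, and completing the square to reduce everything to a standard Gaussian normalization integral. Alternatively, one could note that the identity factorizes across coordinates, since the density of $\cN(0,\delta^2 I)$ is a product and both $\tfrac{a}{2}\normtwo{z}^2$ and $\langle s,z\rangle$ split as sums over coordinates; this lets the $d$-dimensional claim be reduced to $d$ independent one-dimensional instances of \cref{prop:mgf_squared_gaussian} (with the substitution $X = z_i/\delta$, $b = -a\delta^2/2$, $a_i = -s_i \delta$). But the direct route is cleanest.

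Concretely, writing the expectation as an integral against the density gives
\[
\E_z\!\left[\exp\!\left(-\tfrac{a}{2}\normtwo{z}^2 - \langle s,z\rangle\right)\right] = \frac{1}{(2\pi\delta^2)^{d/2}} \int_{\R^d} \exp\!\left(-\tfrac{1}{2}\bigl(a + 1/\delta^2\bigr)\normtwo{z}^2 - \langle s, z\rangle\right) dz.
\]
Setting $\beta := a + 1/\delta^2 > 0$ (using $a \geq 0$ and $\delta^2 > 0$), I would complete the square via the identity $\tfrac{\beta}{2}\normtwo{z}^2 + \langle s, z\rangle = \tfrac{\beta}{2}\normtwo{z + s/\beta}^2 - \tfrac{\normtwo{s}^2}{2\beta}$, pulling the $s$-dependent constant outside the integral.

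What remains is the standard normalization $\int_{\R^d} \exp(-\tfrac{\beta}{2}\normtwo{z+s/\beta}^2)\,dz = (2\pi/\beta)^{d/2}$, obtained by translation invariance of Lebesgue measure and the one-dimensional Gaussian integral raised to the $d$-th power. Substituting back and simplifying $(2\pi/\beta)^{d/2}/(2\pi\delta^2)^{d/2} = 1/(\beta\delta^2)^{d/2} = 1/(a\delta^2+1)^{d/2}$ yields exactly the right-hand side. There is no real obstacle here; the only thing to check is that $\beta > 0$ so the integral converges, which is guaranteed since $a \geq 0$ and $\delta^2 > 0$ imply $\beta \geq 1/\delta^2 > 0$.
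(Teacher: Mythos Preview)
Your proof is correct and follows essentially the same approach as the paper: write out the Gaussian density, combine exponents, complete the square, and reduce to the standard Gaussian normalization integral. The paper's argument is identical in structure, so there is nothing further to add.
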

\begin{proof}
We have
\[
\E_z\left[ \exp\left( -\frac{a}{2} \normtwo{z}^2 - \langle s , z\rangle \right) \right] = \int \frac{1}{\delta^d (2 \pi)^{d/2}} \exp\left( -\frac{a}{2} \normtwo{z}^2 - \langle s , z\rangle - \frac{\normtwo{z}^2}{2\delta^2}\right) dz \,.
\]
To compute the integral, we can complete the square in the exponential and write it as 
\[
-\frac{1}{2}\normtwo{ \sqrt{a + 1/\delta^2} z - \frac{1}{\sqrt{a + 1/\delta^2}} s } + \frac{\normtwo{s}^2}{2 (a + 1/\delta^2)}
\]
so
\[
\E_z\left[ \exp\left( -\frac{a}{2} \normtwo{z}^2 - \langle s , z\rangle \right) \right] = \exp\left(\frac{\normtwo{s}^2}{2 (a + 1/\delta^2)} \right) \left( \frac{1}{\sqrt{a + 1/\delta^2}}\right)^d \frac{1}{\delta^d} = \frac{1}{\sqrt{a\delta^2 + 1}^d} \exp\left( \frac{\normtwo{s}^2}{2(a + 1/\delta^2)}\right)
\]
as desired.
\end{proof}

\begin{proposition} \label{prop:exponential-expectation-determinant}
    Let $z \sim \cN(0, I_n)$ be an $n$-dimensional Gaussian vector. Then, for any symmetric matrix $M$ with all eigenvalues strictly greater than $-1$,
\[\BE[e^{-\frac{1}{2} \cdot z^\top M z}] = \det(I+M)^{-1/2}.\]
\end{proposition}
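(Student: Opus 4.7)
The plan is a standard reduction to the one-dimensional case via diagonalization and rotation invariance of the standard Gaussian. First I would write $M = Q \Lambda Q^\top$ with $Q$ orthogonal and $\Lambda = \operatorname{diag}(\lambda_1,\ldots,\lambda_n)$, where the eigenvalues satisfy $\lambda_i > -1$ by hypothesis. Set $y = Q^\top z$; since $Q$ is orthogonal and $z \sim \cN(0, I_n)$, the vector $y$ is again standard Gaussian, and $z^\top M z = y^\top \Lambda y = \sum_{i=1}^n \lambda_i y_i^2$.

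Next I would exploit independence of the coordinates $y_i$ to factor the expectation as a product:
\[
\BE\!\left[e^{-\frac{1}{2} z^\top M z}\right] = \BE\!\left[\prod_{i=1}^n e^{-\frac{\lambda_i}{2} y_i^2}\right] = \prod_{i=1}^n \BE\!\left[e^{-\frac{\lambda_i}{2} y_i^2}\right].
\]
Now I would invoke the special case $a=0$ of \cref{prop:mgf_squared_gaussian} with $b = -\lambda_i/2$. The hypothesis $\lambda_i > -1$ is exactly what makes $b < 1/2$, so the formula applies and each factor equals $(1+\lambda_i)^{-1/2}$.

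Combining the factors gives
\[
\prod_{i=1}^n (1+\lambda_i)^{-1/2} = \left(\prod_{i=1}^n (1+\lambda_i)\right)^{-1/2} = \det(I+\Lambda)^{-1/2} = \det(I+M)^{-1/2},
\]
where the final equality uses that $I + M = Q(I+\Lambda)Q^\top$ and $\det$ is invariant under orthogonal conjugation. This completes the argument.

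There is essentially no obstacle here: the only thing to be careful about is the convergence of each one-dimensional integral, which is guaranteed precisely by the assumption $\lambda_i > -1$ (equivalently, $I + M \succ 0$). Without this hypothesis, one or more of the Gaussian integrals would diverge and the determinant identity would fail to make sense (the right-hand side could be zero, imaginary, or undefined). Alternatively, one could derive the identity directly by completing the square in the integral $\int e^{-\frac{1}{2}z^\top(I+M)z}\,dz$ without diagonalizing, but the eigenvalue approach is cleaner and avoids having to manipulate a general positive definite quadratic form.
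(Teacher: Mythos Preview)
Your proof is correct and follows essentially the same approach as the paper: diagonalize, use rotation invariance of the standard Gaussian to reduce to the diagonal case, then apply \cref{prop:mgf_squared_gaussian} coordinatewise. The paper phrases it as ``first assume $M$ diagonal, then invoke rotational symmetry,'' whereas you make the change of variables $y = Q^\top z$ explicit, but the argument is identical.
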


\begin{proof}
    First, suppose that $M$ is diagonal, with eigenvalues $\lambda_1, \lambda_2, \dots, \lambda_n$. Then, $-\frac{1}{2} z^\top M z = -\frac{1}{2} \cdot \sum \lambda_i z_i^2$. Since each coordinate of $z_i$ is independent, 
\[\BE[e^{-\frac{1}{2} z^\top M z}] = \BE\left[\prod_{i=1}^n e^{-\frac{1}{2} \lambda_i z_i^2}\right] = \prod_{i=1}^n \BE\left[e^{-\frac{1}{2} \lambda_i z_i^2}\right] = \prod_{i=1}^n \frac{1}{\sqrt{1+\lambda_i}} = \det(I+M)^{-1/2} \, ,\]
    using \Cref{prop:mgf_squared_gaussian} with $a = 0$.
    Finally, by rotational symmetry of Gaussians, the claim holds for all symmetric $M$.
\end{proof}

From \Cref{prop:exponential-expectation-determinant}, we have the following immediate corollary.

\begin{corollary} \label{cor:mgf_bivariate_gaussian}
    Let $X, Y$ be a 2-dimensional multivariate Gaussian with mean $\textbf{0}$ and covariance matrix $\Sigma = \left(\begin{matrix} a & b \\ b & c \end{matrix}\right)$. Then, if $a+c < \frac{1}{2}$, $\E[e^{X^2+Y^2}] = \frac{1}{\sqrt{(1-2a)(1-2c)-4b^2}}$
\end{corollary}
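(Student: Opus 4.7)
The plan is to reduce directly to Proposition~\ref{prop:exponential-expectation-determinant} by rewriting $X^2+Y^2$ as a quadratic form in an \iid standard Gaussian vector. Since $(X,Y)^\top \sim \cN(0,\Sigma)$ with $\Sigma$ symmetric PSD, I would write $(X,Y)^\top = \Sigma^{1/2} z$ for $z \sim \cN(0,I_2)$, so that
\[
X^2 + Y^2 = \|(X,Y)\|_2^2 = z^\top \Sigma z = -\tfrac{1}{2} z^\top (-2\Sigma) z \mper
\]
Then applying Proposition~\ref{prop:exponential-expectation-determinant} with $M = -2\Sigma$ yields
\[
\E\bigl[e^{X^2+Y^2}\bigr] = \det(I - 2\Sigma)^{-1/2} \mper
\]
A direct $2 \times 2$ determinant computation gives $\det(I - 2\Sigma) = (1-2a)(1-2c) - 4b^2$, which is exactly the claimed formula.

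The one thing to check carefully is the hypothesis of Proposition~\ref{prop:exponential-expectation-determinant}, namely that every eigenvalue of $M = -2\Sigma$ is strictly greater than $-1$, equivalently, that every eigenvalue of $\Sigma$ is strictly less than $1/2$. Since $\Sigma$ is a covariance matrix it is PSD, so both of its eigenvalues $\lambda_1,\lambda_2$ are nonnegative, and $\lambda_1 + \lambda_2 = \mathrm{tr}(\Sigma) = a+c < 1/2$ by hypothesis; thus $\max(\lambda_1,\lambda_2) < 1/2$, and the hypothesis is satisfied. No other obstacles arise — the proof is a two-line computation once Proposition~\ref{prop:exponential-expectation-determinant} is invoked.
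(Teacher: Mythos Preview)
Your proposal is correct and is essentially identical to the paper's proof: both write $(X,Y)^\top = \Sigma^{1/2} z$ with $z \sim \cN(0,I_2)$, apply Proposition~\ref{prop:exponential-expectation-determinant} with $M=-2\Sigma$, and verify the eigenvalue hypothesis via PSD-ness of $\Sigma$ together with $\mathrm{tr}(\Sigma)=a+c<1/2$.
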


\begin{proof}
    We can write $e^{X^2+Y^2} = e^{x^\top \Sigma x} = e^{-\frac{1}{2} x^\top (-2\Sigma) x}$ for $x \sim \cN(0, I)$. Since $\Sigma$ is PSD and has trace less than $\frac{1}{2},$ its eigenvalues are both less than $\frac{1}{2},$ so $-2\Sigma$ has all eigenvalues strictly more than $-1$. Hence, we can apply \Cref{prop:exponential-expectation-determinant}, noting that $\det(1-2\Sigma) = (1-2a)(1-2c)-4b^2,$ which completes the proof.
\end{proof}

Next, we note some basic facts about the norm and inner products of Gaussians.

\begin{fact}\label{fact:bounded-norm}
Consider $n$ points $X_1, \dots , X_n \in \R^d$ drawn from a Gaussian $\cN(\mu, I)$ where $\norm{\mu} \leq O(1)$.
Then, 
with probability $1 - \delta$, we have for all $i \in [n]$
\[d - 10 \left(\sqrt{\log(n/\delta) d} + \log(n/\delta)\right) \leq \norm{X_i}^2 \leq d + 10 \left(\sqrt{\log(n/\delta) d} + \log(n/\delta)\right) \,. \]
\end{fact}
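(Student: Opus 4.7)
The plan is to reduce to the standard chi-squared concentration bound (Laurent--Massart), handle the mean shift separately, and then apply a union bound over the $n$ points.

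First I would write each $X_i = \mu + Z_i$ where $Z_i \sim \cN(0,I)$, and expand
\[
\norm{X_i}^2 = \norm{Z_i}^2 + 2\langle \mu, Z_i\rangle + \norm{\mu}^2.
\]
The term $\norm{\mu}^2 = O(1)$ is a harmless deterministic shift (absorbed into the $10(\cdots)$ slack). The interesting terms are $\norm{Z_i}^2$ and the linear cross term $\langle \mu, Z_i\rangle$.

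Next I would invoke the standard chi-squared tail bound: for $Z_i \sim \cN(0,I_d)$ and any $t > 0$,
\[
\Pr\bigl[\,\abs{\norm{Z_i}^2 - d} \geq 2\sqrt{dt} + 2t\,\bigr] \leq 2e^{-t}.
\]
This is either cited directly (Laurent--Massart) or proved in one line from \Cref{prop:mgf_squared_gaussian} via the standard Chernoff argument: for $b \in (0, 1/2)$,
\[
\Pr[\norm{Z_i}^2 - d \geq s] \leq e^{-bs} \cdot e^{-bd} \cdot (1-2b)^{-d/2},
\]
and optimizing over $b$ gives the Laurent--Massart form; the lower tail is analogous. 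In parallel, since $\langle \mu, Z_i\rangle \sim \cN(0, \norm{\mu}^2)$ with $\norm{\mu} = O(1)$, the Gaussian tail bound gives $\abs{\langle \mu, Z_i\rangle} \leq O(\sqrt{t})$ with probability $\geq 1 - 2e^{-t}$.

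Finally, I would set $t = \log(n/\delta) + O(1)$ and union bound over $i \in [n]$, so each of the $n$ events fails with probability at most $O(1) \cdot \delta / n$, giving overall failure probability at most $\delta$. Combining the three contributions:
\[
\abs{\norm{X_i}^2 - d} \leq \underbrace{2\sqrt{d\log(n/\delta)} + 2\log(n/\delta)}_{\text{chi-squared}} + \underbrace{O(\sqrt{\log(n/\delta)})}_{\text{cross term}} + \underbrace{O(1)}_{\norm{\mu}^2},
\]
and each of the latter two pieces is dominated by $\sqrt{d \log(n/\delta)}$ (assuming $d \geq 1$), so the total is comfortably bounded by $10(\sqrt{d\log(n/\delta)} + \log(n/\delta))$.

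There is no real obstacle here; this is a routine concentration argument. The only thing to be careful about is making sure the constant $10$ absorbs the chi-squared constant $2$, the $\norm{\mu}^2$ offset, and the $\langle \mu, Z_i\rangle$ contribution simultaneously. Since the bound $\norm{\mu} \leq O(1)$ is treated as a loose constant and $10$ is quite generous relative to the sharp Laurent--Massart constant of $2$, this bookkeeping is straightforward.
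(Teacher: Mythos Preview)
Your proposal is correct and complete. The paper itself does not give a proof of this statement---it is presented as a standard ``Fact'' in the preliminaries and simply cited where needed---so there is nothing to compare against; your Laurent--Massart plus cross-term plus union-bound argument is exactly the routine derivation one would expect.
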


In light of Fact~\ref{fact:bounded-norm}, it suffices to consider when all of the points, including the contaminations (in any of the models) have $\norm{X_i}^2 = d \pm O(\sqrt{d} \cdot \poly\log(d, n))$,
since we can simply remove all points whose norm is too large or too small: with high probability these points are all contaminated.

\begin{fact}\label{fact:concentration-inner-prod}
    Let $z_1, z_2$ be Gaussians $\cN(0, I)$ in $\R^d$. Then, for any $C \le O(\sqrt{d}),$ $\BP(|\langle z_1, z_2 \rangle| \ge C \sqrt{d}) \le 2 e^{-\Omega(C^2)}$.
\end{fact}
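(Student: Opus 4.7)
The plan is to combine rotational invariance of the standard Gaussian with a conditioning argument. By rotational invariance, the distribution of $\langle z_1, z_2\rangle$ depends on $z_2$ only through a single coordinate in the direction of $z_1$. Concretely, conditioning on $z_1$, we have $\langle z_1, z_2\rangle \mid z_1 \sim \cN(0, \|z_1\|_2^2)$, since $z_2$ is independent of $z_1$ and has identity covariance. This reduces the problem to controlling a univariate Gaussian whose variance is itself a random variable concentrated around $d$.

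First, I would invoke standard $\chi^2$-concentration (essentially Fact~\ref{fact:bounded-norm} applied with $n=1$ and an appropriate confidence parameter) to conclude $\Pr[\|z_1\|_2^2 > 2d] \leq e^{-\Omega(d)}$. Since the hypothesis $C \leq O(\sqrt{d})$ means $C^2 \leq O(d)$, this is at most $e^{-\Omega(C^2)}$. Then, on the complementary event $\{\|z_1\|_2^2 \leq 2d\}$, the standard Gaussian tail bound applied to $\langle z_1, z_2\rangle \mid z_1$ yields
\[
\Pr\!\left[\,|\langle z_1, z_2\rangle| \geq C\sqrt{d}\,\bigm|\, z_1\right] \;\leq\; 2 \exp\!\left(-\frac{C^2 d}{2 \|z_1\|_2^2}\right) \;\leq\; 2 \exp(-C^2/4).
\]

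Finally, I would combine the two parts by a union bound:
\[
\Pr\!\left[\,|\langle z_1, z_2\rangle| \geq C\sqrt{d}\,\right] \;\leq\; \Pr[\|z_1\|_2^2 > 2d] \;+\; 2\exp(-C^2/4) \;\leq\; 2 e^{-\Omega(C^2)},
\]
absorbing constants into the $\Omega(\cdot)$ notation. There is no serious obstacle here: the statement is a classical subexponential concentration bound for an inner product of two independent Gaussians, and the only care needed is ensuring that the chi-squared deviation term $e^{-\Omega(d)}$ is dominated by $e^{-\Omega(C^2)}$, which is precisely what the assumption $C \leq O(\sqrt{d})$ guarantees. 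An alternative route would invoke Bernstein's inequality for the sum $\sum_{i=1}^d z_{1,i} z_{2,i}$ of independent subexponential products (each with mean $0$ and variance $1$); this yields the same bound of the form $\exp(-c \min(C^2, C\sqrt{d}))$, and under the hypothesis $C \leq O(\sqrt d)$ the minimum is $C^2$, recovering the claim.
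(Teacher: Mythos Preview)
Your proposal is correct and essentially identical to the paper's proof: the paper conditions on $\|z_2\|$ (rather than $\|z_1\|$, which is symmetric), bounds $\Pr[\|z_2\|_2 > 2\sqrt{d}] \le e^{-\Omega(d)}$ via chi-square concentration, applies the Gaussian tail bound on the good event, and union-bounds to get $2e^{-C^2/8} + e^{-\Omega(d)} \le 2e^{-\Omega(C^2)}$. Your alternative Bernstein route is also valid but not used in the paper.
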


\begin{proof}
    First, with probability at least $e^{-\Omega(d)},$ $\|z_2\|_2 \le 2 \sqrt{d}$, by \Cref{prop:chi-square-concentration}. Then, conditioned on the norm of $z_2$, $\langle z_1, z_2 \rangle$ has distribution $\cN(0, 1) \cdot \|z_2\|_2$, which is at most $C \cdot \|z_2\|_2$ with probability at least $2e^{-C^2/2}$. Hence, $\BP(|\langle z_1, z_2 \rangle| \ge C \sqrt{d}) \le 2e^{-C^2/8} + e^{-\Omega(d)}$.
\end{proof}

We will also make use of the \emph{Hanson-Wright inequality}.

\begin{lemma}[Hanson--Wright] \label{lem:hanson-wright}
    Given an $n \times n$ matrix $A \in \R^{n \times n}$ and an $n$-dimensional Gaussian vector $Z \sim \cN(0, I)$, for any $t \ge 0$,
\[\BP\left(\left| Z^\top A Z - \BE[Z^\top A Z]\right| \ge t\right) \le 2 \exp\mleft(-c \min\left(\frac{t^2}{\|A\|_F^2}, \frac{t}{\|A\|_{op}}\right)\mright),\]
    for some absolute constant $c > 0$. This implies that, with very high probability, $\left| Z^\top A Z - \BE[Z^\top A Z]\right| \le \tilde{O}(\|A\|_F)$.
\end{lemma}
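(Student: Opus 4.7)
The plan is to prove this via the classical MGF/Chernoff approach, which works cleanly because the quadratic form of a Gaussian diagonalizes into a weighted sum of independent chi-squared variables. First I would assume without loss of generality that $A$ is symmetric: since $Z^\top A Z = Z^\top \bigl(\tfrac{A + A^\top}{2}\bigr) Z$ and the symmetrized matrix has both Frobenius and operator norm no larger than the original's, this reduction is free. Next, by the spectral theorem write $A = U \Lambda U^\top$ with eigenvalues $\lambda_1, \dots, \lambda_n$, and use rotational invariance of the Gaussian to replace $Z$ by $Z' = U^\top Z \sim \cN(0, I)$. Then $Z^\top A Z - \BE[Z^\top A Z] = \sum_i \lambda_i ((Z'_i)^2 - 1)$, and we have the identifications $\|A\|_F^2 = \sum_i \lambda_i^2$ and $\|A\|_{op} = \max_i |\lambda_i|$.

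The key computation is to control the MGF of $W := \sum_i \lambda_i((Z'_i)^2 - 1)$. Using \cref{prop:mgf_squared_gaussian} with $a = 0$, for $s$ with $2 s \lambda_i < 1$ for all $i$ we get
\[
\BE[e^{s W}] = \prod_i \frac{e^{-s\lambda_i}}{\sqrt{1 - 2 s \lambda_i}}.
\]
Taking logs and applying the elementary inequality $-\tfrac{1}{2}\log(1 - x) - \tfrac{x}{2} \le x^2$ valid for $|x| \le 1/2$ (which follows from Taylor expansion), we obtain $\log \BE[e^{s W}] \le 2 s^2 \sum_i \lambda_i^2 = 2 s^2 \|A\|_F^2$, provided $|s| \le \tfrac{1}{4 \|A\|_{op}}$.

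Now apply Markov to $e^{sW}$: $\BP(W \ge t) \le \exp(-st + 2 s^2 \|A\|_F^2)$, and optimize over $s \in [0, \tfrac{1}{4 \|A\|_{op}}]$. The unconstrained minimizer is $s^\star = \tfrac{t}{4 \|A\|_F^2}$; if $s^\star$ lies in the allowed interval, we get $\BP(W \ge t) \le \exp(-t^2/(8 \|A\|_F^2))$. Otherwise we take $s = \tfrac{1}{4\|A\|_{op}}$, and since in this regime $t \ge \tfrac{\|A\|_F^2}{\|A\|_{op}}$, the quadratic term is dominated by the linear one, yielding $\BP(W \ge t) \le \exp(-t/(8 \|A\|_{op}))$. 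Combining the two cases produces $\exp(-c \min(t^2/\|A\|_F^2, t/\|A\|_{op}))$ with $c = 1/8$. The same argument applied to $-W$ gives the matching lower-tail estimate, and a union bound yields the two-sided statement with the constant $2$ out front.

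The main subtlety I would watch for is just the MGF bound $-\tfrac{1}{2}\log(1-x) - \tfrac{x}{2} \le x^2$ on the correct range; once that is in place, the rest is a standard Chernoff optimization. The final remark about $|Z^\top A Z - \BE[Z^\top A Z]| \le \tilde{O}(\|A\|_F)$ with high probability follows by choosing $t = C \|A\|_F \sqrt{\log(1/\delta)}$ for large enough constant $C$: the quadratic branch of the exponent dominates whenever $t \le \|A\|_F^2 / \|A\|_{op}$, which certainly holds for $t$ polylogarithmic in $n$ since $\|A\|_F \ge \|A\|_{op}$.
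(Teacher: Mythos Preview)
The paper does not prove this lemma; it states Hanson--Wright as a standard result and uses it as a black box. Your proof proposal is the classical and correct diagonalization-plus-Chernoff argument, so there is nothing to compare against.

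Two small remarks on execution. First, with $x_i = 2s\lambda_i$ your inequality $-\tfrac{1}{2}\log(1-x)-\tfrac{x}{2}\le x^2$ yields $\log \BE[e^{sW}] \le \sum_i (2s\lambda_i)^2 = 4s^2\|A\|_F^2$, not $2s^2\|A\|_F^2$; this only shifts the constant $c$ and is harmless since the statement only asks for some absolute $c>0$. Second, the final sentence is slightly loose: the clean way to derive the high-probability conclusion is to invert both branches, giving $|W| \le O\bigl(\|A\|_F\sqrt{\log(1/\delta)} + \|A\|_{op}\log(1/\delta)\bigr)$ with probability $1-\delta$, and then note $\|A\|_{op}\le \|A\|_F$ so the whole thing is $\tilde O(\|A\|_F)$. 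Your claim that ``$t$ polylogarithmic in $n$'' automatically puts you in the quadratic branch is not true in general (e.g.\ when $A$ is rank one), but the conclusion $\tilde O(\|A\|_F)$ still holds via the argument just given.
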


The Hanson-wright inequality with $A = I$, the $n \times n$-dimensional identity vector, immediately implies the following.
\begin{proposition} \label{prop:chi-square-concentration}
    Let $z_1, \dots, z_n$ be \iid Gaussians. Then, for any $t \ge 0$,
\[
\BP\left(\left|\sum_{i=1}^n z_i^2 - n \right| \ge t\right) \le 2 \exp\mleft(-c \cdot \min\left(\frac{t^2}{n}, t\right)\mright).
\]
\end{proposition}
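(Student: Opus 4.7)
The plan is to directly invoke the Hanson--Wright inequality (\cref{lem:hanson-wright}) with the matrix $A = I_n$, since $\sum_{i=1}^n z_i^2 = Z^\top I_n Z$ for $Z = (z_1,\dots,z_n)^\top \sim \cN(0, I_n)$. The expectation $\E[Z^\top I_n Z] = \Tr(I_n) = n$ is exactly the centering constant on the left-hand side of the desired bound, so the deviation event in the proposition is precisely the deviation event in Hanson--Wright.

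The only remaining step is to plug in the two norms of $A = I_n$. We have $\|I_n\|_F = \sqrt{n}$ and $\|I_n\|_{op} = 1$, so Hanson--Wright yields
\[
\BP\!\left(\left|\textstyle\sum_{i=1}^n z_i^2 - n\right| \ge t\right) \;\le\; 2\exp\!\left(-c \min\!\left(\frac{t^2}{n},\, \frac{t}{1}\right)\right) \;=\; 2\exp\!\left(-c \min\!\left(\frac{t^2}{n},\, t\right)\right),
\]
which is exactly the claimed inequality. No additional obstacle arises; the statement is a one-line specialization of \cref{lem:hanson-wright}, as already noted in the sentence immediately preceding the proposition.
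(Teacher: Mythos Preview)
The proposal is correct and takes essentially the same approach as the paper: the paper simply notes that the proposition is the Hanson--Wright inequality (\cref{lem:hanson-wright}) specialized to $A = I$, and you have written out precisely that specialization.
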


We note one more result about Gaussian samples, which follows from well-known facts about sufficient statistics. The following result says that if we know the mean $\bar{X}$ of some Gaussian samples $X_i$ drawn as $\cN(\mu, I)$, the posterior distribution of the deviations $X_i-\bar{X}$ does not depend on the mean $\mu$.

\begin{proposition}[{\cite[Corollary~18]{narayanan2022privatetesting}}] \label{prop:sufficient_statistic} 
    For any $\mu \in \R^d$, let $X_1, \dots, X_n$ be distributed \iid as $\mathcal{N}(\mu, I)$, and let $\bar{X} = \frac{1}{n}\Paren{X_1+\cdots+X_n}$. Then, for $Z_1, \dots, Z_n \overset{i.i.d.}{\sim} \mathcal{N}(0, I)$, independent of $(X_1, \dots, X_n)$, and $\bar{Z} = \frac{1}{n}\Paren{Z_1+\cdots+Z_n}$, we have that $X_1, \dots, X_n$ has the same distribution as $\bar{X}+Z_1-\bar{Z}, \dots, \bar{X}+Z_n-\bar{Z}$.
\end{proposition}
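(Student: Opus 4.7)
The plan is to exploit the classical fact that for Gaussian samples with known covariance, the sample mean is a sufficient statistic for the mean parameter, and is moreover independent of the deviations from the sample mean. Once we prove $\bar X \perp (X_i - \bar X)_{i=1}^n$ and that the joint law of the deviations does not depend on $\mu$, the claim follows by comparing the two random vectors coordinatewise through the decomposition $X_i = \bar X + (X_i - \bar X)$.

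Concretely, I would first write $X_i = \mu + G_i$ with $G_i \iid \cN(0,I)$, so $\bar X = \mu + \bar G$ and $X_i - \bar X = G_i - \bar G$. The random vector $(\bar X, X_1 - \bar X, \dots, X_n - \bar X)$ is jointly Gaussian (a linear transform of $(G_1,\dots,G_n)$), so independence reduces to checking that cross-covariances vanish. A direct computation gives $\Cov(\bar X, X_i - \bar X) = \Cov(\bar G, G_i - \bar G) = \tfrac{1}{n}I - \tfrac{1}{n}I = 0$, whence $\bar X$ is independent of the tuple of residuals $(X_i - \bar X)_{i=1}^n$.

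Second, I would note that the joint law of $(X_1 - \bar X, \dots, X_n - \bar X) = (G_1 - \bar G, \dots, G_n - \bar G)$ depends only on the noise $G_i \iid \cN(0,I)$ and not on $\mu$; in particular it coincides with the law of $(Z_1 - \bar Z, \dots, Z_n - \bar Z)$ for $Z_i \iid \cN(0,I)$. Combined with the independence above, this implies that the random vector $(X_1,\dots,X_n) = \bigl(\bar X + (X_i - \bar X)\bigr)_{i=1}^n$ has the same joint law as $\bigl(\bar X + (Z_i - \bar Z)\bigr)_{i=1}^n$, where on the right-hand side the $Z_i$'s are taken independent of $\bar X$, matching the statement.

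There is no serious technical obstacle here; the only subtle point is that one needs the \emph{joint} (not just marginal) distributional identity, which is exactly what the independence of $\bar X$ from the residual vector provides. Once that is in hand, the substitution of $(Z_i - \bar Z)$ for $(X_i - \bar X)$ is legitimate precisely because both residual vectors have the same distribution and are (by construction on the right, by the covariance calculation on the left) independent of $\bar X$.
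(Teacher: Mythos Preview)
The paper does not actually prove this proposition; it merely states it with a citation to \cite[Corollary~18]{narayanan2022privatetesting}. Your argument is correct and is the standard route: use joint Gaussianity to verify $\Cov(\bar X, X_i-\bar X)=0$, conclude that $\bar X$ is independent of the residual vector $(X_i-\bar X)_i$, observe that the residuals equal $(G_i-\bar G)_i$ and hence have a $\mu$-free law identical to that of $(Z_i-\bar Z)_i$, and then recombine. There is nothing to add or correct.
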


\paragraph{Hypergeometric distributions.} Next, we will require some bounds on Hypergeometric distributions. First, we define a Hypergeometric distribution.
\begin{definition}
    For $n\in\N$ and $0\leq k_1,k_2 \leq n$, a Hypergeometric distribution $\hgeom(n, k_1, k_2)$ is the distribution of the random variable $Y$ generated as follows. Fix a set $[n]$ of size $n$, and let $S, T$ be independent random subsets of $[n]$ of size $k_1, k_2$ respectively. Then, output $Y = |S \cap T|$.
\end{definition}

It is well-known that $\hgeom(n, k_1, k_2)$ has expectation $\frac{k_1 \cdot k_2}{n}.$
We will also use the following Bernstein-type inequality for Hypergeometric distributions.

\begin{lemma}[{\cite[Corollary~1]{greene2017hypergeometric}}]\label{lem:hypergeometric-bernstein}
    Suppose that $k_1, k_2 \le \frac{1}{2} \cdot n$, and $X \sim \hgeom(n, k_1, k_2)$. Then, for all $\lambda > 0$,
\[\BP\left(\sqrt{k_1} \cdot \Big(\frac{X}{k_1} - \frac{k_2}{n}\Big) > \lambda\right) \le \exp\left(-\frac{\lambda^2/2}{(k_2/n) + \lambda/(3 \sqrt{k_1})}\right).\]
\end{lemma}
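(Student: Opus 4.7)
The statement is a standard Bernstein-type tail bound for the hypergeometric distribution, and since it is cited verbatim to \cite[Corollary~1]{greene2017hypergeometric}, the cleanest ``proof'' is simply to invoke that reference. For completeness, though, my plan would be to reprove it from scratch using the classical reduction from hypergeometric concentration to binomial concentration.

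The first step is to represent $X$ as an indicator sum. Condition on the random subset $S$ of size $k_1$ (the marginal of $X$ does not depend on $S$, only on $|S|$), and write $X = \sum_{i \in S} \bone[i \in T]$, where $T$ is a uniformly random size-$k_2$ subset of $[n]$. The $\bone[i\in T]$'s are Bernoullis with mean $p := k_2/n$; they are not independent, but they are \emph{negatively associated} (sampling without replacement). By Hoeffding's classical theorem on sampling without replacement, the moment generating function of $X$ is dominated, at every $t \ge 0$, by that of a sum $\tilde X$ of $k_1$ i.i.d.\ $\mathrm{Bernoulli}(p)$ variables:
\[
  \E\!\left[e^{t(X - \E X)}\right] \;\le\; \E\!\left[e^{t(\tilde X - \E \tilde X)}\right].
\]
This is the only nontrivial probabilistic input, and I expect it to be the main conceptual step; after it, the rest is a calculation.

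Given MGF domination, the second step is to apply the standard Bernstein inequality for sums of i.i.d.\ bounded random variables to $\tilde X$. Using $\mathrm{Var}(\tilde X) = k_1 p(1-p) \le k_1 \cdot (k_2/n)$ and the boundedness $|\bone[i\in T] - p| \le 1$, Bernstein yields, for every $u \ge 0$,
\[
  \Pr\!\left(X - k_1 \tfrac{k_2}{n} > u\right) \;\le\; \exp\!\left(-\frac{u^2/2}{k_1(k_2/n) + u/3}\right).
\]
The third and final step is bookkeeping: set $u = \lambda \sqrt{k_1}$, divide the event inside the probability by $k_1$, and simplify the right-hand side to
\[
  \exp\!\left(-\frac{\lambda^2/2}{(k_2/n) + \lambda/(3\sqrt{k_1})}\right),
\]
which is exactly the claimed bound. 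The hypothesis $k_1, k_2 \le n/2$ is inherited from the form of Bernstein's inequality needed in \cite{greene2017hypergeometric} (it ensures $p(1-p)$ is controlled by $p$ on both sides); it does not create any additional obstacle here. No step other than Hoeffding's MGF domination requires more than routine calculation, so I would expect to spend essentially all the work there (or, as is done in the paper, simply cite \cite{greene2017hypergeometric}).
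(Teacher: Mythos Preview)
Your proposal is correct and matches the paper's approach: the paper does not prove this lemma at all but simply cites \cite[Corollary~1]{greene2017hypergeometric}, exactly as you suggest. Your supplementary from-scratch sketch via Hoeffding's MGF domination for sampling without replacement followed by the standard Bernstein bound is also correct and is essentially how the cited reference proceeds.
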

The following is a direct corollary of \Cref{lem:hypergeometric-bernstein}.
\begin{corollary} \label{cor:hypergeometric}
    Suppose that $k_1 = k_2 = \eps n$, and $X \sim \hgeom(n, k_1, k_2)$. Then, for all $t > 0$,
\[\BP\left(\frac{X}{n} - \eps^2 > t\right) \le \exp\left(-\min\left(\frac{t^2 \cdot n}{4 \eps^2}, \frac{t \cdot n}{4}\right)\right).\]
\end{corollary}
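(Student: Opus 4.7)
The plan is a direct application of the Bernstein-type inequality for the hypergeometric distribution in Lemma~\ref{lem:hypergeometric-bernstein}, followed by routine algebra and a case split on the parameter $t$.

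First, I would rewrite the deviation event $\{X/n - \eps^2 > t\}$ in the form required by the lemma. Substituting $k_1 = k_2 = \eps n$, the quantity $\sqrt{k_1}(X/k_1 - k_2/n)$ becomes $(X - \eps^2 n)/\sqrt{\eps n}$, so the event $X/n - \eps^2 > t$ is exactly $\sqrt{k_1}(X/k_1 - k_2/n) > \lambda$ with $\lambda = t\sqrt{n/\eps}$. Since we need $k_1, k_2 \leq n/2$, the lemma applies provided $\eps \leq 1/2$ (which can be assumed WLOG, since otherwise the desired probability bound is trivially at most $1$ after adjusting constants, or the regime is uninteresting).

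Next, I would plug this choice of $\lambda$ into the bound from Lemma~\ref{lem:hypergeometric-bernstein}. With $k_2/n = \eps$ and $\sqrt{k_1} = \sqrt{\eps n}$, the denominator in the exponent becomes $\eps + \lambda/(3\sqrt{k_1}) = \eps + t/(3\eps)$, and the numerator is $\lambda^2/2 = t^2 n/(2\eps)$. Simplifying, the exponent equals
\[
-\frac{t^2 n/(2\eps)}{\eps + t/(3\eps)} \;=\; -\frac{3 t^2 n}{2(3\eps^2 + t)} \, .
\]

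Finally, I would split into two cases according to which term in $3\eps^2 + t$ dominates. If $t \leq 3\eps^2$, then $3\eps^2 + t \leq 6\eps^2$, and the exponent is at most $-t^2 n/(4\eps^2)$. If instead $t \geq 3\eps^2$, then $3\eps^2 + t \leq 2t$, and the exponent is at most $-3tn/4 \leq -tn/4$. Taking the minimum of the two bounds gives exactly $-\min(t^2 n/(4\eps^2), tn/4)$, completing the proof. There is no real obstacle here, the only thing to be careful about is matching the normalization of Lemma~\ref{lem:hypergeometric-bernstein} to the normalization in the corollary statement, which I did above via the substitution $\lambda = t\sqrt{n/\eps}$.
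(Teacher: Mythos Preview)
Your proposal is correct and matches the paper's approach: the paper simply states that the corollary ``is a direct corollary of \cref{lem:hypergeometric-bernstein}'' without giving details, and your substitution $\lambda = t\sqrt{n/\eps}$ followed by the case split on whether $t \lessgtr 3\eps^2$ is exactly the routine verification that makes this work. Your remark that $\eps \le 1/2$ is needed for the lemma to apply is a reasonable point the paper leaves implicit.
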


We also will utilize the following subgaussian concentration bound for Hypergeometric distributions.
\begin{proposition}[\cite{skalahypergeometric}] \label{prop:hypergeometric-subgaussian} If $X \sim \hgeom(n, k_1, k_2)$, then for any $t \ge 0$, $\BP\left[X \ge \BE[X] + t \cdot k_1\right] \le e^{-2 t^2 \cdot k_1},$ and $\BP\left[X \ge \BE[X] - t \cdot k_1\right] \le e^{-2 t^2 \cdot k_1}.$
\end{proposition}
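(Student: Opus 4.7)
The plan is to deduce this subgaussian concentration from Hoeffding's classical reduction from sampling without replacement to sampling with replacement, followed by a standard Chernoff argument. First, I would realize $X \sim \hgeom(n, k_1, k_2)$ as a sum $X = \sum_{i=1}^{k_1} Y_i$ of dependent indicator variables, where $Y_i = 1$ if the $i$-th element drawn uniformly without replacement from $[n]$ lies in a fixed target set $T \subseteq [n]$ with $|T| = k_2$. Each $Y_i$ is marginally $\mathrm{Ber}(k_2/n)$ (by symmetry of the draw order), so $\BE[X] = k_1 k_2 / n$, but the $Y_i$ are of course negatively correlated.

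Next, I would invoke Hoeffding's domination lemma (Hoeffding 1963, Theorem~4): for every convex $\phi : \R \to \R$,
\[
\BE\mleft[\phi\mleft(\sum_{i=1}^{k_1} Y_i\mright)\mright] \;\le\; \BE\mleft[\phi\mleft(\sum_{i=1}^{k_1} Y_i'\mright)\mright],
\]
where $(Y_i')_{i \le k_1}$ are i.i.d.\ $\mathrm{Ber}(k_2/n)$. The proof of this lemma is a symmetrization argument that writes the with-replacement sum as an average (over permutations and re-samplings) of without-replacement sums; I would simply cite it. Applying the lemma with $\phi(x) = e^{\lambda x}$ for $\lambda > 0$ upper bounds the MGF of $X$ by the MGF of a $\mathrm{Binomial}(k_1, k_2/n)$ random variable $X'$.

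From here I would run the textbook Chernoff computation. Since a centered Bernoulli is $\tfrac14$-subgaussian, $\log \BE\bigl[e^{\lambda(X' - \BE X')}\bigr] \le k_1 \lambda^2 / 8$, and by Hoeffding's domination the same bound holds for $X$. Combining with Markov and optimizing in $\lambda$ by taking $\lambda = 4t$ yields
\[
\BP\bigl[X \geq \BE[X] + t k_1\bigr] \;\leq\; \inf_{\lambda > 0} \exp\!\bigl(-\lambda t k_1 + k_1 \lambda^2 / 8\bigr) \;=\; e^{-2 t^2 k_1}.
\]
The matching lower-tail estimate $\BP[X \le \BE[X] - t k_1] \le e^{-2 t^2 k_1}$ (presumably the intended second inequality in the statement) follows by applying the same argument to $-X$ (equivalently, swapping $T$ with its complement, so $X$ is replaced by $k_1 - X$).

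The only genuine obstacle is the Hoeffding domination step; everything else is a short, mechanical Chernoff bound. Since the paper is content to cite Skala for this standard fact, the proof proposal could simply invoke Hoeffding's reduction as a black box and carry out the final Chernoff optimization explicitly, without reproving the reduction lemma.
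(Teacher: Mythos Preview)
Your argument is correct and is in fact the standard derivation: Hoeffding's domination lemma reduces the hypergeometric MGF to the binomial MGF, and then the usual $\tfrac14$-subgaussian bound for a centered Bernoulli plus Chernoff optimization gives exactly $e^{-2t^2 k_1}$. The paper itself does not prove this proposition at all---it is stated with a bare citation to \cite{skalahypergeometric} and used as a black box---so there is no ``paper's own proof'' to compare against; your sketch simply fills in what the cited reference provides. Your parenthetical reading of the second inequality as the lower tail $\BP[X \le \BE[X] - t k_1]$ is also the intended one (the paper later uses both tails via this proposition).
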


\subsection{Simplification of Alternative Hypothesis} \label{subsec:wlog}

We recall that we wish to distinguish between the null hypothesis where $\mu = 0$ and the alternative hypothesis where $\|\mu\|_2 \ge \alpha$. In this subsection, we briefly explain why it suffices to consider a slightly weaker alternative hypothesis of $\alpha \le \|\mu\|_2 \le 2 \alpha$. This reduction is very similar to one used in \cite[Proposition 23]{narayanan2022privatetesting}. We will only describe the reduction for oblivious robust testing, as we will not (directly) need the reduction in the adaptive case.

\begin{proposition} \label{prop:wlog-bounded}
    Let $0 < \alpha \le O(1)$. Suppose $\cA$ is an algorithm that can distinguish between $n$ $\eps$-obliviously contaminated samples from $\cN(0, I)$ and $n$ $\eps$-obliviously contaminated samples from $\cN(\mu, I)$ where $\|\mu\|_2 \in [\alpha, 2 \alpha]$, with probability at least $0.9$. Then, there exists an algorithm $\cA'$ that can distinguish between $n \cdot \poly\!\log(n, d, \frac{1}{\alpha})$ $\frac{\eps}{\poly\!\log(n, d, \frac{1}{\alpha})}$-obliviously contaminated samples from $\cN(0, I)$ and $n \cdot \poly\!\log(n/d)$ $\frac{\eps}{\poly\!\log(n, d, \frac{1}{\alpha})}$-obliviously contaminated samples from $\cN(\mu, I)$, where $\|\mu\|_2 \ge \alpha$, with probability at least~$0.9.$
\end{proposition}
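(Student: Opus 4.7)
The plan is a geometric doubling reduction over dyadic guesses for $\|\mu\|_2$. Define scales $\alpha_k := 2^k \alpha$ for $k = 0, 1, \ldots, K$ with $K = O(\log(1/\alpha))$ chosen so that $\alpha_K = \Theta(1)$, the upper bound under which $\cA$ is guaranteed to operate. The intervals $[\alpha_k, 2\alpha_k]$ then cover $[\alpha, \Theta(1)]$. The reduction $\cA'$ partitions its samples uniformly at random into $(K+1) \cdot B$ batches of size $n$, where $B = O(\log K)$ is a boosting parameter. For each $k$ it runs $\cA$ with parameter $\alpha_k$ on $B$ fresh batches, takes the majority within each scale, and returns $\mathbf{H}_1$ iff \emph{any} scale's majority is $\mathbf{H}_1$. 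The total sample cost is $n \cdot (K+1) B = n \cdot \poly\log(n, d, 1/\alpha)$, matching the statement.

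Correctness follows from a union bound. Each individual invocation of $\cA$ fails with probability at most $0.1$; majority voting over $B = O(\log K)$ independent invocations per scale drives the per-scale error down to $1/(10(K+1))$. In the null case, a union bound over the $K+1$ scales keeps the overall error at $\leq 0.1$. In the alternative case with $\|\mu\|_2 \in [\alpha, 2\alpha_K]$, the unique $k^\ast$ satisfying $\|\mu\|_2 \in [\alpha_{k^\ast}, 2\alpha_{k^\ast}]$ causes the scale-$k^\ast$ majority to correctly output $\mathbf{H}_1$ with probability $\geq 1 - 1/(10(K+1))$, and so does $\cA'$.

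Obliviousness is preserved throughout because the adversary commits to its corrupted positions and values \emph{before} the clean samples are drawn; the random partition into batches (using $\cA'$'s private randomness) is independent of both the clean samples and the adversary's commitment, so it simply reassigns the pre-committed corruptions to the batches. A Chernoff bound shows each batch inherits at most a $2\eps$-fraction of corruptions, with failure probability small compared to $1/((K+1)B)$, so setting the outer-level corruption rate to $\eps/\poly\log(n,d,1/\alpha)$ and running each $\cA$ at corruption parameter $2\eps$ gives every batch a genuine $(2\eps)$-obliviously corrupted input from the appropriate Gaussian. The one case not directly covered by the grid is $\|\mu\|_2 > 2\alpha_K = \Omega(1)$; this regime can be incorporated by extending the grid upward via a \emph{pre-scaling} step $X_i \mapsto X_i/r + \sqrt{1 - 1/r^2}\, Y_i$ with a fresh $Y_i \sim \cN(0, I)$, which reduces $\|\mu\|_2$ by a factor $r$ while preserving both the Gaussian covariance and the oblivious corruption structure (the transformed corrupted value is still a deterministic function of the adversary's commitment and $\cA'$'s internal randomness). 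The main obstacle is really just managing the interplay of scales, boosting, and the shared contamination budget; each individual step is a routine concentration or union-bound argument once the scale-grid plan is in place.
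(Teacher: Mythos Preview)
Your high-level plan---dyadic grid over $\|\mu\|_2$, per-scale majority boosting, union bound across scales, and the covariance-preserving rescaling map $X_i\mapsto X_i/r+\sqrt{1-1/r^2}\,Y_i$---is exactly the paper's approach. But there is a real gap in how you invoke $\cA$ at the intermediate scales.

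The hypothesis gives you a \emph{single} black box $\cA$ that succeeds only when the alternative satisfies $\|\mu\|_2\in[\alpha,2\alpha]$ for the fixed $\alpha$ in the statement; nothing entitles you to ``run $\cA$ with parameter $\alpha_k$'' for $k\ge 1$. (The clause ``$\alpha\le O(1)$'' is a constraint on the fixed $\alpha$, not a promise that $\cA$ is a family indexed by scale.) The paper's fix---which you already possess and use at the top of the grid---is to apply the rescaling map with $r=2^k$ at \emph{every} level: if $\|\mu\|_2\in[2^k\alpha,2^{k+1}\alpha]$ then the rescaled batch is obliviously-contaminated from $\cN(\mu/2^k,I)$ with $\|\mu/2^k\|_2\in[\alpha,2\alpha]$, so the fixed $\cA$ applies directly. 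Once you route all scales through rescaling rather than re-parameterization, your argument goes through.

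Two minor bookkeeping points. First, $K=O(\log(1/\alpha))$ only reaches $\|\mu\|_2=\Theta(1)$; the grid must extend to $\|\mu\|_2=\Theta(\sqrt{d})$, i.e.\ $K=O(\log(d/\alpha))$, and $\|\mu\|_2\gtrsim\sqrt{d}$ is handled by a separate $O(1)$-sample norm test (this is what the paper does). Second, the random partition plus Chernoff is unnecessary: with a deterministic split into $L$ batches and outer contamination rate $\eps/L$, the adversary has at most $\eps n$ corrupted points in total, so no batch of size $n$ can exceed an $\eps$-fraction no matter how the adversary concentrates them.
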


\begin{proof}
    Suppose our dataset of $n \cdot \poly\log(n, d, \frac{1}{\alpha})$ points is called $X$, which we split into groups $X^{(r, t)}$, where $1 \le r \le R = O(\log nd)$ and $1 \le t \le T = O(\log \frac{n}{\alpha})$, and where $|X^{(r, t)}| = n$. Also, let $X^{(t)} = \bigcup_r X^{(r, t)}$.
    We can consider conditioning on the location and value of each corrupted point, and then consider drawing the uncorrupted points.
    Then, if $X$ is $\frac{\eps}{O(\log^2 (nd/\alpha))}$-obliviously corrupted, each $X^{(r, t)}$ is $\eps$-obliviously corrupted.
    Also, conditioned on the indices and values of the corrupted points, the uncorrupted points in each $X^{(r, t)}$ are independent. (For the rest of the proof, we will think of the corrupted indices/values as fixed.)

    First, we show an amplification result that on each $X^{(t)}$, we can distinguish between $\mu = 0$ and $\|\mu\|_2 \in [\alpha, 2 \alpha]$, with failure probability at most $\frac{1}{nd}$ (instead of failure probability $0.1$).
    For each $X^{(r, t)}$, because the corruptions are oblivious to the data, the probability of $\cA$ outputting the right answer on the group $X^{(r, t)}$ is at least $0.9$, and is independent across groups (after the above conditioning). So by a Chernoff bound, $\cA$ will output the right answer on at least $0.8 \cdot R$ groups with probability at least $1-\frac{1}{nd}$, for any fixed $t$. 
So, the algorithm should simply output the majority across all $r$.

    Next, the same result holds if the alternative hypothesis is $\|\mu\|_2 \in [\alpha \cdot 2^{t-1}, \alpha \cdot 2^t]$ for any $t \ge 1$. To see why, replace each $X_i \in X^{(r, t)}$ with $X'_i := (X_i + \sqrt{2^{2t}-1} \cdot Z_i)/2^t$, where $Z_j \sim \mathcal{N}(0, I)$ is independent for each $X_i$. If $X_i \sim \mathcal{N}(\mu, I)$, then $X'_i \sim \mathcal{N}(\mu/2^t, I)$. Moreover, $\{X_i'\}$ is still $\eps$-obliviously corrupted, because $\{Z_i\}$ is chosen as i.i.d. Gaussians independent of the samples.

    The algorithm $\cA'$ thus works as follows. For each $1 \le t \le O(\log(n/d))$, we test on $X^{(t)}$ whether $\mu = 0$ or $\|\mu\|_2 \in [\alpha \cdot 2^{t-1}, \alpha \cdot 2^t]$, with failure probability at most $\frac{1}{nd}.$ $\cA'$ rejects if any of these tests on $X^{(t)}$ rejects for some $1 \le t \le T$.
Under the null hypothesis, with at least $0.99$ probability, no test will reject. However, if $\|\mu\|_2 \in [\alpha, 10 \sqrt{d}]$, then there exists some $0 \le t \le O(\log(d/\alpha))$ such that $\|\mu\|_2 \in [\alpha \cdot 2^t, \alpha \cdot 2^{t+1}]$, so the test on $X^{(t)}$ will reject. Finally, we use an additional $O(1)$ randomly chosen points to robustly test whether $\|\mu\|_2 \ge 10 \sqrt{d}$, with $0.99$ success probability.
\end{proof}

\subsection{Notation} \label{subsec:notation}
We record here several notational conventions.
\begin{itemize}
    \item In what follows, $\dst>0$ is the distance parameter, $\cor\in[0,1]$ is the corruption rate, $\dims$ denotes the dimension, and $\ns$ is the number of samples. In the remainder of the paper, we assume $\alpha \le O(1)$. In addition, one must have $\cor \leq \dst$. 
    \item We use $\tilde{O}$, $\tilde{\Omega}$, $\tilde{\Theta}$ to hide polylogarithmic factors in the argument. 
    \item Given a distribution $\cD$, we use $p_\cD(\cdot)$ to represent the corresponding PDF (whenever it is well-defined).
    \item Throughout this paper, for a set of vectors $S$, we will use the shorthand $\Sum(S)$ to denote the sum of the vectors in $S$, \ie $\Sum(S)\eqdef \sum_{x\in S} x$.
\end{itemize}

Throughout the remainder of the paper, we will assume $\alpha \geq 0.99^d$.  We will also assume the desired failure probability $\delta \geq 0.99^d$.  Thus, we can also assume that the number of samples $n \leq 100d\log(1/\delta)/\alpha^2 \leq (1.1)^d$ since that would suffice to learn the distribution to accuracy $0.1\alpha$.\footnote{Recall that we have a non-robust testing lower bound of \smash{$\bigOmega{\sqrt{d}/\alpha^2}$} and an efficient robust learning upper bound of \smash{$\bigO{d/\alpha^2}$}. In the regime $\alpha \leq  0.99^d$, however, $d/\alpha^2 = O(\sqrt{d} \log(1/\alpha) /\alpha^2) = \tilde{O}(d/\alpha^2)$, so the trivial upper and lower bounds match up to logarithmic factors. For the failure probability $\delta$, note that we can amplify any constant success probability in both the oblivious and Huber contamination models by running multiple trials, at the cost of a multiplicative $O(\log(1/\delta))$ factor.}

\section{Reducing to ``Friendly" Oblivious Contaminations} \label{sec:strong-sample-splitting}

The first key step in our oblivious upper bound is arguing that we can reduce to when the contaminated points are reasonably behaved.  Formally, we want to argue that it suffices to consider a friendly oblivious contamination defined as follows.

\begin{restatable}{definition}{friendly}[(Friendly) Oblivious Contamination Model] \label{defn:friendly-intro}
  We say $X_1,\ldots,X_n$ are obliviously $\eps$-contaminated samples from a distribution $\cD$ if they are drawn as follows: first $Y_1,\ldots,Y_{\e n}$ are chosen adversarially, then $Y_{\e n + 1},\ldots, Y_{n} \sim \cD$ i.i.d., and finally $Y_1,\ldots,Y_n$ are randomly permuted to produce $X_1,\ldots,X_n$.

  In the \emph{friendly} oblivious contamination model, we additionally have the following assumption about the data:
\end{restatable}
\begin{assumption} 
\label{ass:friendly}
A dataset $X_1,\ldots,X_n \in \R^d$ is $\kappa$-\emph{friendly} if the  following all hold:
\begin{enumerate}
    \item For any disjoint subsets $S, T \subset [n]$ of sizes $k_1, k_2 \le \eps \cdot n$,
\[\left|\left\langle \sum_{i \in S} X_i, \sum_{i\in T} X_i \right\rangle\right| \le \kappa \cdot (\sqrt{k_1 k_2} \cdot \max(\sqrt{\eps n d}, \eps n)).\] 
    \item For every distinct $i \neq j \in [n]$, $|\langle X_i, X_j \rangle| \le \kappa \cdot \sqrt{d}$. 
    \item For every $i \in [n]$, $\|X_i\|_2^2 = d \pm \kappa \sqrt{d}$. 
\end{enumerate}
\end{assumption}
In this definition, one should think of $\kappa = \poly(\log(n), \log(d))$.

Note that we need to make the reduction to friendly oblivious contamination while preserving the ``obliviousness" of the contaminated points.  Getting the first condition is the main difficulty (the latter two are relatively straight-forward in light of Fact~\ref{fact:bounded-norm} and Claim~\ref{claim:inner-product} below)  as natural algorithms for filtering/removing points don't preserve this ``obliviousness" and thus cannot be used.  Nevertheless in this section, we show how to filter an arbitrary oblivious contamination on a dataset to a friendly oblivious contamination while preserving obliviousness. We will prove the following theorem.

\begin{theorem}[Dealing with $\widetilde{O}(1)$-Friendly Contamination Suffices]\label{thm:reduce-to-friendly}
Assume there exists an algorithm for robust mean testing in $\R^d$ under $\kappa$-friendly oblivious $\eps$-contamination that uses $n = f(d, \alpha , \eps)$ samples and succeeds with probability $p > 2/3$ where $\kappa = (10 \log (nd))^{2000}$.  Assume $n \leq (1.1)^d$.  Then there exists an algorithm for robust mean testing in $\R^d$ under (arbitrary) oblivious $\eps/2$-contaimination that succeeds with probability $p - 0.01$ and uses $n \poly(\log (nd))$ samples.
\end{theorem}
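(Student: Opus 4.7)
The plan is to construct an \emph{obliviousness-preserving filter} that iteratively cleans an arbitrarily obliviously contaminated dataset until it satisfies \cref{ass:friendly}, then run the assumed friendly tester on the output. Each round of the filter follows the sample-splitting template from \cref{ssec:techniques}: randomly partition the current surviving samples into halves $U$ and $V$, compute a short list of filter directions $v_1, \ldots, v_\ell$ from $U$ alone, and remove from $V$ any $X_i$ with $|\langle X_i, v_j \rangle| > \tau$ for a threshold $\tau \asymp \sqrt{\log(nd)}$. The threshold $\tau$ is set above the typical projection of a $\cN(\mu, I)$ sample onto a fixed unit direction (\cref{fact:concentration-inner-prod}), so with probability $1 - 1/\poly(nd)$ no uncorrupted sample in $V$ is removed. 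Since the filter is computed without inspecting $V$, the surviving good samples in $V$ are still i.i.d.\ draws from $\cD$ conditioned on passing the filter, so obliviousness is preserved from one round to the next.

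To establish friendliness, conditions~(2) and~(3) of \cref{ass:friendly} are dispatched directly: by \cref{fact:bounded-norm} and \cref{fact:concentration-inner-prod}, removing every point whose squared norm differs from $d$ by more than $\kappa \sqrt{d}$ and every point that shares an inner product exceeding $\kappa \sqrt{d}$ with some other survivor deletes only corrupted points with high probability, preserving obliviousness without any splitting trick. Condition~(1) is enforced by an induction on the friendliness parameter. Starting from trivial $\lambda_0 = n^{O(1)}$-friendliness, I would apply a bounded number $R$ of rounds of the filter, each shrinking $\lambda$ to $\theta \asymp \lambda / \poly\log(nd)$, until $\theta \le \kappa = (10 \log(nd))^{2000}$. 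In any round, assuming the surviving set is $\lambda$-friendly, the Small Witness Lemma (\cref{lem:small-witness-intro}) guarantees that whenever condition~(1) fails at level $\theta$ for some pair $S_1, S_2$, there is a subset $S_1' \subseteq S_1$ of size at most $\eps n / C$ (with $C = \poly\log(nd)$) such that filtering against $v = \Sum(S_1') / \|\Sum(S_1')\|$ removes $\Omega(\eps n)$ corrupted samples; since at most $\eps n$ corrupted samples ever exist, only $\ell = O(1)$ filter directions suffice per round.

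The main obstacle is arguing that a list $(T_1, \ldots, T_\ell) \subseteq U$ that cleans $U$ also cleans $V$, which is what justifies computing the filter from $U$ alone. I would handle this by a union bound over all tuples in $\binom{n}{\eps n / C}^\ell$, whose cardinality is at most $2^{O((\eps n \ell / C) \log C)}$. For any fixed tuple, I would show via \cref{cor:hypergeometric} and \cref{prop:hypergeometric-subgaussian} that the random split distributes any hypothetical $\theta$-violating pair $S_1, S_2 \subseteq V$ roughly evenly: with probability $1 - \exp(-\Omega(\eps n))$, each of $S_1, S_2$ intersects $U$ in roughly half its mass, inducing (by Cauchy--Schwarz on the inner product in condition~(1)) a $\theta'$-violation in $U$ for some $\theta' \approx \theta$, contradicting that $(T_1, \ldots, T_\ell)$ cleaned $U$. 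Choosing $C$ as a sufficiently large polynomial in $\log(nd)$ makes $(\eps n \ell / C) \log C \ll \eps n$, so the per-tuple failure bound $\exp(-\Omega(\eps n))$ beats the tuple count, using also the a priori bound $n \le (1.1)^d$. Summing the failure probabilities across the $O(1)$ rounds, the local cleanup, and the final invocation of the friendly tester costs at most $0.01$ in success probability and inflates the sample complexity by a factor of $\poly\log(nd)$, yielding the claimed reduction.
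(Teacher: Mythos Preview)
Your high-level outline---sample splitting, filter directions from $U$ only, the Small Witness Lemma, a union bound over candidate filters---matches the paper's strategy, but the induction you propose is on the wrong parameter and would not give $\poly\log$ overhead. You shrink the friendliness level from $\lambda_0 = n^{O(1)}$ to $\kappa$ by a $\poly\log(nd)$ factor per round, which takes $R = \Theta(\log n/\log\log n)$ rounds; with 50--50 splits you retain only a $2^{-R}$ fraction of the data, and $2^{\Theta(\log n/\log\log n)}$ is not $\poly\log(nd)$. The paper instead fixes $\lambda$ at the target value and inducts on a \emph{size} parameter $k$ in the notion of ``$(\lambda,m,k)$-balanced'' (condition~(1) for all $|S||T|\le k$): the base case $k\approx m$ is immediate from the pairwise bounds (Claim~\ref{claim:small-sets-bound}), and each round doubles $k$ up to $m^2$. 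Because the split ratio is $p=1/(5\log^2 m)$ rather than $1/2$, the $O(\log^2 n)$ rounds together discard only an $O(1)$ fraction of samples.

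The second gap is that you treat every violating pair as if $|S_1|=|S_2|=\eps n$, which is what lets you claim $\ell=O(1)$ filters of size $\eps n/C$ and failure probability $\exp(-\Omega(\eps n))$. But friendliness must hold for all $|S|,|T|\le \eps n$, and when $s:=\min(|S_1|,|S_2|)$ is small, the concentration over the split is only $\exp(-\Omega(s\cdot p^{O(1)}))$ (Lemma~\ref{lem:violating-sets-split}), while the witness size produced by Lemma~\ref{lem:filter-compression} is $|S_1||S_2|/(Cm)$, not $\eps n/C$. The paper resolves this with an inner loop over the scale $s$: for each $s$ the total filter budget $\tau$ is set proportional to $s$, so the union bound $|S|^\tau$ is beaten by the $2^{-\Omega(s p^{O(1)})}$ tail (Lemma~\ref{lem:filter-one-iteration}). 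Your single fixed budget collapses exactly when $s$ is small. (Two smaller issues: your transfer argument is misstated---$S_1,S_2\subseteq V$ cannot ``intersect $U$''; the paper's contrapositive is over the \emph{full} filtered set, via Lemma~\ref{lem:violating-sets-split}---and you omit the dimension-padding case $\eps n\gtrsim d$ needed in the proof of Theorem~\ref{thm:reduce-to-friendly}.)
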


\subsection{Structure of Obliviously Contaminated Samples}

We begin by proving a few basic structural properties that hold with high probability for an obliviously contaminated dataset.  First, we show that the inner product between any two points that are not both contaminated must be small.

\begin{claim}\label{claim:inner-product}
Consider a set $S = \{X_1, \dots , X_n \}$ of $n$ points in $\R^d$ that are drawn from $N(\mu, I)$ and then $\eps$-contaminated in the oblivious contamination model.  Let $R \subset S$\todonotedone{Shyam: oops I've always been using $B$ as the bad points and $R$ as the sum ... Hopefully its ok if I made it clear my notation in the sections. \\ OK so i changed it a bit to be in mathbf font, and gave careful definitions, hopefully that way nobody will confuse it with anything else at least.} be the subset of contaminated points.  Also assume $\norm{\mu} \leq 1$ and $n \leq (1.1)^d$.  Then for any $0.99^d < \delta < 1$, with probability $1 - \delta$, we have that for all $X_i \in S\backslash R, X_j \in S$ with $i \neq j$,
\[
\frac{|\langle X_i, X_j \rangle |}{\norm{X_i} \norm{X_j}} \leq \frac{10\log(n/\delta)}{\sqrt{d}}
\]
\end{claim}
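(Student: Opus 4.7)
The central idea is to exploit obliviousness: since the adversary commits to the corrupted values before any good sample is drawn, we may condition on the entire contamination (the set of corrupted indices together with the values $\{X_j\}_{j \in R}$) and treat the good points as independent draws $X_i = \mu + Z_i$ with $Z_i \sim N(0,I)$ independent. Throughout set $L \eqdef \log(n/\delta)$; note that $L \geq 1$ in the range of interest. If $L > d/400$, then $10L/\sqrt d > 1$ and the claim is vacuous, so I may assume $L \leq d/400$.

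\textbf{Step 1 (norm bounds on good points).} Apply \cref{fact:bounded-norm} with failure probability $\delta/3$: with probability $1-\delta/3$, every good $X_i \in S \setminus R$ satisfies $\|X_i\|^2 \in [d - 10(\sqrt{Ld}+L), d + 10(\sqrt{Ld}+L)]$. In the regime $L \leq d/400$, this yields $\|X_i\| \geq \sqrt{d}/2$. Nothing need be said about $\|X_j\|$ for $X_j \in R$: it appears on both sides of the target inequality and cancels.

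\textbf{Step 2 (inner product bounds; two cases, fix a pair $(i,j)$).} Condition on the contamination.

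\emph{Case A: $X_j \in R$.} Given the conditioning, $X_j$ is a fixed vector and $\langle X_i, X_j \rangle = \langle \mu, X_j \rangle + \langle Z_i, X_j \rangle$ with $\langle Z_i, X_j \rangle \sim N(0, \|X_j\|^2)$. Standard univariate Gaussian tails give $|\langle Z_i, X_j\rangle| \leq 3 \|X_j\|\sqrt{L + \log n}$ with failure probability at most $\delta/(3n^2)$; combined with $|\langle \mu, X_j\rangle| \leq \|X_j\|$ (since $\|\mu\| \leq 1$), this yields $|\langle X_i, X_j\rangle| \leq 4\|X_j\|\sqrt{L}$.

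\emph{Case B: $X_j \in S \setminus R$, $j \neq i$.} Write $X_i = \mu + Z_i$ and $X_j = \mu + Z_j$ with $Z_i, Z_j$ independent standard Gaussians (by obliviousness $Z_i \perp Z_j$ given the conditioning). Then
\[
\langle X_i, X_j\rangle = \|\mu\|^2 + \langle \mu, Z_i + Z_j \rangle + \langle Z_i, Z_j\rangle.
\]
The first term is at most $1$. The second is $N(0, 2\|\mu\|^2)$, bounded by $O(\sqrt{L})$ with failure probability $\delta/(3n^2)$. The third is bounded by $O(\sqrt{dL})$ with the same failure probability by \cref{fact:concentration-inner-prod} (taking $C = \Theta(\sqrt{L})$, which is $O(\sqrt d)$ under our regime assumption). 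Altogether $|\langle X_i, X_j\rangle| \leq O(\sqrt{dL})$.

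\textbf{Step 3 (union bound and finish).} Union-bound the events of Step 2 over the at most $n^2$ ordered pairs $(i,j)$, losing a factor $n^2$ already absorbed by our choice of failure probability; together with Step 1 the total failure probability is at most $\delta$. Dividing by $\|X_i\|\|X_j\|$ and using $\|X_i\| \geq \sqrt d/2$: in Case A the $\|X_j\|$ factors cancel and we obtain $|\langle X_i, X_j\rangle|/(\|X_i\|\|X_j\|) \leq 8\sqrt{L}/\sqrt d$; in Case B we use $\|X_j\| \geq \sqrt d/2$ as well, giving $O(\sqrt{L/d})$. Since $L \geq 1$, in both cases the ratio is at most $10 L/\sqrt d$, as claimed.

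\textbf{Main obstacle.} There is no deep obstacle here; the only subtlety is conceptual, namely articulating clearly that obliviousness lets us condition on the bad points and keep the good points independent Gaussians. The technical calculations are routine applications of Gaussian tail bounds and \cref{fact:bounded-norm,fact:concentration-inner-prod}. One minor care is needed in the boundary regime where $L$ is nearly linear in $d$: here the target bound is trivial, which is why the restriction $L \leq d/400$ (equivalently, assuming Fact~\ref{fact:bounded-norm} gives a useful norm bound) loses nothing.
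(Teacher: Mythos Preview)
Your proof is correct and follows essentially the same approach as the paper: exploit obliviousness to condition on $X_j$ and treat $X_i$ as a fresh Gaussian, then combine a Gaussian tail bound on the inner product with the norm lower bound from \cref{fact:bounded-norm}. The paper's execution is slightly more streamlined in that it does not split into cases on whether $X_j$ is contaminated: it simply conditions on $X_j$ (good or bad) and observes that $\langle X_i - \mu, X_j/\|X_j\|\rangle$ is a standard Gaussian, which absorbs your Cases A and B into a single argument and avoids invoking \cref{fact:concentration-inner-prod}.
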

\begin{proof}
Since the contamination is oblivious, we can imagine fixing the index $j$ first and then drawing $X_i$.  We can write $X_i = \mu + v$ where $v \sim N(0, I)$.  We have with probability $1 - \delta/(2n^2)$
\[
|\langle X_i , X_j/\norm{X_j} \rangle| = |\langle \mu , X_j/\norm{X_j} \rangle  + \langle v  , X_j/\norm{X_j} \rangle  | \leq | 1 + 5\sqrt{\log(n/\delta)}|
\]
where in the last step we simply noted that $\langle v  , X_j/\norm{X_j} \rangle $ is distributed as a standard Gaussian and the desired inequality follows from standard tail bounds.  Also by Fact~\ref{fact:bounded-norm}, $\norm{X_i} \geq \sqrt{d/2}$ with probability at least $1 - \delta/(2n)$ and combining this with the above gives
\[
\frac{|\langle X_i, X_j \rangle |}{\norm{X_i} \norm{X_j}} \leq \frac{10\log(n/\delta)}{\sqrt{d}} \,.
\]
Union bounding the failure probability over all $i,j$ we are done.
\end{proof}

We also have the following bound on the number of uncontaminated points with large projection onto any direction determined by a small subset of datapoints.

\begin{claim}\label{claim:projection-counts}
Consider a set $S = \{X_1, \dots , X_n \}$ of $n$ points in $\R^d$ that are drawn from $N(\mu, I)$ and then $\eps$-contaminated in the oblivious contamination model.  Let $R \subset S$ denote the contaminated points.  Also assume $\norm{\mu} \leq 1$ and $n \leq (1.1)^d$.  Then for any $0.99^d < \delta < 1$, with probability $1 - \delta$, we have the following property:  for any subset $T \subset S$, 
\[
\left\lvert \left\{  X_i \in S \backslash R \; \bigg| \; |\langle X_i , \Sum(T)/\norm{\Sum(T)}\rangle | \geq 10\sqrt{\log(n/\delta)} \right\} \right\rvert \leq 2|T| \,.
\]
\end{claim}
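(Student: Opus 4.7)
The plan is to fix a subset $T$ first, bound the failure probability conditional on $T$, and then take a union bound over all $T \subseteq [n]$. The crucial ingredient is that the oblivious adversary commits to $\{X_i\}_{i \in R}$ independently of the good samples; this will let us treat the uncontaminated samples lying outside $T$ as fresh i.i.d.\ Gaussians even after conditioning on everything inside $T$ and on the contaminated values.

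First I would split the good points with large projection onto $v_T := \Sum(T)/\|\Sum(T)\|$ into those lying in $T$ and those lying outside $T$. Let $G := S \setminus R$. There are trivially at most $|G \cap T| \le |T|$ contributions from $G \cap T$, which are absorbed into the $2|T|$ bound for free. So it suffices to show that for every $T$, with high probability uniformly in $T$,
\[
\left|\bigl\{X_j \in G \setminus T \;:\; |\langle X_j, v_T\rangle| \ge 10\sqrt{\log(n/\delta)}\bigr\}\right| \le |T|.
\]
(On the measure-zero event that $\Sum(T) = 0$, define $v_T$ arbitrarily; the count is then trivially zero.)

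Now fix $T$ and condition on both $\{X_i\}_{i \in R}$ (the contaminated values, chosen by the oblivious adversary) and $\{X_j\}_{j \in T \cap G}$ (the good samples lying inside $T$). By obliviousness, conditioning on $\{X_i\}_{i \in R}$ does not affect the distribution of the good samples, so the remaining good samples $\{X_j\}_{j \in G \setminus T}$ are still i.i.d.\ $\cN(\mu, I)$ and, after the additional conditioning on $\{X_j\}_{j \in T \cap G}$, they are independent of the deterministic unit vector $v_T$. For each such $X_j$, $\langle X_j, v_T \rangle \sim \cN(\langle \mu, v_T\rangle, 1)$ with $|\langle \mu, v_T\rangle| \le \|\mu\| \le 1$, so a standard Gaussian tail bound gives $\Pr[|\langle X_j, v_T\rangle| \ge k] \le 2 e^{-(k-1)^2/2}$ for $k = 10\sqrt{\log(n/\delta)}$. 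By independence of the $X_j$'s, the probability that more than $|T|$ of them cross the threshold is at most $\binom{n}{|T|+1}\bigl(2 e^{-(k-1)^2/2}\bigr)^{|T|+1}$.

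Finally, a union bound over all subsets $T \subseteq [n]$ yields total failure probability at most $\sum_{t \ge 0} \binom{n}{t}\binom{n}{t+1}\bigl(2 e^{-(k-1)^2/2}\bigr)^{t+1}$. With $k = 10\sqrt{\log(n/\delta)}$, the per-sample tail is at most $2(n/\delta)^{-40}$, and each term in the sum is dominated by a power of $\delta/n^{O(1)}$; using $n \le (1.1)^d$ and $\delta \ge 0.99^d$ to keep $\log(n/\delta)$ well-behaved, the total is comfortably $o(\delta)$. The only subtle point is the conditioning step: it is precisely the obliviousness of the contamination that lets us treat $\{X_j\}_{j \in G \setminus T}$ as independent of $v_T$ after fixing everything inside $T$. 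Under an adaptive adversary this independence would fail, since the contaminated values in $T \cap R$ could be chosen to depend on the good samples in $G \setminus T$, and the union bound argument would collapse.
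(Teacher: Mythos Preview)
Your proposal is correct and follows essentially the same approach as the paper's proof: fix $T$, use obliviousness to treat the good samples outside $T$ as fresh $\cN(\mu,I)$ draws independent of $v_T$, bound the probability that more than $|T|$ of them exceed the threshold, and union bound over all $T$. Your treatment is in fact more explicit than the paper's about the conditioning step and why obliviousness is needed; the paper simply says ``we can imagine fixing the points $X_i \in T$ first and then drawing the remaining points $X_i \in S\setminus (R \cup T)$''.
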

\begin{proof}
We consider a fixed set $T$ and then union bound over all possible choices of $T$.  For a fixed set $T$, we can imagine fixing the points $X_i \in T$ first and then drawing the remaining points $X_i \in S\backslash (R \cup T) \sim N(\mu, I)$ afterwards.  It suffices to upper bound the probability that more than $|T|$ of these points satisfy 
\[
|\langle X_i , \Sum(T)/\norm{\Sum(T)}\rangle | \geq 10\sqrt{\log(n/\delta)} \,.
\]
This probability can be upper bounded by
\[
(\delta/n)^{10 |T|} \cdot n^{|T|} \leq (\delta/n)^{9|T|} 
\]
and then union bounding over all possible choices of $T$ gives the desired statement. 
\end{proof}

\subsection{Oblivious Filtering via Sample Splitting}

Recall that our approach to prove Theorem~\ref{thm:reduce-to-friendly} will be to ``obliviously" filter the dataset, removing some of the contaminated points, so that the remaining data is friendly.  In light of Fact~\ref{fact:bounded-norm} and Claim~\ref{claim:inner-product}, it is not difficult to enforce the latter two conditions for friendliness since we can simply remove points whose norm is too large or too small and also remove pairs of points whose inner product is too large.  The main difficulty lies in enforcing the first condition and this is our focus for the remainder of this section.

It will be convenient to make the following definition.

\begin{definition}
Let $S$ be a set of vectors in $\R^d$.  For parameters $\lambda, m, k$, we say that $S$ is $(\lambda, m,k)$-balanced if for all pairs of disjoint subsets $S_1 , S_2 \subset S$ with $|S_1| ,|S_2|  \leq m$ and $|S_1| |S_2| \leq k$, we have
\[
\left \lvert \langle \Sum(S_1), \Sum(S_2) \rangle \right\rvert \leq  \sqrt{\lambda|S_1||S_2|d}
\]
\end{definition}

Roughly, it will suffice to ensure that our dataset is balanced for $\lambda \sim \kappa^2 \eps n $,  $m \sim \eps n $ and $k \sim m^2$ \footnote{For most of this section, we will work in the regime $\eps n \poly(\log (nd)) < d$.  We will show a reduction when we finally prove Theorem~\ref{thm:reduce-to-friendly} that allows us to reduce to this case.}.  We will do this by iteratively doubling $k$ i.e. going from $(\lambda , m ,k/2)$-balanced to $(\lambda ,m , k)$-balanced.  At a high-level the way we do this while maintaining obliviousness of the contaminations is as follows. We randomly split the dataset into two parts $A,B$ and only look at $A$ to construct some filter that ``cleans" $A$ i.e. makes it $(\lambda ,m , k)$-balanced.  We then argue that with high probability, this filter must clean $B$ and we simply apply it to $B$ and iterate on the remaining data (throwing away $A$).  Crucially, this sample splitting preserves the obliviousness of the contaminations because the filters are constructed independently of the uncontaminated data since we can view the uncontaminated points in $B$ as being drawn after running our algorithm on $A$.  See Algorithm~\ref{alg:single-iteration} and Algorithm~\ref{alg:full-sample-splitting}  for more specific details.

We first need to prove a few basic properties.  If a set of vectors $S$ is $(\lambda ,m , k/2)$-balanced and not $(\lambda , m , k)$-balanced, then there must be some disjoint subsets $S_1, S_2 \subset S$ with $k/2 \leq |S_1||S_2| \leq k$ that witness this i.e.
\[
|\langle \Sum(S_1), \Sum(S_2) \rangle| \geq \sqrt{\lambda  |S_1||S_2|d} \,.
\]
The above statement says that on average, vectors in $S_2$ have large inner product with $\Sum(S_1)$.  In the next claim, we prove that this is actually the case for a large subset of $S_2$.

\begin{claim}\label{claim:inner-product-uniformity}
Let $S_1, S_2$ be two disjoint sets of vectors in $\R^d$. Let $k,m$ be some parameters such that $|S_1|, |S_2| \leq m$.  Assume that $S_1 \cup S_2$ is $(\lambda, m, k/2)$-balanced.  Then if $|S_1| \cdot |S_2| \leq k$ and 
\[
\langle \Sum(S_1) , \Sum(S_2) \rangle \geq  \sqrt{\theta |S_1| |S_2| d}
\]
for some parameter $\theta \leq \lambda$, then there is a subset $S_2' \subset S_2$ with $|S_2'| \geq \frac{\theta |S_2|}{8\lambda}$ such that for all $v \in S_2'$, 
\[
 \langle v, \Sum(S_1) \rangle   \geq \frac{1}{4} \sqrt{\frac{\theta |S_1|d}{|S_2|}} 
\]
\end{claim}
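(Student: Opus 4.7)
The plan is to argue via an averaging/pigeonhole argument combined with the balanced hypothesis. Let $\tau = \tfrac{1}{4}\sqrt{\tfrac{\theta|S_1|d}{|S_2|}}$, and define the candidate set
\[
S_2' = \Bigl\{ v \in S_2 \;:\; \langle v, \Sum(S_1)\rangle \geq \tau \Bigr\}\,.
\]
This matches the conclusion's threshold exactly, so all that remains is the lower bound on $|S_2'|$.

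First, I will lower bound the total inner product mass concentrated in $S_2'$. Decompose $S_2$ into $S_2'$, the set $S_2^{\mathrm{mid}}$ of $v$ with $0 < \langle v,\Sum(S_1)\rangle < \tau$, and the set $S_2^{\mathrm{neg}}$ of $v$ with $\langle v, \Sum(S_1)\rangle \leq 0$. Starting from the hypothesis $\langle \Sum(S_1),\Sum(S_2)\rangle \geq \sqrt{\theta|S_1||S_2|d}$, the $S_2^{\mathrm{neg}}$ term contributes at most $0$, and the $S_2^{\mathrm{mid}}$ contribution is at most $\tau\cdot|S_2| = \tfrac{1}{4}\sqrt{\theta|S_1||S_2|d}$. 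Hence
\[
\langle \Sum(S_1), \Sum(S_2') \rangle \;\geq\; \sqrt{\theta|S_1||S_2|d} - \tfrac{1}{4}\sqrt{\theta|S_1||S_2|d} \;=\; \tfrac{3}{4}\sqrt{\theta|S_1||S_2|d}\,.
\]

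Next, I apply the $(\lambda,m,k/2)$-balanced hypothesis to the pair $S_1, S_2'$. This is legitimate provided $|S_1|,|S_2'|\leq m$ (which holds since $S_2' \subseteq S_2$) and $|S_1|\cdot|S_2'|\leq k/2$. Assuming the latter holds for the moment, we obtain
\[
\tfrac{3}{4}\sqrt{\theta|S_1||S_2|d} \;\leq\; \langle \Sum(S_1), \Sum(S_2')\rangle \;\leq\; \sqrt{\lambda|S_1||S_2'|d}\,,
\]
and squaring and rearranging yields $|S_2'| \geq \tfrac{9\theta|S_2|}{16\lambda} \geq \tfrac{\theta|S_2|}{8\lambda}$, as required.

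Finally, I handle the edge case where $|S_1|\cdot|S_2'|>k/2$ and the balanced hypothesis cannot be applied directly. In that case, using $|S_1||S_2|\leq k$, one gets $|S_2'| > \tfrac{k}{2|S_1|}\geq \tfrac{|S_2|}{2}$. Since $\theta \leq \lambda$, we have $\tfrac{\theta}{8\lambda} \leq \tfrac{1}{8} < \tfrac{1}{2}$, so $|S_2'|\geq \tfrac{|S_2|}{2}\geq \tfrac{\theta|S_2|}{8\lambda}$ in this regime as well. I do not anticipate a real obstacle here: the main conceptual step is recognizing that the bound against $S_2^{\mathrm{neg}}$ is free (those terms only help the inequality), so one does not need the balanced hypothesis on $S_2^{\mathrm{neg}}$ at all, which sidesteps the main potential headache.
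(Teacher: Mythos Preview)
Your proof is correct and takes a genuinely different, cleaner route than the paper's. The paper first isolates the ``heavy'' set $T=\{v\in S_2:\langle v,\Sum(S_1)\rangle\ge 2\sqrt{\lambda^2|S_1|d/(\theta|S_2|)}\}$, uses balancedness to show $|T|\le \theta|S_2|/(4\lambda)$ and hence that $T$ carries at most half the total inner product, and then counts the elements in $S_2\setminus(T\cup R)$ (where $R$ is the below-threshold set) by dividing the remaining mass by the per-element upper bound $2\sqrt{\lambda^2|S_1|d/(\theta|S_2|)}$. You instead bound $\langle\Sum(S_1),\Sum(S_2')\rangle$ from below by discarding the mid and negative parts, then apply balancedness directly to the pair $(S_1,S_2')$ as an \emph{upper} bound on the same quantity. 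This avoids the auxiliary set $T$ entirely, handles the $|S_1|\cdot|S_2'|>k/2$ edge case cleanly, and even gives the stronger constant $9/16$ in place of $1/8$. The paper's argument does produce a pointwise upper bound on $\langle v,\Sum(S_1)\rangle$ for $v\in S_2'$ as a byproduct, but that information is not used in the claim's statement or in the downstream lemmas, so your simplification loses nothing.
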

\begin{proof}
Let $T$ be the set of all vectors $v \in S_2$ such that 
\[
\langle v , \Sum(S_1) \rangle \geq 2 \sqrt{\frac{\lambda^2|S_1|}{\theta |S_2|}d} \,.
\]
If $T$ has size larger than $\theta |S_2|/(4 \lambda)$, then by taking $T'$ to be a random subset of $T$ of size $\theta |S_2|/(4 \lambda)$, we would get 
\[
\BE\left[\langle \Sum(T') , \Sum(S_1) \rangle\right] \geq \frac{\theta |S_2|}{4\lambda} \cdot 2 \sqrt{\frac{\lambda^2|S_1|}{\theta |S_2|}d} = \frac{1}{2} \sqrt{\theta |S_1||S_2|d} \geq \sqrt{\lambda |T'||S_1|d} 
\]
Hence, the above deterministically happens for some $T' \subset T$ of size $\theta |S_2|/(4 \lambda)$,
which contradicts the assumption that $S_1 \cup S_2$ is $(\lambda, m , k/2)$-balanced.  Thus, we must actually have $|T| \le \theta |S_2|/(4 \lambda)$, and
\[
\langle \Sum(T), \Sum(S_1) \rangle \leq \sqrt{\lambda |T| |S_1|d} \leq \frac{1}{2} \sqrt{\theta |S_1||S_2|d} \,.
\]
In particular, this means that
\[
\langle \Sum(S_2 \backslash T),  \Sum(S_1) \rangle \geq \frac{1}{2} \sqrt{\theta |S_1||S_2|d} \,.
\]
Next, let $R$ be the set of all vectors $v \in S_2$ such that
\[
\langle v, \Sum(S_1) \rangle \leq \frac{1}{4} \sqrt{\frac{\theta |S_1|d}{|S_2|}} \,.
\]
We have that 
\[
\langle \Sum(S_2 \backslash (T \cup R)), \Sum(S_1) \rangle \geq \frac{1}{4} \sqrt{\theta |S_1||S_2|d} \,.
\]
Thus, by the construction of $R,T$, we conclude that the number of vectors $v \in S_2$ such that 
\[
\langle v, \Sum(S_1) \rangle \geq \frac{1}{4} \sqrt{\frac{\theta |S_1|d}{|S_2|}}
\]
is at least 
\[
\frac{\langle \Sum(S_2 \backslash (T \cup R)), \Sum(S_1) \rangle }{2 \sqrt{\frac{\lambda^2|S_1|}{\theta |S_2|}d}} \geq \frac{\theta |S_2|}{8\lambda} \,. \qedhere
\]
\end{proof}

With the above equipped, we can now show that if $S_1, S_2 \subset S$ are two sets of samples that violate $(\lambda ,m , k )$-balancedness, then if we split $S$ into two parts $A,B$, with all but exponentially small (in $\min(|S_1|,|S_2|)$) probability, both parts $A,B$ will witness a violation for slightly smaller values of $\lambda,m,k$.  

\begin{lemma}\label{lem:violating-sets-split}
Let $S_1, S_2$ be two disjoint sets of vectors in $\R^d$. Let $k,m$ be some parameters such that $|S_1|, |S_2| \leq m$.  Assume that $S_1 \cup S_2$ is $(\lambda, m, k/2)$-balanced.  Also assume that $|S_1| \cdot |S_2| \leq k$ and 
\[
\langle \Sum(S_1) , \Sum(S_2) \rangle  \geq  \sqrt{\lambda |S_1| |S_2| d} \,.
\]
Consider splitting $S_1,S_2$ each into two sets $S_{1,A}, S_{1,B}$ and $S_{2,A}, S_{2,B}$ respectively where each element is assigned to the first part independently with probability $p$.  Then with probability $1 - 2^{- \frac{\min(|S_1|, |S_2|)p^3(1- p)^3}{10^{10}}}$,  there are subsets $S_{1,A}', S_{2,A}', S_{1,B}', S_{2,B}'$ with $S_{1,A}' \subset S_{1,A}$, $ S_{2,A}' \subset S_{2,A}$, $S_{1,B}' \subset S_{1,B}$, $S_{2,B}'  \subset S_{2,B}$ such that 
\begin{equation*}
\begin{split}
&|S_{1,A}'|   = \frac{p^2|S_1|}{10^6} \\ &|S_{1,B}'|= \frac{(1 - p)^2|S_1|}{10^6}\\
&  |S_{2,A}'| = \frac{p^2|S_2|}{10^6} \\ &|S_{2,B}'| = \frac{(1 - p)^2|S_2|}{10^6} \\
&\langle \Sum(S_{1,A}') , \Sum(S_{2,A}') \rangle \geq \frac{p^4\sqrt{\lambda |S_1||S_2| d}}{10^{13}}
\\ 
& \langle \Sum(S_{1,B}') , \Sum(S_{2,B}') \rangle \geq \frac{(1 - p)^4\sqrt{\lambda |S_1||S_2| d}}{10^{13}}
\end{split}
\end{equation*}
\end{lemma}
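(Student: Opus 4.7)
The plan is to first impose a \emph{uniform correlation} structure on $S_1, S_2$ using \cref{claim:inner-product-uniformity}, and then to invoke Chernoff-type concentration for the random split, conditioning on one side at a time to handle the bilinear nature of the inner product $\langle \Sum(S_{1,A}), \Sum(S_{2,A})\rangle$.

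\textbf{Step 1 (Uniformization via two applications of Claim~\ref{claim:inner-product-uniformity}).} Apply \cref{claim:inner-product-uniformity} to $(S_1, S_2)$ with $\theta = \lambda$ to obtain $S_2^\star \subseteq S_2$ with $|S_2^\star| \geq |S_2|/8$ and $\langle v, \Sum(S_1)\rangle \geq \tfrac{1}{4}\sqrt{\lambda|S_1|d/|S_2|}$ for every $v \in S_2^\star$. Summing these bounds gives $\langle \Sum(S_1), \Sum(S_2^\star)\rangle \geq \tfrac{1}{32}\sqrt{\lambda|S_1||S_2|d}$, and rewriting this as $\sqrt{\theta'|S_1||S_2^\star|d}$ shows $\theta'$ is still within a constant factor of $\lambda$. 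Since $S_1 \cup S_2^\star \subseteq S_1 \cup S_2$ inherits $(\lambda, m, k/2)$-balancedness, apply \cref{claim:inner-product-uniformity} once more with the roles reversed to obtain $S_1^\star \subseteq S_1$ of size $\geq c_0 |S_1|$ (some absolute constant $c_0$) such that every $u \in S_1^\star$ satisfies $\langle u, \Sum(S_2^\star)\rangle \geq c_0' \sqrt{\lambda|S_2|d/|S_1|}$. This sets up the symmetric ``uniformly correlated'' pair $(S_1^\star, S_2^\star)$.

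\textbf{Step 2 (Control of sizes after the split).} Let $S_{1,A}^\star := S_1^\star \cap S_{1,A}$ and define $S_{1,B}^\star, S_{2,A}^\star, S_{2,B}^\star$ analogously. Multiplicative Chernoff on independent Bernoulli($p$) indicators gives, with failure probability $2^{-\Omega(p(1-p)|S_i^\star|)}$ on each side, lower bounds $|S_{1,A}^\star| \geq p|S_1^\star|/2$, $|S_{2,A}^\star| \geq p|S_2^\star|/2$ (and likewise for the $B$-parts with $(1-p)$). This is one of the three Chernoff applications contributing the $p(1-p)$ factors in the final exponent.

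\textbf{Step 3 (Bilinear concentration, the hard step).} We want to show $\langle \Sum(S_{1,A}^\star), \Sum(S_{2,A}^\star)\rangle \gtrsim p^2 \sqrt{\lambda|S_1||S_2|d}$ with probability $1 - 2^{-\Omega(\min(|S_1|,|S_2|)\,p^3(1-p)^3)}$. The obstacle is that this quantity is a bilinear form in the two independent Bernoulli vectors $\chi^A|_{S_1^\star}$ and $\chi^A|_{S_2^\star}$, so neither Chernoff nor the standard Hanson--Wright gives a clean tail bound. The trick is to first condition on the $S_1$-side randomness: for each fixed $v \in S_2^\star$, the quantity $\langle v, \Sum(S_{1,A}^\star)\rangle = \sum_{u \in S_1^\star} \chi_u^A \langle v, u\rangle$ is a sum of independent bounded random variables (balancedness on singletons gives $|\langle v, u\rangle| \leq \sqrt{\lambda d}$), whose mean is $p\langle v, \Sum(S_1^\star)\rangle$, uniformly large by Step 1. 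A Bernstein bound then says $\langle v, \Sum(S_{1,A}^\star)\rangle \geq \tfrac{p}{2}\langle v, \Sum(S_1^\star)\rangle$ with probability $1 - 2^{-\Omega(p(1-p)|S_1^\star|)}$ (using $|\langle v, \Sum(S_1^\star)\rangle|^2/(|S_1^\star|\lambda d)$ as the key ratio, which is where the Step 1 uniform lower bound is essential). Taking a union bound over a representative set of $v$'s and then conditioning on this event, we apply a second Bernstein bound to the outer sum $\sum_{v \in S_2^\star} \chi_v^A \langle v, \Sum(S_{1,A}^\star)\rangle$, which is now a sum of $|S_2^\star|$ independent non-negative bounded terms. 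Multiplying the two failure probabilities (iterated conditioning) yields the claimed $2^{-\Omega(\min(|S_1|,|S_2|)\,p^3(1-p)^3)}$ bound. The $B$-side argument is identical with $p \leftrightarrow 1-p$.

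\textbf{Step 4 (Trimming to the exact sizes).} To hit the exact cardinalities $p^2|S_1|/10^6$ and $p^2|S_2|/10^6$, take further uniform subsets $S_{1,A}' \subseteq S_{1,A}^\star$ and $S_{2,A}' \subseteq S_{2,A}^\star$ of the required sizes; since $|S_{i,A}^\star| \geq c\, p|S_i|$, these are a $\Theta(p/10^6)$ fraction of the parent set, and a repetition of the Step 3 concentration argument (now for sums over the still-smaller sub-subsets) transfers the lower bound to $\langle \Sum(S_{1,A}'), \Sum(S_{2,A}')\rangle \geq p^4\sqrt{\lambda|S_1||S_2|d}/10^{13}$. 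The same reasoning handles the $B$-side; the main obstacle throughout is precisely Step 3, and the remaining work is careful bookkeeping of constants to land at $10^{-13}$ in the final display.
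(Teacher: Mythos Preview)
Your Step~3 contains a genuine gap. The Bernstein bound you claim for $\langle v, \Sum(S_{1,A}^\star)\rangle$ does not give failure probability $2^{-\Omega(p(1-p)|S_1^\star|)}$. Even granting that $\langle v, \Sum(S_1^\star)\rangle \gtrsim \sqrt{\lambda|S_1|d/|S_2|}$ for each $v\in S_2^\star$ (which Step~1 does not actually establish --- it bounds $\langle v,\Sum(S_1)\rangle$, not $\langle v,\Sum(S_1^\star)\rangle$, and balancedness only controls the difference by $\sqrt{\lambda|S_1|d}$, which already swamps the signal), the sum $\sum_{u\in S_1^\star}\chi_u^A\langle v,u\rangle$ has Hoeffding variance proxy $|S_1^\star|\,\lambda d$ from the singleton bound $|\langle v,u\rangle|\le\sqrt{\lambda d}$. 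The resulting exponent is therefore at best
\[
\frac{\bigl(p\,\langle v,\Sum(S_1^\star)\rangle\bigr)^2}{|S_1^\star|\,\lambda d}\;\approx\;\frac{p^2}{|S_2|}\,,
\]
which is $o(1)$, not $\Omega(|S_1^\star|)$. The per-$v$ signal-to-noise ratio is only $O(1/\sqrt{|S_2|})$, so no per-vector concentration can survive the union bound over $|S_2^\star|$ vectors, and the iterated-conditioning scheme collapses. Swapping the roles of $u$ and $v$ runs into the symmetric problem with $|S_1|$ in the denominator.

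The paper avoids bilinear concentration entirely by splitting one side at a time, packaged as \cref{claim:split-one-side}: apply \cref{claim:inner-product-uniformity} once to extract $S_2'\subseteq S_2$ with a uniform per-element lower bound on $\langle v,\Sum(S_1)\rangle$, then apply a \emph{counting} Hoeffding bound to $|S_2'\cap A|$ and $|S_2'\cap B|$. The only randomness is how many already-good elements land in each half; summing their per-element bounds then gives the inner-product lower bound deterministically, with failure probability $2^{-\Omega(p^2(1-p)^2|S_2'|)}$. The proof of \cref{lem:violating-sets-split} is just two applications of \cref{claim:split-one-side} (first split $S_2$ with $S_1$ held fixed; then split $S_1$ against the now-fixed $S_{2,A}^{(1)}$, with $\theta$ reduced to $\approx p\lambda/10^4$), followed by a random-subset averaging argument to trim $S_{2,A}^{(1)}$ down to the stated size.
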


The proof of Lemma~\ref{lem:violating-sets-split} relies on the claim below, which characterises what happens when we split one of the sets, say $S_2$ into two parts.

\begin{claim}\label{claim:split-one-side}
Let $S_1, S_2$ be two disjoint sets of vectors in $\R^d$. Let $k,m$ be some parameters such that $|S_1|, |S_2| \leq m$.  Assume that $S_1 \cup S_2$ is $(\lambda, m, k/2)$-balanced.  Also assume that $|S_1| \cdot |S_2| \leq k$ and 
\[
 \langle \Sum(S_1) , \Sum(S_2) \rangle  \geq  \sqrt{\theta |S_1| |S_2| d}
\]
for some parameter $\theta \leq \lambda$.  Now consider splitting $S_2$ into two pieces $A,B$ where each element is independently assigned to $A$ with probability $p$ (and assigned to $B$ otherwise).  Then with probability $ 1 - 2^{-\frac{p^2(1-  p)^2 \theta|S_2|}{10^2\lambda}}$, there exist subsets $A' \subset A$ and $B' \subset B$ such that 
\[
\begin{split}
|A'| &=  \frac{p \theta |S_2|}{20\lambda}  \\
|B'| &= \frac{(1 - p)\theta |S_2|}{20\lambda}  \\
\langle \Sum(A'), \Sum(S_1)\rangle &\geq \frac{p \theta}{10^2\lambda} \sqrt{\theta |S_1||S_2|d} \\ 
\langle \Sum(B'), \Sum(S_1) \rangle &\geq \frac{(1 - p)\theta}{10^2 \lambda} \sqrt{\theta |S_1||S_2|d}
\end{split}
\]
\end{claim}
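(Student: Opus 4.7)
The plan is to reduce the claim to Claim~\ref{claim:inner-product-uniformity} and then apply a standard concentration bound for the random split. The key observation is that once Claim~\ref{claim:inner-product-uniformity} produces a large subset of $S_2$ whose elements \emph{individually} have substantial inner product with $\Sum(S_1)$, cancellations across a random partition become impossible, so all that remains is to count how many such elements fall on each side.

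First, I would invoke Claim~\ref{claim:inner-product-uniformity} with the current hypotheses (in particular $\theta \leq \lambda$, $|S_1||S_2| \leq k$, and $(\lambda, m, k/2)$-balancedness of $S_1 \cup S_2$) to extract a subset $S_2^* \subseteq S_2$ with $|S_2^*| \geq \theta |S_2|/(8\lambda)$ such that every $v \in S_2^*$ satisfies
\[
\langle v, \Sum(S_1)\rangle \;\geq\; \tfrac{1}{4}\sqrt{\theta |S_1| d / |S_2|}.
\]
Second, under the random split, $|S_2^* \cap A| \sim \mathrm{Bin}(|S_2^*|, p)$ and $|S_2^* \cap B| \sim \mathrm{Bin}(|S_2^*|, 1-p)$. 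A multiplicative Chernoff bound gives that, except with probability at most $2\cdot 2^{-\Omega(\min(p, 1-p)\cdot |S_2^*|)}$, both $|S_2^* \cap A| \geq p|S_2^*|/2$ and $|S_2^* \cap B| \geq (1-p)|S_2^*|/2$. Substituting $|S_2^*| \geq \theta |S_2|/(8\lambda)$ and using the elementary inequality $p^2(1-p)^2 \leq \min(p, 1-p)$ for $p \in [0,1]$, the failure probability fits inside the claimed $2^{-p^2(1-p)^2 \theta|S_2|/(10^2\lambda)}$.

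Third, on the good event I take $A' \subseteq S_2^* \cap A$ and $B' \subseteq S_2^* \cap B$ of the required exact sizes $p\theta|S_2|/(20\lambda)$ and $(1-p)\theta|S_2|/(20\lambda)$; these sizes are attainable because $p|S_2^*|/2 \geq p\theta|S_2|/(16\lambda)$ (and analogously for $B$). Summing the uniform per-element lower bound then yields
\[
\langle \Sum(A'), \Sum(S_1)\rangle \;\geq\; \tfrac{p\theta|S_2|}{20\lambda}\cdot \tfrac{1}{4}\sqrt{\theta|S_1| d / |S_2|} \;=\; \tfrac{p\theta}{80\lambda}\sqrt{\theta|S_1||S_2|d} \;\geq\; \tfrac{p\theta}{10^2\lambda}\sqrt{\theta|S_1||S_2|d},
\]
and the identical computation handles $B'$.

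The only mild ``obstacle'' is bookkeeping of constants: the Chernoff bound gives the stronger exponent $\Omega(\min(p,1-p)\theta|S_2|/\lambda)$, and the inequality $p^2(1-p)^2 \leq \min(p, 1-p)$ is what lets me weaken it to the symmetric form $p^2(1-p)^2$ appearing in the statement. Everything else is a routine ``mass concentrates on a subset, then binomial splitting'' argument, and I do not anticipate any conceptual difficulty beyond ensuring that the constants $20$, $80$, $10^2$ in the stated bounds line up cleanly after the Chernoff step.
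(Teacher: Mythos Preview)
Your proposal is correct and essentially identical to the paper's proof: both invoke Claim~\ref{claim:inner-product-uniformity} to extract the uniformly-correlated subset $S_2'$, apply a binomial concentration bound (the paper says ``Hoeffding'', you say ``multiplicative Chernoff'') to ensure $S_2'\cap A$ and $S_2'\cap B$ are large, and then sum the per-element lower bound. Your discussion of the constant bookkeeping is slightly more explicit than the paper's, but the argument is the same.
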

\begin{proof}
First, construct the set $S_2'$ according to Claim~\ref{claim:inner-product-uniformity}.  By Hoeffding's inequality, with probability $1 - 2^{-\frac{p^2(1-  p)^2 \theta|S_2|}{10^2\lambda}}$ , we have $|S_2' \cap A| \geq p\theta |S_2|/(20 \lambda)$ and $|S_2' \cap B| \geq (1 - p)\theta |S_2|/(20\lambda)$.  Let $A',B'$ be arbitrary subsets of $S_2' \cap A, S_2' \cap B$ with sizes  $p\theta |S_2|/(20 \lambda)$ and $(1 - p)\theta |S_2|/(20\lambda)$ respectively.  Then by the properties of $S_2'$ guaranteed by Claim~\ref{claim:inner-product-uniformity} we have
\[
\langle \Sum(A'), \Sum(S_1)\rangle \geq |A'| \cdot \frac{1}{4} \sqrt{\frac{\theta |S_1|d}{|S_2|}}  \geq  \frac{p \theta}{10^2\lambda} \sqrt{\theta |S_1||S_2|d}
\]
and similar for $\langle \Sum(B'), \Sum(S_1) \rangle$, completing the proof.
\end{proof}

We can now prove Lemma~\ref{lem:violating-sets-split} by applying Claim~\ref{claim:split-one-side} twice.

\begin{proof}[Proof of Lemma~\ref{lem:violating-sets-split}]
First consider when $S_2$ is split into $S_{2,A}$ and $S_{2,B}$ and apply Claim~\ref{claim:split-one-side} with $\theta = \lambda$.  This gives us sets $S_{2,A}^{(1)}$ and $S_{2,B}^{(1)}$ with 
\[
\begin{split}
|S_{2,A}^{(1)}| &= \frac{p|S_2|}{20} \\
|S_{2,B}^{(1)}| &= \frac{(1 - p)|S_2|}{20} \\
 \langle \Sum(S_1), \Sum(S_{2,A}^{(1)})\rangle  &\geq \frac{p}{10^2}\sqrt{\lambda |S_1||S_2| d} \\
\langle \Sum(S_1), \Sum(S_{2,B}^{(1)}) \rangle  &\geq \frac{(1 - p)}{10^2}\sqrt{\lambda |S_1||S_2| d} \,.
\end{split}
\]
Now we can apply Claim~\ref{claim:split-one-side} again when splitting $S_1$ with $\theta = p\lambda/10^4$ to get $S_{1,A}'$ with
\[
\begin{split}
|S_{1,A}'| &= \frac{p^2 |S_1|}{10^6} \\
\langle \Sum(S_{1,A}'), \Sum(S_{2,A}^{(1)}) \rangle &\geq \frac{p^3}{10^8} \sqrt{\lambda|S_1||S_2|d} \,.
\end{split}
\]
Now we can take $S_{2,A}'$ to be random subset of $S_{2,A}^{(1)}$ of size $p^2|S_2|/10^6$ and we have in expectation that
\[
\langle \Sum(S_{1,A}'), \Sum(S_{2,A}') \rangle \geq \frac{p^4}{10^{13}} \sqrt{\lambda |S_1||S_2|d} 
\]
so in particular it holds for some choice of $S_{2,A}'$.  We can construct $S_{2,B}'$ similarly.  The overall failure probability over all applications of Claim~\ref{claim:split-one-side} is at most $2^{-\frac{p^3(1 - p)^3 \min(|S_1|,|S_2|)}{10^{10}}}$ and this completes the proof.
\end{proof}

In light of Lemma~\ref{lem:violating-sets-split}, we know that when we split the set of samples $S$ into two parts $A,B$, any pair of subsets $S_1, S_2$ that violates $(\lambda ,m ,k)$-balancedness creates a violation in both parts with (approximately) $\exp(-\min(|S_1|,|S_2|))$ failure probability.  Now, we roughly proceed as follows. If the set of all possible filters considered by our algorithm has size less than $\exp(\min(|S_1|,|S_2|))$, then we can union bound and conclude that actually any filter that cleans $A$ to be $(\lambda',m',k')$-balanced (for some slightly smaller $\lambda' , m', k'$) must actually clean $S$ to be $(\lambda ,m , k)$-balanced.  Then it suffices to argue that there exists a filter in this set that actually cleans $A$.  The full argument will be slightly more involved as we have to deal with different possibilities for $\min(|S_1|,|S_2|)$ separately.

We first need a few more basic observations.

\begin{definition}
Let $S$ be a set of vectors in $\R^d$.  We say that $S$ is $\rho$-bounded if for all $v \in S$, $d - \sqrt{ \rho d} \leq \norm{v}^2 \leq d + \sqrt{\rho d}$ and for all distinct $u,v \in S$, $-\sqrt{\rho d} \leq \langle u, v \rangle \leq  \sqrt{ \rho d}$.
\end{definition}

\begin{claim}\label{claim:simple-norm-bound}
Let $S \in \R^d$ be a set of vectors that is $(\lambda, m,k)$-balanced and $\lambda$-bounded.  Then for all subsets $T \subset S$ with $|T| \leq \min(m, \sqrt{k})$, $\norm{\Sum(T)}^2    \leq |T|d + 2|T| \sqrt{\lambda d}$.    
\end{claim}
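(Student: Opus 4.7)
\textbf{Plan for proof of Claim~\ref{claim:simple-norm-bound}.} The natural starting point is to expand
\[
\norm{\Sum(T)}^2 \;=\; \sum_{v \in T}\norm{v}^2 \;+\; \sum_{\substack{u,v \in T \\ u \neq v}} \langle u,v\rangle .
\]
The diagonal part is immediate from $\lambda$-boundedness: $\sum_{v \in T}\norm{v}^2 \le |T|(d + \sqrt{\lambda d}) = |T|d + |T|\sqrt{\lambda d}$. The only real work is to bound the off-diagonal sum by $O(|T|\sqrt{\lambda d})$. Note the naive bound of $|T|(|T|-1)\sqrt{\lambda d}$ from $\lambda$-boundedness is far too weak when $|T|$ is large, so we must actually exploit the $(\lambda,m,k)$-balancedness assumption.

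The plan for the off-diagonal terms is to reduce them, by a one-shot random-partition trick, to a single inner product of the form $\iprod{\Sum(T_1),\Sum(T_2)}$ to which balancedness applies. Concretely, let $T_1, T_2$ be a uniformly random partition of $T$, assigning each element to either side independently with probability $1/2$. Since the events $\{u\in T_1\}$ and $\{v\in T_2\}$ are independent for $u\neq v$,
\[
\E\bigl[\iprod{\Sum(T_1),\Sum(T_2)}\bigr]
= \sum_{u \neq v \in T} \Pr[u\in T_1,\, v\in T_2]\,\iprod{u,v}
= \tfrac{1}{4}\sum_{u \neq v \in T}\iprod{u,v} .
\]
For every realization of $(T_1,T_2)$ we have $|T_1|,|T_2| \le |T| \le m$ and, by AM--GM together with $|T| \le \sqrt{k}$,
\[
|T_1|\cdot|T_2| \le \bigl(\tfrac{|T_1|+|T_2|}{2}\bigr)^2 = \tfrac{|T|^2}{4} \le \tfrac{k}{4} \le k ,
\]
so the hypotheses of $(\lambda,m,k)$-balancedness are satisfied. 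Thus $|\iprod{\Sum(T_1),\Sum(T_2)}|\le \sqrt{\lambda\,|T_1||T_2|\,d}\le \tfrac{|T|}{2}\sqrt{\lambda d}$ pointwise, and this bound passes through to the expectation.

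Combining, $\bigl|\sum_{u \neq v \in T}\iprod{u,v}\bigr| \le 4\cdot \tfrac{|T|}{2}\sqrt{\lambda d} = 2|T|\sqrt{\lambda d}$, and together with the diagonal bound we obtain $\norm{\Sum(T)}^2 \le |T|d + O(|T|\sqrt{\lambda d})$, giving the claim (up to absorbing the $+|T|\sqrt{\lambda d}$ from the norms into the stated constant, which is presumably how the constant $2$ in the statement is intended—alternatively, the diagonal contribution can be absorbed into the off-diagonal estimate by noting that $\lambda$ is polylogarithmic and $\sqrt{\lambda d}$ terms of order $|T|$ are being tracked loosely). There is no real obstacle here: the only subtlety is lining up the balancedness hypothesis $|T_1|\cdot|T_2|\le k$ with the given assumption $|T|\le \sqrt{k}$, which is exactly why the condition $|T|\le \min(m,\sqrt{k})$ appears in the statement.
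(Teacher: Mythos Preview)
Your proposal is correct and follows essentially the same approach as the paper: expand $\norm{\Sum(T)}^2$ into diagonal and off-diagonal parts, bound the diagonal by $\lambda$-boundedness, and handle the off-diagonal sum via a random bipartition of $T$ together with $(\lambda,m,k)$-balancedness. Your explicit check that $|T_1|\cdot|T_2|\le |T|^2/4\le k$ is a detail the paper leaves implicit, and the constant mismatch you flag (yielding $3|T|\sqrt{\lambda d}$ rather than $2|T|\sqrt{\lambda d}$) is indeed present in the paper's own arithmetic and is immaterial to the downstream argument.
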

\begin{proof}
We can write
\[
\norm{\Sum(T)}^2 \leq |T|(d +  \sqrt{\lambda d}) +  \sum_{u,v \in T, u \neq v} \langle u ,v \rangle \,.
\]
Now consider a random partition of $T$ into two sets $T_1, T_2$ where each element is assigned uniformly at random.  Then
\[
 \sum_{u,v \in T, u \neq v} \langle u ,v \rangle  = 2\E[ \langle \Sum(T_1), \Sum(T_2) \rangle] \leq |T| \sqrt{\lambda  d} 
\]
where we used the assumption of $(\lambda, m , k)$-balancedness.   Thus,
\[
\norm{\Sum(T)}^2  \leq |T|d + 2|T| \sqrt{\lambda d}
\]
and we are done.
\end{proof}

\begin{claim}\label{claim:small-sets-bound}
Let $S \subset \R^d$ be a set of vectors that is $\lambda/m$-bounded.  Then it is $(\lambda , m , m)$-balanced.
\end{claim}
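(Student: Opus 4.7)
The plan is to prove this by directly expanding the inner product as a double sum of pairwise inner products and applying the $\lambda/m$-boundedness hypothesis term by term, then using the constraint $|S_1|\cdot|S_2| \leq m$ to convert the crude pairwise bound into the desired balancedness bound.

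More concretely, fix disjoint subsets $S_1, S_2 \subset S$ with $|S_1|, |S_2| \le m$ and $|S_1|\cdot|S_2| \le m$ (since in the definition of $(\lambda, m, m)$-balancedness we have $k = m$). Let $k_1 = |S_1|$ and $k_2 = |S_2|$. Then I would write
\[
    \langle \Sum(S_1), \Sum(S_2) \rangle = \sum_{u \in S_1}\sum_{v \in S_2} \langle u, v\rangle \, .
\]
Since $S_1$ and $S_2$ are disjoint, every pair $(u,v)$ appearing in the sum consists of distinct elements of $S$, so by the $\lambda/m$-boundedness hypothesis each term satisfies $|\langle u, v\rangle| \le \sqrt{(\lambda/m)\, d}$.

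Applying the triangle inequality gives
\[
    |\langle \Sum(S_1), \Sum(S_2)\rangle| \le k_1 k_2 \sqrt{\frac{\lambda d}{m}} = \sqrt{\frac{(k_1 k_2)^2 \lambda d}{m}} \, .
\]
Finally I use the constraint $k_1 k_2 \le m$ to bound $(k_1 k_2)^2 / m \le k_1 k_2$, yielding
\[
    |\langle \Sum(S_1), \Sum(S_2)\rangle| \le \sqrt{\lambda k_1 k_2 d} \, ,
\]
which is exactly the $(\lambda, m, m)$-balancedness condition. There is no real obstacle here: the statement is essentially a trivial consequence of the definitions, used to handle the base case of the induction on $k$ (starting from pairwise boundedness and ratcheting up to $(\lambda, m, k)$-balancedness for larger $k$ via the sample-splitting filter). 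The only subtle point is noticing that the constraint $|S_1|\cdot|S_2|\le m$ (rather than just $|S_1|, |S_2|\le m$) is exactly what is needed to close the gap between the crude pairwise bound and the Cauchy–Schwarz-like target bound $\sqrt{\lambda k_1 k_2 d}$.
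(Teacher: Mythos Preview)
Your proof is correct and takes essentially the same approach as the paper: expand the inner product as a double sum, bound each pairwise term by $\sqrt{(\lambda/m)d}$ using $\lambda/m$-boundedness, and then use $|S_1||S_2|\le m$ to convert $k_1 k_2\sqrt{\lambda d/m}$ into $\sqrt{\lambda k_1 k_2 d}$. The paper's proof is just a slightly terser version of exactly this argument.
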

\begin{proof}
 Consider disjoint subsets $S_1, S_2 \subset S$.  Then by the assumption of $\lambda/m$-bounded,
 \[
| \langle \Sum(S_1), \Sum(S_2) \rangle | \leq |S_1||S_2| \sqrt{\lambda d/m} \,.
\]   
If $|S_1| |S_2| \leq m$ then the above is at most $\sqrt{\lambda |S_1||S_2| d}$, completing the proof.
\end{proof}

Recall that one key point in the earlier sketch is that our algorithm can only enumerate over a (reasonably) small set of filters.  Here we first show that if $S_1,S_2$ violate balancedness, then there exists a direction determined by a small subset $S_1'$ with $|S_1'| \sim |S_1||S_2|/m$ such that filtering along this direction removes a large ($\sim |S_2|$) number of points.  We can then aggregate multiple filtering directions for different choices of $S_1'$ to construct our full filter.  Note that bounding the sizes of the individual sets $S_1'$ is the key for bounding the total number of possible filters being considered.

\begin{lemma}\label{lem:filter-compression}
Let $k,m, \theta, \lambda, C$ be some parameters.  Let $S_1, S_2$ be two disjoint sets of vectors in $\R^d$ with $|S_1|, |S_2| \leq m$, $10 C m \leq |S_1| \cdot |S_2| \leq k$, and
\[
\langle \Sum(S_1) , \Sum(S_2) \rangle \geq \sqrt{\theta |S_1| |S_2| d}\,.
\]
Also, assume that $S_1 \cup S_2$ is $(\lambda, m , k)$-balanced and $\theta/(10^5 C m)$-bounded, where $\theta \le \lambda \le d$ and $C \geq 1$.  Then, there exists a subset $S_1' \subset S_1$ with $|S_1'| \leq |S_1||S_2|/(Cm)$ such that there are at least $\frac{\theta |S_2|}{80 \lambda}$ vectors $v \in S_2$ such that 
\[
\bigg\langle v , \frac{\Sum(S_1')}{\norm{\Sum(S_1')}}\bigg\rangle \geq \frac{\sqrt{\theta}}{16\sqrt{Cm}}
\]
\end{lemma}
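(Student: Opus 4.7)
The plan is to first invoke Claim~\ref{claim:inner-product-uniformity} to extract a large subset $S_2'\subseteq S_2$ of vectors each having a uniformly large inner product with $\Sum(S_1)$, and then to show via a second-moment (Chebyshev) argument that a uniformly random subset $S_1'\subseteq S_1$ of size $|S_1'|=\lfloor|S_1||S_2|/(Cm)\rfloor$ preserves this large inner product -- after normalization by $\|\Sum(S_1')\|$ -- for a constant fraction of the vectors in $S_2'$. Since $(\lambda,m,k)$-balancedness implies $(\lambda,m,k/2)$-balancedness, Claim~\ref{claim:inner-product-uniformity} (applied with parameter $\theta$) yields $S_2'\subseteq S_2$ with $|S_2'|\geq \theta|S_2|/(8\lambda)$ such that $\langle v,\Sum(S_1)\rangle\geq \tfrac{1}{4}\sqrt{\theta|S_1|d/|S_2|}$ for every $v\in S_2'$; the hypothesis $|S_1||S_2|\geq 10Cm$ moreover guarantees $|S_1'|\geq 1$.

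Now fix $v\in S_2'$ and draw $S_1'$ uniformly among subsets of $S_1$ of the prescribed size. Linearity gives
\[
\mathbb{E}_{S_1'}[\langle v,\Sum(S_1')\rangle]\;\geq\; \frac{|S_1'|}{|S_1|}\cdot\tfrac{1}{4}\sqrt{\theta|S_1|d/|S_2|}\;\geq\; \frac{1}{4Cm}\sqrt{|S_1||S_2|\theta d}.
\]
Because $v\in S_2$ is distinct from every $u\in S_1$, the $\theta/(10^5 Cm)$-boundedness hypothesis yields $|\langle v,u\rangle|\leq \sqrt{\theta d/(10^5Cm)}$, and the standard variance bound for a uniform random subset-sum gives
\[
\mathrm{Var}_{S_1'}[\langle v,\Sum(S_1')\rangle]\;\leq\; |S_1'|\cdot \tfrac{\theta d}{10^5Cm}\;\leq\; \tfrac{|S_1||S_2|\theta d}{10^5 C^2 m^2}.
\]
The mean-to-standard-deviation ratio is therefore at least $\sqrt{10^5}/4>79$, so by Chebyshev, $\Pr_{S_1'}\bigl[\langle v,\Sum(S_1')\rangle<\tfrac{1}{2}\mathbb{E}_{S_1'}[\langle v,\Sum(S_1')\rangle]\bigr]<10^{-3}$. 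Averaging over $v\in S_2'$, some realization of $S_1'$ yields at least $(1-10^{-3})|S_2'|\geq \theta|S_2|/(10\lambda)$ vectors $v\in S_2'$ with $\langle v,\Sum(S_1')\rangle\geq \sqrt{|S_1||S_2|\theta d}/(8Cm)$.

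To pass to the normalized direction, note that $|S_1'|\leq |S_1|/C\leq m$ and $|S_1'|^2\leq (|S_1|/C)\cdot(|S_1||S_2|/(Cm))\leq |S_1||S_2|/C^2\leq k$, so $|S_1'|\leq \min(m,\sqrt{k})$ and Claim~\ref{claim:simple-norm-bound} delivers $\|\Sum(S_1')\|^2\leq |S_1'|d+2|S_1'|\sqrt{\lambda d}\leq 3|S_1'|d$, using $\lambda\leq d$. For each ``good'' $v$ from the previous step,
\[
\bigg\langle v,\frac{\Sum(S_1')}{\|\Sum(S_1')\|}\bigg\rangle\;\geq\; \frac{\sqrt{|S_1||S_2|\theta d}/(8Cm)}{\sqrt{3|S_1||S_2|d/(Cm)}}\;=\; \frac{\sqrt{\theta}}{8\sqrt{3Cm}}\;\geq\; \frac{\sqrt{\theta}}{16\sqrt{Cm}},
\]
and $\theta|S_2|/(10\lambda)\geq \theta|S_2|/(80\lambda)$, giving the lemma's conclusion. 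The key obstacle is precisely the variance bound in the middle step: the factor $10^5$ in the boundedness hypothesis is exactly what is needed so that the mean-to-standard-deviation ratio is a large absolute constant, letting Chebyshev's inequality deliver a failure probability small enough that, after averaging over $S_2'$, a constant fraction of the ``good'' vectors survive. A secondary subtlety is verifying $|S_1'|\leq \min(m,\sqrt{k})$ from the given size constraints so that Claim~\ref{claim:simple-norm-bound} is applicable; this follows by multiplying the two independent upper bounds $|S_1'|\leq |S_1|/C$ and $|S_1'|\leq |S_1||S_2|/(Cm)$.
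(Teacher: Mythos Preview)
Your proof is correct and follows essentially the same approach as the paper's: invoke Claim~\ref{claim:inner-product-uniformity} to obtain $S_2'$, take a uniformly random $S_1'\subseteq S_1$ of size $|S_1||S_2|/(Cm)$, bound the mean and variance of $\langle v,\Sum(S_1')\rangle$ for each $v\in S_2'$ using boundedness, apply Chebyshev, average over $S_2'$ to fix $S_1'$, and then normalize using Claim~\ref{claim:simple-norm-bound}. The only cosmetic differences are that you extract a sharper Chebyshev bound ($1-10^{-3}$ success versus the paper's more relaxed $0.1$) and a slightly tighter norm bound ($3|S_1'|d$ versus $4|S_1'|d$), neither of which changes the argument.
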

\begin{proof}
First we apply Claim~\ref{claim:inner-product-uniformity} to $S_2$ to get a subset $S_2'$ with the stated properties.  Now consider any vector $v \in S_2'$.  Consider drawing a random subset $S_1' \subset S_1$ of size $|S_1'| =  |S_1||S_2|/(Cm)$ (note this is well defined because $|S_1||S_2|  \geq 10Cm$ and $|S_1'| \leq |S_1|$).  First, we compute
\[
\E_{S_1'}[ \langle \Sum(S_1') , v \rangle  ] = \frac{|S_1'|}{|S_1|} \langle v, \Sum(S_1) \rangle \geq \frac{|S_1'|}{4} \sqrt{\frac{\theta d}{|S_1||S_2|}} 
\]
Next, we can compute the second moment
\[
\begin{split}
\E_{S_1'}[ \langle \Sum(S_1') , v \rangle^2  ] &= \sum_{u \in S_1} \frac{|S_1'|}{|S_1|} \langle u, v \rangle^2 + \sum_{u,u' \in S_1, u \neq u'} \frac{|S_1'|(|S_1'| - 1)}{|S_1| (|S_1| - 1)} \langle u, v \rangle \langle u' , v \rangle \\ &\leq \sum_{u \in S_1} \frac{|S_1'|}{|S_1|} \langle u, v \rangle^2 + \frac{|S_1'|(|S_1'| - 1)}{|S_1| (|S_1| - 1)} \sum_{u,u' \in S_1}  \langle u, v \rangle \langle u' , v \rangle \\ &\leq |S_1'| \frac{\theta d}{10^5 C m} + \frac{|S_1'|(|S_1'| - 1)}{|S_1| (|S_1| - 1)}\left(  \langle v, \Sum(S_1) \rangle \right)^2 \\& \leq |S_1'| \frac{\theta d}{10^5 C m} + (\E_{S_1'}[ \langle \Sum(S_1') , v \rangle ])^2 .
\end{split}
\]
Thus, the variance is at most $|S_1'|\theta d/(10^5 C m)$.  Now since $|S_1'| =  |S_1| |S_2|/m$ and $C \ge 1$, we have
\[
\E_{S_1'}[ \langle \Sum(S_1') , v \rangle  ] \geq 5 \sqrt{\textsf{Var}_{S_1'}( \langle \Sum(S_1') , v \rangle)}
\]
and thus with probability at least $0.1$, 
\[
\langle v, \Sum(S_1') \rangle \geq 0.5  \E_{S_1'}[ \langle \Sum(S_1') , v \rangle  ] \geq \frac{|S_1'|}{8} \sqrt{\frac{\theta d}{|S_1||S_2|}}  \,.
\]
Next note that by the constraints on the parameters, $|S_1'| \leq \min(m, \sqrt{k})  $ and thus by Claim~\ref{claim:simple-norm-bound}, 
\[
\norm{\Sum(S_1')} \leq  \sqrt{|S_1'|d + 2|S_1'| \sqrt{\lambda  d}} \leq 2\sqrt{|S_1'|d}
\]
which implies that with $0.1$ probability over the randomness of the choice of $S_1'$
\[
\bigg\langle v, \frac{\Sum(S_1')}{\norm{\Sum(S_1')}} \bigg\rangle \geq \frac{1}{16}\sqrt{\frac{S_1' \theta }{|S_1||S_2|}} \geq \frac{\sqrt{\theta}}{16\sqrt{Cm}} \,.
\]
This holds for all $v \in S_2'$ where $S_2'$ was constructed at the beginning of this proof according to Claim~\ref{claim:inner-product-uniformity}.  By linearity of expectation, this means that there is some choice of $S_1'$ such that there are at least 
\[
0.1|S_2'| \geq \frac{\theta |S_2|}{80 \lambda}
\]
vectors $v \in S_2$ such that 
\[
\bigg\langle v, \frac{\Sum(S_1')}{\norm{\Sum(S_1')}} \bigg\rangle  \geq \frac{\sqrt{\theta}}{16\sqrt{Cm}}
\]
as desired.
\end{proof}

We now describe a single iteration of our algorithm (see Algorithm~\ref{alg:single-iteration}) where we take as input a parameter $s$ and the goal is to eliminate all pairs of subsets $S_1,S_2$ with $s/2 \leq \min(|S_1|, |S_2|) \leq s$ that violate $(\lambda ,m , k)$-balancedness.  Repeating this algorithm over logarithmically many scales for $s$ and then logarithmically many scales for $k$ will give our full algorithm (see Algorithm~\ref{alg:full-sample-splitting}).

\begin{definition}\label{def:filter}
Given a collection of (finite) sets of vectors in $\R^d$, say $F_1, \dots , F_\ell$, and a parameter $\gamma \geq 0$, we define $\Filter_{\gamma}(F_1, \dots , F_\ell) \subset \R^d$ to consist of all vectors $v \in \R^d$ such that 
\[
\max_{i \in [l]} | \langle v, \Sum(F_i)/\norm{\Sum(F_i)} \rangle | \geq \gamma  \,.
\]
When we apply $\Filter_{\gamma}(F_1, \dots , F_\ell)$ to a set $S \subset \R^d$, we remove from $S$ all points that are in $\Filter_{\gamma}(F_1, \dots , F_\ell)$.
\end{definition}

\begin{algorithm}
\begin{algorithmic}
\Require Finite set of samples $S \subset \R^d$
\Require Parameters $\lambda, m , k, s, p$
\State Partition $S$ into two sets $A,B$ where each element is independently assigned to $A$ with probability $p$
\State Set $\tau = \frac{sp^6 }{10^{11} \log |S|}$
\State Set $\gamma = \sqrt{\frac{\lambda p^{50}}{10^{100}m}}$
\State Set $F_1 ,F_2, \dots , F_k = \emptyset$
\For{All collections of subsets $T_1, \dots T_\ell \subset A$ with $|T_1| + |T_2| + \dots + |T_\ell| \leq \tau$}
\State Set $\textsf{check} = \textbf{True}$
\For{ All disjoint pairs $S_1, S_2 \subset A \backslash \Filter_{\gamma}(T_1, \dots , T_\ell)$ with $|S_1| = p^2 s/(2 \cdot 10^6) , |S_2| = p^2 k /(2 \cdot 10^6 s) $}
\If{$| \langle \Sum(S_1), \Sum(S_2) \rangle | \geq \frac{p^4\sqrt{\lambda |S_1||S_2| d}}{10^{14}}$}
\State Set $\textsf{check} = \textbf{False}$
\EndIf
\EndFor
\If{$\textsf{check}$} 
\State Set $F_1 = T_1 , \dots , F_\ell = T_\ell$
\State \textbf{Break}
\EndIf
\EndFor
\State Set $B' = B \backslash \Filter_{\gamma}(F_1, \dots , F_\ell)$
\State\Return $B'$
\end{algorithmic}
\caption{Single Filtering Iteration}\label{alg:single-iteration}
\end{algorithm}
\begin{lemma}[Analysis of Algorithm~\ref{alg:single-iteration}]\label{lem:filter-one-iteration}
Assume that the set $S$ is $(\lambda , m , k/2)$-balanced and $\lambda p^{50}/(10^{100} m) $-bounded.
\todonotedone{Shyam: I believe $m$ needs to be even smaller. Specifically, to apply Lemma 4.12, we need $10 C m \le |S_{1, A}'| \cdot |S_{1, B}'|$.}
Assume the parameters satisfy $\lambda \leq d , m \leq k p^{20}/10^{50}, p \leq 1/2$.  Also, assume that there is a subset $R \subset S$ with $|R| \leq (p^{50}m)/(10^{100} \log |S|)$ such that for any subset $T \subset S$, we have  
\[
\left\lvert \left\{ v \in S \backslash R \; \bigg| \; |\langle v , \Sum(T)/\norm{\Sum(T)}\rangle | \geq \sqrt{\frac{\lambda p^{50}}{10^{100}m}}  \right\} \right\rvert \leq 2|T| \,.
\]

Then with probability $1 - 2^{-sp^6/10^{11}}$, the set $B'$ output by Algorithm~\ref{alg:single-iteration} has the property that for any disjoint sets $S_1, S_2 \subset B'$ with $s/2 \leq |S_1| \leq s, |S_1| \leq |S_2| \leq m$ and $k/2 \leq |S_1| |S_2| \leq k$,
\[
|\langle \Sum(S_1), \Sum(S_2) \rangle | \leq \sqrt{\lambda |S_1||S_2|d} \,.
\]
\end{lemma}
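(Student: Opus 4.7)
The plan is to split the proof into two stages. In stage~(a), I would show that with high probability over the random split $S = A \sqcup B$, the brute-force loop in Algorithm~\ref{alg:single-iteration} finds some passing collection $(F_1,\ldots,F_\ell)$. In stage~(b), I would show that with high probability over the same split, every candidate collection of total size at most $\tau$ which passes the inner check on $A$ produces a $B' = B \setminus \Filter_\gamma(F_1,\ldots,F_\ell)$ free of large-scale violating pairs of the prescribed sizes.

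For stage~(a), I would build a passing collection greedily. Starting from the empty filter, while the current filter fails the check on $A$ there exist disjoint $S_1, S_2 \subset A \setminus \Filter_\gamma$ of the check sizes with inner product exceeding the check threshold $p^4\sqrt{\lambda|S_1||S_2|d}/10^{14}$; I apply Lemma~\ref{lem:filter-compression} with $\theta$ matching this threshold and $C$ a sufficiently large polynomial in $1/p$ and $\log|S|$. This yields a compressed subset $S_1' \subset S_1$ with $|S_1'| \leq |S_1||S_2|/(Cm)$ such that adjoining $S_1'$ to the filter removes $\Omega(\theta |S_2|/\lambda)$ further points from $A$. By Claim~\ref{claim:projection-counts} combined with the $R$-hypothesis, at most $2|S_1'|$ of those removed points are uncontaminated; the remainder must come from $R$. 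Since $|R| \leq p^{50}m/(10^{100}\log|S|)$ and the cumulative budget for uncontaminated removals is at most $2\tau$, the procedure terminates after a bounded number of iterations with total filter size at most $\tau = sp^6/(10^{11}\log|S|)$, yielding a collection that passes the check.

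For stage~(b), I would union-bound over filters and candidate large-scale violating pairs. The number of filter collections of total size at most $\tau$ is at most $|S|^{O(\tau)} = 2^{O(sp^6/10^{11})}$. Fix any such $F = (T_1,\ldots,T_\ell)$ and any candidate pair $(U_1,U_2) \subset S$ with $s/2 \leq |U_1| \leq s$, $|U_1| \leq |U_2| \leq m$, and $k/2 \leq |U_1||U_2| \leq k$. The goal is to bound the probability of the bad event that $F$ passes the check on $A$ and $(U_1,U_2) \subset B \setminus \Filter_\gamma(F)$ is a $(\lambda,m,k)$-balance violation. If $(U_1,U_2)$ is not a violation in $S$ or meets $\Filter_\gamma(F)$, the event is vacuous; otherwise I invoke Lemma~\ref{lem:violating-sets-split} on $(U_1,U_2)$ with the random split. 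With probability at least $1 - 2^{-\Omega(sp^3(1-p)^3/10^{10})}$, the split yields sub-pairs inside $A \cap U_1$ and $A \cap U_2$ of sizes $p^2|U_1|/10^6$ and $p^2|U_2|/10^6$ with inner product at least $p^4\sqrt{\lambda|U_1||U_2|d}/10^{13}$. By the $R$-hypothesis and Claim~\ref{claim:projection-counts}, the filter removes at most $|R| + 2\tau \ll p^2 s/10^6$ points from $A$, so the sub-pair survives in $A \setminus \Filter_\gamma(F)$ at essentially full size and still certifies a violation above the check threshold $p^4\sqrt{\lambda|S_1|_{\mathrm{check}}|S_2|_{\mathrm{check}}d}/10^{14}$ after a direct numerical comparison. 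This forces the check to fail, contradicting our assumption. Hence the bad event has probability at most $2^{-\Omega(sp^3/10^{11})}$ per $(F, (U_1,U_2))$; summing over filters and candidate pairs yields the claimed $2^{-sp^6/10^{11}}$ failure bound.

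The main obstacle is balancing the three exponential scales: the savings $\Omega(sp^3)$ per event from Lemma~\ref{lem:violating-sets-split} must dominate simultaneously the filter-enumeration cost $O(\tau\log|S|) = O(sp^6/10^{11})$ and the pair-enumeration cost $O((s+k/s)\log|S|)$. The choice of $\tau$ is precisely calibrated so the filter cost folds into the target failure exponent, and the constraints $m \leq kp^{20}/10^{50}$, $\lambda \leq d$, together with the $R$-bound, are exactly what ensures that the Lemma~\ref{lem:violating-sets-split} sub-pair survives in $A \setminus \Filter_\gamma(F)$ at the correct size and inner-product magnitude to trip the (slightly weaker) check threshold.
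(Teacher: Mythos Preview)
Your stage~(a) is essentially the paper's existence argument and is fine. The genuine gap is in stage~(b): the union bound over candidate pairs $(U_1,U_2)$ cannot close.

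The per-event savings you extract from Lemma~\ref{lem:violating-sets-split} is $2^{-\Omega(sp^3)}$, since $\min(|U_1|,|U_2|)=|U_1|\in[s/2,s]$. But the number of pairs you must enumerate is of order $\binom{|S|}{|U_1|}\binom{|S|}{|U_2|}\approx |S|^{\,s+k/s}$. Even ignoring the $k/s$ contribution, already $|S|^{s}=2^{s\log|S|}$ overwhelms the savings: you would need $p^{3}\gg \log|S|$, which is impossible since $p=1/(5\log^{2}m)<1$. The constraints $m\le kp^{20}/10^{50}$ and the $R$-bound do not rescue this; they control filter survival and the compression step, not the pair-enumeration cost. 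So the argument as written does not yield the claimed $2^{-sp^{6}/10^{11}}$ bound.

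The paper sidesteps this entirely by union-bounding only over the $|S|^{\tau}$ filters. For each fixed filter $F$, the set $S\setminus\Filter_{\gamma}(F)$ is deterministic (it does not depend on the split), so if it is unclean one may pick a \emph{single} witness pair $(S_1,S_2)$ and apply Lemma~\ref{lem:violating-sets-split} to that one pair. With probability $1-2^{-\Omega(sp^{3})}$ the $A$-side inherits a sub-witness lying inside $A\setminus\Filter_{\gamma}(F)$ (automatically, since $S_1\cup S_2$ avoids the filter), forcing the check to fail. Thus ``check passes on $A$'' implies ``$S\setminus\Filter_{\gamma}(F)$ is clean'', hence $B'\subset S\setminus\Filter_{\gamma}(F)$ is clean. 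The union bound is then only $|S|^{\tau}=2^{sp^{6}/10^{11}}$, which the savings comfortably absorbs. Your survival argument via $|R|+2\tau$ is also unnecessary in this setup: once the witness pair avoids the filter, so do all of its subsets.
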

\begin{proof}
Throughout this proof we set $\gamma = \sqrt{\frac{\lambda p^{50}}{10^{100}m}}$ just as in Algorithm~\ref{alg:single-iteration}.  We now introduce some terminology.  
We say that $S\backslash \Filter_{\gamma}(T_1, \dots , T_\ell)$ is unclean if there exist disjoint $S_1, S_2 \subset S\backslash \Filter_{\gamma}(T_1, \dots , T_\ell)$ such that $s/2 \leq |S_1| \leq s, |S_1| \leq |S_2| \leq m$ and $k/2 \leq |S_1||S_2| \leq k$ and 
\[
|\langle \Sum(S_1), \Sum(S_2) \rangle | \geq \sqrt{\lambda |S_1||S_2|d}
\]
and otherwise we say that $S\backslash \Filter_{\gamma}(T_1, \dots , T_\ell)$  is clean. Similarly, if $A\backslash \Filter_{\gamma}(T_1, \dots , T_\ell)$ contains two disjoint sets $S_{1,A}', S_{2,A}'$ such that 
\[
\begin{split}
|S_{1,A}'| &= \frac{p^2 s}{2 \cdot 10^6} \\
|S_{2,A}'| &= \frac{p^2 k}{2\cdot 10^6 s} \\
| \langle \Sum(S_{1,A}') , \Sum(S_{2,A}') \rangle | &\geq \frac{p^4\sqrt{\lambda |S_1||S_2| d}}{10^{14}}
\end{split}
\]
then we say $A\backslash \Filter_{\gamma}(T_1, \dots , T_\ell)$ is unclean and otherwise we say that it is clean.

There are at most $|S|^{\tau}$ distinct filters considered in Algorithm~\ref{alg:single-iteration}.  For each of these filters $\{ T_1, \dots , T_\ell \}$, we apply Lemma~\ref{lem:violating-sets-split} to $S \backslash \Filter_{\gamma}(T_1, \dots , T_\ell)$.  If it is unclean, then, with probability at least $ 1 - 2^{-s p^6/10^{10}}$, $A$ is unclean. 
This is because if $S_1, S_2$ witness $S \backslash \Filter_{\gamma}(T_1, \dots , T_\ell)$ being unclean, then since $s/2 \leq  |S_1| \leq s$ and $k/(2s) \leq |S_2| \leq 2k/s$, we can choose random subsets $S_{1,A}', S_{2,A}'$ of the appropriate size from the sets guaranteed by Lemma~\ref{lem:violating-sets-split}. (Note: we can apply Lemma~\ref{lem:violating-sets-split} because $|S_1| |S_2| \le k$ and $|S_1| \le |S_2| \le m$ by definition of unclean, and since $S \supset S_1 \cup S_2$ is assumed to be $(\lambda, m, k/2)$-balanced.) Now we can union bound this over all $|S|^\tau$ distinct filters and since by the definition of $\tau$,
\[
|S|^{\tau} \leq 2^{\frac{sp^6}{5 \cdot 10^{10}}}
\]
and thus with probability $1 - 2^{-sp^6/10^{11}}$, we have the following property: 

For any $\{T_1, \dots , T_\ell \}$ if $A\backslash \Filter_{\gamma}(T_1, \dots , T_\ell)$ is clean then $S\backslash \Filter_{\gamma}(T_1, \dots , T_\ell)$ is clean.  If Algorithm~\ref{alg:single-iteration} chooses $F_1, \dots , F_\ell$ such that  $S\backslash \Filter_{\gamma}(F_1, \dots , F_\ell)$ is clean then we are done.  Thus, it remains to show that there actually exists a filter $F_1, \dots , F_\ell$ that cleans $A$. 

We construct such a filter iteratively.  Start with an empty filter.  Now if we are not done, then there must exist a pair $S_{1,A}', S_{2,A}'$ that witnesses $A$ being unclean.  We will apply Lemma~\ref{lem:filter-compression} on this pair (with $\theta \leftarrow \lambda p^4/{10^{20}}, C = 10^{30}/p^6$, $k \leftarrow k/2$).  First we verify that the conditions of Lemma~\ref{lem:filter-compression} are met.  We have 
\[
|S_{1,A}'| |S_{2,A}'| = \frac{p^4k}{ 4 \cdot 10^{12}} \geq \frac{10^{37} m}{p^{16}} \geq 10Cm
\]
and also clearly $|S_{1,A}'| |S_{2,A}'|  \leq k/2$. Also, $|S'_{1, A}|, |S'_{2, A}| \le m$ since even $|S_1|, |S_2| \le m$. Recall that we have
\[
| \langle \Sum(S_{1,A}') , \Sum(S_{2,A}') \rangle | \geq \frac{p^4\sqrt{\lambda |S_1||S_2| d}}{10^{14}} \geq \sqrt{\theta |S_{1,A}'||S_{2,A}'|d}
\]
and also $S_{1,A}' \cup S_{2,A}'$ is $(\lambda ,m , k/2)$-balanced by assumption.  Finally, 
\[
\frac{\lambda p^{50}}{10^{100}m} \leq \frac{\theta}{10^5 C m },
\]
so the boundedness condition is satisfied, and clearly $\theta \leq \lambda \leq d$ and $C \geq 1$.  \todonotedone{Shyam: This was very hard for me to verify, mainly the fact that the conditions of Lemma 4.12 are met. You should write a couple sentences explaining why we can apply it. (Also see my above comment about $m$ needing to be smaller.) Also, I assume you are applying the lemma where you use $k/2$ instead of $k$, right?}  Thus, Lemma~\ref{lem:filter-compression} tells us that we can find a subset $F_1$ with $|F_1| \leq p^6k/(10^{40}m)$ such that 
\[
\left\lvert \left\{ v \in A \bigg| |\langle v , \Sum(F_1)/\norm{\Sum(F_1)}  \rangle | \geq \sqrt{\frac{\lambda p^{50}}{10^{100}m}}  \right\} \right\rvert \geq \frac{p^6 k}{10^{30} s} \,.
\] 
Thus, by our assumption on $R$, the number of elements in the above set that are in $R$ is at least $\frac{p^6 \cdot k}{10^{30} s} - 2 |F_1|$, which is at least $\frac{p^6 k}{2 \cdot 10^{30} s}$, since $s/2 \le |S_1| \le m$.  In particular, we added at most $p^6k/(10^{40}m)$ elements to our filter and eliminated at least $\frac{p^6 k}{2 \cdot 10^{30} s}$ elements of $R$.  Now we can iterate the above argument on $A \backslash \Filter_{\gamma}(F_1)$.  Overall, repeating this process, the total number of elements that we will add to our filter is at most
\[
\frac{p^6k}{10^{40}m}\left( \frac{|R|}{\frac{p^6 k}{2 \cdot 10^{30} s}} + 1\right) \leq \tau \,.
\]
This completes the proof.
\end{proof}

\begin{algorithm}
\begin{algorithmic}
\Require Finite set of samples $S \subset \R^d$
\Require Parameters $\lambda , m, \delta$
\State Set $k = 10^{200}m \log^{100} (|S|m/\delta) $
\State Set $S_{\textsf{filt}} = S$
\While{$k \leq m^2$}
\State Set $s =10^{199} \log^{100}(|S|m/\delta)$
\While{$s \leq m$}
\State Run Algorithm~\ref{alg:single-iteration} on $S$ with parameters $\lambda , m ,k , s, p = 1/(5\log^2 m) $
\State Set $S_{\textsf{filt}} \leftarrow B'$  where $B'$ is the output of Algorithm~\ref{alg:single-iteration}
\State $s \leftarrow 2s$
\EndWhile
\State $k \leftarrow 2k$
\EndWhile
\State\Return $S_{\textsf{filt}}$
\end{algorithmic}
\caption{Full Sample Splitting}\label{alg:full-sample-splitting}
\end{algorithm}

\begin{lemma}[Analysis of Algorithm~\ref{alg:full-sample-splitting}]\label{lem:full-sample-splitting}
Let $S \subset \R^d$  be a finite set of vectors and $\lambda , m ,\delta$ be some parameters with $\lambda \leq d$.  Assume that $S$ is $\gamma^2$-bounded where $\gamma = \sqrt{\frac{\lambda}{10^{200} m\log^{100} (|S|m/\delta) }}$.  Also assume that there is a subset $R \subset S$ with $|R| \leq \frac{m}{10^{200} \log^{100} (|S|m/\delta)}$ such that for all subsets $T \subset S$, \[
\left\lvert \left\{ v \in S \backslash R \; \bigg| \; |\langle v , \Sum(T)/\norm{\Sum(T)}\rangle | \geq \gamma  \right\} \right\rvert \leq 2|T| \,.
\]
Then if we run Algorithm~\ref{alg:full-sample-splitting} on $S$, with probability $1 - \delta$, the output $S_{\textsf{filt}}$ will be $(\lambda, m ,m^2)$-balanced.
\end{lemma}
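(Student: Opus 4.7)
The plan is a two-level induction mirroring the nested loop structure of Algorithm~\ref{alg:full-sample-splitting}: at each invocation of Algorithm~\ref{alg:single-iteration} with parameters $(k,s)$, I would verify the hypotheses of Lemma~\ref{lem:filter-one-iteration} and apply it to eliminate violating pairs at the corresponding $(s,k)$-scale; a union bound over the $O(\log^2(|S|m/\delta))$ total invocations then yields the overall failure probability at most $\delta$.

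\emph{Base case.} Let $k_0 = 10^{200} m \log^{100}(|S|m/\delta)$ denote the initial value of $k$. Since $S$ is $\gamma^2$-bounded with $\gamma^2 = \lambda/k_0$, the same computation as in the proof of Claim~\ref{claim:small-sets-bound} (with $\lambda/m$ replaced by $\lambda/k_0$) gives $|\langle \Sum(S_1), \Sum(S_2)\rangle| \le |S_1||S_2|\sqrt{\lambda d/k_0} \le \sqrt{\lambda|S_1||S_2|d}$ for all disjoint $S_1,S_2 \subseteq S$ with $|S_1|,|S_2|\le m$ and $|S_1||S_2|\le k_0$. Thus $S$ is $(\lambda,m,k_0)$-balanced, and in particular $(\lambda,m,k_0/2)$-balanced, supplying the invariant needed to enter the first inner iteration.

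\emph{Inductive step.} Suppose at the start of the iteration with parameters $(k,s)$ the current set $S_{\textsf{filt}} \subseteq S$ is $(\lambda,m,k/2)$-balanced. Since Algorithm~\ref{alg:single-iteration} only removes points, all monotone properties pass to $S_{\textsf{filt}}$: it stays $\gamma^2$-bounded, the residual $R \cap S_{\textsf{filt}}$ still has at most $m/(10^{200}\log^{100}(|S|m/\delta))$ elements, and the projection-count property assumed in the statement of Lemma~\ref{lem:full-sample-splitting} carries to $S_{\textsf{filt}}$ (restricting $T$ to $S_{\textsf{filt}}$ only restricts the quantifier). With $p = 1/(5\log^2 m)$ I would then check the remaining numerical hypotheses of Lemma~\ref{lem:filter-one-iteration}: (i) $m \le k p^{20}/10^{50}$ follows from $k \ge k_0$; (ii) the required $(\lambda p^{50}/(10^{100}m))$-boundedness follows from $\gamma^2 = \lambda/k_0 \le \lambda p^{50}/(10^{100}m)$ using the polylog slack in the definition of $k_0$; and (iii) the projection-count hypothesis, which uses the larger threshold $\sqrt{\lambda p^{50}/(10^{100}m)} \ge \gamma$ and a looser cardinality bound on $|R|$, follows from the assumed one at threshold $\gamma$. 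Lemma~\ref{lem:filter-one-iteration} then guarantees, with probability at least $1 - 2^{-sp^6/10^{11}}$, that the output $B'$ has no violating pair at the $(s,k)$-scale. Since $B' \subseteq S_{\textsf{filt}}$, $(\lambda,m,k/2)$-balancedness is subset-preserved, maintaining the inductive invariant for all subsequent iterations.

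\emph{Coverage and union bound.} After the inner sweep $s \in \{s_{\min}, 2s_{\min}, \dots, m\}$ with $s_{\min} = 10^{199}\log^{100}(|S|m/\delta)$, every disjoint pair $(S_1,S_2)$ with $|S_1|\le|S_2|\le m$ and $|S_1||S_2| \in [k/2,k]$ is handled, because $|S_1| \ge k/(2m) \ge k_0/(2m) \ge s_{\min}/2$ forces such a pair into some dyadic window $[s/2,s]$. Iterating $k$ over $\{k_0, 2k_0, \dots, m^2\}$ then covers $|S_1||S_2| \in [k_0/2, m^2]$, while the base case covers $|S_1||S_2| \le k_0/2$; together this yields the desired $(\lambda,m,m^2)$-balancedness of the final $S_{\textsf{filt}}$. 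The total number of $(k,s)$ iterations is $O(\log^2(|S|m/\delta))$, and each failure probability $2^{-sp^6/10^{11}} \le 2^{-s_{\min}p^6/10^{11}}$ is far smaller than $\delta/\log^2(|S|m/\delta)$ by the choice of $s_{\min}$ and $p$; so a single union bound yields total failure at most $\delta$. The main obstacle I anticipate is the careful bookkeeping of polylog constants required to verify the hypotheses of Lemma~\ref{lem:filter-one-iteration} uniformly across all $(k,s)$-iterations; once that verification is made once at scale $(k_0, s_{\min})$, it transfers to all larger scales, and the induction and coverage arguments themselves are routine.
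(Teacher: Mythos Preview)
Your proposal is correct and follows essentially the same approach as the paper's proof: establish the base case $(\lambda,m,k_0)$-balancedness from $\gamma^2$-boundedness via the argument of Claim~\ref{claim:small-sets-bound}, then induct on $k$ by applying Lemma~\ref{lem:filter-one-iteration} at each $(k,s)$-scale, using that balancedness and the other hypotheses are preserved under passing to subsets, and finish with the same coverage argument (any violating pair has $|S_1|\ge k/(2m)\ge s_{\min}/2$) and union bound over the $O(\log^2)$ iterations. The only cosmetic difference is that the paper phrases the outer invariant as ``$(\lambda,m,k)$-balanced after completing the inner loop for $k$'' rather than your equivalent ``$(\lambda,m,k/2)$-balanced at the start of each $(k,s)$ iteration.''
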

\begin{proof}
First by Claim~\ref{claim:small-sets-bound}, we have that $S$ is $(\lambda , m' , m')$-balanced with $m' = m \cdot 10^{200}\log^{100}(|S|m/\delta)$ (and thus also $(\lambda , m , m')$-balanced).  Now we prove that after every execution of the outer while loop (for a fixed value of $k$, before doubling $k$) in Algorithm~\ref{alg:full-sample-splitting}, the set $S_{\textsf{filt}}$ will be $(\lambda , m , k)$-balanced.  We do this by induction, where the base case follows from the preceding statement.  Now after doubling $k$, we know that $S_{\textsf{filt}}$  is $(\lambda , m, k/2)$-balanced.  Next we apply Lemma~\ref{lem:filter-one-iteration} for each execution of the inner while loop in Algorithm~\ref{alg:full-sample-splitting}.  Note that this is valid because $\lambda \leq d$, $k$ is initialized sufficiently large, and our upper bound on $|R|$ is sufficiently small.  Also $s$ is initialized sufficiently large so we can union bound the failure probability over all iterations and deduce that with probability $1 - \delta$, the conclusion of Lemma~\ref{lem:filter-one-iteration} every time we run Algorithm~\ref{alg:single-iteration}.  If after the completion of the inner while loop, the set $S_{\textsf{filt}}$ is not $(\lambda , m , k)$-balanced then there must be some disjoint $|S_1|,|S_2|$ with $|S_1|, |S_2| \leq m,$ $k/2 \le |S_1||S_2| \leq k$, and $|\langle \Sum(S_1), \Sum(S_2) \rangle | \geq \sqrt{\lambda |S_1||S_2|d}$. WLOG $|S_1| \leq |S_2|$.  By the inductive hypothesis, we must have $|S_1||S_2| \geq k/2$ and since $|S_2| \leq m $,
\[
|S_1| = \frac{|S_1| \cdot |S_2|}{|S_2|} \geq \frac{k/2}{m} \geq 5 \cdot 10^{199} \log^{100} (|S|m/\delta)
\]
and thus there was some value of $s$ for which we executed the inner while loop and $s/2 \leq  |S_1| \leq s$. However, applying the guarantee of Lemma~\ref{lem:filter-one-iteration} for this execution of Algorithm~\ref{alg:single-iteration} implies that such $S_1, S_2$ cannot exist and this is a contradiction.  Thus, actually  $S_{\textsf{filt}}$ must be $(\lambda , m , k)$-balanced and this completes the induction.  Since we keep increasing $k$ up to $m^2$, at the end we know that $S_{\textsf{filt}}$ is $(\lambda ,m , m^2)$-balanced and we are done.
\end{proof}

Now we can use Lemma~\ref{lem:full-sample-splitting} to prove Theorem~\ref{thm:reduce-to-friendly}.

\begin{proof}[Proof of Theorem~\ref{thm:reduce-to-friendly}]
Consider starting with a set $S$ of $n (10\log (nd))^{10}$ obliviously $\eps$-contaminated samples.  First, we remove all points $X_i \in S$ with $\norm{X_i}^2 \geq d +  \sqrt{(\log (nd))^{100}d}$ or $\norm{X_i}^2 \leq d -  \sqrt{(\log (nd))^{100} d}$.  Next, for all pairs of distinct points $X_i, X_j$ with $| \langle X_i, X_j \rangle | \geq \sqrt{(\log (nd))^{200} d}$, we remove both of them.  By Fact~\ref{fact:bounded-norm} and Claim~\ref{claim:inner-product}, with probability $0.999$, this only removes contaminated points.  Furthermore, the remaining dataset is equivalent to an obliviously $\eps$-contaminated one (since it is equivalent to first remove the subset of contaminated points that violate the previous conditions and then draw the uncontaminated points).

\paragraph{Case 1: $\eps n \lesssim d$} 
We first consider the case where $(10 \log(nd))^{1000} \eps n \leq  d$.  We run Algorithm~\ref{alg:full-sample-splitting} with $\delta = 0.001$ and 
\[
\begin{split}
 m &= \eps n (10 \log(nd))^{200}\\
 \lambda &= \eps n (10 \log(nd))^{1000} \,.
\end{split}
\]
Recall that Algorithm~\ref{alg:full-sample-splitting} runs $O(\log^2 m)$ iterations of Algorithm~\ref{alg:single-iteration}.  For all executions of Algorithm~\ref{alg:single-iteration}, we have $\gamma \geq (10 \log(nd))^{300}$.  Also note that we can imagine drawing the uncontaminated points in $B$ after drawing the points in $A$.  On the other hand, the filters are constructed only from $A$.  Thus, with $1 - 1/(nd)^{100}$ probability, none of the uncontaminated points are removed by the filters.  We can union bound this failure probability over all executions of Algorithm~\ref{alg:single-iteration} to get that with probability $0.999$, no uncontaminated points are removed by any filters throughout the execution of Algorithm~\ref{alg:full-sample-splitting}.  By the construction of $S_{\textsf{filt}}$, we conclude that with $0.999$ probability $|S_{\textsf{filt}}| \geq n$ and the number of contaminated points in $S_{\textsf{filt}}$ is at most $\eps n$.  Also, none of the filters constructed throughout Algorithm~\ref{alg:full-sample-splitting} depend on the points in   $S_{\textsf{filt}}$ so it is equivalent to an obliviously $\eps$-contaminated dataset (since it is equivalent to simply apply these filters to the contaminated points before drawing the rest of the dataset).  It remains to argue that with high probability, $S_{\textsf{filt}}$ is $\kappa$-friendly and then we can apply the tester that we assumed works under $\kappa$-friendly oblivious $\eps$-contamination to complete the proof.

Note that $\lambda = \eps n (10 \log(nd))^{1000} \leq  d$ by assumption and after the initial filtering step (where we filter by norm and pairwise inner product), we know that the dataset is $(\log (nd))^{200}$-bounded.  Also, by Claim~\ref{claim:projection-counts}, there is a set $R \subset S_{\textsf{filt}}$ with $|R| \leq \eps n$ (consisting of exactly the contaminated points) such that  for all subsets $T \subset S_{\textsf{filt}}$, 
\[
\left\lvert \left\{ v \in S \backslash R \; \bigg| \; |\langle v , \Sum(T)/\norm{\Sum(T)}\rangle | \geq 100 \log n \right\} \right\rvert \leq 2|T| \,.
\]
Thus, we can apply Lemma~\ref{lem:full-sample-splitting} to get that $S_{\textsf{filt}}$ is $(\lambda ,m , m^2)$-balanced.  This then implies that $S_{\textsf{filt}}$ is equivalent to a $\kappa$-friendly obliviously $\eps$-contaminated dataset and we are done in this case.

\paragraph{Case 2: $\eps n \gtrsim d$} 
Now it remains to consider the case where $(10 \log(nd))^{1000} \eps n \geq  d$.  We can increase the dimension by adding dummy coordinates to all of the points.  We can draw these coordinates independently from $N(0,1)$ and pad the dimension to $d' = (10 \log(nd))^{1000} \eps n$.  We will use $S'$ to denote the padded dataset and $X_i'$ to denote points in $S'$.  Recall that we filtered by norm and inner product at the beginning.  Since all of the additional coordinates are i.i.d. standard Gaussians, with $0.999$ probability, we have that after the padding, for all $X_i' \in S'$
\[
d' - \sqrt{(\log(nd'))^{100} d'} \leq \norm{X_i'}^2 \leq d' + \sqrt{(\log(nd'))^{100} d'} 
\]
and for all distinct $X_i', X_j' \in S'$,
\[
|\langle X_i', X_j' \rangle| \leq \sqrt{(\log nd'))^{200}d'} \,.
\]
Now we can run Algorithm~\ref{alg:full-sample-splitting} as in the previous case on the padded points.  By the same argument, we end up with an obliviously $\eps$-contaminated dataset $S_{\textsf{filt}}'$ such that $S_{\textsf{filt}}'$ is $(\log(nd'))^{200}$-bounded and $(\eps n (\log (nd'))^{1000},\eps n (\log (nd'))^{200}, \eps^2 n^2 (\log (nd'))^{400} )$-balanced.  WLOG say $S_{\textsf{filt}}' = \{X_1', \dots , X_n' \}$. Now we take $S_{\textsf{filt}}'$  and remove the padding to get $S_{\textsf{filt}} = \{X_1, \dots , X_n \}$.  Let $\Pi_{\textsf{pad}}$ denote the operator that projects onto the padded coordinates.  For a set $A \subset [n]$, $\sum_{i \in A} \Pi_{\textsf{pad}} X_i'$ is just a vector in $\R^{d' - d}$ distributed according to $N(0, |A|I)$.  Union bounding over all choices of disjoint sets $A,B \subset [n]$ with $|A|,|B| \leq \eps n$ using \Cref{fact:concentration-inner-prod}, with probability $0.999$ over the randomness in the padded coordinates,
\[
\left\lvert \left\langle \sum_{i \in A} \Pi_{\textsf{pad}}X_i', \sum_{i \in B} \Pi_{\textsf{pad}} X_i' \right\rangle \right\rvert \leq \sqrt{|A||B| \eps n  d'}(\log( nd'))^{100} \leq \sqrt{|A||B| }\eps n (10\log (nd))^{1000} \,.
\]
Combining the above and the balancedness of $S_{\textsf{filt}}'$, we get that after removing the padding for any disjoint sets $A,B \subset [n]$, 
\[
\left\lvert \left\langle \sum_{i \in A} X_i, \sum_{i \in B} X_i \right\rangle \right\rvert \leq \sqrt{|A||B| }\eps n (10\log (nd))^{1000} + \sqrt{\eps n (\log (nd'))^{1000}|A||B|d'} \leq \kappa \sqrt{|A||B|}\eps n   \,.
\]
Thus, $S_{\textsf{filt}}$ is $\kappa$-friendly (recall the latter two conditions follow from the filtering by norm and inner product that we did at the beginning) and we are done.
\end{proof}

\section{Mean Testing Robustly Against Oblivious Adversaries}
\label{sec:obliviously-robust-tester}

In this section, we prove our main technical result, the upper bound against oblivious adversaries:
\begin{theorem}
    \label{thm:oblivious-ub}
    Suppose that $n \ge \tilde{O}\mleft(\frac{\sqrt{d}}{\alpha^2} + \frac{d \eps^3}{\alpha^4} + \min\mleft(\frac{d^{2/3} \eps^{2/3}}{\alpha^{8/3}}, \frac{d \eps}{\alpha^2}\mright)\mright)$, and that $1 \ge \alpha \ge \eps \cdot \log(n d)^{O(1)}$. Then, there exists an $\eps$-robust mean tester using $n$ samples.
\end{theorem}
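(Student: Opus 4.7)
The plan is to chain two reductions and then analyze a two-statistic tester. By \Cref{prop:wlog-bounded} we may assume the alternative hypothesis takes the form $\|\mu\|_2 \in [\alpha, 2\alpha]$ at the cost of only polylogarithmic factors. By \Cref{thm:reduce-to-friendly} it then suffices to produce a tester assuming $\kappa$-friendly oblivious $\varepsilon$-contamination for $\kappa = \polylog(nd)$. Thus throughout we may write the dataset as $G \cup B$ where $|G| = (1-\varepsilon)n$, the good samples are i.i.d.\ from $\cN(\mu, I)$ with $\Sum(G)$ independent of $\Sum(B)$ (by obliviousness), and $B$ satisfies the friendliness inequalities of \Cref{ass:friendly}.

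The tester (for the $d^{2/3}\varepsilon^{2/3}/\alpha^{8/3}$ branch) is the Sum+Variance tester sketched in \Cref{ssec:techniques}: compute $T_1 := \|\Sum([n])\|_2^2 - nd$ and $T_2 := \frac{1}{n}\sum_{i=1}^n \left(\frac{\langle X_i, \Sum([n])\rangle - d}{\|\Sum([n])\|_2}\right)^2$, and reject iff $T_1 \ge c_1 \alpha^2 n^2$ or $T_2 \ge 1 + c_2 \alpha^4 n/(\varepsilon d)$. For the null analysis, decompose $T_1 = \|\Sum(G)\|_2^2 + 2\langle \Sum(G),\Sum(B)\rangle + \|\Sum(B)\|_2^2 - nd$. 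Standard Gaussian concentration gives $\|\Sum(G)\|_2^2 = (1-\varepsilon)nd \pm \tilde O(n\sqrt{d})$. Conditioning on $\Sum(B)$ (which is bounded by obliviousness and friendliness, via a random-partition argument yielding $\|\Sum(B)\|_2^2 = \varepsilon nd + \tilde O(\varepsilon^{3/2} n^{3/2}\sqrt{d})$) we get $|\langle \Sum(G),\Sum(B)\rangle| \le \tilde O(n\sqrt{\varepsilon d})$. Together $|T_1| \ll \alpha^2 n^2$ whenever $n \gtrsim \sqrt{d}/\alpha^2 + d\varepsilon^3/\alpha^4$. The corresponding null bound for $T_2$ requires an analogous but more delicate expansion in which the friendliness inequalities on pairs of subsets of $B$ propagate to bounds on $\sum_{i\in B}\langle X_i,\Sum([n])\rangle^2$; the worst bad term of the form $(\varepsilon n)\cdot(\varepsilon n \sqrt{d})^2$ normalized by $nd$ produces the extra requirement $n \gtrsim d^{2/3}\varepsilon^{2/3}/\alpha^{8/3}$.

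The alternative analysis is the crux. In $H_1$ we have $\mathbb E\|\Sum(G)\|_2^2 - (1-\varepsilon)nd \approx (1-\varepsilon)^2 n^2 \|\mu\|_2^2 = \Theta(\alpha^2 n^2)$, so if the adversary does not trigger $T_1$ it must arrange $\langle \Sum(G),\Sum(B)\rangle \approx -\Omega(\alpha^2 n^2)$. Since $\|\Sum(B)\|_2 = O(\sqrt{\varepsilon nd})$ by friendliness and $\Sum(B)$ is chosen obliviously, this forces $\Sum(B)$ to be approximately aligned with $-\mu$ up to magnitude $\Theta(n\alpha^2/\|\mu\|_2^2 \cdot \mu) = \Theta(n\mu)$ in the relevant direction; concretely the typical $i\in B$ must satisfy $\langle X_i,\mu\rangle \lesssim -\alpha^2/\varepsilon$. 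Consequently $\langle X_i,\Sum([n])\rangle$ for $i\in B$ deviates from $d$ by an additional $\Theta(n\alpha^2/\varepsilon)$, contributing $(\varepsilon n)\cdot(n\alpha^2/\varepsilon)^2 / (n \cdot nd) = \alpha^4 n/(\varepsilon d)$ to $T_2$, triggering rejection. This cancellation-versus-variance dichotomy is the main obstacle: making it rigorous requires showing that the adversary cannot spread the required cancellation so thinly across $B$ that the per-sample deviation becomes invisible, and simultaneously ruling out exotic strategies that shift mass between the $T_1$ and $T_2$ ``budgets''. Finally, to cover the $\min(\cdot,\cdot)$ in the sample complexity we run in parallel a simpler tester that uses only $T_1$: friendliness alone gives $\|\Sum(B)\|_2^2 \le \tilde O(\varepsilon nd)$, so $|\langle \Sum(G),\Sum(B)\rangle| \le \tilde O(n\alpha\sqrt{\varepsilon nd})$ and $|T_1 - \|\Sum(G)\|_2^2 + (1-\varepsilon)nd| \ll \alpha^2 n^2$ whenever $n \gtrsim d\varepsilon/\alpha^2$, yielding the $\sqrt{d}/\alpha^2 + d\varepsilon/\alpha^2$ upper bound in that regime.
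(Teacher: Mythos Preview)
Your high-level structure matches the paper's: reduce to the bounded-alternative setting via \Cref{prop:wlog-bounded}, reduce to friendly oblivious contamination via \Cref{thm:reduce-to-friendly}, then analyze the Sum+Variance tester. The null $T_1$ analysis and the $T_1$-only branch for $d\varepsilon/\alpha^2$ are correct. However, there are two genuine gaps.

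\textbf{Alternative variance.} You correctly flag the ``spread-thinly'' obstacle but do not resolve it. The paper's key step is a one-line Jensen argument: since $\sum_{i\in B}X_i=\mathbf R$, the average of $\langle X_i,\mathbf Q+\mathbf R\rangle-d$ over $i\in B$ is exactly $\tfrac{1}{\varepsilon n}(\langle\mathbf R,\mathbf Q+\mathbf R\rangle-\varepsilon nd)$, so by convexity
\[
\sum_{i\in B}\bigl(\langle X_i,\mathbf Q+\mathbf R\rangle-d\bigr)^2\;\ge\;\frac{1}{\varepsilon n}\bigl(\langle\mathbf R,\mathbf Q+\mathbf R\rangle-\varepsilon nd\bigr)^2.
\]
Then $\langle\mathbf R,\mathbf Q+\mathbf R\rangle=\tfrac12(\|\mathbf Q+\mathbf R\|_2^2+\|\mathbf R\|_2^2-\|\mathbf Q\|_2^2)$, and assuming $T_1$ did not fire one shows $\|\mathbf Q+\mathbf R\|_2^2\approx\|\mathbf R\|_2^2\approx\varepsilon nd$, forcing $\langle\mathbf R,\mathbf Q+\mathbf R\rangle-\varepsilon nd\lesssim-\Omega(\alpha^2 n^2)$. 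This Jensen step is what rules out the adversary spreading the cancellation: any strategy that achieves the required \emph{sum} of inner products already incurs the claimed contribution to $T_2$.

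\textbf{Source of $d^{2/3}\varepsilon^{2/3}/\alpha^{8/3}$.} Your identification of the bottleneck term in the null $T_2$ analysis is incorrect. The term $(\varepsilon n)(\varepsilon n\sqrt d)^2/(nd)$ does not yield this constraint. The $d^{2/3}\varepsilon^{2/3}/\alpha^{8/3}$ requirement actually arises from the \emph{good} samples: after conditioning on $\mathbf S$, the posterior of $\{\langle X_i,\mathbf S\rangle\}_{i\in G}$ has sample variance $1\pm\tilde O(1/\sqrt n)$, and demanding $1/\sqrt n\le\alpha^4 n/(\varepsilon d)$ gives $n^{3/2}\gtrsim\varepsilon d/\alpha^4$. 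The bad-sample contribution is controlled via Hanson--Wright applied to $\mathbf T^\top(\sum_{i\in B}X_iX_i^\top)\mathbf T$, whose error is $\tilde O(\varepsilon^{1/2}n^{3/2}d)$ and imposes only a weaker constraint.

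Finally, you omit the boundary case analysis: the paper's Sum+Variance analysis requires $d\ge\varepsilon n$ and $n\le d/\alpha^2$, so one must separately dispatch the regimes $n\ge d/\alpha^2$ (fall back to the adaptive tester) and $d\le\varepsilon n$ (reduces to the $d\varepsilon/\alpha^2$ branch).
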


\subsection{Setup and Algorithm}

From \Cref{sec:strong-sample-splitting}, we may assume we are dealing with the friendly oblivious contamination model. We restate the definition for convenience.

\friendly*
\begin{assumption} \label{assumption}
A dataset $X_1,\ldots,X_n \in \R^d$ is $\kappa$-\emph{friendly} if the  following all hold:
\begin{enumerate}
    \item For any disjoint subsets $S, T \subset [n]$ of sizes $k_1, k_2 \le \eps \cdot n$,
\[\left|\left\langle \sum_{i \in S} X_i, \sum_{i\in T} X_i \right\rangle\right| \le \kappa \cdot (\sqrt{k_1 k_2} \cdot \max(\sqrt{\eps n d}, \eps n)).\] \label{assumption:cross-sum}
    \item For every distinct $i \neq j \in [n]$, $|\langle X_i, X_j \rangle| \le \kappa \cdot \sqrt{d}$. \label{assumption:inner-product}
    \item For every $i \in [n]$, $\|X_i\|_2^2 = d \pm \kappa \sqrt{d}$. \label{assumption:norm}
\end{enumerate}
\end{assumption}
\noindent We think of $\kappa$ as a sufficiently large $\log(nd)^{O(1)}$ term.

\medskip

Given this promise on the data, the algorithm, roughly speaking, checks the mean and the variance in the direction of the sum of the points. If both look reasonable for a set of samples from the null distribution, we accept, otherwise, we reject. Formally, we use the algorithm described in \autoref{alg:oblivious-tester}.

\begin{algorithm}[H]

\caption{Robust mean tester for obliviously-corrupted data satisfying Assumption~\ref{assumption}. Input: $X_1,\ldots,X_n \in \R^d$, $\alpha,\eps > 0$.}
\label{alg:oblivious-tester}

    \begin{algorithmic}[1]
\State Let $\mathbf{S} := \sum_{i \in [n]} X_i$.
\If{$\left | \|\mathbf{S}\|_2^2 - nd\right| > 0.01 \alpha^2 n^2$}
\State \textbf{return} \textbf{REJECT}.
\ElsIf{$\frac{1}{n} \sum_{i \in [n]} \left(\frac{\langle X_i, \mathbf{S} \rangle - d}{\|\mathbf{S}\|_2}\right)^2 \ge 1 + 0.025 \frac{\alpha^4}{\eps} \cdot \frac{n}{d}$ \textbf{and} $n \le O(\kappa^5) \cdot \left(\frac{\sqrt{d}}{\alpha^2} + \frac{d \eps}{\alpha^2}\right)$}
\State \textbf{return} \textbf{REJECT}.
\Else
\State \textbf{return} \textbf{ACCEPT}.
\EndIf
\end{algorithmic}

\end{algorithm}

We have the following results which lead to our main theorem.

\begin{lemma} \label{lem:mean-var-main}
    Suppose that $X_1, \dots, X_n$ are drawn from the friendly $\eps$-oblivious contamination model. Moreover, assume that $d \ge \eps \cdot n$ and $n \le \frac{d}{\alpha^2}$. Then, \autoref{alg:oblivious-tester} can solve robust mean testing in $n = O(\kappa^5) \cdot \left(\frac{\sqrt{d}}{\alpha^2} + \frac{d \eps^3}{\alpha^4} + \frac{d^{2/3} \eps^{2/3}}{\alpha^{8/3}}\right)$ samples, whenever $\alpha \ge \kappa^5 \cdot \eps$.
\end{lemma}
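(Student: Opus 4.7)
The plan is to analyze the two tests in Algorithm~\ref{alg:oblivious-tester}—the sum test on $\|\mathbf{S}\|^2$ and the variance test—under both the null ($\mu = 0$) and alternative ($\|\mu\|_2 \ge \alpha$), using the friendliness of the contamination. Throughout, I would split $[n] = G \sqcup B$ with $|B| = \eps n$, write $\mathbf{S} = \Sum(G) + \Sum(B)$, and condition on the obliviously chosen $\{X_i\}_{i\in B}$ so that $\{X_i\}_{i\in G}$ are \iid $\cN(\mu, I)$ drawn afterwards.

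First I would analyze the sum test by expanding $\|\mathbf{S}\|^2 = \|\Sum(G)\|^2 + 2\langle \Sum(G),\Sum(B)\rangle + \|\Sum(B)\|^2$. Hanson--Wright (\cref{lem:hanson-wright}) gives $\|\Sum(G)\|^2 = (1-\eps)nd + (1-\eps)^2 n^2\|\mu\|^2 \pm \tilde{O}(n\sqrt{d})$. For $\|\Sum(B)\|^2$, I would use \cref{assumption:norm} on the diagonal and rewrite $\sum_{i\neq j \in B}\langle X_i,X_j\rangle = 2\,\E_{S_1,S_2}[\langle \Sum(S_1),\Sum(S_2)\rangle]$ over a uniformly random bipartition of $B$ so that \cref{assumption:cross-sum} gives $\|\Sum(B)\|^2 = \eps n d \pm \tilde{O}(\kappa\eps^{3/2}n^{3/2}\sqrt{d})$ (using $d \ge \eps n$). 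Obliviousness makes $\Sum(G) \sim \cN((1-\eps)n\mu, (1-\eps)nI)$ conditional on $B$, so $\langle \Sum(G),\Sum(B)\rangle = (1-\eps)n\langle\mu,\Sum(B)\rangle \pm \tilde{O}(n\sqrt{\eps d})$. Summing: in the null, $|\|\mathbf{S}\|^2 - nd| \le 0.005\,\alpha^2 n^2$ once $n \gtrsim \tilde{O}(\sqrt{d}/\alpha^2 + d\eps^3/\alpha^4)$, so the sum test accepts; in the alternative the leading term becomes $(1-\eps)^2 n^2\alpha^2 + 2(1-\eps)n\langle\mu,\Sum(B)\rangle$, and Cauchy--Schwarz bounds $|\langle\mu,\Sum(B)\rangle| \le \alpha \cdot \tilde{O}(\sqrt{\eps n d})$, so the sum test rejects whenever $n \gtrsim \kappa^{O(1)} d\eps/\alpha^2$.

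Next I would turn to the variance test, which only runs when $n \le O(\kappa^5)(\sqrt{d}/\alpha^2 + d\eps/\alpha^2)$—precisely the regime in which the adversary can cancel the mean shift in $\|\mathbf{S}\|^2$. Using $\sum_i\langle X_i,\mathbf{S}\rangle = \|\mathbf{S}\|^2$, the statistic equals $(e^\top M e - 2d + nd^2/\|\mathbf{S}\|^2)/n$ where $e = \mathbf{S}/\|\mathbf{S}\|$ and $M = M_G + M_B$ with $M_G = \sum_{i\in G}X_iX_i^\top$, $M_B = \sum_{i\in B}X_iX_i^\top$. For the null bound, \cref{prop:sufficient_statistic} lets me write $X_i = \bar X_G + Z_i - \bar Z$ for $i \in G$ with $Z_i$ iid $\cN(0,I)$ independent of $\bar X_G$, reducing $e^\top M_G e$ to a Wishart-type concentration on the centred $Z_i$'s, while $e^\top M_B e$ reduces (up to lower-order terms) to $\Sum(G)^\top M_B \Sum(G)/\|\mathbf{S}\|^2$ and is controlled by \cref{assumption:cross-sum}; together this yields null variance $\le 1 + 0.025\,\alpha^4 n/(\eps d)$ once $n \gtrsim \tilde{O}(d^{2/3}\eps^{2/3}/\alpha^{8/3})$, accounting for the third term in the sample complexity. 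For the alternative, I would assume the sum test did not already reject, so $|\|\mathbf{S}\|^2 - nd| \le 0.01\,\alpha^2 n^2$; the Step~1 expansion then forces $\langle\mu,\Sum(B)\rangle \approx -(1-\eps)n\alpha^2/2$, and Cauchy--Schwarz over $B$ gives $\sum_{i\in B}\langle X_i,\mu\rangle^2 \ge (\sum_{i\in B}\langle X_i,\mu\rangle)^2/|B| = \Omega(n\alpha^4/\eps)$. Decomposing $\langle X_i,\mathbf{S}\rangle - d = (1-\eps)n\langle X_i,\mu\rangle + (\text{Gaussian-}, \text{friendliness-}, \text{norm-controlled terms})$ for $i \in B$, summing, and dividing by $n\|\mathbf{S}\|^2 \approx n^2 d$ yields excess variance $\ge \Omega(n\alpha^4/(\eps d))$, exceeding the threshold.

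The delicate step is the null variance bound. The direction $e = \mathbf{S}/\|\mathbf{S}\|$ is itself correlated with the samples (so a naive Gaussian-projection argument is unavailable), and the adversarial block $M_B$ can concentrate mass along directions correlated with $\Sum(G)$, inflating the second moment. Pushing the null variance strictly below $1 + 0.025\,\alpha^4 n/(\eps d)$ requires a careful joint application of \cref{assumption:cross-sum} to control $\Sum(G)^\top M_B \Sum(G)$ against the Wishart fluctuation of $M_G$, and this balance is what ultimately produces the sharp $d^{2/3}\eps^{2/3}/\alpha^{8/3}$ sample bound rather than a looser $\eps^2 d/\alpha^4$ one.
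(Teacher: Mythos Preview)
Your overall strategy matches the paper's: the lemma follows from three pieces, namely (i) the sum test accepts under the null (\cref{lem:null-mean}), (ii) the variance test accepts under the null (\cref{lem:null-variance}), and (iii) if the sum test does not reject under the alternative then the variance test does (\cref{lem:alt-variance}). Your sum-test analysis and your alternative-variance analysis (via forcing $\langle\mu,\Sum(B)\rangle\approx -\tfrac{(1-\eps)n\alpha^2}{2}$ and Cauchy--Schwarz/Jensen over $B$) are essentially the paper's arguments.

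There is, however, a concrete error in your null-variance sketch. You assert that $e^\top M_B e$ ``reduces (up to lower-order terms) to $\Sum(G)^\top M_B \Sum(G)/\|\mathbf S\|^2$.'' This is false: writing $\mathbf S=\mathbf R+\mathbf T$, one has $\mathbf S^\top M_B\mathbf S=\mathbf R^\top M_B\mathbf R+2\mathbf R^\top M_B\mathbf T+\mathbf T^\top M_B\mathbf T$, and the term $\mathbf R^\top M_B\mathbf R=\sum_{i\in B}\langle X_i,\mathbf R\rangle^2\approx \eps n d^2$ \emph{dominates} $\mathbf T^\top M_B\mathbf T\approx \eps n^2 d$ in the regime $n\le d$. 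The statistic is only small because the $-d$ centering in $(\langle X_i,\mathbf S\rangle-d)^2$ exactly cancels the $\langle X_i,\mathbf R\rangle\approx d$ contribution, and your global rewrite $(e^\top Me-2d+nd^2/\|\mathbf S\|^2)/n$ hides this: the $-2d$ and $nd^2/\|\mathbf S\|^2$ terms do not split cleanly into good and bad blocks. The paper avoids this by working per-sample with the decomposition $\langle X_i,\mathbf S\rangle-d=(\langle X_i,\mathbf R\rangle-d)+\langle X_i,\mathbf T\rangle$ and proving directly from friendliness that $\sum_{i\in B}(\langle X_i,\mathbf R\rangle-d)^2\le O(\kappa^3\eps^2n^2d)$ (\cref{prop:null-bad-R-variance}) and $\|\sum_{i\in B}(\langle X_i,\mathbf R\rangle-d)X_i\|\le O(\kappa^2\eps nd)$ (\cref{prop:bad-cross-terms}); the remaining $\sum_{i\in B}\langle X_i,\mathbf T\rangle^2$ is handled by Hanson--Wright together with the Frobenius bound of \cref{prop:frobenius_bound}, not by \cref{assumption:cross-sum} directly as you suggest. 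Your route can be repaired, but only by reintroducing exactly these friendliness consequences, at which point the per-sample decomposition is simpler.
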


\begin{lemma} \label{lem:mean-var-main-2}
    Suppose that $X_1, \dots, X_n$ are drawn from the friendly $\eps$-oblivious contamination model. Then, \autoref{alg:oblivious-tester} can solve robust mean testing in $n = O(\kappa^5) \cdot \left(\frac{\sqrt{d}}{\alpha^2} + \frac{d \eps}{\alpha^2}\right)$ samples, whenever $\alpha \ge \kappa^5 \cdot \eps$.
\end{lemma}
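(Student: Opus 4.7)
The plan is to verify both rejection conditions of \autoref{alg:oblivious-tester} under the friendly $\eps$-oblivious contamination model, showing that under the null both checks pass w.h.p., while under the alternative the first (norm) check alone already rejects in this sample regime. Let $G \subset [n]$ index the good samples ($|G| = (1-\eps)n$) and $B$ the $\eps n$ bad indices, so $\mathbf{S} = \Sum(G) + \Sum(B)$; by obliviousness, conditional on $\Sum(B)$ we have $\Sum(G) \sim \cN((1-\eps)n\mu, (1-\eps)nI)$. Expanding
$\|\mathbf{S}\|_2^2 - nd = (\|\Sum(G)\|_2^2 - (1-\eps)nd) + 2\langle \Sum(G), \Sum(B)\rangle + (\|\Sum(B)\|_2^2 - \eps nd)$,
under the null the first summand is $\tilde O(n\sqrt d)$ by $\chi^2$ concentration; the third splits as a diagonal piece of size $\tilde O(\kappa \eps n\sqrt d)$ (Assumption~\ref{assumption}(3)) plus an off-diagonal piece whose random-partition average is bounded by $\tilde O(\kappa (\eps n)^{3/2}\sqrt d + \kappa \eps^2 n^2)$ via Assumption~\ref{assumption}(1); and the cross term, conditioned on $\Sum(B)$, is a Gaussian of standard deviation $\sqrt{(1-\eps)n}\,\|\Sum(B)\|_2 \le \tilde O(n\sqrt{\eps d})$. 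For $n \ge \tilde\Omega(\sqrt d/\alpha^2 + \eps d/\alpha^2)$ and $\alpha \ge \kappa^5 \eps$, each piece is $\ll \alpha^2 n^2$, so the norm check does not fire under the null.

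For the alternative I will rewrite $\|\mathbf{S}\|_2^2 = \|\mathbb{E}[\mathbf{S}\mid \Sum(B)]\|_2^2 + 2\langle \mathbb{E}[\mathbf{S}\mid\Sum(B)], \Sum(G) - \mathbb{E}[\Sum(G)]\rangle + \|\Sum(G) - \mathbb{E}[\Sum(G)]\|_2^2$ with $\mathbb{E}[\mathbf{S}\mid\Sum(B)] = (1-\eps)n\mu + \Sum(B)$. The adversary's best option is to align $\Sum(B)$ with $-\mu$, but the same friendliness argument used in the null pins $\|\Sum(B)\|_2 \le \tilde O(\kappa)\sqrt{\eps nd}$ (since $\eps n \ll d$ in this regime). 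The hypothesis $n \ge O(\kappa^5)\eps d/\alpha^2$ yields $(1-\eps)n\alpha \gg \kappa\sqrt{\eps nd}$, whence $\|\mathbb{E}[\mathbf{S}\mid\Sum(B)]\|_2 \ge n\alpha/2$. The Gaussian cross term has magnitude $\tilde O(n^{1.5}\alpha)$ and $\|\Sum(G)-\mathbb{E}[\Sum(G)]\|_2^2 = (1-\eps)nd + \tilde O(n\sqrt d)$, both dominated by $n^2\alpha^2$. Thus $\|\mathbf{S}\|_2^2 - nd \ge 0.02\,\alpha^2 n^2$ w.h.p., and the norm check rejects. This already suffices for correctness under the alternative.

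It remains to check that the variance test does not false-reject under the null. For $i\in G$, the sufficient-statistic decomposition (Proposition~\ref{prop:sufficient_statistic}) writes $X_i = \bar X_G + (Z_i - \bar Z_G)$ with fresh $Z_i \sim \cN(0,I)$ independent; so $(\langle X_i, \mathbf{S}\rangle - d)/\|\mathbf{S}\|_2$ is a common deterministic shift of size $\tilde O(\sqrt d)/\|\mathbf{S}\|_2$ plus $\cN(0,1-1/|G|)$, and summing over $i\in G$ and dividing by $n$ gives $(1-\eps) + \tilde O(1/n)$. For $i\in B$, split $\langle X_i, \mathbf{S}\rangle - d = (\|X_i\|_2^2 - d) + \langle X_i, \Sum(B\setminus\{i\})\rangle + \langle X_i, \Sum(G)\rangle$; the first two are controlled pointwise via Assumptions~\ref{assumption}(3) and \ref{assumption}(1) (with $|S|=1$), while the third assembles into the quadratic form $\Sum(G)^\top M\,\Sum(G)$ for $M = \sum_{i\in B} X_iX_i^\top$. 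Hanson--Wright (\Cref{lem:hanson-wright}) applied to this quadratic form, using $\mathrm{Tr}(M) \approx \eps nd$ and $\|M\|_F^2 \le \eps nd^2 + \kappa^2\eps^2 n^2 d$ (from Assumptions~\ref{assumption}(2,3)), bounds it by $\tilde O(\eps n^2 d)$. After dividing by $n\|\mathbf{S}\|_2^2 \approx n\cdot nd$, the bad-point contribution is $\eps + \tilde O(\cdot)$. The total, $1 + \tilde O(\cdot)$, fits under $1 + 0.025\,\alpha^4 n/(\eps d)$ since $\alpha \ge \kappa^5 \eps$ and $n \le O(\kappa^5)(\sqrt d/\alpha^2 + d\eps/\alpha^2)$ make $\alpha^4 n/(\eps d)$ dominate all of the error contributions.

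The main obstacle I anticipate is obtaining a sharp enough bound on the bad-point contribution to the variance test in the null. A naive pointwise invocation of Assumption~\ref{assumption}(1) gives $\sum_{i\in B}\langle X_i, \Sum(B\setminus\{i\})\rangle^2 \le \kappa^2 \eps^2 n^2\max(\eps nd,\eps^2 n^2)$, which is too large to absorb for all admissible $n,\eps,\alpha$. The key is to separate the ``good $\times$ bad'' contribution $\Sum(G)^\top M\,\Sum(G)$ and treat it via Hanson--Wright on the Gram-structured $M$: its leading trace piece $\eps n^2 d$ cleanly fuses with the $(1-\eps)n$ baseline from the good samples to give exactly the ``1'' in the threshold, leaving only Frobenius-sized fluctuations to fit under the $0.025\,\alpha^4 n/(\eps d)$ slack, which the relations $\alpha \ge \kappa^5 \eps$ and $n \le O(\kappa^5)d\eps/\alpha^2$ are precisely designed to afford.
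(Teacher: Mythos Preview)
Your mean-test analysis (first two paragraphs) is essentially the paper's proof: \cref{lem:null-mean} for the null and \cref{lem:alt-mean} for the alternative. One slip: you write ``since $\eps n \ll d$ in this regime'' when bounding $\|\Sum(B)\|_2$, but \cref{lem:mean-var-main-2} is also invoked in the proof of \cref{thm:oblivious-ub} when $d \le \eps n$; the fix is simply to keep the $O(\kappa\,\eps n)$ term in $\|\Sum(B)\|_2 \le O(\sqrt{\eps n d} + \kappa\,\eps n)$ and absorb it via $\alpha \ge \kappa^5\eps$, exactly as \cref{lem:alt-mean} does.

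The genuine gap is your third paragraph. You try to show the variance test does not false-reject under the null, but this is neither needed nor achievable in this sample regime. The good-sample fluctuation is $O(\kappa^4/\sqrt n)$, not $\tilde O(1/n)$: you dropped the $\chi^2$ concentration error in $\sum_{i\in G}\tilde z_i^2$. Absorbing $O(\kappa^4/\sqrt n)$ into the slack $0.025\,\alpha^4 n/(\eps d)$ requires $n \gtrsim d^{2/3}\eps^{2/3}/\alpha^{8/3}$. But \cref{lem:mean-var-main-2} is precisely the branch taken when $d\eps/\alpha^2 \le d^{2/3}\eps^{2/3}/\alpha^{8/3}$, so its sample complexity does \emph{not} guarantee this; for instance when $\kappa^5 \le \alpha/\eps \ll d^{1/4}$ one has $n \approx \kappa^5(\sqrt d/\alpha^2 + d\eps/\alpha^2) < d^{2/3}\eps^{2/3}/\alpha^{8/3}$ and the bound fails. (Your anticipated obstacle, the bad$\times$bad piece, is in fact handled by \cref{prop:null-bad-R-variance}; the real obstruction is the good-sample term.)

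The paper sidesteps this entirely by design: the second rejection clause of \autoref{alg:oblivious-tester} carries the conjunct ``and $n \le O(\kappa^5)(\sqrt d/\alpha^2 + d\eps/\alpha^2)$'', so once $n$ meets \cref{lem:mean-var-main-2}'s threshold (with the $O$-constant chosen larger than the algorithm's) the variance test is switched off. The paper's proof is then the one-liner ``combine \cref{lem:null-mean} and \cref{lem:alt-mean}'', which your first two paragraphs already accomplish.
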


We can think of ``solving robust mean testing'' (as stated in Lemmas \ref{lem:mean-var-main} and \ref{lem:mean-var-main-2}) as follows.
If given $n$ $\eps$-obliviously contaminated samples from $\cN(0, I)$, with high probability we either return ACCEPT or the samples were not $\kappa$-friendly. Likewise, if given $n$ $\eps$-obliviously contaminated samples from $\cN(0, I)$, with high probability we either return REJECT or the samples were not $\kappa$-friendly. In this section, we always use the phrase \textbf{\emph{with high probability}} to mean the failure probability is at most $\frac{1}{\poly(n, d)}$.

\medskip

Given Lemmas \ref{lem:mean-var-main} and \ref{lem:mean-var-main-2}, we explain how to prove our main theorem, Theorem~\ref{thm:oblivious-ub}.

\begin{proof}[Proof of \Cref{thm:oblivious-ub}]
    First, we assume that the data was drawn from the $\kappa$-friendly $\eps$-oblivious contamination model.
    Suppose that $n \ge \kappa^6 \cdot \left(\frac{\sqrt{d}}{\alpha^2} + \frac{d \eps^3}{\alpha^4} + \min\left(\frac{d^{2/3} \eps^{2/3}}{\alpha^{8/3}}, \frac{d \eps}{\alpha^2}\right)\right)$ and $\alpha \ge \kappa^5 \cdot \eps$. Then, if $n \ge \frac{d}{\alpha^2},$ we have $n \ge \kappa^{10} \cdot \frac{d \eps^2}{\alpha^4},$ so we can use~\Cref{thm:adaptive-sample-complexity-main}. Alternatively, if $n \le \frac{d}{\alpha^2}$ and $d \ge \eps \cdot n$, we can use either Lemma \ref{lem:mean-var-main} or \ref{lem:mean-var-main-2}. Finally, if $d \le \eps \cdot n,$ but $n \ge \kappa^6 \cdot \frac{\sqrt{d}}{\alpha^2},$ then $\kappa^5 \cdot \frac{d \eps}{\alpha^2} = \kappa^5 \cdot \frac{d}{\eps} \cdot \left(\frac{\eps}{\alpha}\right)^2 \le \kappa^{-5} \cdot \frac{d}{\eps}$. Since $n \ge \kappa^{-1} \cdot \frac{d}{\eps},$ this means that $O(\kappa^5) \cdot \frac{d \eps}{\alpha^2} \le O(\kappa^{-4}) \cdot n$. Therefore, $n \ge O(\kappa^5) \cdot \frac{\sqrt{d}}{\alpha^2} + O(\kappa^5) \cdot \frac{d \eps}{\alpha^2},$ so we can apply \Cref{lem:mean-var-main-2}.

    By \Cref{thm:reduce-to-friendly}, we may remove the assumption about the data being friendly. This completes the proof.
\end{proof}

The rest of the section is primarily devoted to \Cref{lem:mean-var-main}, but we prove \Cref{lem:mean-var-main-2} in \Cref{subsec:last-case-oblivious-lb}.
By \Cref{prop:wlog-bounded}, we may assume that the alternative hypothesis is $\|\mu\|_2 \in [\alpha, 2 \alpha]$. In fact, for simplicity we will pretend the alternative is $\|\mu\|_2 = \alpha$. Indeed, if $\|\mu\|_2 = \alpha' \in [\alpha, 2 \alpha]$, then our proof will show that either $\left|\|\mathbf{S}\|_2^2-nd \right| > 0.01 (\alpha')^2 n^2 \ge 0.01 \alpha^2 n^2$ or $\frac{1}{n} \sum_{i \in [n]} \left(\frac{\langle X_i, \mathbf{S} \rangle - d}{\|\mathbf{S}\|_2}\right)^2 \ge 1 + 0.025 \frac{(\alpha')^4}{\eps} \cdot \frac{n}{d} \ge 1 + 0.025 \frac{\alpha^4}{\eps} \cdot \frac{n}{d}$.

In the rest of this section, we use $\mathbf{S}$ to represent $\Sum([n]) = \sum_{i \in [n]} X_i$. We also will split the data into good (uncorrupted) points $G$ and bad (corrupted) points $B$. We will always use $\mathbf{R}$ to denote $\Sum(B) = \sum_{i \in B} X_i$. If the good samples are drawn as $\cN(\mu, I)$, we always use $\mathbf{T}$ to denote $\sum_{i \in G} (X_i-\mu)$, and $\mathbf{Q}$ to denote $|G| \cdot \mu$. Note that $\mathbf{S} = \mathbf{Q}+\mathbf{R}+\mathbf{T}$. Also, note that in the null case, $\mathbf{Q} = 0$ and $\mathbf{T} = \sum_{i \in G} X_i$, which means $\mathbf{S} = \mathbf{R}+\mathbf{T}$.

\subsection{Consequences of \Cref{assumption}} \label{subsec:consequences-of-assumption}

In this section, we prove a series of propositions that will be useful in bounding the mean and variance.
In all of the following, let $X_1,\ldots,X_n \in \R^d$ be any vectors satisfying \Cref{assumption}.

First, we have the following bound on the norm of any sum of at most $\eps n$ points.

\begin{proposition} \label{prop:norm-sum}
    Let $X_1,\ldots,X_n \in \R^d$ satisfy Assumption~\ref{assumption}. Then, for any subset $B'$ of size $k \le \eps n$, $\left\|\sum_{i \in B'} X_i\right\|_2^2 = kd \pm O(\kappa) \cdot k \cdot (\sqrt{\eps n d} + \eps n)$.  
\end{proposition}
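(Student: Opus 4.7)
The plan is to expand the squared norm into diagonal and off-diagonal parts and control each using the appropriate parts of Assumption~\ref{assumption}. Explicitly,
\[
  \left\|\sum_{i \in B'} X_i\right\|_2^2 = \sum_{i \in B'} \|X_i\|_2^2 + \sum_{\substack{i, j \in B'\\ i \neq j}} \langle X_i, X_j \rangle.
\]
The diagonal term is immediately handled by part~\ref{assumption:norm} of Assumption~\ref{assumption}: each $\|X_i\|_2^2 = d \pm \kappa\sqrt{d}$, so the diagonal sum is $kd \pm \kappa k \sqrt{d}$. Since $k \leq \eps n$ and (in the regimes of interest) $\eps n \geq 1$, this error is absorbed into $O(\kappa)\cdot k\cdot(\sqrt{\eps n d} + \eps n)$, so the diagonal is not the obstacle.

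The main step is the off-diagonal sum, which cannot be bounded term-by-term using part~\ref{assumption:inner-product} (that would give only $\kappa k^2 \sqrt{d}$, much too large). Instead, I would convert the double sum to a cross-sum via a random balanced partition: let $S, T$ be a uniformly random partition of $B'$, where each element is independently assigned to $S$ or $T$ with probability $1/2$. Then for $i \neq j$ one has $\Pr[i \in S, j \in T] = \tfrac14$, so
\[
  \sum_{\substack{i, j \in B'\\ i \neq j}} \langle X_i, X_j \rangle \;=\; 4\,\mathbb{E}_{S,T}\bigl[\langle \Sum(S), \Sum(T)\rangle\bigr].
\]
Since $|S|, |T| \leq k \leq \eps n$, part~\ref{assumption:cross-sum} of Assumption~\ref{assumption} applies pointwise to every realization of the partition, giving
\[
  \bigl|\langle \Sum(S), \Sum(T)\rangle\bigr| \;\leq\; \kappa\sqrt{|S|\,|T|}\cdot\max(\sqrt{\eps n d}, \eps n) \;\leq\; \tfrac{\kappa k}{2}\bigl(\sqrt{\eps n d} + \eps n\bigr),
\]
using $\sqrt{|S||T|} \le (|S|+|T|)/2 = k/2$ by AM--GM. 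Taking expectations and multiplying by $4$ yields the desired bound $O(\kappa)\cdot k\cdot(\sqrt{\eps n d} + \eps n)$ on the off-diagonal contribution.

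Combining the diagonal and off-diagonal estimates proves the proposition. The only subtle point is the pointwise nature of part~\ref{assumption:cross-sum}, which ensures that we do not lose anything by taking expectations over the random partition; apart from that, the argument is a two-line calculation plus the random-partition identity.
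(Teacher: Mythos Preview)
Your proof is correct and essentially identical to the paper's: both expand into diagonal plus off-diagonal parts, bound the diagonal via part~\ref{assumption:norm}, and bound the off-diagonal by writing it as (a constant times) the expectation of $\langle \Sum(S),\Sum(T)\rangle$ over a random partition, then applying part~\ref{assumption:cross-sum} pointwise. The only cosmetic difference is that the paper partitions $B'$ into two equal halves (so the pair probability is $\Omega(1)$), whereas you assign each element independently with probability $1/2$ (so the pair probability is exactly $1/4$); both give the same bound.
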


\begin{proof}
    Let $B'_1,B'_2$ be a random partition of $B'$ into sets of equal size.
    For any distinct $i,j \in B'$, let $p = \Pr(i \in B'_1, j \in B'_2) = \Omega(1)$.
    Then 
    \[
    \left | \sum_{i \neq j \in B'} \iprod{X_i,X_j} \right | =  \left | \frac 1 p \cdot \E_{B'_1,B'_2} \bigg\langle \sum_{i \in B'_1} X_i, \sum_{j \in B'_2} X_j \bigg\rangle \right | \leq \kappa \cdot k \cdot O(\sqrt{\eps n d} + \eps n),
    \]
    by \Cref{assumption:cross-sum} of \Cref{assumption}.
    Finally, $\sum_{i \in B'} \|X_i\|_2^2 = k d \pm k \cdot \kappa \sqrt{d}$. Overall, this means $\left\|\sum_{i \in B'} X_i\right\|_2^2 = kd \pm \kappa \cdot k \cdot O(\sqrt{\eps n d} + \eps n)$.
\end{proof}

Next, we have the following proposition.

\begin{proposition} \label{prop:null_bad_R_concentration}
    Let $X_1,\ldots,X_n$ satisfy \Cref{assumption}, with $\eps n \le d$. 
    Let $B \subset [n]$ be any subset of size $\eps n$, and let $\mathbf{R} = \sum_{i \in B} X_i$. Let $B' \subset B$ be a subset of size $k \le \eps \cdot n$. Then, $\sum_{i \in B'} \langle X_i, \mathbf{R} \rangle = kd \pm O(\kappa \cdot \eps n \sqrt{kd})$.
\end{proposition}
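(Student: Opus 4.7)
The plan is to expand the target quantity, apply Proposition~\ref{prop:norm-sum} to the diagonal piece, and apply friendliness (Assumption~\ref{assumption}, item~\ref{assumption:cross-sum}) to the cross piece. Concretely, I would first rewrite
\[
\sum_{i \in B'} \langle X_i, \mathbf{R} \rangle \;=\; \bigl\langle \Sum(B'),\, \Sum(B) \bigr\rangle \;=\; \|\Sum(B')\|_2^2 \;+\; \bigl\langle \Sum(B'),\, \Sum(B \setminus B') \bigr\rangle,
\]
which is the natural split since $\mathbf{R} = \Sum(B) = \Sum(B') + \Sum(B\setminus B')$.

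For the first summand, I apply Proposition~\ref{prop:norm-sum} with the set $B'$ of size $k \le \eps n$ to get $\|\Sum(B')\|_2^2 = kd \pm O(\kappa) \cdot k(\sqrt{\eps n d} + \eps n)$. Since we are in the regime $\eps n \le d$, the term $\sqrt{\eps n d}$ dominates $\eps n$, so the error collapses to $O(\kappa \cdot k \sqrt{\eps n d})$. Finally, using $k \le \eps n$ we can rewrite $k \sqrt{\eps n d} = \sqrt{k}\cdot\sqrt{k \eps n d} \le \sqrt{k} \cdot \sqrt{\eps n \cdot \eps n \cdot d} = \eps n \sqrt{k d}$, so the deviation of $\|\Sum(B')\|_2^2$ from $kd$ is bounded by $O(\kappa \cdot \eps n \sqrt{k d})$, as desired.

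For the cross term, $B'$ and $B \setminus B'$ are disjoint subsets of $[n]$ of sizes at most $\eps n$, so Assumption~\ref{assumption:cross-sum} gives
\[
\bigl|\bigl\langle \Sum(B'), \Sum(B \setminus B') \bigr\rangle\bigr| \;\le\; \kappa \cdot \sqrt{k(\eps n - k)} \cdot \max(\sqrt{\eps n d},\, \eps n) \;\le\; \kappa \cdot \sqrt{k \cdot \eps n} \cdot \sqrt{\eps n d} \;=\; \kappa \cdot \eps n \sqrt{kd},
\]
again using $\eps n \le d$ to drop the $\eps n$ branch of the max and $\sqrt{\eps n - k} \le \sqrt{\eps n}$. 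Combining the two bounds yields $\sum_{i \in B'} \langle X_i, \mathbf{R}\rangle = kd \pm O(\kappa \cdot \eps n \sqrt{kd})$. There is no real obstacle here: the proof is a short bookkeeping exercise, and the only thing to be careful about is using the hypothesis $\eps n \le d$ to simplify $\max(\sqrt{\eps n d}, \eps n)$ and to convert the Proposition~\ref{prop:norm-sum} error into the desired form.
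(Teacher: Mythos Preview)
Your proof is correct and follows essentially the same approach as the paper: the same decomposition $\langle \Sum(B'),\Sum(B)\rangle = \|\Sum(B')\|_2^2 + \langle \Sum(B'),\Sum(B\setminus B')\rangle$, the same application of Proposition~\ref{prop:norm-sum} for the first term, the same use of Assumption~\ref{assumption:cross-sum} for the second, and the same simplifications via $\eps n \le d$ and $k \le \eps n$.
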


\begin{proof}
    We have 
\[\sum_{i \in B'} \langle X_i, \mathbf{R} \rangle = \left\langle \sum_{i \in B'} X_i, \sum_{i \in B} X_i \right\rangle = \left\|\sum_{i \in B'} X_i \right\|_2^2 + \left\langle \sum_{i \in B'} X_i, \sum_{i \in B \backslash B'} X_i \right\rangle.\]
    By \Cref{prop:norm-sum}, we know that $\|\sum_{i \in B'} X_i \|_2^2 = k d \pm O(\kappa \cdot k \sqrt{\eps n d})$. By \Cref{assumption:cross-sum} of \Cref{assumption}, we know that $\left|\langle \sum_{i \in B'} X_i, \sum_{i \in B \backslash B'} X_i \rangle\right| \le \kappa \cdot \sqrt{k \cdot \eps n} \cdot \sqrt{\eps n d} = \kappa \cdot \eps n \cdot \sqrt{k d}$. Since $k \le \eps n$, $k \sqrt{\eps n d} \le \eps n \sqrt{k d}$, so overall, we have that $\sum_{i \in B'} \langle X_i, \mathbf{R} \rangle = kd \pm O(\kappa \cdot \eps n \sqrt{k d})$.
\end{proof}

As a result, we have the following.

\begin{proposition} \label{prop:null-bad-R-variance}
    Let $X_1,\ldots,X_n$ satisfy \Cref{assumption}, with $\eps n \le d$. Then, for any subset $B \subset [n]$ of size $\eps n$, we have that $\sum_{i \in B} (\langle X_i, \mathbf{R} \rangle - d)^2 \le O(\kappa^3 \cdot \eps^2 n^2 d),$ where $\mathbf{R} := \sum_{i \in B} X_i$.
\end{proposition}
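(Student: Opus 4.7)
\textbf{Proof Plan for \Cref{prop:null-bad-R-variance}.}

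The plan is to use \Cref{prop:null_bad_R_concentration} as a bound on partial sums of $f_i := \langle X_i, \mathbf{R}\rangle - d$, together with a standard peeling (``rearrangement/layer cake'') argument to convert this into a second-moment bound. The key observation is that for any $B' \subseteq B$ with $|B'| = k \le \eps n$, \Cref{prop:null_bad_R_concentration} gives
\[
\left|\sum_{i \in B'} f_i\right| = \left|\sum_{i \in B'} \langle X_i, \mathbf{R}\rangle - kd\right| \le O\!\left(\kappa \cdot \eps n \sqrt{kd}\right),
\]
and crucially this holds for \emph{any} subset $B' \subseteq B$ of size $k$. So we are free to choose $B'$ adversarially.

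I would proceed as follows. First, split $B$ into $B^+ = \{i \in B : f_i \ge 0\}$ and $B^- = \{i \in B : f_i < 0\}$, and handle each symmetrically; I describe only $B^+$. Order the values $\{f_i\}_{i \in B^+}$ in decreasing order as $g_1 \ge g_2 \ge \cdots \ge g_{k^+} \ge 0$, where $k^+ = |B^+| \le \eps n$. For each $t \in \{1,\ldots,k^+\}$, apply the partial-sum bound above to the subset $B'$ consisting of the indices achieving the $t$ largest values $g_1,\ldots,g_t$ (this set has size $t \le \eps n$, so the hypothesis applies):
\[
t \cdot g_t \;\le\; \sum_{j=1}^t g_j \;\le\; O\!\left(\kappa \cdot \eps n \sqrt{t d}\right),
\]
which rearranges to $g_t \le O(\kappa \eps n \sqrt{d/t})$ and hence $g_t^2 \le O(\kappa^2 \eps^2 n^2 d / t)$. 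Summing the squares over $t$ gives the harmonic sum
\[
\sum_{t=1}^{k^+} g_t^2 \;\le\; O(\kappa^2 \eps^2 n^2 d) \cdot \sum_{t=1}^{k^+} \tfrac{1}{t} \;\le\; O(\kappa^2 \log(\eps n) \cdot \eps^2 n^2 d).
\]
An identical argument on $B^-$ (using the lower-bound side of the $\pm$) yields the same bound for $\sum_{i \in B^-} f_i^2$. Adding the two and recalling that $\kappa = \log(nd)^{O(1)}$ dominates $\log(\eps n)$, the overall bound is $O(\kappa^3 \cdot \eps^2 n^2 d)$, as required.

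The argument is essentially a one-step computation once one notices that a sup-over-subsets partial-sum bound can be leveraged via decreasing rearrangement; there is no real obstacle. The only thing to double-check is the trivial verification that each set $B'$ used in the peeling step has size at most $\eps n$ (so that \Cref{prop:null_bad_R_concentration} applies), which is immediate since $B' \subseteq B^{\pm} \subseteq B$ and $|B| = \eps n$.
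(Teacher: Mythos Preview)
Your proposal is correct and follows essentially the same approach as the paper: both apply \Cref{prop:null_bad_R_concentration} to the top-$k$ (and bottom-$k$) terms to obtain $g_t \le O(\kappa \eps n \sqrt{d/t})$, then sum $g_t^2$ via the harmonic series and absorb the resulting $\log(\eps n)$ factor into the $\kappa^3$. The only cosmetic difference is that the paper bounds the $k$th largest $|y_i|$ directly rather than splitting into $B^+$ and $B^-$ first.
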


\begin{proof}
    For each $i \in B$, define $y_i := \langle X_i, \mathbf{R} \rangle - d$. Consider the $k$th largest $y_i$. It must be at most $O(\kappa \cdot \eps n \sqrt{d/k})$, or else the sum of the $k$ largest $y_i$ would exceed $O(\kappa \cdot \eps n \sqrt{kd})$, contradicting Proposition \ref{prop:null_bad_R_concentration}. Likewise, the $k$th smallest $y_i$ must be greater than or equal to $-O(\kappa \eps n \sqrt{d/k})$, so the $k$th largest $|y_i|$ is at most $O(\kappa \eps n \sqrt{d/k})$.

    This means that
\[\sum_{i \in B} (\langle X_i, \mathbf{R} \rangle - d)^2 = \sum_{i \in B} |y_i|^2 \le \sum_{k=1}^{\eps n} O\left(\kappa \cdot \eps n \sqrt{\frac{d}{k}}\right)^2 = \sum_{k=1}^{\eps n} O\left(\frac{\kappa^2 \eps^2 n^2 d}{k}\right) = O(\kappa^3 \eps^2 n^2 d). \qedhere\]
\end{proof}

We also have the following bound.

\begin{proposition} \label{prop:bad-cross-terms}
    Let $X_1,\ldots,X_n$ satisfy \Cref{assumption}, with $\eps n \le d$. Let $B \subset [n]$ have size $\eps n$, and let $\mathbf{R} = \sum_{i \in B} 
    X_i$. Then, $\left\|\sum_{i \in B} (\langle X_i, \mathbf{R} \rangle-d) X_i\right\|_2 \le O(\kappa^2 \cdot \eps n d)$.
\end{proposition}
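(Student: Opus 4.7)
The plan is to bound $\bigl\|\sum_{i \in B} y_i X_i\bigr\|_2$, where $y_i := \langle X_i, \mathbf{R}\rangle - d$, using only Proposition~\ref{prop:null-bad-R-variance} (which gives $\sum_i y_i^2 \le O(\kappa^3 \eps^2 n^2 d)$) together with the sum-norm bounds in Assumption~\ref{assumption}. Splitting $y_i$ into its positive and negative parts and using the triangle inequality, it suffices to bound $\bigl\|\sum_i w_i X_i\bigr\|_2$ for a nonnegative sequence $w_i \ge 0$ with $\|w\|_2^2 \le O(\kappa^3 \eps^2 n^2 d)$, which automatically satisfies $\|w\|_\infty \le \|w\|_2$.

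The main tool is a layer-cake decomposition. Writing $w_i = \int_0^{\infty} \mathbf{1}[w_i \ge t]\,dt$, set $L(t) := \{i \in B : w_i \ge t\}$, $k(t) := |L(t)|$, and $S(t) := \sum_{i \in L(t)} X_i$. Then $\sum_i w_i X_i = \int_0^\infty S(t)\,dt$, so
\begin{equation*}
\Bigl\|\sum_i w_i X_i\Bigr\|_2^2 \;=\; \int_0^\infty\!\!\int_0^\infty \langle S(s), S(t)\rangle\, ds\, dt.
\end{equation*}
For $s \le t$, the nesting $L(t) \subseteq L(s)$ gives $\langle S(s), S(t)\rangle = \|S(t)\|_2^2 + \langle S_{L(s)\setminus L(t)}, S(t)\rangle$ with $S_A := \sum_{i \in A} X_i$. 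Proposition~\ref{prop:norm-sum} bounds $\|S(t)\|_2^2$ by $k(t)d + O(\kappa\, k(t)\sqrt{\eps n d})$, while Assumption~\ref{assumption}.\ref{assumption:cross-sum} applied to the disjoint sets $L(s) \setminus L(t)$ and $L(t)$ bounds the cross term by $O\bigl(\kappa \sqrt{(k(s)-k(t))\,k(t)}\,\sqrt{\eps n d}\bigr)$. Symmetrizing in $s,t$ and using $k(t)^2 + (k(s)-k(t))k(t) = k(s)k(t)$,
\begin{equation*}
|\langle S(s), S(t)\rangle| \;\le\; \min(k(s), k(t))\, d \;+\; O\!\left(\kappa \sqrt{k(s) k(t)}\,\sqrt{\eps n d}\right).
\end{equation*}

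The two resulting double integrals are clean. For the first, the classical identity $2\int_0^\infty t\, k(t)\, dt = \sum_i w_i^2$ gives $\int\!\!\int \min(k(s), k(t))\, ds\, dt = \|w\|_2^2$. The second factors as $\bigl(\int_0^\infty \sqrt{k(s)}\, ds\bigr)^2$, and the key analytic step is to show that this integral is $\tilde{O}(\|w\|_2)$. Using both $k(s) \le \eps n$ and $k(s) \le \|w\|_2^2/s^2$ (Chebyshev), and splitting at $s^\star := \|w\|_2/\sqrt{\eps n}$ (which satisfies $s^\star \le \|w\|_\infty$ since $\|w\|_2 \le \sqrt{\eps n}\|w\|_\infty$),
\begin{equation*}
\int_0^\infty \sqrt{k(s)}\, ds \;\le\; \int_0^{s^\star}\!\! \sqrt{\eps n}\, ds + \int_{s^\star}^{\|w\|_\infty}\!\! \frac{\|w\|_2}{s}\, ds \;\le\; \|w\|_2\bigl(1 + \tfrac{1}{2}\log(\eps n)\bigr) \;=\; \tilde{O}(\|w\|_2).
\end{equation*}

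Assembling the pieces and using $\sqrt{\eps n d} \le d$ from the hypothesis $\eps n \le d$,
\begin{equation*}
\Bigl\|\sum_i w_i X_i\Bigr\|_2^2 \;\le\; d\,\|w\|_2^2 + \tilde{O}(\kappa)\sqrt{\eps n d}\,\|w\|_2^2 \;\le\; \tilde{O}(\kappa)\, d\,\|w\|_2^2 \;\le\; \tilde{O}(\kappa^4)\,\eps^2 n^2 d^2,
\end{equation*}
so $\|\sum_i w_i X_i\|_2 \le \tilde{O}(\kappa^2)\,\eps n d$; summing over the positive and negative parts of $y$ and absorbing the polylogarithmic factors into $\kappa$ yields the desired $O(\kappa^2 \eps n d)$ bound. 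The main obstacle in the argument is sharply estimating $\int_0^\infty \sqrt{k(s)}\,ds$: the naive Cauchy--Schwarz estimate $\sqrt{\|w\|_\infty\|w\|_1}$ loses a factor of $\sqrt d$ and would yield an off-by-$\sqrt d$ version of the target; only the Chebyshev-based truncation above delivers the $\tilde{O}(\|w\|_2)$ rate required for the theorem.
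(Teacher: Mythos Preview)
Your proof is correct and takes essentially the same approach as the paper: both write $\sum_i y_i X_i = \int_0^\infty S(t)\,dt$ via the layer-cake representation, control the level-set sizes $k(t)$ by a Chebyshev-type bound $k(t)\lesssim \kappa^{O(1)}\eps^2 n^2 d/t^2$ (your route through Proposition~\ref{prop:null-bad-R-variance} is equivalent to the paper's direct use of Proposition~\ref{prop:null_bad_R_concentration}), and split the resulting integral at the crossover threshold to pick up a single logarithmic factor. The only difference in execution is that the paper applies the triangle inequality $\bigl\|\int S(t)\,dt\bigr\|_2 \le \int \|S(t)\|_2\,dt$ and bounds $\|S(t)\|_2$ pointwise via Proposition~\ref{prop:norm-sum}, whereas you expand the squared norm as a double integral and additionally invoke the cross-sum part of Assumption~\ref{assumption}---slightly more machinery for the same bound.
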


\begin{proof}
    Write $y_i := \langle X_i, \mathbf{R} \rangle-d$. We can then write 
\begin{align*}
    \sum_{i \in B} (\langle X_i, \mathbf{R} \rangle-d) X_i &= \sum_{i \in B} y_i X_i \\
    &= \sum_{i \in B \, : \, y_i \geq 0} \int_0^{y_i} X_i \, dt - \sum_{i \in B \, : \, y_i < 0} \int_0^{|y_i|} X_i \, dt \\
    &= \underbrace{\int_0^\infty \left(\sum_{i \in B: y_i \ge t} X_i\right) dt}_{A_{+}} - \underbrace{\int_0^\infty \left(\sum_{i \in B: y_i \le -t} X_i\right) dt}_{A_{-}}.
\end{align*}
    For simplicity, we will just bound $\|A_+\|_2,$ as the argument for bounding $\|A_-\|_2$ is identical. As in \Cref{prop:null-bad-R-variance}, for any real number $t \ge 0$, the number of indices $i \in B$ such that $\langle X_i, \mathbf{R} \rangle - d \ge t$ is some $k(t)$ for $t \le O(\kappa \cdot \eps n \sqrt{d/k(t)})$, so $k(t) \le O\left(\frac{\kappa^2 \eps^2 n^2 d}{t^2}\right)$ for all $t$. In addition, $k(t) \le \eps n$ always, because we are only counting indices in $B$.

    For any $t \ge 0$, define $A_+(t) := \left\|\sum_{i \in B: y_i \ge t} X_i\right\|_2$.
    By \Cref{prop:norm-sum} (which we can apply since $k(t) \le \eps n$),
\[\|A_+(t)\|_2 = O\left(\sqrt{k(t) \cdot d} + \sqrt{\kappa \cdot k(t)} (\eps n d)^{1/4} \right) = O\left(\frac{\kappa \eps n d}{t} + \frac{\kappa^{3/2} (\eps n)^{5/4} d^{3/4}}{t}\right) \le O\left(\frac{\kappa^{1.5} \eps n d}{t}\right),\]
    since we are assuming $d \ge \eps n$.

    In addition, for any $t \ge 0$, as  $k(t) \le \eps n$, we have $\|A_+(t)\|_2 \le O(\sqrt{\eps n d} + \sqrt{\kappa} (\eps n)^{3/4} d^{1/4}) = O(\sqrt{\kappa} \cdot \sqrt{\eps n d})$. Also, note that $A_+(t) = 0$ for $t \ge O(\kappa \eps n \sqrt{d})$, as this will imply $k(t) < 1$, so $k(t) = 0$.

    Overall, we have that
\begin{align*}
    \|A_+\|_2 &\le \int_0^\infty \|A_+(t)\|_2 dt \\
    &\le \int_0^{\kappa \eps n} O(\sqrt{\kappa \cdot \eps n d}) dt + \int_{\kappa \eps n}^{\kappa \cdot \eps n \sqrt{d}} O\left(\frac{\kappa^{1.5} \eps n d}{t}\right) dt.
\end{align*}
    The first integral is trivially bounded by $O(\kappa^{1.5} \cdot \eps n \sqrt{\eps n d}) \le O(\kappa^{1.5} \cdot \eps n d)$, since $\eps n \le d$.
    The second integral equals
\[\kappa^{1.5} \eps n d \cdot \log \frac{\kappa \eps n \sqrt{d}}{\kappa \eps n} \le \kappa^2 \cdot \eps n d.\]
    An identical calculation for $A_-$, combined with the triangle inequality, completes the proof.
\end{proof}

Finally, we will bound the Frobenius norm and operator norm of $\sum X_i X_i^\top$, over any subset of $\eps n$ points.

\begin{proposition} \label{prop:frobenius_bound}
    Suppose $B \subset [n]$ has size $\eps n$. Then, under \Cref{assumption}, and if $\eps n \le d$, we have that $\|\sum_{i \in B} X_i X_i^\top\|_F\le O(\kappa \cdot d \sqrt{\eps n})$.
\end{proposition}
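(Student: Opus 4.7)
The plan is to bound the Frobenius norm squared directly by expansion and then take a square root. Writing $M = \sum_{i \in B} X_i X_i^\top$, the standard identity
\[
\|M\|_F^2 = \Tr(M^\top M) = \sum_{i,j \in B} \langle X_i, X_j\rangle^2
\]
lets me split the sum into the diagonal $i=j$ and off-diagonal $i \neq j$ contributions and estimate each using a different clause of \Cref{assumption}.

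For the diagonal terms, clause~\ref{assumption:norm} gives $\|X_i\|_2^2 = d \pm \kappa\sqrt d$, so $\|X_i\|_2^4 \le O(d^2)$ (using $\kappa\sqrt d \le d$, which we may assume since otherwise the claim is vacuous). Since $|B| = \eps n$, the diagonal contribution is at most $O(\eps n \cdot d^2)$. For the off-diagonal terms, clause~\ref{assumption:inner-product} gives $|\langle X_i, X_j\rangle| \le \kappa\sqrt d$ for all distinct $i,j$, so each off-diagonal term is at most $\kappa^2 d$, and crude summation gives an off-diagonal contribution of at most $\kappa^2 (\eps n)^2 d$.

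Adding these and invoking the hypothesis $\eps n \le d$ to absorb the off-diagonal bound into the diagonal one (since $(\eps n)^2 d \le \eps n \cdot d^2$), I get
\[
\|M\|_F^2 \le O(\eps n \cdot d^2) + O(\kappa^2 (\eps n)^2 d) \le O(\kappa^2 \cdot \eps n \cdot d^2),
\]
so $\|M\|_F \le O(\kappa \cdot d\sqrt{\eps n})$, as required. There is no real obstacle here: the proof is a one-line expansion followed by term-by-term bounds using clauses~\ref{assumption:inner-product} and~\ref{assumption:norm} of \Cref{assumption}; clause~\ref{assumption:cross-sum} is not needed for this proposition, and the hypothesis $\eps n \le d$ is used only to equate the two regimes in the final step.
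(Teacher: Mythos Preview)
Your proof is correct and follows essentially the same approach as the paper: expand $\|M\|_F^2 = \sum_{i,j\in B}\langle X_i,X_j\rangle^2$, bound the diagonal terms via clause~\ref{assumption:norm} and the off-diagonal terms via clause~\ref{assumption:inner-product}, and use $\eps n \le d$ to merge the two contributions into $O(\kappa^2 \eps n\, d^2)$. The only cosmetic difference is that the paper bounds $\|X_i\|_2^4 \le \kappa^2 d^2$ rather than $O(d^2)$, but this does not affect the final estimate.
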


\begin{proof}
    Note that 
\[\bigg\|\sum_{i \in B} X_i X_i^\top\bigg\|_F^2 = \sum_{i, j \in B} \Tr(X_i X_i^\top X_j X_j^\top) = \sum_{i, j \in B} \langle X_i, X_j \rangle^2.\]
    By \Cref{assumption:norm} of \Cref{assumption}, we know that for $i = j$, $\langle X_i, X_j \rangle^2 = \|X_i\|_2^4 \le \kappa^2 d^2,$ and by \Cref{assumption:inner-product} of \Cref{assumption}, for each $i \neq j$, $\langle X_i, X_j \rangle^2 \le \kappa^2 \cdot d$. So, because $|B| = \eps \cdot n \le d$,
\[\bigg\|\sum_{i \in B} X_i X_i^\top\bigg\|_F^2 \le O(\kappa^2 \cdot \eps n \cdot d^2 + \kappa^2 \cdot (\eps n)^2 \cdot d) = O(\kappa^2 \cdot \eps n \cdot d^2).\]
    We take the square root, and the result follows.
\end{proof}

\begin{proposition} \label{prop:operator_bound}
    Suppose $B \subset [n]$ has size $\eps n$. Then, under \Cref{assumption}, and if $\eps n \le d$, we have that $\|\sum_{i \in B} X_i X_i^\top\|_{op} \le O(\kappa^2 \cdot d)$.
\end{proposition}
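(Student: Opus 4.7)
The plan is to use the variational characterization $\|M\|_{op} = \sup_{\|v\|_2 = 1} v^\top M v = \sup_{\|v\|_2=1} \sum_{i \in B} \langle X_i, v\rangle^2$ and then exploit friendliness via the same sorted-tails argument already used in \Cref{prop:null-bad-R-variance} and \Cref{prop:bad-cross-terms}. Fix a unit vector $v \in \R^d$ and let $a_i := \langle X_i, v \rangle$ for $i \in B$; the goal becomes showing $\sum_{i \in B} a_i^2 \le O(\kappa^2 d)$ uniformly in $v$.

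The key step is a uniform tail bound on the $a_i$'s. For any subset $T \subseteq B$ with $|T| = k \le \eps n$, Cauchy--Schwarz together with \Cref{prop:norm-sum} gives
\[
\Big|\sum_{i \in T} a_i \Big| \;=\; |\langle v, \Sum(T)\rangle| \;\le\; \|\Sum(T)\|_2 \;\le\; O(\sqrt{\kappa\, k\, d}),
\]
where we used $\eps n \le d$ to absorb the cross-term $O(\kappa k \sqrt{\eps n d})$ into $O(\kappa k d)$ inside the square root. Taking $T$ to be either the top-$k$ or the bottom-$k$ indices of $\{a_i\}_{i \in B}$ (as in the proof of \Cref{prop:null-bad-R-variance}) then forces the $k$-th largest value of $|a_i|$ to be at most $O(\sqrt{\kappa d / k})$.

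Finally, summing the squared order statistics gives
\[
\sum_{i \in B} a_i^2 \;\le\; \sum_{k=1}^{\eps n} O\!\left(\frac{\kappa d}{k}\right) \;=\; O(\kappa d \log(\eps n)) \;\le\; O(\kappa^2 d),
\]
since $\log(\eps n) \le \log n \le O(\kappa)$. Taking the supremum over unit vectors $v$ yields $\|M\|_{op} \le O(\kappa^2 d)$.

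The main conceptual obstacle is that the naive bounds -- either $\|M\|_{op} \le \|M\|_F = O(\kappa d \sqrt{\eps n})$ from \Cref{prop:frobenius_bound}, or entrywise bounds on the Gram matrix $G = (\langle X_i, X_j\rangle)_{i,j \in B}$ leading to $\|G\|_{op} \le \eps n \cdot \kappa \sqrt{d}$ -- both lose an extra factor of roughly $\sqrt{\eps n}$ and are therefore too weak in the regime $\eps n$ close to $d$. The trick is to avoid bounding the off-diagonal entries individually and instead use the full strength of \Cref{assumption:cross-sum} by testing against the arbitrary direction $v$ via Cauchy--Schwarz, which is precisely what enables the $O(\sqrt{k d})$ (rather than $O(k \sqrt{d})$) bound on partial sums that powers the tail argument.
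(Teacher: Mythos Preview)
Your proof is correct and follows essentially the same approach as the paper: variational characterization of the operator norm, then bounding partial sums $\sum_{i\in T}\langle X_i,v\rangle$ via $\|\Sum(T)\|_2\le O(\sqrt{\kappa k d})$ from \Cref{prop:norm-sum}, deducing an $O(\sqrt{\kappa d/k})$ bound on the $k$-th order statistic, and summing the harmonic series to get $O(\kappa d\log(\eps n))\le O(\kappa^2 d)$. The paper's proof is identical in structure and uses the same ingredients.
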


\begin{proof}
    Choose a unit vector $w$. Since $\sum_{i \in B} X_i X_i^\top$ is PSD, it suffices to show that $w^\top \left(\sum_{i \in B} X_i X_i^\top\right) w = \sum_{i \in B} \langle X_i, w \rangle^2$ is at most $O(\kappa^2 \cdot d).$

    First, we consider the $k$th largest value of $\langle X_i, w \rangle$. For any subset $B' \subset B$ of $k \le \eps n$ elements, we have $\left\|\sum_{i \in B'} X_i\right\|_2 \le \sqrt{k d + \kappa \cdot k \sqrt{\eps n d} + \kappa \cdot k \cdot \eps n} \le O(\sqrt{\kappa \cdot kd})$, by \Cref{prop:norm-sum} and since $\eps n \le d$. Therefore, $\sum_{i \in B'} \langle X_i, w \rangle = \left\langle \sum_{i \in B'} X_i, w \right\rangle \le O(\sqrt{\kappa \cdot kd}).$
    This means that the $k$th largest value of $\langle X_i, w \rangle$ is at most $O(\sqrt{\kappa \cdot d/k})$, and the $k$th smallest value of $\langle X_i, w \rangle$ is at least $-O(\sqrt{\kappa \cdot d/k})$, which means the $k$th largest value of $\langle X_i, w \rangle^2$ is at most $O(\kappa \cdot d/k)$. Adding this over $1 \le k \le \eps \cdot n$, we have that $\sum_{i \in B} \langle X_i, w \rangle^2 \le \sum_{k=1}^{\eps n} O\left(\kappa \cdot \frac{d}{k}\right) = O(\kappa^2 \cdot d).$
\end{proof}

\subsection{The Null Case: Mean}

In this subsection, we verify that, under the null hypothesis and \Cref{assumption}, with high probability \autoref{alg:oblivious-tester} not reject on the first step, assuming sufficiently many samples. In this subsection, we do not assume that $\eps n \le d$.

Define $v = \left(\sum_{i \in [n]} X_i\right)/\|\sum_{i \in n} X_i \|_2$ to be the unit vector representing the \emph{direction} of the sum of all points. Also, define $z_i := \langle X_i, v \rangle$ for all $i \le n.$ Recall that $G \subset [n]$ represents the set of good (uncorrupted) data points, and $B = [n] \backslash G$ represents the set of bad (corrupted) data points. Note that $|G| = (1-\eps) n$ and $|B| = \eps n$. We now prove the main lemma for this subsection.

\begin{lemma} \label{lem:null-mean}
    Assume the null hypothesis, meaning that each $X_i$ for $i \in G$ is drawn i.i.d. as $\cN(0, I)$. Also, assume $n \ge \kappa^5 \cdot \left(\frac{\sqrt{d}}{\alpha^2} + \frac{d \eps^3}{\alpha^4}\right)$ and $\alpha \ge \kappa^5 \cdot \eps$.
    Then, under \Cref{assumption}, with high probability, $\left\|\sum_{i \in [n]} X_i\right\|_2^2 = n d \pm 0.01 \alpha^2 n^2.$
\end{lemma}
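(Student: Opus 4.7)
I would begin by decomposing the squared norm along the good/bad split. Writing $\mathbf{T} = \sum_{i \in G} X_i$ and $\mathbf{R} = \sum_{i \in B} X_i$, so that $\mathbf{S} = \mathbf{T} + \mathbf{R}$, we get
\[
\|\mathbf{S}\|_2^2 = \|\mathbf{T}\|_2^2 + 2\langle \mathbf{T}, \mathbf{R}\rangle + \|\mathbf{R}\|_2^2,
\]
and the target $nd$ naturally splits as $(1-\eps)nd + \eps n d$. So it suffices to show, with high probability, three bounds: (i) $\bigl|\|\mathbf{T}\|_2^2 - (1-\eps)nd\bigr| \ll \alpha^2 n^2$, (ii) $\bigl|\|\mathbf{R}\|_2^2 - \eps n d\bigr| \ll \alpha^2 n^2$, and (iii) $|\langle \mathbf{T}, \mathbf{R}\rangle| \ll \alpha^2 n^2$.

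For (ii), this is purely a statement about the adversarial points, and I would invoke \Cref{prop:norm-sum} applied to $B$ with $k = \eps n$, giving $\|\mathbf{R}\|_2^2 = \eps n d \pm O(\kappa) \cdot \eps n \cdot (\sqrt{\eps n d} + \eps n)$. The first error term is $O(\kappa (\eps n)^{3/2} \sqrt{d})$, which is at most $0.001 \alpha^2 n^2$ exactly when $n \gtrsim \kappa^2 \eps^3 d/\alpha^4$, matching the assumption. The second error term $O(\kappa \eps^2 n^2)$ is $\ll \alpha^2 n^2$ since $\alpha \geq \kappa^5 \eps$ implies $\eps^2/\alpha^2 \leq \kappa^{-10}$.

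For (i) and (iii), I would exploit obliviousness: after conditioning on the set $B$ and on the values $\{X_i\}_{i \in B}$ (equivalently, on $\mathbf{R}$), the good samples $\{X_i\}_{i \in G}$ remain i.i.d.\ $\cN(0,I)$, so $\mathbf{T} \sim \cN(0, (1-\eps)n \cdot I)$, independent of $\mathbf{R}$. Then $\|\mathbf{T}\|_2^2 = (1-\eps)n \cdot \|Z\|_2^2$ for a standard Gaussian $Z \in \R^d$, and \Cref{prop:chi-square-concentration} yields $\|\mathbf{T}\|_2^2 = (1-\eps)nd \pm \tilde{O}(n\sqrt{d})$ with very high probability, which is $\ll \alpha^2 n^2$ because $n \gtrsim \kappa^5 \sqrt{d}/\alpha^2$. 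For the cross term, conditional on $\mathbf{R}$ we have $\langle \mathbf{T}, \mathbf{R}\rangle \sim \cN(0, (1-\eps)n \|\mathbf{R}\|_2^2)$, so by standard Gaussian tail bounds $|\langle \mathbf{T}, \mathbf{R}\rangle| \leq \tilde{O}(\sqrt{n} \cdot \|\mathbf{R}\|_2) = \tilde{O}(\sqrt{n \cdot \eps n d}) = \tilde{O}(n\sqrt{\eps d})$, which is dominated by the error already appearing in (i) scaled by $\sqrt{\eps}$, hence also $\ll \alpha^2 n^2$.

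There is no real obstacle beyond bookkeeping; the main subtlety is to justify that in the oblivious model one may truly condition on $\mathbf{R}$ and treat the good samples as independent fresh Gaussians. This is where obliviousness is used essentially—against an adaptive adversary the cross term $\langle \mathbf{T}, \mathbf{R}\rangle$ could be as large as $\|\mathbf{T}\|_2 \cdot \|\mathbf{R}\|_2 \approx n \sqrt{\eps d}$ without the $\sqrt{n}$ savings, which is exactly the source of the separation. Putting the three bounds together and absorbing all $\kappa$ and polylog factors into the assumption $n \geq \kappa^5(\sqrt{d}/\alpha^2 + d\eps^3/\alpha^4)$ completes the proof.
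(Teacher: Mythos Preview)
Your proposal is correct and follows essentially the same route as the paper: decompose $\|\mathbf{S}\|_2^2$ into $\|\mathbf{T}\|_2^2$, $\|\mathbf{R}\|_2^2$, and the cross term, bound $\|\mathbf{R}\|_2^2$ via \Cref{prop:norm-sum}, bound $\|\mathbf{T}\|_2^2$ by chi-square concentration, and bound $\langle \mathbf{T},\mathbf{R}\rangle$ using obliviousness (independence of $\mathbf{T}$ from $\mathbf{R}$), then check each error term against the hypotheses on $n$ and $\alpha$. The only cosmetic difference is that the paper phrases the cross-term bound as $|c|\le O(\kappa/\sqrt d)\,\|\mathbf T\|_2\|\mathbf R\|_2$ rather than via the conditional Gaussian law, which amounts to the same estimate.
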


\begin{proof}
    We write
\[\left\|\sum_{i \in [n]} X_i\right\|_2^2 = \underbrace{\left\|\sum_{i \in G} X_i\right\|_2^2}_{a} 
 + \underbrace{\left\|\sum_{i \in B} X_i\right\|_2^2}_{b} + 2 \cdot \underbrace{\left\langle \sum_{i \in G} X_i, \sum_{i \in B} X_i \right\rangle}_{c}.\]
    Using standard concentration, we can write $a = (1-\eps) nd \pm O(\kappa n \sqrt{d})$ with high probability, and using \Cref{prop:norm-sum}, we can write $b = \eps n d \pm \kappa \cdot \eps n \cdot O(\sqrt{\eps n d}+\eps n)$. Finally, we know that the samples $X_i$ for $i \in G$ are drawn independently, from the samples in $B$, which means that with very high probability, $|c| \le O\left(\frac{\kappa}{\sqrt{d}}\right) \cdot \|\sum_{i \in G} X_i\|_2 \cdot \|\sum_{i \in B} X_i\|_2$. We can bound this as $O\left(\frac{\kappa}{\sqrt{d}} \cdot \sqrt{\kappa nd} \cdot \sqrt{\kappa}(\sqrt{\eps n d}+\eps n)\right) = \kappa^2 (\sqrt{\eps n^2 d} + \eps n^{3/2})$, by \Cref{prop:norm-sum}.

    Overall, we have that 
\[\left\|\sum_{i \in [n]} X_i\right\|_2^2 = n d \pm \kappa^2 \cdot O\left(n \sqrt{d} + (\eps n)^{3/2} \sqrt{d} + \eps^2 n^2 + \sqrt{\eps n^2 d} + \eps n^{3/2} \right).\]
    But note that $\sqrt{\eps n^2 d} < n \sqrt{d}$, so the only relevant error terms are the other ones, $n \sqrt{d}$, $(\eps n)^{3/2} \sqrt{d}$, $\eps^2 n^2$, and $\eps n^{3/2}$. By assuming that $n \ge \kappa^5 \cdot \left(\frac{\sqrt{d}}{\alpha^2}+\frac{d \eps^3}{\alpha^4}\right)$,
we have that the first two error terms $O(\kappa^2 n \sqrt{d})$ and $O(\kappa^2 (\eps n)^{3/2} d)$ are each bounded by $0.001 \alpha^2 n^2$. The third term $O(\kappa^2 \cdot \eps^2 n^2)$ is at most $0.001 \alpha^2 n^2,$ assuming that $\alpha \ge \kappa^5 \cdot \eps$. The final term $O(\kappa^2 \cdot \eps n^{3/2})$ is at most $0.001 \alpha^2 n^2$ assuming that $n \ge \kappa^5 \cdot \frac{\eps^2}{\alpha^4},$ which is true if $n \ge \kappa^5 \cdot \frac{\sqrt{d}}{\alpha^2}$ and $\eps \le \alpha$. Hence, we have that $\left\|\sum_{i \in [n]} X_i\right\|_2^2 = n d \pm 0.01 \alpha^2 n^2.$
\end{proof}

\subsection{The Null Case: Variance}
\label{sec:oblivious-tester-null-variance}

In this subsection, we verify that, under the null hypothesis and \Cref{assumption}, with high probability \autoref{alg:oblivious-tester} does not reject on the second step, assuming sufficiently many samples. Hence, \autoref{alg:oblivious-tester} accepts. In this and the next subsection, we may additionally assume that $\eps n \le d$ and $n \le \frac{d}{\alpha^2}.$ More formally, we make the following assumption in this subsection.

\begin{assumption} \label{assumption:general}
    We make \Cref{assumption}. In addition, we assume that $n \ge \kappa^5 \cdot \left(\frac{\sqrt{d}}{\alpha^2} + \frac{d \eps^3}{\alpha^4} + \frac{d^{2/3} \eps^{2/3}}{\alpha^{8/3}}\right)$ and $\alpha \ge \kappa^5 \cdot \eps$. Finally, we also assume that $\eps n \le d$ and $n \le \frac{d}{\alpha^2}.$
\end{assumption}

We recall that $\mathbf{R} := \sum_{i \in B} X_i$ and $\mathbf{T} = \sum_{i \in G} X_i$. Also, recall that $\mathbf{S} = \sum_{i \in [n]} X_i = \mathbf{R}+\mathbf{T}$, and $v = \mathbf{S}/\|\mathbf{S}\|_2$.
We wish to provide an upper bound for
\[\frac{1}{n} \sum_{i \in [n]} \left(\frac{\langle X_i, \mathbf{S}\rangle - d}{\|\mathbf{S}\|_2}\right)^2 = \frac{1}{n} \sum_{i \in [n]} \left(\frac{\langle X_i, \mathbf{R}\rangle + \langle X_i, \mathbf{T}\rangle - d}{\|\mathbf{S}\|_2}\right)^2.\]

First, we consider the bad terms, i.e., we bound
\[ \frac{1}{n} \sum_{i \in B} \left(\frac{\langle X_i, \mathbf{R}\rangle + \langle X_i, \mathbf{T}\rangle - d}{\|\mathbf{S}\|_2}\right)^2= \frac{1}{n \cdot \|\mathbf{S}\|_2^2} \sum_{i \in B} \left[\left(\langle X_i, \mathbf{R} \rangle - d\right)^2 + 2 \left(\langle X_i, \mathbf{R} \rangle - d\right) \cdot \langle X_i, \mathbf{T} \rangle + \langle X_i, \mathbf{T} \rangle^2 \right]. \]

Using the results from \Cref{subsec:consequences-of-assumption}, we prove the following bound.

\begin{lemma} \label{lem:null-bad-bound}
    Assume the null hypothesis. Then, under \Cref{assumption:general}, $\frac{1}{n} \sum_{i \in B} \left(\frac{\langle X_i, \mathbf{S}\rangle - d}{\|\mathbf{S}\|_2}\right)^2 \le \eps + 0.01 \cdot \frac{\alpha^4}{\eps} \cdot \frac{n}{d}$ with high probability.
\end{lemma}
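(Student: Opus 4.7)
The plan is to expand the sum of squared deviations into three pieces and bound each separately, relying on the consequences of \Cref{assumption} established in \Cref{subsec:consequences-of-assumption}. Writing $\langle X_i, \mathbf{S}\rangle - d = (\langle X_i, \mathbf{R}\rangle - d) + \langle X_i, \mathbf{T}\rangle$, squaring, and summing over $i \in B$ gives $\mathrm{I} + 2\mathrm{II} + \mathrm{III}$, with
\[
\mathrm{I} := \sum_{i \in B}(\langle X_i, \mathbf{R}\rangle - d)^2,\quad \mathrm{II} := \sum_{i \in B}(\langle X_i, \mathbf{R}\rangle - d)\langle X_i, \mathbf{T}\rangle,\quad \mathrm{III} := \sum_{i \in B}\langle X_i, \mathbf{T}\rangle^2.
\]
Crucially, under the null hypothesis and in the oblivious model, $\mathbf{T}$ is a $\cN(0,(1-\eps)n I)$ vector that is independent of every bad sample $(X_i)_{i \in B}$, and in particular independent of $\mathbf{R}$ and of $M := \sum_{i \in B} X_i X_i^\top$.

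Term $\mathrm{I}$ is bounded directly by \Cref{prop:null-bad-R-variance}: $\mathrm{I} \le O(\kappa^3 \eps^2 n^2 d)$. For $\mathrm{II}$, I would write it as $\langle w, \mathbf{T}\rangle$ with $w := \sum_{i \in B}(\langle X_i, \mathbf{R}\rangle - d) X_i$; \Cref{prop:bad-cross-terms} gives $\|w\|_2 \le O(\kappa^2 \eps n d)$, and since $\mathbf{T}$ is independent of $w$, $\langle w, \mathbf{T}\rangle$ is a centered Gaussian with variance $(1-\eps)n\|w\|_2^2$, so $|\mathrm{II}| \le \tilde{O}(\sqrt{n}\,\|w\|_2) \le \tilde{O}(\kappa^2 \eps n^{3/2} d)$ with high probability. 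For $\mathrm{III} = \mathbf{T}^\top M \mathbf{T}$, I would apply Hanson--Wright (\Cref{lem:hanson-wright}) to $Z := \mathbf{T}/\sqrt{(1-\eps)n} \sim \cN(0,I)$ and combine it with \Cref{prop:frobenius_bound,prop:operator_bound} to obtain
\[
\mathrm{III} = (1-\eps)\,n\cdot\Tr(M) \pm \tilde{O}\!\left(n\|M\|_F\right) = (1-\eps)\eps n^2 d \pm \tilde{O}\!\left(\kappa \eps n^2 \sqrt{d} + \kappa n d \sqrt{\eps n}\right),
\]
where \Cref{assumption:norm} is used to evaluate $\Tr(M) = \sum_{i \in B}\|X_i\|_2^2 = \eps n d \pm O(\kappa \eps n \sqrt{d})$.

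To finish I would divide by $n\|\mathbf{S}\|_2^2$. By \Cref{lem:null-mean} and the hypothesis $n \le d/\alpha^2$, $\|\mathbf{S}\|_2^2 = (1 \pm O(\alpha^2 n/d))\cdot n d$, so the leading contribution from $\mathrm{III}$ is at most $(1-\eps)\eps + O(\alpha^2 \eps n/d) \le \eps + O(\alpha^2 \eps n/d)$, and the slack term fits inside the budget $0.005\,\alpha^4 n/(\eps d)$ exactly because $\eps \le \alpha/\kappa^5$. A short calculation shows each remaining error term---$O(\kappa^3 \eps^2)$ from $\mathrm{I}$, $\tilde{O}(\kappa^2 \eps/\sqrt{n})$ from $\mathrm{II}$, and $\tilde{O}(\kappa \eps/\sqrt{d})$ together with $\tilde{O}(\kappa\sqrt{\eps/n})$ from $\mathrm{III}$---is below $0.001\,\alpha^4 n/(\eps d)$ under \Cref{assumption:general}; indeed, the three summands $\sqrt d/\alpha^2$, $d\eps^3/\alpha^4$, and $d^{2/3}\eps^{2/3}/\alpha^{8/3}$ in the sample lower bound were tuned precisely to kill these inverse dependencies (using $\eps^{2/3}\ge \eps \ge \eps^{4/3}$ to compare exponents). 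The main subtlety is that the $\eps$ on the right-hand side comes from the leading part of $\mathrm{III}$, and ensuring the $(1 \pm O(\alpha^2 n/d))$ multiplicative slack in $1/\|\mathbf{S}\|_2^2$ does not push the bound above $\eps$ is where the hypothesis $\alpha \gg \eps$ really bites.
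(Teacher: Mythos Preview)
Your proposal is correct and follows essentially the same approach as the paper's proof: the same decomposition $\mathrm{I}+2\mathrm{II}+\mathrm{III}$, the same use of \Cref{prop:null-bad-R-variance} for $\mathrm{I}$, \Cref{prop:bad-cross-terms} together with independence of $\mathbf{T}$ for $\mathrm{II}$, Hanson--Wright with \Cref{prop:frobenius_bound} for $\mathrm{III}$, and finally division by $n\|\mathbf{S}\|_2^2$ via \Cref{lem:null-mean}. The only cosmetic differences are that the paper does not invoke \Cref{prop:operator_bound} (the Frobenius bound alone suffices for the high-probability Hanson--Wright conclusion) and that the paper immediately consolidates the two error terms from $\mathrm{III}$ into $O(\kappa^2 \eps^{1/2} n^{3/2} d)$ using $\eps n \le d$, whereas you track them separately.
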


\begin{proof}
    First, note that 
\begin{equation} \label{eq:null-bad-r}
    \sum_{i \in B} (\langle X_i, \mathbf{R} \rangle - d)^2 = O(\kappa^3 \cdot \eps^2 n^2 d)
\end{equation}
    by \Cref{prop:null-bad-R-variance}. Next, we can write $\sum_{i \in B} (\langle X_i, \mathbf{R} \rangle - d) \langle X_i, \mathbf{T} \rangle = \langle \mathbf{T}, \sum_{i \in B} (\langle X_i, \mathbf{R} \rangle - d) X_i \rangle$. However, by \Cref{prop:bad-cross-terms}, $\|\sum_{i \in B} (\langle X_i, \mathbf{R} \rangle - d) X_i\|_2 \le O(\kappa^2 \cdot \eps n d)$ and $\sum_{i \in B} (\langle X_i, \mathbf{R} \rangle - d) X_i$ is independent of $\mathbf{T}$, which is drawn as $\cN(0, (1-\eps) n I)$. Therefore, with high probability, 
\begin{equation} \label{eq:null-bad-rt}
   \left| \sum_{i \in B} (\langle X_i, \mathbf{R} \rangle - d) \langle X_i, \mathbf{T} \rangle\right| = \left|\Big\langle \mathbf{T}, \sum_{i \in B} (\langle X_i, \mathbf{R} \rangle - d) X_i \Big\rangle\right| \le O(\kappa \sqrt{n}) \cdot O(\kappa^2 \cdot \eps n d)
    = O(\kappa^3 \cdot \eps n^{3/2} d).
\end{equation}

    Finally, we can write $\sum_{i \in B} \langle X_i, \mathbf{T} \rangle^2 = \mathbf{T}^\top \left(\sum_{i \in B} X_i X_i^\top\right) \mathbf{T} = (1-\eps) n \cdot Z^\top \left(\sum_{i \in B} X_i X_i^\top\right) Z$, where $Z$ is a standard Gaussian independent of $\{X_i\}_{i \in B}$. We can then use the Hanson-Wright inequality (\Cref{lem:hanson-wright}) to say that with high probability, $\left|Z^\top \left(\sum_{i \in B} X_i X_i^\top\right) Z - \Tr(\sum_{i \in B} X_i X_i^\top)\right| \le O\left(\kappa \cdot \|\sum_{i \in B} X_i X_i^\top\|_F\right).$ We can write $\Tr(\sum_{i \in B} X_i X_i^\top) = \sum_{i \in B} \|X_i\|_2^2 = \eps n d \pm O(\kappa \cdot \eps n \sqrt{d})$, by \Cref{assumption:norm} of \Cref{assumption}. So, by using \Cref{prop:frobenius_bound}, we have that 
\begin{equation} \label{eq:null-bad-t}
    \sum_{i \in B} \langle X_i, \mathbf{T} \rangle^2 
    = (1-\eps) n \cdot (\eps n d \pm O(\kappa \cdot \eps n \sqrt{d} + \kappa^2 \cdot d \sqrt{\eps n}))
    \le \eps n^2 d + O(\kappa^2 \cdot \eps^{1/2} n^{3/2} d),
\end{equation}
    since we are assuming $\eps n \le d$.
    By combining Equations \eqref{eq:null-bad-r}, \eqref{eq:null-bad-rt}, and \eqref{eq:null-bad-t}, we have that
\begin{align*}
    \sum_{i \in B} (\langle X_i, \mathbf{S} \rangle - d)^2
    &= \sum_{i \in B} \left[\left(\langle X_i, \mathbf{R} \rangle - d\right)^2 + 2 \left(\langle X_i, \mathbf{R} \rangle - d\right) \cdot \langle X_i, \mathbf{T} \rangle + \langle X_i, \mathbf{T} \rangle^2 \right] \\
    &\le \eps n^2 d + \kappa^3 \cdot O(\eps^2 n^2 d + \eps^{1/2} n^{3/2} d).
\end{align*}
    As we are assuming that $n \ge \kappa^5 \cdot \frac{d\eps^3}{\alpha^4}$,
    this implies that $O(\kappa^3 \cdot \eps^2 n^2 d) \le 0.001 \cdot \frac{\alpha^4 n^3}{\eps}$.
    Moreover, we are assuming that $n \ge \kappa^5 \cdot \frac{d^{2/3} \eps^{2/3}}{\alpha^{8/3}} \ge \kappa^5 \cdot \frac{d^{2/3} \eps}{\alpha^{8/3}}$, which implies that $O(\kappa^3 \cdot \eps^{1/2} n^{3/2} d) \le 0.001 \cdot \frac{\alpha^4 n^3}{\eps}.$
In summary,
\begin{equation} \label{eq:null-bad-numerator}
    \sum_{i \in B} (\langle X_i, \mathbf{S} \rangle - d)^2 \le \eps n^2 d + 0.002 \cdot \frac{\alpha^4}{\eps} \cdot n^3.
\end{equation}

    Next, we note that $\|\mathbf{S}\|_2^2 = nd \pm 0.01 \alpha^2 n^2 = nd \cdot \left(1 \pm 0.01 \frac{\alpha^2 n}{d}\right),$ using \Cref{lem:null-mean}.
As a result, as we are assuming that $n \le \frac{d}{\alpha^2},$ 
then the reciprocal of $1 \pm 0.01 \frac{\alpha^2 n}{d}$ is in the range $1 \pm 0.02 \frac{\alpha^2 n}{d}$. Therefore, by \eqref{eq:null-bad-numerator},
\begin{align*}
    \frac{1}{n} \sum_{i \in B} \left(\frac{\langle X_i, \mathbf{S}\rangle - d}{\|\mathbf{S}\|_2}\right)^2 &\le \frac{1}{n^2 d} \cdot \left(1 + 0.02 \frac{\alpha^2 n}{d}\right) \cdot \left(\eps n^2 d + 0.002 \cdot \frac{\alpha^4}{\eps} \cdot n^3\right) \\
    &= \left(1 + 0.02 \frac{\alpha^2 n}{d}\right) \cdot \eps \cdot \left(1 + 0.002 \cdot \frac{\alpha^4}{\eps^2} \cdot \frac{n}{d}\right) \\
    &\le \eps \cdot \left(1 + 0.01 \cdot \frac{\alpha^4}{\eps^2} \cdot \frac{n}{d}\right) \\
    &= \eps + 0.01 \cdot \frac{\alpha^4}{\eps} \cdot \frac{n}{d},
\end{align*}
    where the penultimate line uses the fact that $\frac{\alpha^2 n}{d} < \min\left(1, 0.1 \frac{\alpha^4}{\eps^2} \cdot \frac{n}{d}\right)$ since we are assuming that $n \le \frac{d}{\alpha^2}$ and $\alpha \ge 10 \eps$.
\end{proof}

Next, we deal with the sum over good points.

\begin{lemma} \label{lem:null-good-variance}
    Assume the null hypothesis. Then, under \Cref{assumption:general}, with high probability,
\[\frac{1}{n} \sum_{i \in G} \left(\frac{\langle X_i, \mathbf{S} \rangle - d}{\|\mathbf{S}\|_2}\right)^2 \le 1-\eps + 0.01 \frac{\alpha^4}{\eps} \cdot \frac{n}{d}.\]
\end{lemma}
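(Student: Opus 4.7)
The plan is to exploit the sufficient-statistic decomposition (\Cref{prop:sufficient_statistic}) to separate the contribution of the empirical mean of the good samples from the orthogonal fluctuations, which will produce a clean chi-squared random variable. Specifically, write each good sample as $X_i = \bar{X} + (Z_i - \bar{Z})$, where $\bar{X} = \mathbf{T}/|G|$ and the $Z_i$ are i.i.d.\ $\cN(0,I)$ independent of $\bar{X}$. With this notation, for each $i \in G$,
\[
\langle X_i, \mathbf{S} \rangle - d = \underbrace{\bigl(\langle \bar{X}, \mathbf{S} \rangle - d\bigr)}_{A} + \underbrace{\langle Z_i - \bar{Z}, \mathbf{S} \rangle}_{W_i}.
\]
Since $\sum_{i \in G}(Z_i - \bar Z) = 0$, the cross term vanishes and
$\sum_{i \in G}(\langle X_i, \mathbf{S}\rangle - d)^2 = |G|\, A^2 + \sum_{i \in G} W_i^2$.

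To bound $|G|\,A^2$, I would write $A = \|\mathbf{T}\|_2^2/|G| + \langle \mathbf{T}, \mathbf{R}\rangle/|G| - d$. Chi-squared concentration (\Cref{prop:chi-square-concentration}) gives $\|\mathbf{T}\|_2^2/|G| = d \pm \tilde{O}(\sqrt{d})$ w.h.p. Since $\mathbf{R}$ is independent of the good samples by obliviousness, Gaussian tail bounds combined with $\|\mathbf{R}\|_2 = O(\sqrt{\kappa\,\eps n d})$ (from \Cref{prop:norm-sum}, using $\eps n \le d$) yield $|\langle \mathbf{T}, \mathbf{R}\rangle|/|G| \le \tilde{O}(\sqrt{\kappa\,\eps d})$. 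Hence $A^2 = \tilde{O}(d)$ and $|G|\,A^2 \le \tilde{O}(nd)$. Dividing by $n\,\|\mathbf{S}\|_2^2 \asymp n^2 d$ (using \Cref{lem:null-mean} together with $n \le d/\alpha^2$), this term contributes at most $\tilde{O}(1/n)$, which is safely below $0.01\,\alpha^4 n/(\eps d)$ whenever $n \ge \sqrt{\eps d}/\alpha^2$, as assured by $n \ge \kappa^5\sqrt{d}/\alpha^2$.

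The key step is the treatment of $\sum_{i \in G} W_i^2$. Because $\mathbf{R}$ is independent of the good samples (obliviousness) and $\bar{X}$ is independent of $\{Z_i - \bar Z\}_{i \in G}$ (\Cref{prop:sufficient_statistic}), the entire vector $\mathbf{S} = \mathbf{R} + |G|\bar{X}$ is independent of $\{Z_i - \bar Z\}_{i \in G}$. Conditioning on $\mathbf{S}$ and setting $e = \mathbf{S}/\|\mathbf{S}\|_2$, the scalars $\langle Z_i, e\rangle$ are i.i.d.\ $\cN(0,1)$, so
\[
\sum_{i \in G} W_i^2 \;=\; \|\mathbf{S}\|_2^2 \cdot \sum_{i \in G}\bigl(\langle Z_i, e\rangle - \langle \bar Z, e\rangle\bigr)^2,
\]
and the right-hand factor has the $\chi^2_{|G|-1}$ distribution. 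By \Cref{prop:chi-square-concentration} it equals $(|G|-1) \pm \tilde{O}(\sqrt{n})$ w.h.p., so dividing by $n\,\|\mathbf{S}\|_2^2$ contributes $(1-\eps) - 1/n \pm \tilde{O}(1/\sqrt{n})$.

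Combining the two bounds, the claimed inequality reduces to $\tilde{O}(1/\sqrt{n}) \le 0.01\,\alpha^4 n/(\eps d)$, equivalently $n \ge \tilde{\Omega}(d^{2/3}\eps^{2/3}/\alpha^{8/3})$, which is precisely the sample lower bound in \Cref{assumption:general}. The main subtlety is the independence structure: $\mathbf{S}$ contains $\bar{X}$, yet conditioning on $\mathbf{S}$ leaves the orthogonal components $\{Z_i - \bar Z\}$ with their full $(|G|-1)$-dimensional Gaussian distribution intact, which is exactly what lets us invoke sharp chi-squared concentration rather than a weaker Hanson--Wright-type bound. Once this is set up carefully, the remainder is routine Gaussian concentration.
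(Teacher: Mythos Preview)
Your proposal is correct and follows essentially the same approach as the paper: both use \Cref{prop:sufficient_statistic} to decompose $\langle X_i,\mathbf{S}\rangle-d$ into a deterministic shift $A=\langle \mathbf{T},\mathbf{S}\rangle/|G|-d$ plus a term $\|\mathbf{S}\|_2(z_i-\bar z)$ whose squared sum is a $\chi^2_{|G|-1}$ random variable, bound $A=\pm\tilde O(\sqrt{d})$ via chi-squared concentration and obliviousness, and conclude with the $n\ge \tilde\Omega(d^{2/3}\eps^{2/3}/\alpha^{8/3})$ assumption. The only difference is presentational: the paper conditions on $(\mathbf{R},\mathbf{T})$ and writes the posterior directly, while you phrase it as independence of $\mathbf{S}$ from $\{Z_i-\bar Z\}$, but these are the same argument.
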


\begin{proof}
    Recall that $\mathbf{S} = \mathbf{R}+\mathbf{T}$, and suppose $\mathbf{R}, \mathbf{T}$ are fixed. Then, by~\Cref{prop:sufficient_statistic}, the posterior distribution of $\{X_i\}_{i \in G}$ conditioned on $\mathbf{R}$ and $\mathbf{T}$ is $\left\{\frac{\mathbf{T}}{(1-\eps) n} + Y_i-\bar{Y}\right\}$, where $\{Y_i\}_{i \in G}$ are i.i.d. $\cN(0, I)$, independent of $(\mathbf{R}, \mathbf{T})$, and $\bar{Y} = \frac{1}{(1-\eps) n} \sum_{i \in G} Y_i$.
    As a result, the posterior distribution of $\{\langle X_i, \mathbf{S} \rangle - d\}_{i \in G}$ is $\left\{\frac{\langle \mathbf{T}, \mathbf{S} \rangle}{(1-\eps) n} - d + \|\mathbf{S}\|_2 \cdot (z_i-\bar{z}) \right\},$ where $\{z_i\}_{i \in G}$ are i.i.d. univariate $\cN(0, 1)$, and $\bar{z}=\frac{1}{(1-\eps) n} \sum_{i \in G} z_i$.
    Hence, we can rewrite the desired sum over good points as
\[\frac{1}{n} \sum_{i \in G} \left(\frac{\frac{\langle \mathbf{T}, \mathbf{S} \rangle}{(1-\eps n)} - d}{\|\mathbf{S}\|_2} + (z_i-\bar{z})\right)^2.\]

    Now, note that $\langle \mathbf{T}, \mathbf{S} \rangle = \|\mathbf{T}\|_2^2 + \langle \mathbf{T}, \mathbf{R} \rangle$. Since $\mathbf{T} \sim \cN(0, (1-\eps) n I)$ is independent of $\mathbf{R}$, and $\|\mathbf{R}\|_2 \le O(\kappa \sqrt{\eps n d})$ by \Cref{prop:norm-sum} and the assumption that $\eps n \le d$, we have that $|\langle \mathbf{T}, \mathbf{R} \rangle| \le O(\kappa^2 \sqrt{\eps n^2 d})$ with high probability. In addition, $\|\mathbf{T}\|_2^2 = (1-\eps) n d \pm O(\kappa \cdot n \sqrt{d})$ with high probability, as $\mathbf{T}$ is the sum of the uncorrupted samples. In sum, $\langle \mathbf{T}, \mathbf{S} \rangle = (1-\eps) nd \pm O(\kappa^2 \cdot n \sqrt{d})$. Therefore, $\left|\frac{\langle \mathbf{T}, \mathbf{S} \rangle}{(1-\eps) n} - d\right| \le O(\kappa^2 \sqrt{d})$. Since $\|\mathbf{S}\|_2^2 = nd \left(1 \pm 0.01 \frac{\alpha^2 n}{d}\right) = \Theta(nd)$ as we are assuming $n \le \frac{d}{\alpha^2},$ 
this means $\left(\frac{\langle \mathbf{T}, \mathbf{S} \rangle}{(1-\eps) n} - d\right)/\|\mathbf{S}\|_2 = \pm O\left(\frac{\kappa^2}{\sqrt{n}}\right).$
    
Next, defining $\tilde{z}_i := z_i - \bar{z}$, we have $\sum_{i \in G} \tilde{z}_i^2 = \sum_{i \in G} z_i^2 - (1-\eps) n \cdot \bar{z}^2$. Clearly, $\bar{z} \sim \cN(0, \frac{1}{(1-\eps) n})$, so $|\bar{z}| \le \kappa/\sqrt{n}$ with high probability. Thus, $\sum_{i \in G} \tilde{z}_i^2 = \sum_{i \in G} z_i^2 - (1-\eps) n \cdot \bar{z}^2 = (1-\eps) n \pm O(\kappa \sqrt{n} + \kappa^2) = (1-\eps) n \pm O(\kappa^2 \sqrt{n}),$ by \Cref{prop:chi-square-concentration}. Hence, because the average of $\tilde{z}_i$ over $i \in G$ is $0$,
\begin{align*}
    \frac{1}{n} \sum_{i \in G} \left(\frac{\frac{\langle \mathbf{T}, \mathbf{S} \rangle}{(1-\eps n)} - d}{\|\mathbf{S}\|_2} + \tilde{z}_i\right)^2 
    &= \frac{1}{n} \cdot \sum_{i \in G} \tilde{z}_i^2 + (1-\eps) \cdot \left(\frac{\frac{\langle \mathbf{T}, \mathbf{S} \rangle}{(1-\eps n)} - d}{\|\mathbf{S}\|_2}\right)^2 \\
    &= \frac{1}{n} \cdot \sum_{i \in G} \tilde{z}_i^2 + (1-\eps) \cdot O\left(\frac{\kappa^2}{\sqrt{n}}\right)^2\\
    &= (1-\eps) \pm O\left(\frac{\kappa^4}{\sqrt{n}}\right).
\end{align*}
    As long as $n \ge \kappa^5 \cdot \left(\frac{d^{2/3} \eps^{2/3}}{\alpha^{8/3}}\right)$
the error term is at most $0.01 \frac{\alpha^4}{\eps} \cdot \frac{n}{d}$, which completes the proof.
\end{proof}

By combining Lemmas \ref{lem:null-bad-bound} and \ref{lem:null-good-variance}, we have the following.

\begin{lemma} \label{lem:null-variance}
    Assume the null hypothesis.
Then, under \Cref{assumption:general},
    $\frac{1}{n} \sum_{i=1}^n \left(\frac{\langle X_i, \mathbf{S} \rangle - d}{\|\mathbf{S}\|_2}\right)^2 \le 1 + 0.02 \cdot \frac{\alpha^4 n}{\eps d},$ with high probability.
\end{lemma}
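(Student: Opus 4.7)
The plan is to observe that Lemma \ref{lem:null-variance} is a direct corollary of Lemmas \ref{lem:null-bad-bound} and \ref{lem:null-good-variance}, which have already bounded the contributions of the corrupted and uncorrupted indices separately. Since $[n] = G \sqcup B$, the left-hand side decomposes as
\[
\frac{1}{n} \sum_{i \in [n]} \left(\frac{\langle X_i, \mathbf{S} \rangle - d}{\|\mathbf{S}\|_2}\right)^2
= \frac{1}{n} \sum_{i \in G} \left(\frac{\langle X_i, \mathbf{S} \rangle - d}{\|\mathbf{S}\|_2}\right)^2
+ \frac{1}{n} \sum_{i \in B} \left(\frac{\langle X_i, \mathbf{S} \rangle - d}{\|\mathbf{S}\|_2}\right)^2.
\]

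I would then apply Lemma \ref{lem:null-good-variance} to bound the first term by $(1-\eps) + 0.01 \cdot \frac{\alpha^4}{\eps} \cdot \frac{n}{d}$ and Lemma \ref{lem:null-bad-bound} to bound the second term by $\eps + 0.01 \cdot \frac{\alpha^4}{\eps} \cdot \frac{n}{d}$. Each bound holds with probability $1 - 1/\poly(n,d)$ under Assumption \ref{assumption:general} (and hence under the null hypothesis), so by a union bound both hold simultaneously with high probability. Summing the two bounds cancels the $\pm \eps$ terms and yields exactly $1 + 0.02 \cdot \frac{\alpha^4 n}{\eps d}$, as required.

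There is no real obstacle here, since the hard work has been done in the two preceding lemmas; the only thing to verify is that both conclusions hold together under the same assumption, which is immediate since both invoke only \Cref{assumption:general} together with the null-hypothesis distribution of the uncorrupted samples $\{X_i\}_{i \in G}$, and each individually holds with probability $1 - 1/\poly(n,d)$.
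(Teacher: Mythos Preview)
Your proposal is correct and matches the paper's approach exactly: the paper simply states that Lemma~\ref{lem:null-variance} follows by combining Lemmas~\ref{lem:null-bad-bound} and~\ref{lem:null-good-variance}, which is precisely the decomposition into $G$ and $B$ and summation that you describe.
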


\subsection{The Alternative Case: Variance}
\label{sec:oblivious-tester-alt-variance}
Let $\mu = \alpha \cdot v$, where $v$ is a unit vector.
Recall that $\mathbf{Q} = (1-\eps) n \cdot \alpha v$, $\mathbf{R} = \sum_{i \in B} X_i$, and $\mathbf{T} = \sum_{i \in G} (X_i - \alpha v) = (\sum_{i \in G} X_i) - \mathbf{Q}$. Let $\mathbf{S} = \mathbf{Q}+\mathbf{R}+\mathbf{T} = \sum_{i \in [n]} X_i$. We wish to bound
\[\frac{1}{n} \sum_{i \in [n]} \left(\frac{\langle X_i, \mathbf{S} \rangle - d}{\|\mathbf{S}\|_2}\right)^2 = \frac{1}{n} \sum_{i \in [n]} \left(\frac{\langle X_i, \mathbf{Q}+\mathbf{R} \rangle + \langle X_i, \mathbf{T} \rangle - d}{\|\mathbf{S}\|_2}\right)^2.\]

We can again split $[n]$ into bad and good points. For the bad points $B$, our goal is to bound
\[\frac{1}{n} \cdot \sum_{i \in B} \left(\frac{\langle X_i, \mathbf{S}\rangle - d}{\|\mathbf{S}\|_2}\right)^2 = \frac{1}{n \cdot \|\mathbf{S}\|_2^2} \cdot \sum_{i \in B} \left[\langle X_i, \mathbf{T} \rangle^2 + 2 \langle X_i, \mathbf{T} \rangle \cdot (\langle X_i, \mathbf{Q}+\mathbf{R}\rangle - d) + (\langle X_i, \mathbf{Q}+\mathbf{R}\rangle - d)^2\right].\]

Before doing so, we will consider the relationship between the values $\mathbf{R}, \mathbf{Q}, \mathbf{T}$. Note that $\mathbf{T}$ is independent of both $\mathbf{R}$ and $\mathbf{Q}$, whereas $\mathbf{R}$ may depend on $\mathbf{Q}$.

\begin{proposition} \label{prop:q-r-basic}
    Suppose that $X_1, \dots, X_n$ satisfy \Cref{assumption}, that $\eps n \le d$, and \autoref{alg:oblivious-tester} does not reject in Line 3. Then, $\|\mathbf{R}\|_2^2 = \eps n d \pm O(\kappa \cdot (\eps n)^{3/2} \sqrt{d})$, and $\|\mathbf{Q}+\mathbf{R}\|_2^2 = \eps n d \pm 0.01 \alpha^2 n^2 \pm \kappa^2 \cdot O\left(n \sqrt{d} + \alpha n^{3/2}\right)$.
\end{proposition}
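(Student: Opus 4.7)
The proof plan is to handle the two estimates in turn, exploiting the identity $\mathbf{S} = \mathbf{Q}+\mathbf{R}+\mathbf{T}$, the (oblivious-model) independence of $\mathbf{T}$ from both $\mathbf{Q}$ and $\mathbf{R}$, and the fact that $\mathbf{T}\sim \cN(0,(1-\eps)nI)$ by construction.

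First I would obtain the bound on $\|\mathbf{R}\|_2^2$ by a direct invocation of \Cref{prop:norm-sum} with the set $B$ of size $k=\eps n$. That proposition gives $\|\mathbf{R}\|_2^2 = \eps nd \pm O(\kappa)\cdot \eps n\cdot(\sqrt{\eps n d}+\eps n)$, and the assumption $\eps n\le d$ collapses the $\eps n$ term into the $\sqrt{\eps nd}$ term, leaving the advertised error $O(\kappa(\eps n)^{3/2}\sqrt{d})$.

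For the second bound I would write $\mathbf{Q}+\mathbf{R} = \mathbf{S}-\mathbf{T}$, expand
\[
\|\mathbf{Q}+\mathbf{R}\|_2^2 = \|\mathbf{S}\|_2^2 - 2\langle\mathbf{S},\mathbf{T}\rangle + \|\mathbf{T}\|_2^2 = \|\mathbf{S}\|_2^2 - 2\langle\mathbf{Q},\mathbf{T}\rangle - 2\langle\mathbf{R},\mathbf{T}\rangle - \|\mathbf{T}\|_2^2,
\]
and control each piece. The term $\|\mathbf{S}\|_2^2$ is $nd \pm 0.01\alpha^2 n^2$ by the hypothesis that the algorithm did not reject at Line~3. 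The term $\|\mathbf{T}\|_2^2$ concentrates as $(1-\eps)nd \pm O(\kappa\cdot n\sqrt{d})$ by standard chi-squared concentration (\Cref{prop:chi-square-concentration}). Since $\mathbf{T}$ is independent of $\mathbf{Q}$ (a fixed vector conditional on $\mu$) and of $\mathbf{R}$ (oblivious model), $\langle\mathbf{T},\mathbf{Q}\rangle$ is a one-dimensional Gaussian with standard deviation $\sqrt{(1-\eps)n}\cdot\|\mathbf{Q}\|_2 = O(\alpha n^{3/2})$, so it is $O(\kappa\alpha n^{3/2})$ with high probability; and $\langle\mathbf{T},\mathbf{R}\rangle$ conditional on $\mathbf{R}$ is Gaussian with standard deviation $\sqrt{(1-\eps)n}\cdot\|\mathbf{R}\|_2 = O(\sqrt{\kappa}\cdot\sqrt{n}\cdot\sqrt{\eps nd})$, so it is $O(\kappa^{3/2}\cdot n\sqrt{\eps d})\le O(\kappa^2 n\sqrt{d})$ with high probability.

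Combining the four estimates gives $\|\mathbf{Q}+\mathbf{R}\|_2^2 = nd - (1-\eps)nd \pm 0.01\alpha^2 n^2 \pm \kappa^2 O(n\sqrt{d}+\alpha n^{3/2})$, which simplifies to $\eps nd \pm 0.01\alpha^2 n^2 \pm \kappa^2 O(n\sqrt{d}+\alpha n^{3/2})$ as claimed. There is no real obstacle here: the proof is essentially bookkeeping once one recognizes that obliviousness makes $\mathbf{T}$ independent of $\mathbf{R}$, so the cross-term $\langle\mathbf{T},\mathbf{R}\rangle$ behaves like the inner product between an independent Gaussian and a fixed vector of known norm. The only subtle point to mention is that the high-probability bound on $\langle\mathbf{T},\mathbf{R}\rangle$ is taken conditional on $\mathbf{R}$ (which is valid since $\|\mathbf{R}\|_2$ is already controlled by the first part of the proposition), and that all the good-event bounds are compatible with the ``with high probability'' conventions used throughout the section.
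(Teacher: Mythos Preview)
Your proof is correct and follows essentially the same approach as the paper: both invoke \Cref{prop:norm-sum} for $\|\mathbf{R}\|_2^2$, then expand $\|\mathbf{S}\|_2^2 = \|\mathbf{Q}+\mathbf{R}\|_2^2 + \|\mathbf{T}\|_2^2 + 2\langle \mathbf{Q}+\mathbf{R},\mathbf{T}\rangle$ and solve for $\|\mathbf{Q}+\mathbf{R}\|_2^2$ using concentration of $\|\mathbf{T}\|_2^2$ and of the cross term. The only cosmetic difference is that the paper bounds $\langle \mathbf{Q}+\mathbf{R},\mathbf{T}\rangle$ as a single Gaussian of standard deviation $\sqrt{(1-\eps)n}\cdot\|\mathbf{Q}+\mathbf{R}\|_2$ and then controls $\|\mathbf{Q}+\mathbf{R}\|_2$ via the triangle inequality $\|\mathbf{Q}\|_2+\|\mathbf{R}\|_2$, whereas you split the inner product into $\langle\mathbf{Q},\mathbf{T}\rangle+\langle\mathbf{R},\mathbf{T}\rangle$ from the outset---these yield the same bound.
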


\begin{proof}
    By \Cref{prop:norm-sum}, we know that $\|\mathbf{R}\|_2^2 = \|\sum_{i \in B} X_i\|_2^2 = \eps n d \pm O(\kappa \cdot (\eps n)^{3/2} \sqrt{d})$. Next, because \autoref{alg:oblivious-tester} did not reject in Line 3, we have that $\|\mathbf{Q}+\mathbf{R}+\mathbf{T}\|_2^2 = \|\sum_{i \in [n]} X_i\|_2^2 = nd \pm 0.01 \alpha^2 n^2$. However, we can write $\|\mathbf{Q}+\mathbf{R}+\mathbf{T}\|_2^2 = \|\mathbf{Q}+\mathbf{R}\|_2^2 + \|\mathbf{T}\|_2^2 + 2 \langle \mathbf{Q}+\mathbf{R}, \mathbf{T} \rangle.$
    Let $A = \|\mathbf{Q}+\mathbf{R}\|_2.$
    Then, with high probability, $\|\mathbf{T}\|_2^2 = (1-\eps) n d \pm O(\kappa n \sqrt{d})$ and $\langle \mathbf{Q}+\mathbf{R}, \mathbf{T} \rangle = \pm O(\kappa A \sqrt{n})$, since $\mathbf{T} \sim \cN(0, (1-\eps) n I)$ is independent of $\mathbf{Q}+\mathbf{R}$.
    This means $nd \pm 0.01 \alpha^2 n^2 = A^2 + (1-\eps) nd \pm O(\kappa n \sqrt{d}) \pm O(\kappa \cdot A \sqrt{n}),$ so $A^2 = \eps n d \pm 0.01 \alpha^2 n^2 \pm O(\kappa n \sqrt{d}) \pm O(\kappa \cdot A \sqrt{n})$.
    Finally, we know that $A \le \|\mathbf{Q}\|_2+\|\mathbf{R}\|_2 \le \alpha n + \sqrt{\kappa \cdot \eps n d}$, which means 
\[\|\mathbf{Q}+\mathbf{R}\|_2^2 = A^2 = \eps n d \pm 0.01 \alpha^2 n^2 \pm O(\kappa^2) \cdot (n \sqrt{d} + \alpha n^{3/2}). \qedhere\]
\end{proof}

Hence, we have the following corollary.

\begin{proposition} \label{prop:q-r-negative-correlation}
    Suppose that $X_1, \dots, X_n$ satisfy \Cref{assumption}, $\eps \le 0.5$, , and \autoref{alg:oblivious-tester} does not reject in Line 3. Then, if $n \ge \kappa^5 \cdot \left(\frac{\sqrt{d}}{\alpha^2}+\frac{d \eps^3}{\alpha^4}\right)$, $\left|\|\mathbf{R}\|_2^2 - \|\mathbf{Q}+\mathbf{R}\|_2^2\right| \le 0.1 \cdot \|\mathbf{Q}\|_2^2$ with high probability.
\end{proposition}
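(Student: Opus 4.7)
The plan is to apply Proposition~\ref{prop:q-r-basic} directly and show that each of the error terms arising in the difference $\|\mathbf{Q}+\mathbf{R}\|_2^2 - \|\mathbf{R}\|_2^2$ is negligible compared to $\|\mathbf{Q}\|_2^2$ under the given sample-size assumption. Note that since $\eps \le 0.5$, we have $\|\mathbf{Q}\|_2^2 = (1-\eps)^2 n^2 \alpha^2 \ge 0.25\,\alpha^2 n^2$, so it suffices to bound $|\|\mathbf{R}\|_2^2 - \|\mathbf{Q}+\mathbf{R}\|_2^2|$ by, say, $0.025\,\alpha^2 n^2$.

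From Proposition~\ref{prop:q-r-basic}, we know both $\|\mathbf{R}\|_2^2$ and $\|\mathbf{Q}+\mathbf{R}\|_2^2$ equal $\eps n d$ up to additive errors. Taking the difference cancels the dominant $\eps n d$ terms, leaving a sum of error contributions:
\[
\bigl| \|\mathbf{R}\|_2^2 - \|\mathbf{Q}+\mathbf{R}\|_2^2 \bigr| \le 0.01\,\alpha^2 n^2 + O(\kappa)\cdot (\eps n)^{3/2}\sqrt{d} + O(\kappa^2)\cdot \bigl( n\sqrt{d} + \alpha n^{3/2} \bigr).
\]
The first term $0.01\,\alpha^2 n^2$ is already within budget. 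For the remaining three, the plan is to check each one against the assumption $n \ge \kappa^5 \bigl(\sqrt{d}/\alpha^2 + d\eps^3/\alpha^4\bigr)$:

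\textbf{Step 1 (the $n\sqrt{d}$ term).} The bound $O(\kappa^2) \cdot n\sqrt{d} \le 0.01\,\alpha^2 n^2$ reduces to $n \ge \Omega(\kappa^2 \sqrt{d}/\alpha^2)$, which follows from $n \ge \kappa^5 \sqrt{d}/\alpha^2$.

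\textbf{Step 2 (the $\eps^{3/2}$ term).} The bound $O(\kappa)\cdot (\eps n)^{3/2}\sqrt{d} \le 0.01\,\alpha^2 n^2$ reduces to $n \ge \Omega(\kappa^2 \eps^3 d/\alpha^4)$, which follows from $n \ge \kappa^5 d\eps^3/\alpha^4$.

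\textbf{Step 3 (the $\alpha n^{3/2}$ term).} The bound $O(\kappa^2)\cdot \alpha n^{3/2} \le 0.01\,\alpha^2 n^2$ reduces to $n \ge \Omega(\kappa^4/\alpha^2)$; since we are assuming $d \ge 1$ and $\kappa \ge 1$, this is implied by $n \ge \kappa^5\sqrt{d}/\alpha^2$.

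Summing these three contributions to the initial $0.01\,\alpha^2 n^2$ gives a total error of at most $0.025\,\alpha^2 n^2 \le 0.1 \cdot \|\mathbf{Q}\|_2^2$, as required.

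There is no real obstacle here: this is essentially a direct bookkeeping argument once Proposition~\ref{prop:q-r-basic} is in hand, with the only care being that the high-probability events underlying Proposition~\ref{prop:q-r-basic} (concentration of $\|\mathbf{T}\|_2^2$ and of $\langle \mathbf{Q}+\mathbf{R},\mathbf{T}\rangle$) must be inherited. Since those bounds hold with probability $1 - 1/\poly(n,d)$ and we have already conditioned on the Line~3 acceptance event, the conclusion holds with high probability.
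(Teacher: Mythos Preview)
Your proposal is correct and follows essentially the same approach as the paper: apply Proposition~\ref{prop:q-r-basic}, subtract to cancel the $\eps n d$ terms, then verify each residual error is $o(\alpha^2 n^2)$ under the sample-size assumption and compare against $\|\mathbf{Q}\|_2^2 \ge 0.25\,\alpha^2 n^2$. The paper's proof is identical up to bookkeeping of the constants.
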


\begin{proof}
    As a direct consequence of \Cref{prop:q-r-basic},
\begin{equation} \label{eq:R-Q+R}
    \left|\|\mathbf{R}\|_2^2 - \|\mathbf{Q}+\mathbf{R}\|_2^2\right| \le 0.01 \alpha^2 n^2 +  \kappa^2 \cdot O\left((\eps n)^{3/2} \sqrt{d} + n \sqrt{d} + \alpha n^{3/2}\right).
\end{equation}
    Assuming that $n \ge \kappa^5 \cdot \left(\frac{\sqrt{d}}{\alpha^2} + \frac{d \eps^3}{\alpha^4}\right)$, each error term is at most $0.01 \alpha^2 n^2$, so \eqref{eq:R-Q+R} is at most $0.05 \alpha^2 n^2$. Since $\|\mathbf{Q}\|_2 = \alpha (1-\eps) n \ge 0.5 \alpha n,$ this means that $\left|\|\mathbf{R}\|_2^2 - \|\mathbf{Q}+\mathbf{R}\|_2^2\right| \le 0.1 \cdot \|\mathbf{Q}\|_2^2$.
\end{proof}

We now turn to bounding the sum for the bad points $B$.

\begin{lemma} \label{lem:alternative-bad-variance}
    Suppose that \autoref{alg:oblivious-tester} does not reject in Line 3, and $\eps \le 0.1$. Then, with high probability under the alternative hypothesis and \Cref{assumption:general},
\[\sum_{i \in B} \left(\langle X_i, \mathbf{S}\rangle - d\right)^2 \ge \eps n^2 d \left(1 + \frac{0.05 \alpha^4 n}{\eps^2 d}\right).\]
\end{lemma}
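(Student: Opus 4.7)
The plan is to decompose $u_i := \langle X_i, \mathbf{S}\rangle - d$ for $i \in B$ as $u_i = a_i + b_i$, where $a_i := \langle X_i, \mathbf{Q}+\mathbf{R}\rangle - d$ is a function of the adversarial samples (and the mean $\mu$), and $b_i := \langle X_i, \mathbf{T}\rangle$ captures the centered Gaussian noise from the good samples, which is independent of the bad ones. Expanding $u_i^2 = a_i^2 + 2 a_i b_i + b_i^2$, I would lower bound $\sum a_i^2$ and $\sum b_i^2$ (giving the two terms in the target) and upper bound the cross term.

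For $\sum_{i\in B} a_i^2$, I would use Cauchy--Schwarz: $\sum a_i^2 \ge (\sum a_i)^2/(\eps n)$. The key computation is $\sum_{i\in B} a_i = \langle \mathbf{R}, \mathbf{Q}+\mathbf{R}\rangle - \eps n d = (\|\mathbf{R}\|_2^2 - \eps n d) + \langle \mathbf{R}, \mathbf{Q}\rangle$. The first summand is negligible by Proposition~\ref{prop:q-r-basic}. For the second, the identity $\|\mathbf{Q}+\mathbf{R}\|_2^2 = \|\mathbf{Q}\|_2^2 + 2\langle \mathbf{Q}, \mathbf{R}\rangle + \|\mathbf{R}\|_2^2$ together with Proposition~\ref{prop:q-r-negative-correlation} (which applies precisely because Line~3 did not reject) forces $\langle \mathbf{R}, \mathbf{Q}\rangle \in [-0.55\|\mathbf{Q}\|_2^2, -0.45\|\mathbf{Q}\|_2^2]$. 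Since $\|\mathbf{Q}\|_2^2 = (1-\eps)^2 \alpha^2 n^2 \ge 0.8\alpha^2 n^2$ for $\eps \le 0.1$, this gives $|\sum a_i| \gtrsim \alpha^2 n^2$ and hence $\sum a_i^2 \gtrsim \alpha^4 n^3/\eps$, which is precisely the ``excess variance'' term.

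Next, $\sum_{i\in B} b_i^2 = \mathbf{T}^\top M' \mathbf{T}$ where $M' := \sum_{i\in B} X_i X_i^\top$. Since $\mathbf{T} \sim \cN(0,(1-\eps)n I)$ is independent of $M'$, the Hanson--Wright inequality (Lemma~\ref{lem:hanson-wright}) gives, with high probability, $\sum b_i^2 = (1-\eps) n\,\operatorname{Tr}(M') \pm \tilde O((1-\eps)n\,\|M'\|_F)$; combining Assumption~\ref{assumption}~\eqref{assumption:norm} (for $\operatorname{Tr}(M') = \eps n d \pm O(\kappa \eps n\sqrt d)$) with Proposition~\ref{prop:frobenius_bound} then yields $\sum b_i^2 \ge \eps n^2 d - \text{error}$. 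For the cross term, $\sum a_i b_i = \langle \mathbf{T}, \sum_{i\in B} a_i X_i\rangle$, and the inner vector can be rewritten as $\sum a_i X_i = M'(\mathbf{Q}+\mathbf{R}) - d\,\mathbf{R}$; using $\|M'\|_{\mathrm{op}} \le O(\kappa^2 d)$ from Proposition~\ref{prop:operator_bound} and the norm bounds of Proposition~\ref{prop:q-r-basic} gives $\|\sum a_i X_i\|_2 \le \tilde O(\kappa^2 d(\alpha n + \sqrt{\eps n d}))$. Gaussian concentration in the direction of this fixed vector then yields $|\sum a_i b_i| \le \tilde O(\sqrt n\,\|\sum a_i X_i\|_2)$ with high probability.

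Combining these bounds gives $\sum u_i^2 \ge \sum a_i^2 + \sum b_i^2 - 2|\sum a_i b_i| \ge \eps n^2 d + 0.09 \alpha^4 n^3/\eps - (\text{errors})$. I expect the main obstacle to be the bookkeeping in the final step: verifying that each of the error contributions (the Hanson--Wright deviation in Step~3, the Gaussian concentration error in the cross term, and the lower-order slack in $\sum a_i$) is dominated by at least one of the two main terms $\eps n^2 d$ or $0.05\,\alpha^4 n^3/\eps$. This requires a short case analysis against the three branches of the sample-size lower bound $n \ge \kappa^5 \cdot(\sqrt d/\alpha^2 + d\eps^3/\alpha^4 + d^{2/3}\eps^{2/3}/\alpha^{8/3})$ in Assumption~\ref{assumption:general}, together with the auxiliary hypotheses $\alpha \ge \kappa^5 \eps$ and $n \le d/\alpha^2$ (the latter allowing us to trade $\alpha^2 n$ against $d$). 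Modulo this verification, the decomposition delivers the claimed inequality.
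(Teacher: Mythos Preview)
Your overall strategy matches the paper's proof almost exactly: the decomposition $u_i = a_i + b_i$, the Cauchy--Schwarz (Jensen) lower bound on $\sum a_i^2$ via $\langle \mathbf{R},\mathbf{Q}+\mathbf{R}\rangle - \eps n d$ together with \Cref{prop:q-r-basic} and \Cref{prop:q-r-negative-correlation}, the Hanson--Wright estimate for $\sum b_i^2$, and Gaussian concentration for the cross term are all as in the paper.

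There is, however, one genuine gap in your cross-term estimate. Bounding $\|\sum_{i\in B} a_i X_i\|_2 = \|M'(\mathbf{Q}+\mathbf{R}) - d\mathbf{R}\|_2$ via the operator norm of $M'$ and the norm bounds on $\mathbf{Q}+\mathbf{R}$ and $\mathbf{R}$ yields a term of order $\kappa^2 d\sqrt{\eps n d}$ (since $\|\mathbf{Q}+\mathbf{R}\|_2,\|\mathbf{R}\|_2 = \Theta(\sqrt{\eps n d})$ here). After Gaussian concentration this contributes $\kappa^3 n d^{3/2}\sqrt{\eps}$ to the cross term, and this is \emph{not} controlled by \Cref{assumption:general}: for instance, with $\alpha$ a constant, $\eps = d^{-1/4}$, and $n = \Theta(\kappa^5\sqrt{d})$ (the minimum allowed), the error is $\Theta(\kappa^8 d^{15/8})$ while the budget $0.01\,\alpha^4 n^3/\eps$ is only $\Theta(\kappa^{15} d^{7/4})$, and $d^{15/8}\gg d^{7/4}$ for large $d$.

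The paper avoids this by splitting differently,
\[
\sum_{i\in B} a_i X_i \;=\; M'\mathbf{Q} \;+\; \sum_{i\in B}\bigl(\langle X_i,\mathbf{R}\rangle - d\bigr)X_i,
\]
and invoking \Cref{prop:bad-cross-terms} for the second piece, which gives $\bigl\|\sum_{i\in B}(\langle X_i,\mathbf{R}\rangle - d)X_i\bigr\|_2 \le O(\kappa^2 \eps n d)$, strictly sharper than the operator-norm estimate since $\eps n \le d$. Together with $\|M'\mathbf{Q}\|_2 \le \|M'\|_{\mathrm{op}}\|\mathbf{Q}\|_2 \le O(\kappa^2 \alpha n d)$ and $\eps\le\alpha$, this yields $\|\sum a_i X_i\|_2 \le O(\kappa^2 \alpha n d)$ and hence a cross-term error $O(\kappa^3 \alpha n^{3/2} d)$, which \emph{is} absorbed by $n \ge \kappa^5 d^{2/3}\eps^{2/3}/\alpha^{8/3}$. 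Once you insert this sharper bound, your final ``bookkeeping'' step goes through exactly as the paper does.
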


\begin{proof}
    We can rewrite the left-hand side of the above expression as 
\[\sum_{i \in B} \left((\langle X_i, \mathbf{Q}+\mathbf{R} \rangle - d)^2 + \langle X_i, \mathbf{T} \rangle^2 + 2 (\langle X_i, \mathbf{Q}+\mathbf{R} \rangle - d) \cdot \langle X_i, \mathbf{T} \rangle\right)\]

    First, we consider $\sum_{i \in B} (\langle X_i, \mathbf{Q}+\mathbf{R} \rangle - d)^2$. 
Since $\sum_{i \in B} X_i = \mathbf{R}$, by Jensen's inequality this is at least $\eps n \cdot \left(\langle \frac{\mathbf{R}}{\eps n}, \mathbf{Q}+\mathbf{R}\rangle - d\right)^2 = \frac{1}{\eps n} \cdot (\langle \mathbf{R}, \mathbf{Q}+\mathbf{R} \rangle - \eps n d)^2$. However, we can write $\langle \mathbf{R}, \mathbf{Q}+\mathbf{R} \rangle = \frac{\|\mathbf{Q}+\mathbf{R}\|_2^2 + \|\mathbf{R}\|_2^2 - \|\mathbf{Q}\|_2^2}{2},$ and since $\|\mathbf{R}\|_2^2 \le \|\mathbf{Q}+\mathbf{R}\|_2^2 + 0.1 \|\mathbf{Q}\|_2^2$ by \Cref{prop:q-r-negative-correlation}, this means that $\langle \mathbf{R}, \mathbf{Q}+\mathbf{R} \rangle \le \|\mathbf{Q}+\mathbf{R}\|_2^2 - 0.45 \|\mathbf{Q}\|_2^2$. 
    By \Cref{prop:q-r-basic}, we have $\|\mathbf{Q}+\mathbf{R}\|_2^2 = \eps n d \pm 0.01 \alpha^2 n^2 \pm O(\kappa^2) \cdot \left(n \sqrt{d} + \alpha n^{3/2}\right)$.
Therefore, since $\|\mathbf{Q}\|_2^2 = \alpha^2 (1-\eps)^2 n^2 \ge 0.8 \alpha^2 n^2$ as $\eps \le 0.1$, this means that 
\[\langle \mathbf{R}, \mathbf{Q}+\mathbf{R} \rangle - \eps n d \le 0.01 \alpha^2 n^2 + O(\kappa^2) \cdot \left(n \sqrt{d} + \alpha n^{3/2}\right) - 0.36 \alpha^2 n^2 = -0.35 \alpha^2 n^2 + O(\kappa^2) \cdot \left(n \sqrt{d} + \alpha n^{3/2}\right).\]
    Therefore,
\begin{align}
    \sum_{i \in B} (\langle X_i, \mathbf{Q}+\mathbf{R} \rangle - d)^2 &\ge \frac{1}{\eps n} \cdot \left(0.35 \alpha^2 n^2 - O(\kappa^2) \cdot (n \sqrt{d} + \alpha n^{3/2})\right)^2 \nonumber \\
    &\ge 0.1 \frac{\alpha^4}{\eps} n^3 - O(\kappa^2) \cdot \left(\frac{\alpha^2}{\eps} \cdot n^2 \sqrt{d} + \frac{\alpha^3}{\eps} n^{5/2}\right) \nonumber \\
    &\ge 0.08 \cdot \frac{\alpha^4}{\eps} n^3. \label{eq:alt-var-bad-Q+R}
\end{align}
    Above, the second inequality follows because $(A-B)^2 \ge A^2-2AB$ for any real $A,B$, and the last inequality follows because the two error terms are each at most $0.01 \frac{\alpha^4}{\eps} n^3$ if $n \ge \kappa^5 \cdot \frac{\sqrt{d}}{\alpha^2}$.

    To bound $\sum_{i \in B} \langle X_i, \mathbf{T} \rangle^2,$ we can write this as $\mathbf{T}^\top \left(\sum_{i \in B} X_i X_i^\top\right) \mathbf{T} = (1-\eps) n \cdot Z^\top \left(\sum_{i \in B} X_i X_i^\top\right) Z$, where $Z \sim \cN(0, I)$ is independent of $\{X_i\}_{i \in B}$. We apply Hanson-Wright (\Cref{lem:hanson-wright}) along with \Cref{prop:frobenius_bound}, to say that with high probability, $\left|Z^\top \left(\sum_{i \in B} X_i X_i^\top\right) Z - \Tr(\sum_{i \in B} X_i X_i^\top)\right| \le O\left(\kappa \cdot \|\sum_{i \in B} X_i X_i^\top\|_F\right) \le O(\kappa^2 \cdot d \sqrt{\eps n}).$ In addition, $\Tr(\sum_{i \in B} X_i X_i^\top) = \sum_{i \in B} \|X_i\|_2^2 = \eps n d \pm \kappa \cdot \eps n \sqrt{d}$. Therefore, since $\eps n \le d$,
\begin{equation} \label{eq:alt-var-bad-t}
    \sum_{i \in B} \langle X_i, \mathbf{T} \rangle^2 = (1-\eps) n \cdot \left(\eps n d \pm O(\kappa^2 \cdot d \sqrt{\eps n} + \kappa \cdot \eps n \sqrt{d})\right) = \eps n^2 d  - \eps^2 n^2 d \pm O\left(\kappa^2 \cdot \eps^{1/2} n^{3/2} d\right).
\end{equation}

    To bound the final term $\sum_{i \in B} (\langle X_i, \mathbf{Q}+\mathbf{R} \rangle - d) \langle X_i, \mathbf{T} \rangle$, we first bound $\left\|\sum_{i \in B} (\langle X_i, \mathbf{Q}+\mathbf{R} \rangle - d) X_i\right\|_2.$ We can use \Cref{prop:bad-cross-terms} to obtain that $\left\| \sum_{i \in B}(\langle X_i, \mathbf{R} \rangle - d) X_i \right\|_2 \le O(\kappa^2 \eps n d) \le O(\kappa^2 \alpha n d),$ as we assumed that $\eps \le \alpha$. Next, to bound $\left\| \sum_{i \in B} \langle X_i, \mathbf{Q} \rangle X_i \right\|_2,$ we can write $ \sum_{i \in B} \langle X_i, \mathbf{Q} \rangle X_i = \left(\sum_{i \in B} X_i X_i^\top\right) \cdot \mathbf{Q},$ which has norm at most $\left\|\sum_{i \in B} X_i X_i^\top\right\|_{op} \cdot \|\mathbf{Q}\|_2 \le O(\kappa^2 \cdot d \cdot \alpha n)$, using \Cref{prop:operator_bound}.
   Therefore, since $\{X_i\}_{i \in B}$ and $\mathbf{Q}$ are independent of $\mathbf{T} \sim \cN(0, (1-\eps) n I)$, with high probability we have that
\begin{align}
    \left|\sum_{i \in B} (\langle X_i, \mathbf{Q}+\mathbf{R} \rangle - d) \cdot \langle X_i, \mathbf{T} \rangle\right| \nonumber
    &= \left|\left\langle \mathbf{T}, \sum_{i \in B} (\langle X_i, \mathbf{R} \rangle - d) X_i + \sum_{i \in B} \langle X_i, \mathbf{Q} \rangle X_i \right\rangle\right| \nonumber \\
    &\le O(\kappa \sqrt{n}) \cdot \left(\left\| \sum_{i \in B} (\langle X_i, \mathbf{R} \rangle - d) X_i \right\|_2 + \left\| \sum_{i \in B} \langle X_i, \mathbf{Q} \rangle X_i \right\|_2\right) \nonumber \\
    &\le O(\kappa^3 \cdot \alpha n^{3/2} d). \label{eq:alt-var-bad-rt}
\end{align}

In summary, by combining Equations \eqref{eq:alt-var-bad-Q+R}, \eqref{eq:alt-var-bad-t}, and \eqref{eq:alt-var-bad-rt}, we have
\begin{align*}
    \sum_{i \in B} (\langle X_i, \mathbf{S} \rangle - d)^2 \ge \frac{0.08 \alpha^4 n^3}{\eps} + \eps n^2 d - O(\kappa^3) \cdot \left(\eps^2 n^2 d + \eps^{1/2} n^{3/2} d + \alpha n^{3/2} d\right).
\end{align*}
    Now, assuming that $n \ge \kappa^5 \cdot \left(\frac{d \eps^3}{\alpha^4} + \frac{d^{2/3} \eps^{2/3}}{\alpha^{8/3}}\right) \ge \kappa^5 \cdot \left(\frac{d \eps^3}{\alpha^4} + \frac{d^{2/3} \eps}{\alpha^{8/3}} + \frac{d^{2/3} \eps^{2/3}}{\alpha^2}\right)$, each of the three error terms is at most $0.01 \frac{\alpha^4}{\eps} n^3$. So overall, 
\[\sum_{i \in B} (\langle X_i, \mathbf{S} \rangle - d)^2 \ge \frac{0.05 \alpha^4 n^3}{\eps} + \eps n^2 d = \eps n^2 d \left(1 + \frac{0.05 \alpha^4 n}{\eps^2 d}\right). \qedhere\]
\end{proof}

\begin{corollary} \label{cor:alternative-bad-variance}
    Suppose that \autoref{alg:oblivious-tester} does not reject in Line 3, and $\eps \le 0.1$. Then, under the alternative hypothesis and \Cref{assumption:general}, with high probability $\frac{1}{n} \cdot \sum_{i \in B} \left(\frac{\langle X_i, \mathbf{S} \rangle - d}{\|\mathbf{S}\|_2}\right)^2 \ge \eps + 0.04 \frac{\alpha^4}{\eps} \cdot \frac{n}{d}$.
\end{corollary}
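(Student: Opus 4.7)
The plan is simply to combine the additive lower bound from \Cref{lem:alternative-bad-variance} with an upper bound on the denominator $\|\mathbf{S}\|_2^2$ coming from the hypothesis that \autoref{alg:oblivious-tester} did not reject in Line 3. Concretely, the rejection condition on Line~3 fails, so $\|\mathbf{S}\|_2^2 \le nd + 0.01\alpha^2 n^2 = nd\bigl(1 + 0.01 \tfrac{\alpha^2 n}{d}\bigr)$. Therefore
\[
\frac{1}{n} \sum_{i \in B} \left(\frac{\langle X_i, \mathbf{S}\rangle - d}{\|\mathbf{S}\|_2}\right)^2 \;\ge\; \frac{\eps n^2 d \bigl(1 + 0.05 \tfrac{\alpha^4 n}{\eps^2 d}\bigr)}{n \cdot nd \bigl(1 + 0.01 \tfrac{\alpha^2 n}{d}\bigr)} \;=\; \eps \cdot \frac{1 + 0.05\, \tfrac{\alpha^4 n}{\eps^2 d}}{1 + 0.01\, \tfrac{\alpha^2 n}{d}}.
\]

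The remaining step is purely arithmetic: show that the right-hand side is at least $\eps\bigl(1 + 0.04 \tfrac{\alpha^4 n}{\eps^2 d}\bigr) = \eps + 0.04\tfrac{\alpha^4}{\eps}\tfrac{n}{d}$. By cross-multiplying, this reduces to checking
\[
0.05\,\tfrac{\alpha^4 n}{\eps^2 d} \;\ge\; 0.04\,\tfrac{\alpha^4 n}{\eps^2 d} + 0.01\,\tfrac{\alpha^2 n}{d} + 0.0004\,\tfrac{\alpha^6 n^2}{\eps^2 d^2},
\]
i.e.\ $0.01\,\tfrac{\alpha^4 n}{\eps^2 d} \ge 0.01\,\tfrac{\alpha^2 n}{d} + 0.0004\,\tfrac{\alpha^6 n^2}{\eps^2 d^2}$. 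The first term on the right is absorbed because $\alpha \ge \kappa^5 \eps$ by \Cref{assumption:general}, so $\alpha^2/\eps^2 \gg 1$; the second term is absorbed because $\alpha^2 n / d \le 1$ (using the hypothesis $n \le d/\alpha^2$ from \Cref{assumption:general}), so $0.0004 \tfrac{\alpha^6 n^2}{\eps^2 d^2} \le 0.0004 \tfrac{\alpha^4 n}{\eps^2 d} \ll 0.01 \tfrac{\alpha^4 n}{\eps^2 d}$.

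There is no real obstacle here — \Cref{lem:alternative-bad-variance} did the heavy lifting, and the remaining work is just bookkeeping to pass between the additive lower bound on $\sum_{i \in B}(\langle X_i,\mathbf{S}\rangle-d)^2$ and the normalized expression. The only subtle point is that one needs both inequalities from \Cref{assumption:general} ($\alpha \ge \kappa^5 \eps$ and $n \le d/\alpha^2$) to close the constants cleanly; without them the $0.04$ constant in the conclusion could not be achieved.
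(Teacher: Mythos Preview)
Your proof is correct and follows essentially the same approach as the paper. Both combine the numerator bound from \Cref{lem:alternative-bad-variance} with the denominator bound $\|\mathbf{S}\|_2^2 \le nd(1+0.01\,\alpha^2 n/d)$ from non-rejection in Line~3; the only cosmetic difference is that the paper first replaces $(1+0.01\,\alpha^2 n/d)^{-1}$ by $(1-0.02\,\alpha^2 n/d)$ and then multiplies, whereas you keep the quotient and cross-multiply, invoking the same two hypotheses $\alpha \ge \kappa^5 \eps$ and $n \le d/\alpha^2$ from \Cref{assumption:general} to close the constants.
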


\begin{proof}

    Since \autoref{alg:oblivious-tester} did not reject on Line 3, this means $\|\mathbf{S}\|_2^2 = nd \pm 0.01 \alpha^2 n^2 = nd \cdot \left(1 \pm 0.01 \frac{\alpha^2 n}{d}\right)$.
    So, $\frac{1}{n \cdot \|\mathbf{S}\|_2^2} = \frac{1}{n^2d} \cdot \left(1 \pm 0.02 \frac{\alpha^2 n}{d}\right),$ because we assumed $n \le \frac{d}{\alpha^2}$. Hence, by \Cref{lem:alternative-bad-variance}, we have
\begin{align*}
    \frac{1}{n} \cdot \sum_{i \in B} \left(\frac{\langle X_i, \mathbf{S} \rangle - d}{\|\mathbf{S}\|_2}\right)^2 &\ge \eps \cdot \left(1 + 0.05 \frac{\alpha^4}{\eps^2} \cdot \frac{n}{d}\right) \cdot \left(1 - 0.02 \alpha^2 \cdot \frac{n}{d}\right) \\
    &\ge \eps \cdot \left(1 + 0.04 \frac{\alpha^4}{\eps^2} \cdot \frac{n}{d}\right) \\
    &= \eps + 0.04 \frac{\alpha^4}{\eps} \cdot \frac{n}{d}. \qedhere
\end{align*}
\end{proof}
Finally, we bound the good samples.

\begin{lemma} \label{lem:alternative-good-variance}
    Assume that $n \ge \kappa^5 \cdot \left(\frac{d^{2/3}\eps^{2/3}}{\alpha^{8/3}}\right).$ Then, with high probability
\[\frac{1}{n} \sum_{i \in G} \left(\frac{\langle X_i, \mathbf{S} \rangle - d}{\|\mathbf{S}\|_2}\right)^2 \ge 1-\eps - \frac{0.01 \alpha^4 n}{\eps d}.\]
\end{lemma}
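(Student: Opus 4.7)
\bigskip

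\noindent\textbf{Proof proposal for \Cref{lem:alternative-good-variance}.} The plan is to reuse the sufficient-statistic trick from the null-case analysis (\Cref{lem:null-good-variance}), but exploit the fact that now we only need a lower bound, so the nonnegative contribution coming from the nonzero mean can simply be discarded. This lets the argument go through essentially unchanged, and in particular avoids having to understand $\langle \bar X_G, \mathbf{S}\rangle - d$ on the nose.

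First, I will condition on $\mathbf{R}$ and on $\bar X_G = \frac{1}{|G|}\sum_{i \in G} X_i = (\mathbf{Q} + \mathbf{T})/|G|$ (equivalently, condition on $\mathbf{R}$ and $\mathbf{T}$). By \Cref{prop:sufficient_statistic} applied to the good samples, which are i.i.d.\ from $\cN(\alpha v, I)$, the posterior distribution of $\{X_i\}_{i \in G}$ under this conditioning agrees in law with $\{\bar X_G + Z_i - \bar Z\}_{i \in G}$, where $Z_i \sim \cN(0,I)$ are i.i.d.\ and independent of $(\mathbf{R},\mathbf{T},\mathbf{S})$, and $\bar Z = \frac{1}{|G|}\sum_{i\in G} Z_i$. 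Taking inner products against $\mathbf{S}$ and letting $z_i := \langle Z_i, \mathbf{S}\rangle/\|\mathbf{S}\|_2$, the posterior of $\{\langle X_i,\mathbf{S}\rangle - d\}_{i\in G}$ becomes $\{c + \|\mathbf{S}\|_2 \cdot \tilde z_i\}_{i\in G}$, where $c := \langle \bar X_G, \mathbf{S}\rangle - d$ is deterministic under the conditioning, the $z_i$ are i.i.d.\ $\cN(0,1)$ independent of $\mathbf{S}$, and $\tilde z_i := z_i - \bar z$ with $\bar z = \frac{1}{|G|}\sum_{i \in G} z_i$.

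Next, since $\sum_{i\in G}\tilde z_i = 0$, the cross term in the expansion of the square vanishes and
\begin{equation*}
\sum_{i \in G}(\langle X_i,\mathbf{S}\rangle - d)^2 \;=\; |G|\, c^2 \;+\; \|\mathbf{S}\|_2^2 \sum_{i\in G} \tilde z_i^2 \;\geq\; \|\mathbf{S}\|_2^2 \sum_{i\in G}\tilde z_i^2,
\end{equation*}
because the $|G|c^2$ term is nonnegative. Dividing by $n \|\mathbf{S}\|_2^2$, it then suffices to show that with high probability
\begin{equation*}
\frac{1}{n}\sum_{i\in G}\tilde z_i^2 \;\geq\; 1-\eps - \frac{0.01\, \alpha^4 n}{\eps\, d}.
\end{equation*}

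Finally, I bound $\sum_{i\in G}\tilde z_i^2 = \sum_{i\in G} z_i^2 - |G|\bar z^2$ by standard Gaussian concentration: $\bar z \sim \cN(0, 1/|G|)$, so $|\bar z| \leq \kappa/\sqrt{n}$ w.h.p., and by \Cref{prop:chi-square-concentration}, $\sum_{i\in G} z_i^2 = (1-\eps)n \pm O(\kappa\sqrt{n})$ w.h.p. Combining these yields $\sum_{i\in G}\tilde z_i^2 = (1-\eps)n \pm O(\kappa^2 \sqrt{n})$, and hence $\frac{1}{n}\sum_{i\in G}\tilde z_i^2 \geq (1-\eps) - O(\kappa^2/\sqrt{n})$. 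The assumption $n \geq \kappa^5 \cdot d^{2/3}\eps^{2/3}/\alpha^{8/3}$ (part of \Cref{assumption:general}) guarantees $O(\kappa^2/\sqrt{n}) \leq 0.01 \alpha^4 n/(\eps d)$, completing the proof.

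There is no real obstacle here: unlike \Cref{lem:null-good-variance}, where one had to show the term analogous to $c^2$ was small (which required controlling $\langle \mathbf{T},\mathbf{S}\rangle/|G|-d$ and used that $n \leq d/\alpha^2$), in the alternative case we only need a lower bound, so simply dropping $|G|c^2 \geq 0$ suffices. The remaining Gaussian concentration is routine.
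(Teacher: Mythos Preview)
Your proposal is correct and follows essentially the same approach as the paper's proof: condition on $\mathbf{Q},\mathbf{R},\mathbf{T}$, apply the sufficient-statistic decomposition (\Cref{prop:sufficient_statistic}) to rewrite each summand as $(c/\|\mathbf{S}\|_2 + \tilde z_i)^2$, drop the nonnegative $|G|c^2$ term using $\sum_{i\in G}\tilde z_i=0$, and finish with $\chi^2$ concentration on $\sum_{i\in G}\tilde z_i^2$ plus the parameter assumption. Your closing remark about why the alternative case is easier than \Cref{lem:null-good-variance} is exactly the point the paper exploits.
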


\begin{proof}
    Let's fix the vectors $\mathbf{Q}, \mathbf{R}, \mathbf{T}$, and consider the posterior distribution of the good samples $\{X_i\}_{i \in G}$. By~\Cref{prop:sufficient_statistic}, we can write $X_i = \frac{\mathbf{Q}+\mathbf{T}}{(1-\eps) n} + Y_i-\bar{Y},$ where $\{Y_i\}_{i \in G}$ are distributed as i.i.d. $\cN(0, I)$ and $\bar{Y} = \frac{1}{(1-\eps) n} \sum_{i \in G} Y_i$. Hence, $\{\langle X_i, \mathbf{S}\rangle\}_{i \in G}$ is distributed as $\frac{\langle \mathbf{Q}+\mathbf{T}, \mathbf{S}\rangle}{(1-\eps) n} + \|\mathbf{S}\|_2 \cdot z_i-\bar{z},$ where $z_i$ are distributed as i.i.d. $\cN(0, 1)$ and $\bar{z}=\frac{1}{(1-\eps) n} \sum_{i \in G} z_i$. 
Hence, defining $\tilde{z}_i = z_i-\bar{z}$, since $\tilde{z}_i$ have mean $0$, we can rewrite our expression as
\[\frac{1}{n} \sum_{i \in G} \left(\frac{\frac{\langle \mathbf{Q}+\mathbf{T}, \mathbf{S} \rangle}{(1-\eps) n} - d}{\|\mathbf{S}\|_2} + \tilde{z}_i\right)^2 \ge \frac{1}{n} \sum_{i \in G} \left(\tilde{z}_i\right)^2 \ge (1-\eps) - \frac{\kappa^2}{\sqrt{n}}.\]
    The final inequality above combines the facts that $\sum_{i \in G} \tilde{z}_i^2 = \sum_{i \in G} z_i^2 - (1-\eps) n \bar{z}^2,$ that $\sum_{i \in G} z_i^2 = (1-\eps) n \pm \kappa \sqrt{n}$ by \Cref{prop:chi-square-concentration}, and that $|\bar{z}| \le \kappa/\sqrt{n}.$
Finally, because we are assuming $n \ge \kappa^5 \cdot \left(\frac{d^{2/3}\eps^{2/3}}{\alpha^{8/3}}\right)$, we have that $\frac{\kappa^2}{\sqrt{n}} \le 0.01 \frac{\alpha^4 n}{\eps d}$. This completes the proof.
\end{proof}

By combining \Cref{cor:alternative-bad-variance} and \Cref{lem:alternative-good-variance}, the following lemma is immediate.

\begin{lemma} \label{lem:alt-variance}
    Suppose that \autoref{alg:oblivious-tester} does not reject in Line 3, and $\eps \le 0.1$. Then, under the alternative hypothesis and \Cref{assumption:general}, with high probability 
\[\frac{1}{n} \sum_{i=1}^n \left(\frac{\langle X_i, \mathbf{S} \rangle - d}{\|\mathbf{S}\|_2}\right)^2 \ge 1 + 0.03 \cdot \frac{\alpha^4 n}{\eps d}.\]
\end{lemma}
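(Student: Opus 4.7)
The plan is to derive this lemma as an immediate consequence of splitting the sum over $[n]$ into contributions from the good samples $G$ and the bad samples $B$, then invoking \Cref{cor:alternative-bad-variance} and \Cref{lem:alternative-good-variance} respectively to lower bound each piece.

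First I would write
\[
\frac{1}{n} \sum_{i=1}^n \left(\frac{\langle X_i, \mathbf{S} \rangle - d}{\|\mathbf{S}\|_2}\right)^2 = \frac{1}{n} \sum_{i \in B} \left(\frac{\langle X_i, \mathbf{S} \rangle - d}{\|\mathbf{S}\|_2}\right)^2 + \frac{1}{n} \sum_{i \in G} \left(\frac{\langle X_i, \mathbf{S} \rangle - d}{\|\mathbf{S}\|_2}\right)^2,
\]
which is valid since $[n] = G \sqcup B$. Both terms are non-negative, so no cancellation issues arise.

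Next, under the stated hypotheses (alternative case, \Cref{assumption:general} in force, \autoref{alg:oblivious-tester} not rejecting on Line~3 so that the tester has reached Line~4 and the assumption $\eps \le 0.1$ from \Cref{assumption:general} combined with $\alpha \ge \kappa^5 \eps$ holds), \Cref{cor:alternative-bad-variance} applies to the bad sum and gives, with high probability, a lower bound of $\eps + 0.04 \tfrac{\alpha^4}{\eps} \cdot \tfrac{n}{d}$. Simultaneously, \Cref{lem:alternative-good-variance} applies (since $n \ge \kappa^5 d^{2/3}\eps^{2/3}/\alpha^{8/3}$ is part of \Cref{assumption:general}) and gives, with high probability, a lower bound of $1 - \eps - 0.01 \tfrac{\alpha^4 n}{\eps d}$ for the good sum. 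Taking a union bound over these two high-probability events keeps the failure probability at most $1/\poly(n,d)$.

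Summing the two lower bounds yields
\[
\bigl(\eps + 0.04 \tfrac{\alpha^4 n}{\eps d}\bigr) + \bigl(1 - \eps - 0.01 \tfrac{\alpha^4 n}{\eps d}\bigr) = 1 + 0.03 \tfrac{\alpha^4 n}{\eps d},
\]
which is exactly the claimed bound. There is no real obstacle here: the combinatorial decomposition is immediate, the two inputs are designed to combine additively, and the $0.04 - 0.01 = 0.03$ constant is chosen precisely so that the larger positive contribution from the bad samples beats the negative slack in the good-sample bound. The only care needed is to verify that the hypotheses of both cited results are implied by the hypotheses of \Cref{lem:alt-variance}, which follows directly from the definition of \Cref{assumption:general} together with the standing assumption $\alpha \ge \kappa^5 \eps$.
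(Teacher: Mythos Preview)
Your proposal is correct and matches the paper's own proof essentially verbatim: the paper simply states that the lemma is immediate from combining \Cref{cor:alternative-bad-variance} and \Cref{lem:alternative-good-variance}, which is exactly the decomposition and arithmetic you carry out.
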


As a direct consequence of Lemmas \ref{lem:null-mean}, \ref{lem:null-variance}, and \ref{lem:alt-variance}, \Cref{lem:mean-var-main} is immediate.

\subsection{Proof of \Cref{lem:mean-var-main-2}} \label{subsec:last-case-oblivious-lb}

In this section, we finish the proof of \Cref{thm:oblivious-ub}, by proving \Cref{lem:mean-var-main-2}.

It suffices to prove the following lemma.

\begin{lemma} \label{lem:alt-mean}
    Assume the alternative hypothesis, and that $n \ge \kappa^5 \cdot \left(\frac{\sqrt{d}}{\alpha^2} + \frac{d \eps}{\alpha^2}\right)$ and that $\eps \le 0.1$ and $\alpha \ge \kappa^5 \cdot \eps$. Then, under \Cref{assumption}, with high probability $\left\|\sum_{i \in [n]} X_i\right\|_2^2 \ge nd + 0.1 \alpha^2 n^2$.
\end{lemma}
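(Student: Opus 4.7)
The plan is to expand $\|\mathbf{S}\|_2^2 = \|\mathbf{Q}+\mathbf{R}\|_2^2 + \|\mathbf{T}\|_2^2 + 2\langle \mathbf{Q}+\mathbf{R},\mathbf{T}\rangle$ and show that each term behaves as expected: $\|\mathbf{Q}+\mathbf{R}\|_2^2 \geq \eps nd + \Omega(\alpha^2 n^2)$, $\|\mathbf{T}\|_2^2 \approx (1-\eps)nd$, and the cross term is negligible relative to $\alpha^2 n^2$. Summing these gives $\|\mathbf{S}\|_2^2 \geq nd + 0.1\alpha^2 n^2$.

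First, $\mathbf{T} \sim \cN(0,(1-\eps)nI)$ is drawn independently of both $\mathbf{Q}$ and $\mathbf{R}$. By standard Gaussian concentration, with high probability $\|\mathbf{T}\|_2^2 = (1-\eps)nd \pm O(\kappa n\sqrt{d})$, and for any fixed (random) vector $W$ one has $|\langle W,\mathbf{T}\rangle| \leq O(\kappa\sqrt{n})\cdot \|W\|_2$. Taking $W = \mathbf{Q}+\mathbf{R}$, and using the trivial bound $\|\mathbf{Q}+\mathbf{R}\|_2 \leq O(\alpha n) + O(\sqrt{\kappa}(\sqrt{\eps n d}+\eps n))$ coming from $\|\mathbf{Q}\|_2 \leq 2\alpha n$ and \Cref{prop:norm-sum}, the cross term and the error in $\|\mathbf{T}\|_2^2$ are each $\leq 0.01\alpha^2 n^2$ once $n \geq \kappa^5(\sqrt{d}/\alpha^2+\eps d/\alpha^2)$ and $\alpha \geq \kappa^5\eps$.

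The heart of the argument is the lower bound on $\|\mathbf{Q}+\mathbf{R}\|_2^2$, where the adversary may try to pick $\mathbf{R}$ anti-aligned with $\mu$ to cancel out $\mathbf{Q}$. Writing $q = \|\mathbf{Q}\|_2 \geq 0.9\alpha n$ (using $\eps \leq 0.1$) and $r = \|\mathbf{R}\|_2$, \Cref{prop:norm-sum} gives
\[
r^2 = \eps nd \pm O(\kappa \eps n (\sqrt{\eps nd}+\eps n)),
\]
so $r \leq \sqrt{\eps nd} + O(\sqrt{\kappa})\max(\sqrt{\eps nd},\,\eps n)$. The worst case for us is $\mathbf{R} = -(r/q)\mathbf{Q}$, which yields
\[
\|\mathbf{Q}+\mathbf{R}\|_2^2 \geq (q-r)^2 = q^2 - 2qr + r^2 \geq q^2 - 2qr + \eps nd - O(\kappa\eps n(\sqrt{\eps nd}+\eps n)).
\]
The two assumptions $n \geq \kappa^5 \eps d/\alpha^2$ and $\alpha \geq \kappa^5\eps$ exactly guarantee that $q \gg r$: the first gives $\alpha n \geq \kappa^{2.5}\sqrt{\eps n d}$, and the second gives $\alpha n \geq \kappa^5 \eps n$. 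Consequently $2qr \leq 2q(r) \leq 0.2\alpha^2 n^2$ and the friendliness error $O(\kappa\eps n(\sqrt{\eps nd}+\eps n))$ is at most $0.1\alpha^2 n^2$ (the $\kappa\eps^2 n^2$ part uses $\alpha \geq \kappa^5\eps$, and the $\kappa(\eps n)^{3/2}\sqrt{d}$ part uses $n \geq \kappa^5\eps d/\alpha^2$ together with $\alpha \geq \kappa^5\eps$ to absorb the extra factor). Therefore $\|\mathbf{Q}+\mathbf{R}\|_2^2 \geq \eps nd + 0.64\alpha^2 n^2 - 0.2\alpha^2 n^2 - 0.1\alpha^2 n^2 \geq \eps nd + 0.3\alpha^2 n^2$.

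Combining, $\|\mathbf{S}\|_2^2 \geq (\eps nd + 0.3\alpha^2 n^2) + ((1-\eps)nd - 0.01\alpha^2 n^2) - 0.02\alpha^2 n^2 \geq nd + 0.1\alpha^2 n^2$, completing the proof. The main technical obstacle is handling the friendliness-error term on $r^2$ cleanly across both regimes $\eps n \leq d$ and $\eps n > d$; the remark that $(\eps n)^{3/4}d^{1/4} \leq \max(\sqrt{\eps nd},\eps n)$ lets us unify the two cases so that both of the assumed lower bounds on $n$ and on $\alpha/\eps$ are needed and used in a single estimate.
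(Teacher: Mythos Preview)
Your proposal is correct and follows essentially the same approach as the paper: expand $\|\mathbf{S}\|_2^2 = \|\mathbf{Q}+\mathbf{R}\|_2^2 + \|\mathbf{T}\|_2^2 + 2\langle \mathbf{Q}+\mathbf{R},\mathbf{T}\rangle$, control the last two terms via Gaussian concentration (independence of $\mathbf{T}$), and lower-bound $\|\mathbf{Q}+\mathbf{R}\|_2$ via the reverse triangle inequality together with \Cref{prop:norm-sum}. The only cosmetic difference is bookkeeping: the paper simply shows $\|\mathbf{Q}+\mathbf{R}\|_2\ge 0.7\alpha n$ (discarding the $\eps nd$ part of $r^2$), obtains $\|\mathbf{S}\|_2^2 \ge (1-\eps)nd + 0.2\alpha^2 n^2$, and then uses $n\ge \kappa^5 d\eps/\alpha^2$ once more to trade $\eps nd$ for $0.1\alpha^2 n^2$; you instead retain the $\eps nd$ term inside the lower bound on $\|\mathbf{Q}+\mathbf{R}\|_2^2$ so that it combines directly with $(1-\eps)nd$.
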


\begin{proof}
    As usual, we write $\sum_{i \in [n]} X_i = \mathbf{Q}+\mathbf{R}+\mathbf{T}$, so $\big\|\sum_{i \in [n]} X_i\big\|_2^2 = \left\|\mathbf{Q}+\mathbf{R}+\mathbf{T}\right\|_2^2 = \|\mathbf{Q}+\mathbf{R}\|_2^2 + \|\mathbf{T}\|_2^2 + 2 \langle \mathbf{Q}+\mathbf{R}, \mathbf{T} \rangle$.

    Let $A = \|\mathbf{Q}+\mathbf{R}\|_2$. Note that $A \ge \|\mathbf{Q}\|_2-\|\mathbf{R}\|_2 = 0.9 \alpha n - \|\mathbf{R}\|_2$, assuming $\eps \le 0.1$. In addition, by \Cref{prop:norm-sum}, we have that $\|\mathbf{R}\|_2^2 = \|\sum_{i \in B} X_i\|_2^2 \le \eps n d + O(\kappa) \cdot ((\eps n)^{3/2} \sqrt{d} +(\eps n)^2)$. So, $\|\mathbf{R}\|_2 \le O(\sqrt{\eps n d} + \kappa \cdot \eps n)$. Assuming that $n \ge \kappa^5 \cdot \frac{d \eps}{\alpha^2}$ and $\alpha \ge \kappa^5 \cdot \eps$, both $O(\sqrt{\eps n d})$ and $O(\kappa \cdot \eps n)$ are at most $0.1 \alpha n$. Thus, $A \ge 0.7 \alpha n$.

    Since $\mathbf{T} \sim \cN(0, (1-\eps) n I)$ is independent of $\mathbf{Q}, \mathbf{R},$ this means that $\|\mathbf{T}\|_2^2 \ge (1-\eps) n d - \kappa n \sqrt{d}$ and $|\langle \mathbf{Q}+\mathbf{R}, \mathbf{T} \rangle| \le (\kappa \sqrt{n}) \cdot A$ with high probability. Thus, 
\[\|\mathbf{Q}+\mathbf{R}+\mathbf{T}\|_2^2 \ge A^2 + (1-\eps) n d - \kappa n \sqrt{d} - 2 \kappa \sqrt{n} \cdot A \ge (1-\eps) n d + (A-\kappa \sqrt{n})^2 - \kappa^2 n \sqrt{d}.\]
    Since $n \ge \kappa^5 \cdot \frac{\sqrt{d}}{\alpha^2} \ge \frac{\kappa^5}{\alpha^2},$ this means that $\kappa \sqrt{n} \le 0.2 \alpha n$, so $A-\kappa \sqrt{n} \ge 0.5 \alpha n.$ Moreover, $\kappa^2 n \sqrt{d} \le 0.05 \alpha^2 n^2$. Thus,
\[\|\mathbf{Q}+\mathbf{R}+\mathbf{T}\|_2^2 \ge (1-\eps) n d + (0.5 \alpha n)^2 - 0.05 \alpha^2 n^2 = (1-\eps) n d + 0.2 \alpha^2 n^2.\]
    Assuming that $n \ge \kappa^5 \cdot \frac{d \eps}{\alpha^2}$, $\eps n d \le 0.1 \alpha^2 n^2,$ which means this is at least $nd + 0.1 \alpha^2 n^2$.
\end{proof}

By combining Lemmas \ref{lem:null-mean} and \ref{lem:alt-mean}, Lemma \ref{lem:mean-var-main-2} is immediate (since $\frac{d \eps^3}{\alpha^4} < \frac{d \eps}{\alpha^2}$). Note that we never assumed $\eps n \le d$ in either of these lemmas. 
\section{Lower bound in the Huber model} \label{sec:huber-lb}

In this section, we prove that under the Huber model, one needs $n = \Omega(d \eps^3/\alpha^4)$ samples to solve robust mean testing. Our lower bound even holds in the restricted setting where under the null hypothesis, the distribution must be uncorrupted.

\subsection{Main Lower Bound}

We are now ready to prove our main lower bound.

\begin{theorem} \label{thm:huber-lb}
    Let $\cD_0$ represent the distribution of $X_1, \dots, X_n \overset{i.i.d.}{\sim} \mathcal{N}(0, I)$, and let $\cD_1$ represent the distribution of $(X_1, \dots, X_n)$ where we choose a random vector $v \sim \mathcal{N}(0, \frac{1}{d} \cdot I)$ and conditional on $v$, each $X_i$ is drawn i.i.d. from the mixture $(1-\eps) \cdot \mathcal{N}(\alpha \cdot v, I) + \eps \cdot \mathcal{N}(-\frac{1-\eps}{\eps} \cdot \alpha \cdot v, I)$.
    Then, there exists a small absolute constant $c > 0$ such that if $n = c \cdot \frac{d \eps^3}{\alpha^4}$, $\alpha \ge \eps$, and $c \cdot \frac{d \eps^3}{\alpha^4} \ge \frac{\sqrt{d}}{\alpha^2},$ then $\dtv(\cD_0, \cD_1) \le 0.1$.
\end{theorem}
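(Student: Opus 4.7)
The proof bounds $\dtv(\cD_0, \cD_1)$ via a conditional second-moment ($\chi^2$-divergence) argument. I would begin by writing the likelihood ratio: conditional on $v \sim \cN(0, I/d)$, the samples are i.i.d.\ from the mixture with density $p_v$, so $\cD_1(X)/\cD_0(X) = \E_v[\prod_i f_v(X_i)]$ with $f_v(x) := p_v(x)/\phi(x)$. Squaring and applying Fubini together with the Gaussian MGF identity (\cref{prop:mgf_squared_gaussian}) yields
\[
  \dchi(\cD_1 \| \cD_0) + 1 \;=\; \E_{v, v' \overset{\text{i.i.d.}}{\sim} \cN(0, I/d)}\bigl[g(\alpha^2 \langle v, v' \rangle)^n\bigr],
\]
where $g(u) = (1-\eps)^2 e^u + 2\eps(1-\eps) e^{-cu} + \eps^2 e^{c^2 u}$ with $c = (1-\eps)/\eps$. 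A direct Taylor expansion at $0$ gives $g(0) = 1$, $g'(0) = 0$, and $g''(0) = (1-\eps)^2/\eps^2$, so $g(u) \leq \exp\bigl((1+o(1))u^2/(2\eps^2)\bigr)$ on the ``central'' region $|u| \ll \eps^2$.

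Next I would control the central contribution. The hypotheses $n = c\,d\eps^3/\alpha^4 \geq \sqrt{d}/\alpha^2$ force $\alpha^2 \leq \eps^3 \sqrt{d}$; combined with $u = \alpha^2 \langle v, v' \rangle$ having variance $\alpha^4/d$, this gives $|u| \lesssim \alpha^2/\sqrt{d} \leq \eps^3 \ll \eps^2$ with overwhelming probability, so the quadratic bound on $g$ is valid. Applying the Gaussian MGF to $u$,
\[
  \E\bigl[\exp(n u^2/(2\eps^2))\bigr] \;\leq\; (1 - n\alpha^4/(d\eps^2))^{-1/2} \;=\; 1 + O(c\eps),
\]
using $n\alpha^4/(d\eps^2) = c\eps$. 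Thus the central contribution to $\chi^2$ is $O(c\eps)$, which via $\dtv \leq \tfrac{1}{2}\sqrt{\dchi}$ translates to a TV contribution $\leq O(\sqrt{c\eps}) \leq 0.05$ for $c$ a sufficiently small absolute constant.

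The main obstacle is the tail region $|u| \gtrsim \eps^2$: here the summand $\eps^2 e^{c^2 u}$ dominates $g$, and upon raising to the $n$-th power the vanilla $\chi^2$ blows up --- this is exactly what restricted \cite{DiakonikolasKS17} to $n \geq d\eps^4/\alpha^4$. I would handle this via a conditional second moment: pass to the truncated alternative $\tilde\cD_1 := \cD_1|_{\mathcal{E}}$ for a typical event $\mathcal{E}$ on the latent $v$ (for instance $\|v\|^2 \leq 1 + \gamma$ with $\gamma$ chosen so $\Pr[\mathcal{E}^c] \leq e^{-\Omega(d\gamma^2)} \leq 0.05$), and combine via the triangle inequality $\dtv(\cD_0, \cD_1) \leq \Pr[\mathcal{E}^c] + \tfrac{1}{2}\sqrt{\dchi(\tilde\cD_1 \| \cD_0)}$. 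On the truncated support $\langle v, v' \rangle$ is bounded, suppressing the $e^{c^2 u}$ blow-up, so the truncated $\chi^2$ is governed by the central estimate above. The hardest part will be calibrating $\mathcal{E}$: when $\alpha/\eps = \Theta(1)$ the tail of $g$ is nontrivial already at $|u| = \Theta(\alpha^2)$, so a single norm bound $\|v\| \leq O(1)$ does not by itself kill it, and I anticipate needing a more refined (possibly sample-dependent) event to simultaneously keep $\Pr[\mathcal{E}^c]$ small and genuinely suppress the tail contribution to $\chi^2$.
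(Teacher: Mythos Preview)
Your setup is correct: the identity $\dchi(\cD_1\|\cD_0) + 1 = \E_{v,v'}[g(\alpha^2\langle v,v'\rangle)^n]$ with $g(0)=1$, $g'(0)=0$, $g''(0)=(1-\eps)^2/\eps^2$ holds, and the quadratic bound on $g$ in the region $|u|\ll\eps^2$ does give the $O(c\eps)$ central contribution you state. You are also right that the tail is the obstruction and that a norm bound on $v$ does not settle it. But this is where the proposal has a genuine gap, not just a loose end. By marginalizing out the mixture labels first to form $g$, you have folded the rare-label events into the term $\eps^2 e^{c^2 u}$: the blow-up of $g(u)^n$ outside the central region is exactly the contribution, in the multinomial expansion of $g^n$, from configurations where atypically many of the $n$ factors choose that piece---equivalently, where atypically many samples land in the minority component. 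Once the label set $S$ has been integrated out you can no longer condition on it, and there is no evident event on $v$ alone (your ``more refined event'') that suppresses these configurations without also killing most of the measure.

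The paper's resolution is to reverse the order of integration. It expands the likelihood ratio over the label set $S\subset[n]$ first, keeping $S$ explicit, and then integrates the Gaussian $v$ out in closed form (a one-dimensional Gaussian integral per coordinate). The conditioning event can then be placed directly on $S$: require $\bigl||S|-(1-\eps)n\bigr|\le K\sqrt{\eps n}$, a binomial-typicality event of probability $1-O(e^{-\Omega(K^2)})$, which excises precisely the atypical-label configurations. After this restriction the $\chi^2$ becomes an expectation, over pairs $(S,T)$ of typical size, of $(1-(t_{S,T}/d)^2)^{-d/2}$ where $t_{S,T}$ is linear in $|S\cap T|$; the remaining randomness is handled by a hypergeometric tail bound on $|S\cap T|$. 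So the missing idea is: do not marginalize the labels; condition on the label count being typical, integrate $v$ exactly, and only then average over $(S,T)$.
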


Since the total variation distance is at most $0.1$, no algorithm can successfully distinguish between $\cD_0$ and $\cD_1$ with probability more than $0.55$. Moreover, $\|v\|_2 \le 1+o(1)$, and therefore $\|\alpha v\|_2 \le \alpha(1+o(1))$, with very high probability. Hence, this proves the desired lower bound when $c \cdot \frac{d \eps^3}{\alpha^4} \ge \frac{\sqrt{d}}{\alpha^2}.$ Alternatively, if $c \cdot \frac{d \eps^3}{\alpha^4} < \frac{\sqrt{d}}{\alpha^2}$, the lower bound is immediate from the non-robust lower bound~\cite{DiakonikolasKS17}. When $\alpha \ge \eps$, it is well-known that this problem is impossible, since the null and alternative distributions have total variation distance $\leq \cor$.

We will bound the $\dtv(\cD_0, \cD_1)$ through $\chi^2$ divergence. As $\dchi(\cD_1||\cD_0)$ is actually too large and thus does not suffice, we instead bound $\dchi(\cD_1'||\cD_0)$ for some $\cD_1'$ that is close in total variation distance to $\cD_1$. 
    
For a sample $X = (X_1, \dots, X_n) \sim \cD_1,$ we will let a set $S \subset [n]$ correspond to $X$ where $i \in S$ iff $X_i$ was drawn from the mixture component $\mathcal{N}(\alpha \cdot v, I)$. Note that $S$ is not determined by $X$. We will choose $\cD_1'$ to be $\cD_1$ restricted to having $S$ with size $(1-\eps) n \pm K \sqrt{\eps n}$ for some large constant $K$. Call such sets $S$ good, and let $\mathcal{S}$ be the set of all good sets. By standard properties of Binomial distributions, if $K \ge 100$, with probability at least $1 - 10^{-4}$, a random subset $S$ obtained by including each element $i \in [n]$ with probability $1-\eps$ is good. Hence, $\dtv(\cD_1, \cD_1') \le 2 \cdot 10^{-4}$.  Thus, it now suffices to upper bound $\dchi(\cD_1'||\cD_0)$ (which then upper bounds $\dtv(\cD_1',\cD_0)$ by \Cref{prop:chi-tv}).

It will be convenient to use the following notation throughout this section: let $Z$ be the probability that a random subset of $[n]$ obtained by including each element $i \in [n]$ with probability $1 - \eps$ is good.  We begin by computing the likelihood ratio.  
\begin{claim}\label{claim:huber-lb-likelihood-ratio}
Let $X = (X_1, \dots , X_n)$ be a set of samples.  Let $p_{\cD_0}(X), p_{\cD_1'}(X)$ be the PDFs of seeing that sample from $\cD_0$ and $\cD_1'$ respectively.  Then
\[
\frac{p_{\cD_1'}(X)}{p_{\cD_0}(X)} = \frac{1}{Z} \sum_{S \in \mathcal{S}} (1 - \eps)^{|S|} \eps^{n - |S|} \left( \frac{d + t_S}{d}\right)^{-d/2}\exp\left( \frac{\alpha_S(X)}{2(t_S + d)} \right)
\]
where we define for subsets $S \subset [n]$,
\[
\begin{split}
X_S & := \frac{\alpha}{\eps} \cdot \left(\eps \cdot \sum_{i \in S} X_i - (1-\eps) \cdot \sum_{i \not\in S} X_i\right) \\ 
\alpha_S(X) &:= \|X_S\|^2 \\
t_S &:= \frac{\alpha^2}{\eps^2} \cdot \left(\eps^2 \cdot |S| + (1-\eps)^2 \cdot (n-|S|)\right)
\end{split}
\]
\end{claim}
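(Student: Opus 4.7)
The proof is a direct calculation, and my plan is to simply set up the likelihood ratio, collapse the mixture via a sum over subsets $S$, and integrate out $v$ using \Cref{prop:shifted-Gaussian-integral}.

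First I would write down the densities. Under $\cD_0$, $p_{\cD_0}(X) = \prod_{i \in [n]} \phi(X_i)$, where $\phi$ denotes the standard Gaussian density on $\R^d$. Under $\cD_1$, expanding the product of mixture densities term-by-term turns the per-sample mixture into a sum over labelings $S \subseteq [n]$ (with $i \in S$ iff $X_i$ was drawn from the $\cN(\alpha v, I)$ component), yielding
\[
p_{\cD_1}(X) = \int p_v(v) \sum_{S \subseteq [n]} (1-\eps)^{|S|}\eps^{n-|S|} \prod_{i \in S}\phi(X_i - \alpha v) \prod_{i \notin S}\phi\!\Paren{X_i + \tfrac{1-\eps}{\eps}\alpha v} \, dv,
\]
where $v \sim \cN(0,\tfrac{1}{d} I)$. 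By definition, $p_{\cD_1'}$ is obtained by restricting the inner sum to $S \in \mathcal{S}$ and dividing by the normalizing constant $Z$.

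Next I would compute the ratio of the integrand to $p_{\cD_0}(X)$ for fixed $S$ and $v$. Using the identity $\phi(X_i - \mu)/\phi(X_i) = \exp(\langle X_i, \mu\rangle - \tfrac{1}{2}\|\mu\|^2)$, the contribution from indices $i \in S$ gives $\exp(\alpha \langle \Sum(S), v\rangle - \tfrac{\alpha^2 |S|}{2}\|v\|^2)$, and the contribution from indices $i \notin S$ gives $\exp(-\tfrac{1-\eps}{\eps}\alpha \langle \Sum([n]\setminus S), v\rangle - \tfrac{(1-\eps)^2 \alpha^2 (n-|S|)}{2\eps^2}\|v\|^2)$. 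Collecting coefficients: the coefficient of $-\tfrac{1}{2}\|v\|^2$ is exactly $t_S$, while the coefficient of $v$ (as a vector) is $\alpha\Sum(S) - \tfrac{1-\eps}{\eps}\alpha\Sum([n]\setminus S) = X_S$. Therefore,
\[
\frac{\prod_{i\in S}\phi(X_i-\alpha v)\prod_{i\notin S}\phi(X_i + \tfrac{1-\eps}{\eps}\alpha v)}{\prod_i \phi(X_i)} = \exp\!\Paren{-\tfrac{t_S}{2}\|v\|^2 + \langle X_S, v\rangle} \, .
\]

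Finally I would integrate over $v \sim \cN(0, \tfrac{1}{d} I)$. Applying \Cref{prop:shifted-Gaussian-integral} with $\delta^2 = 1/d$, $a = t_S$, and $s = -X_S$ yields
\[
\E_v\!\left[\exp\!\Paren{-\tfrac{t_S}{2}\|v\|^2 + \langle X_S, v\rangle}\right] = \Paren{\tfrac{t_S}{d}+1}^{-d/2} \exp\!\Paren{\tfrac{\|X_S\|^2}{2(t_S+d)}} = \Paren{\tfrac{d+t_S}{d}}^{-d/2}\exp\!\Paren{\tfrac{\alpha_S(X)}{2(t_S+d)}}.
\]
Substituting this back and summing over $S \in \mathcal{S}$ with the prefactor $\tfrac{1}{Z}(1-\eps)^{|S|}\eps^{n-|S|}$ gives the stated identity. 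There is no real obstacle here; the main thing to be careful about is sign bookkeeping in the cross term (the minus sign from the $-\tfrac{1-\eps}{\eps}\alpha v$ component is absorbed into the definition of $X_S$) and matching the quadratic coefficient $t_S$ to the form given in the claim statement.
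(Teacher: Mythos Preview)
Your proposal is correct and follows essentially the same route as the paper: expand the mixture into a sum over subsets $S$, restrict to $\mathcal{S}$ with normalizing constant $Z$, collect the linear and quadratic terms in $v$ to identify $X_S$ and $t_S$, and then integrate out $v$. The only cosmetic difference is that the paper reduces to one dimension by rotational symmetry and applies \Cref{prop:mgf_squared_gaussian}, whereas you invoke the multivariate \Cref{prop:shifted-Gaussian-integral} directly; both are available in the paper and yield the same closed form.
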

\begin{proof}
  For any sample $X = (X_1, \dots, X_n)$, the PDF of seeing that sample from $\cD_0$ is
\begin{equation} \label{eq:D0-pdf}
    p_{\cD_0}(X) = \prod_{i=1}^{n} e^{-\|X_i\|^2/2}
\end{equation}
The probability of seeing that sample from $\cD_1$ is 
\begin{align}
     p_{\cD_1}(X)
    &= \E_{v}\prod_{i=1}^{n} \left((1-\eps) \cdot e^{-\|X_i-\alpha v\|^2/2} + \eps \cdot e^{-\|X_i+(1-\eps) \alpha/\eps v\|^2/2}\right) \nonumber \\
    &= \sum_{S \subset [n]} \E_{v}\left((1-\eps)^{|S|} \eps^{(n-|S|)} \cdot \prod_{i \in S} e^{-\|X_i-\alpha v\|^2/2} \cdot \prod_{i \not\in S}e^{-\|X_i+(1-\eps) \alpha/\eps v\|^2/2}\right). \label{eq:D1-pdf}
\end{align}
    By restricting ourselves to good sets $S \in \mathcal{S}$, the probability of seeing $X$ drawn from $\cD_1'$ is 
\begin{equation} \label{eq:D1'-pdf}
    p_{\cD_1'}(X) = \frac{1}{Z} \sum_{S \in \mathcal{S}} \E_{v}\left((1-\eps)^{|S|} \eps^{(n-|S|)} \cdot \prod_{i \in S} e^{-\|X_i-\alpha v\|^2/2} \cdot \prod_{i \not\in S}e^{-\|X_i+(1-\eps) \alpha/\eps v\|^2/2}\right),
\end{equation}
where $Z$ is the probability of a random set $S$ being good if each $i \in [n]$ is included in $S$ independently with probability $1-\eps$.

 From \eqref{eq:D0-pdf} and \eqref{eq:D1'-pdf}, it is simple to compute the ratio 
\[\frac{p_{\cD_1'} (X)}{p_{\cD_0} (X)} = \frac{1}{Z} \cdot \sum_{S \in \mathcal{S}}\left((1-\eps)^{|S|} \eps^{(n-|S|)} \cdot \underbrace{\E_{v}\left[\prod_{i \in S} e^{- \alpha \langle X_i, v \rangle - \alpha^2 \|v\|^2/2} \cdot \prod_{i \not\in S}e^{(1-\eps) \alpha/\eps \cdot \langle X_i, v \rangle -(1-\eps)^2 \alpha^2/\eps^2 \cdot \|v\|^2/2}\right]}_{A_S(X)}\right).\]
We use $A_S(X)$ as a shorthand in simplifying the expression above.  Now we can explicitly compute $A_S(X)$. With $X_S, t_S$ as defined above, we can write 
\[A_S(X) = \E_v\left[e^{-\langle X_S, v\rangle- t_S/2 \cdot \|v\|^2}\right].\]
    Since $v \sim \mathcal{N}(0, \frac{1}{d} \cdot I) = \frac{1}{\sqrt{d}} \cdot \mathcal{N}(0, I)$, we can use the rotational symmetry of $v$ and \Cref{prop:mgf_squared_gaussian} to rewrite
\begin{align*}
    A_S(X) &= \E_{x \sim \mathcal{N}(0, 1)}\left[e^{-\sqrt{\alpha_S(X)/d} \cdot x - (t_S/2d) \cdot x^2} \right] \cdot \left(\E_{x \sim \mathcal{N}(0, 1)}\left[e^{-(t_S/2d) \cdot x^2} \right]\right)^{d-1} \\
    &= \frac{\exp\left(\frac{\alpha_S(X)/d}{2+2t_S/d}\right)}{\sqrt{1+t_S/d}} \cdot \left(\frac{1}{\sqrt{1+t_S/d}}\right)^{d-1} \\
    &= \exp\left(\frac{\alpha_S(X)}{2(t_S+d)}\right) \cdot \left(\frac{d+t_S}{d}\right)^{-d/2}.
\end{align*}
    
\end{proof}

Using expression for the likelihood ratio in Claim~\ref{claim:huber-lb-likelihood-ratio}, we can explicitly compute the $\chi^2$ divergence $\dchi(\cD_1' || \cD_0)$.
\begin{lemma}\label{lem:huber-lb-explicit-chi2}
We have
\[
\dchi(\cD_1' || \cD_0) = \frac{1}{Z^2} \cdot \sum_{S, T \subset \mathcal{S}} (1-\eps)^{|S|+|T|} \eps^{(n-|S|)+(n-|T|)} \cdot \left(1 - \left(\frac{t_{S, T}}{d}\right)^2\right)^{-d/2}
\]
where $t_{S, T} = \frac{\alpha^2}{\eps^2} \cdot \left(\eps^2 |S \cap T| - \eps (1-\eps) |S \triangle T| + (1-\eps)^2 |(S \cup T)^c|\right)$ and $\triangle$ denotes symmetric difference.
\end{lemma}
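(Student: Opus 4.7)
The plan is to compute the $\chi^2$ divergence by squaring the likelihood ratio from \Cref{claim:huber-lb-likelihood-ratio} and integrating against $\cD_0$. By definition,
\[
\dchi(\cD_1'\|\cD_0) \;=\; \E_{X \sim \cD_0}\!\Bigl[\bigl(p_{\cD_1'}(X)/p_{\cD_0}(X)\bigr)^2\Bigr] - 1,
\]
so after substituting \Cref{claim:huber-lb-likelihood-ratio} and expanding the square as a double sum over pairs $(S,T) \in \mathcal{S} \times \mathcal{S}$, the calculation reduces to evaluating, for each pair, the Gaussian moment generating function
\[
\E_{X \sim \cN(0,I_{nd})}\!\Bigl[\exp\!\Bigl(\tfrac{\|X_S\|^2}{2(t_S+d)} + \tfrac{\|X_T\|^2}{2(t_T+d)}\Bigr)\Bigr].
\]
(The stated formula elides the additive $-1$, which is harmless for the downstream upper bound; I will simply match the formula as written.)

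Next I would exploit the coordinate-wise product structure. Since $X_S$ and $X_T$ are fixed linear functionals of $X$ and the coordinates of $X$ are i.i.d.\ standard Gaussian, for each $k \in [d]$ the pair $(X_S^{(k)}, X_T^{(k)})$ is mean-zero bivariate Gaussian with the same covariance matrix $\Sigma = \begin{pmatrix} t_S & t_{S,T} \\ t_{S,T} & t_T \end{pmatrix}$, and the pairs are independent across $k$. A short computation verifies the covariance entries: partitioning $[n]$ into $S\cap T$, $S\setminus T$, $T\setminus S$, and $(S\cup T)^c$, the contribution of each index to $\Cov(X_S^{(k)}, X_T^{(k)})$ is $\alpha^2$, $-(1-\eps)\alpha^2/\eps$, $-(1-\eps)\alpha^2/\eps$, and $(1-\eps)^2\alpha^2/\eps^2$ respectively, summing to exactly $t_{S,T}$ as defined.

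The core computation is then a per-coordinate 2D Gaussian integral, which I would evaluate using \Cref{prop:exponential-expectation-determinant}. Writing the quadratic form in the exponent as $\tfrac{1}{2}z^\top \Sigma^{1/2} D \Sigma^{1/2} z$ with $D = \operatorname{diag}(1/(t_S+d),\, 1/(t_T+d))$ and $z \sim \cN(0,I_2)$, the proposition gives the per-coordinate expectation as $\det(I - D\Sigma)^{-1/2}$. Computing this $2\times 2$ determinant directly yields
\[
\det(I - D\Sigma) \;=\; \frac{d^2 - t_{S,T}^2}{(t_S+d)(t_T+d)}.
\]
Raising to the $d$-th power (over the $d$ independent coordinates) and combining with the prefactor $((d+t_S)(d+t_T)/d^2)^{-d/2}$ coming from the two likelihood-ratio factors, the $(d+t_S)(d+t_T)$ pieces cancel exactly, leaving the stated $(1 - (t_{S,T}/d)^2)^{-d/2}$. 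Collecting the binomial weights $(1-\eps)^{|S|+|T|}\eps^{2n-|S|-|T|}$ and the $1/Z^2$ normalization from the likelihood ratio produces the claimed identity.

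The only genuine subtlety, rather than obstacle, is confirming that the integrals are finite, i.e.\ that $I - D\Sigma$ is positive definite, equivalently $|t_{S,T}| < d$; for good subsets $S,T$ in the parameter regime of \Cref{thm:huber-lb} this follows from a crude bound $|t_{S,T}| \le t_S^{1/2} t_T^{1/2} \lesssim \alpha^2 n / \eps \ll d$, and in any event the identity is an algebraic one that can be verified whenever both sides are well defined.
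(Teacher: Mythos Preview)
Your proposal is correct and follows essentially the same approach as the paper: expand the squared likelihood ratio from \Cref{claim:huber-lb-likelihood-ratio} as a double sum, observe that $((X_S)_k,(X_T)_k)$ are i.i.d.\ bivariate Gaussians with covariance $\begin{pmatrix} t_S & t_{S,T}\\ t_{S,T} & t_T\end{pmatrix}$, evaluate the per-coordinate Gaussian integral, and cancel the $(d+t_S)(d+t_T)$ factors. The only cosmetic difference is that the paper invokes the specialized bivariate formula (\Cref{cor:mgf_bivariate_gaussian}) while you go directly through \Cref{prop:exponential-expectation-determinant} and the identity $\det(I-\Sigma^{1/2}D\Sigma^{1/2})=\det(I-D\Sigma)$; the resulting determinant computation is identical. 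Your remark about the missing $-1$ is also apt: the paper's usage of $\dchi$ in this section is the raw second moment rather than the quantity in Definition~2.2.
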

\begin{proof}
Using Claim~\ref{claim:huber-lb-likelihood-ratio}, we can write
\begin{equation} \label{eq:chi_square_goal_2}
\begin{split}
\dchi(\cD_1' || \cD_0) = \frac{1}{Z^2} \cdot \sum_{S, T \subset \mathcal{S}} (1-\eps)^{|S|+|T|} \eps^{(n-|S|)+(n-|T|)} \cdot \left(\frac{d+t_S}{d}\right)^{-d/2}\left(\frac{d+t_T}{d}\right)^{-d/2} \\
\cdot \underbrace{\E_{X \sim \cD_0} \left[\exp\left(\frac{\alpha_S(X)}{2(t_S+d)} + \frac{\alpha_T(X)}{2(t_T+d)}\right)\right]}_{B_{S, T}}.
\end{split}
\end{equation}
where $\alpha_S(X) = \|X_S\|^2$, $\alpha_T(X) = \|X_T\|^2$ and
\[
\begin{split}
X_S &= \frac{\alpha}{\eps}\left( \eps \cdot \sum_{i \in S} X_i - (1-\eps) \cdot \sum_{i \not\in S} X_i \right) \\
X_T &= \frac{\alpha}{\eps}\left( \eps \cdot \sum_{i \in T} X_i - (1-\eps) \cdot \sum_{i \not\in T} X_i\right)
\end{split}
\]
are as defined in Claim~\ref{claim:huber-lb-likelihood-ratio}.  Now we explicitly compute the expression above labelled $B_{S,T}$.    In each coordinate $j \in [d]$, $((X_S)_j, (X_T)_j)$ forms a bivariate Gaussian, and $((X_S)_j, (X_T)_j)$ over all $j \in [d]$ are independent and identically distributed. Through direct computation, we get that $((X_S)_j, (X_T)_j) \sim \mathcal{N}(\textbf{0}, \Sigma)$, where 
\[
\Sigma = \left(\begin{matrix} t_S & t_{S, T} \\ t_{S, T} & t_T \end{matrix}\right)
\]
and 
\[
\begin{split}
t_S &= \frac{\alpha^2}{\eps^2} \cdot \left(\eps^2 \cdot |S| + (1-\eps)^2 \cdot (n-|S|)\right) \\
t_T &= \frac{\alpha^2}{\eps^2} \cdot \left(\eps^2 \cdot |T| + (1-\eps)^2 \cdot (n-|T|)\right) \\
t_{S, T} &= \frac{\alpha^2}{\eps^2} \cdot \left(\eps^2 |S \cap T| - \eps (1-\eps) |S \triangle T| + (1-\eps)^2 |(S \cup T)^c|\right) \,.
\end{split}
\]
Therefore, $\left(\frac{(X_S)_j}{\sqrt{2(t_S+d)}}, \frac{(X_T)_j}{\sqrt{2(t_T+d)}}\right) \sim \mathcal{N}(\textbf{0}, \Sigma'),$ where 
\[
\Sigma' = \left(\begin{matrix} \frac{t_S}{2(t_S+d)} & \frac{t_{S, T}}{2\sqrt{(t_S+d)(t_T+d)}} \\ \frac{t_{S, T}}{2\sqrt{(t_S+d)(t_T+d)}} & \frac{t_T}{2(t_T+d)} \end{matrix}\right) \,.
\]
By \Cref{cor:mgf_bivariate_gaussian}, this implies that
\[
\E\left[\exp\left(\frac{(X_S)_j^2}{2(t_S+d)}+\frac{(X_T)_j^2}{2(t_T+d)}\right)\right] = \frac{1}{\sqrt{\left(1-\frac{t_S}{t_S+d}\right) \cdot \left(1-\frac{t_T}{t_T+d}\right) - \frac{t_{S, T}^2}{(t_S+d)(t_T+d)}}} \,.
\]

Therefore, by multiplying this over all $j$ (since $((X_S)_j, (X_T)_j)$ are i.i.d. across all $j \in [d]$), we have that
\begin{align*}
    B_{S, T} &= \left(\left(1-\frac{t_S}{t_S+d}\right) \cdot \left(1-\frac{t_T}{t_T+d}\right) - \frac{t_{S, T}^2}{(t_S+d)(t_T+d)}\right)^{-d/2} \\
    &= \left(\frac{d^2-t_{S, T}^2}{(t_S+d)(t_T+d)}\right)^{-d/2}
\end{align*}
    So, \eqref{eq:chi_square_goal_2} can be rewritten as 
\[\dchi(\cD_1' || \cD_0) = \frac{1}{Z^2} \cdot \sum_{S, T \subset \mathcal{S}} (1-\eps)^{|S|+|T|} \eps^{(n-|S|)+(n-|T|)} \cdot \left(1 - \left(\frac{t_{S, T}}{d}\right)^2\right)^{-d/2}.\]

\end{proof}

Now we can complete the proof of Theorem~\ref{thm:huber-lb} by upper bounding the RHS of Lemma~\ref{lem:huber-lb-explicit-chi2}.

\begin{proof}[Proof of Theorem~\ref{thm:huber-lb}]

Recall that it suffices to prove that $\dchi(\cD_1'||\cD_0) = \E_{X \sim \cD_0} \left(\frac{p_{\cD_1'} (X)}{p_{\cD_0} (X)}\right)^2 \le 1.01$ as, by \Cref{prop:chi-tv}, this implies that the TV distance between $\cD_1'$ and $\cD_0$ is at most $0.1$.  By Lemma~\ref{lem:huber-lb-explicit-chi2} it now suffices to bound the expression
\[
\frac{1}{Z^2} \cdot \sum_{S, T \subset \mathcal{S}} (1-\eps)^{|S|+|T|} \eps^{(n-|S|)+(n-|T|)} \cdot \left(1 - \left(\frac{t_{S, T}}{d}\right)^2\right)^{-d/2} \,.
\]
We can think of $S, T$ as random subsets of $[n]$ where each element $i$ is chosen to be in $S$ (and likewise $T$) with probability $1-\eps$, and then conditioning on $S, T$ having size $(1 - \eps)n \pm K\sqrt{\eps n}$ for some sufficiently large constant $K$. In this case, if we use $S, T \sim \mathcal{S}$ to denote this distribution,  the above expression is equivalent to
\[\E_{S, T \sim \mathcal{S}} \left(1 - \left(\frac{\alpha^2}{\eps^2} \cdot \frac{\eps^2 |S \cap T| - \eps (1-\eps) |S \triangle T| + (1-\eps)^2 |(S \cup T)^c|}{d}\right)^2\right)^{-d/2}.\]
So, now we just need to show that if $n = c \cdot \frac{d \eps^3}{\alpha^4}$ for some small constant $c$, that the above expectation is at most $1.01$.

    Now, recall that we assumed $\frac{\sqrt{d}}{\alpha^2} \le c \cdot \frac{d \eps^3}{\alpha^4}.$ This means that $c \cdot \frac{\sqrt{d} \eps^3}{\alpha^2} \ge 1$, or equivalently $\frac{d \eps^3}{\alpha^4} \ge c^{-2} \eps^{-3}$. Hence, we may assume that $n \ge c^{-1} \cdot \eps^{-3}$. 
    
    If $|S| = a$ and $|T| = b$, and we let $Y := |(S \cup T)^c|$, 
    \[
    \begin{split}
    &\hspace{0.5cm} \eps^2 |S \cap T| - \eps (1-\eps) |S \triangle T| + (1-\eps)^2 |(S \cup T)^c| \\ &= (1-\eps)^2 \cdot Y - \eps (1-\eps) (n-a-Y+n-b-Y) + \eps^2 (a+b-n+Y) \\ &= Y - \eps (1-\eps) (2n-a-b) + \eps^2 (a+b-n) \\ 
    &= Y + \eps (a+b) - \eps (2-\eps) n \,.
    \end{split}
    \]
    Recall that we may always assume $a, b = (1-\eps) n \pm K \sqrt{\eps n}$. Also, note that $Y \sim \hgeom(n, n-a, n-b)$. Therefore, if we condition on fixed $a, b \in [(1-\eps) n - K \sqrt{\eps n}, (1-\eps) n + K \sqrt{\eps n}]$, we have that $\E[Y|a, b] = \frac{(n-a)(n-b)}{n} = \eps^2 n \pm 2 K \eps \sqrt{\eps n} \pm K^2 \eps$. By our assumption that $n \ge c^{-1} \cdot \eps^{-3}$ and choosing $c$ sufficiently small in terms of $K$, this can be bounded as $\eps^2 n \pm 3 K \eps \sqrt{\eps n}$. 
    
    Moreover, by \Cref{prop:hypergeometric-subgaussian}, since $n-a, n-b \le \eps n + K \sqrt{\eps n} \le 2 \eps n$, 
    \[
    \BP\left(\left\lvert Y - \BE[Y | a, b]\right\rvert \ge t \sqrt{\eps n}  | a, b\right) \le 2 e^{-2 t^2 (\eps n)/(n-a)} \le 2 e^{-t^2} \,. 
    \]
    Because $|\BE[Y|a, b] - \eps^2 n| \le 3 K \eps \sqrt{\eps n}$, this means $\BP(|Y - \eps^2 n| \ge (3K \eps +t) \sqrt{\eps n}) \le 2e^{-t^2}$.
    Hence, because $\eps (a+b) - \eps (2-\eps) n = -\eps^2 n^2 \pm 2K \sqrt{\eps n}$, this means $\BP(|Y+\eps(a+b)-\eps(2-\eps) n| \ge (5K+t) \sqrt{\eps n}) \le 2e^{-t^2}$.
    In addition, we know that $Y$ is bounded by $\min(n-a, n-b) \le 2 \eps n$, so overall $|Y+\eps(a+b)-\eps(2-\eps) n|$ is also bounded by $4 \eps n$ with probability 1.
    
    We can rewrite our goal as bounding
\[\E_{S, T \sim \mathcal{S}} \left(1 - \left(\frac{\alpha^2}{\eps^2} \cdot \frac{Y+\eps(a+b) - \eps(2-\eps) n}{d}\right)^2\right)^{-d/2}.\]

    Note that if $|x| \le 0.2$, then $1-x^2 \ge e^{-2x^2}$, so $(1-x^2)^{-d/2} \le e^{-2x^2 \cdot -d/2} = e^{d x^2}$.
    We know that $|Y + \eps (a+b)-\eps (2-\eps) n| \le 4 \eps n$ with probability $1$, so as long as $\frac{\alpha^2}{\eps^2} \cdot \frac{4 \eps n}{d} \le 0.4,$ which holds when $n \le 0.1 \cdot \frac{d \eps^3}{\alpha^4} \le 0.1 \cdot \frac{d \eps}{\alpha^2},$ we just need to bound
\begin{align} \label{eq:something}
    \E_{S, T \sim \mathcal{S}} \left[\exp\left(\frac{\alpha^4}{\eps^4} \cdot \frac{1}{d} \cdot (Y + \eps(a+b) - \eps (2-\eps) n)^2 \right)\right].
\end{align}
    Defining $C$ such that $Y + \eps(a+b) - \eps (2-\eps) n = C \sqrt{\eps n}$, then $\BP(|C| \ge 5K+t) \le 2e^{-t^2}$. So, \eqref{eq:something} equals
\[\E_{S, T \sim \mathcal{S}} \left[\exp\left(\frac{\alpha^4}{\eps^4} \cdot \frac{1}{d} \cdot C^2 \eps n \right)\right] = \E_{S, T \sim \mathcal{S}} \left[\exp\left(C^2 \cdot \frac{\alpha^4}{\eps^3} \cdot \frac{n}{d}\right)\right] = \E_{S, T \sim \mathcal{S}} \left[e^{C^2 \cdot c}\right],\]
    since $n \le c d \cdot \frac{\eps^3}{\alpha^4}$. By our bounds on $C$, if we assume $c$ is sufficiently small in terms of $K$, this is at most $1.01$, which means $\dchi(\cD_1'||\cD_0) \le 1.01.$ This concludes the proof, since \Cref{prop:chi-tv} implies $\dtv(\cD_1', \cD_0) \le 0.05$, and we already know that $\dtv(\cD_1', \cD_1) \le 2 \cdot 10^{-4}$.
\end{proof}
 
\section{Improved Lower Bound against Oblivious Adversaries} \label{sec:oblivious-lb}

In this section, we further improve our lower bound from \Cref{sec:huber-lb} against an oblivious adversary.

\subsection{Lower bound instance}

We first construct the distributions for the lower bound instance. Fix parameters $\eps < \alpha \le 1$ and dimension $d$, and consider drawing $n$ samples for some choice of $n$. We will also set an auxiliary parameter $\beta$, which will depend on $\eps, \alpha, d, n$.

The null distribution $\cD_0$ will simply be $n$ i.i.d. samples from $\cN(0, I)$. To generate the alternative distribution $\cD_1$, we perform the following steps: 
\begin{enumerate}
\item Select a subset $A \subset [n]$ of size $\eps n$ randomly. Let $A^c = [n] \backslash A$
\item Draw $\eps \cdot n$ points $\{X_i\}_{i \in A}$ i.i.d. from $\cN(0,I)$.  Set $\mathbf{R}_A := \Sum(A) = \sum_{i \in A} X_i$. 
\item Draw the vector $z \in \R^d$ from  $\cN\left(0, \frac{\alpha^2}{d} \cdot I\right).$
\item Define $\mu := -\beta \cdot \mathbf{R}_A - z$, and draw $(1-\eps) n$ points $\{X_i\}_{i \in A^c}$ from the distribution $\cN\left(\mu, I\right)$.
\end{enumerate}
For simplicity, we may write $X = (X_1, \dots, X_n)$, both in the null and alternative settings.

Note that with very high probability, $\|z\|_2 \le 2 \alpha$. We will also ensure that $\beta$ is chosen so that with very high probability, $\beta \cdot \|\mathbf{R}_A\|_2 \le 2 \alpha$. As a result, this alternative construction indeed has $\|\mu\|_2 \le O(\alpha)$.

In the rest of this section, we prove that it is statistically hard to distinguish between $\mcl{D}_0$ and $\mcl{D}_1$, for an appropriate choice of $\beta$.

\begin{theorem} \label{thm:oblivious-lb}
    Suppose that $n \le c \cdot \min\left(\frac{d^{2/3} \eps^{2/3}}{\alpha^{8/3}}, \frac{d \eps}{\alpha^2}\right)$ for some sufficiently small constant $c > 0$, and that $\eps \le \alpha \le 1$ and $n \ge \frac{\sqrt{d}}{\alpha^2} + \frac{d \eps^3}{\alpha^4}$. Then $\dtv(\mathcal{D}_0, \mathcal{D}_1) \le 0.1$.
\end{theorem}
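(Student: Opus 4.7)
The plan is to upper-bound $\dtv(\cD_0, \cD_1)$ via a (conditional) $\chi^2$-divergence, mirroring the Huber lower bound of Section~\ref{sec:huber-lb} but exploiting the additional randomness $-\beta \mathbf{R}_A$ allowed by the oblivious model. The crucial feature of the construction---drawing $\{X_i\}_{i \in A}$ \emph{before} $\mu$ and then setting $\mu = -\beta \mathbf{R}_A - z$---is what keeps the calculation tractable, because conditional on $(A, \mathbf{R}_A, z)$ the remaining samples are i.i.d.\ Gaussians with a known mean.

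First I would write the likelihood ratio $L(X) := p_{\cD_1}(X)/p_{\cD_0}(X)$ as an expectation over a uniform random $A \in \binom{[n]}{\eps n}$ and $z \sim \cN(0, (\alpha^2/d) I)$. After cancelling the $\cN(0,I)^{\otimes n}$ density and collecting the quadratic in the $X_i$ for $i \in A^c$,
\[
  L(X) = \E_A\, \E_z \exp\!\left( \langle \mathbf{R}_{A^c},\, -\beta \mathbf{R}_A - z\rangle - \tfrac{(1-\eps)n}{2}\, \|\beta \mathbf{R}_A + z\|_2^2 \right),
\]
where $\mathbf{R}_{A^c} := \Sum(A^c)$. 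Then $\chi^2(\cD_1 \| \cD_0) + 1 = \E_{X \sim \cD_0}[L(X)^2]$ introduces two independent copies $(A_1, z_1), (A_2, z_2)$; after swapping expectations, the inner integral over $X \sim \cN(0,I)^{\otimes n}$ is a multivariate Gaussian integral whose quadratic form factors across $i \in [n]$ according to which of the four disjoint pieces $A_1 \cap A_2$, $A_1 \setminus A_2$, $A_2 \setminus A_1$, $(A_1 \cup A_2)^c$ contains $i$. Iterating Proposition~\ref{prop:shifted-Gaussian-integral} to integrate out $z_1, z_2$ then produces a closed form that depends on $(A_1, A_2)$ only through $k := |A_1 \cap A_2|$, analogous to $t_{S,T}$ in Lemma~\ref{lem:huber-lb-explicit-chi2} but with additional contributions of size $\alpha^2/d$ coming from the $z_j$-variance.

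Next, $k \sim \hgeom(n, \eps n, \eps n)$ concentrates around $\eps^2 n$ to within $O(\eps\sqrt{n\log n})$ by Corollary~\ref{cor:hypergeometric}; restricting $\cD_1$ to this typical event costs only $o(1)$ in TV distance, so it suffices to bound the $\chi^2$ on the typical event. I would set $\beta = 1/((1-\eps)n)$, the unique value that makes $\E[\Sum([n]) \mid \mathbf{R}_A]$ vanish to leading order and under which the dominant terms of the exponent cancel. Under the assumption $n \le c \cdot d\eps/\alpha^2$ the effective quadratic form has operator norm bounded away from $1/2$, so Proposition~\ref{prop:shifted-Gaussian-integral} applies and Taylor expansion of the $\log$ is justified; the tighter constraint $n \le c \cdot d^{2/3}\eps^{2/3}/\alpha^{8/3}$ is then exactly what is needed to keep the leading surviving Taylor term---a product of $\beta^2$, the typical $k$-fluctuation, and $\alpha^2/d$---bounded by a small constant. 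Applying Fact~\ref{prop:chi-tv} closes the proof.

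The main obstacle will be carrying out the joint Gaussian integration over $(X, z_1, z_2)$ so that the answer is expressible cleanly as a function of $k$ and $\beta$ alone. The $z_j$'s are coupled to the $X_i$'s both directly (via $\langle \mathbf{R}_{A_j^c}, z_j\rangle$) and through $\beta \mathbf{R}_{A_j}$, so the integration does not factor over $[n]$ unless the order is chosen carefully---first the $X_i$ for $i \in (A_1 \cup A_2)^c$, then $A_1 \triangle A_2$ (each such $X_i$ coupled to a single $z_j$), and finally $A_1 \cap A_2$ (coupled to both). Each substep reduces to a scalar Gaussian MGF via Fact~\ref{prop:mgf_squared_gaussian}, Corollary~\ref{cor:mgf_bivariate_gaussian}, or Proposition~\ref{prop:exponential-expectation-determinant}, but the bookkeeping that yields the two terms in the minimum of the stated bound will be delicate.
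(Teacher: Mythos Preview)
Your overall architecture---computing the $\chi^2$ by introducing two copies $(A_1,z_1),(A_2,z_2)$, reducing to a function of $k=|A_1\cap A_2|$, and finishing with hypergeometric concentration---matches the paper. But your choice of $\beta$ is a genuine gap, and it breaks the argument in two separate ways.

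First, with $\beta = 1/((1-\eps)n)$ you get $\beta\|\mathbf R_A\|_2 \approx \sqrt{\eps d/n}$; under the hypothesis $n \le c\cdot d\eps/\alpha^2$ this is at least $\alpha/\sqrt{c}$, so for small $c$ the mean $\mu = -\beta\mathbf R_A - z$ has $\|\mu\|_2$ a large multiple of $\alpha$ (and much larger when $n$ is far below the threshold). The construction then no longer certifies hardness of the $\alpha$-testing problem, and the larger signal makes the $\chi^2$ harder, not easier, to bound. The paper chooses $\beta \approx \alpha^2/(d\eps)$, orders of magnitude smaller than $1/n$, precisely so that $\beta\|\mathbf R_A\|_2 \le O(\alpha)$.

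Second, ``make $\E[\Sum([n])\mid \mathbf R_A]$ vanish'' is the wrong objective. The paper integrates out $z$ \emph{first} (its Lemma~\ref{lem:likelihood-conditioned-subset}), writes the $\chi^2$ as $\E_{A,B}\det(I+G_{A,B})^{-d/2}$ via Proposition~\ref{prop:exponential-expectation-determinant}, and evaluates the determinant explicitly (its Lemma on $\det(I+G_{A,B})$) as a difference of two squares. The second squared term contains the piece $\alpha^2 n - 2\beta d\eps n + \beta^2 d\eps n^2$, which is independent of $\gamma = |A\cap B|/n$; the paper's $\beta = \tfrac{1}{n}\bigl(1-\sqrt{1-\alpha^2 n/(d\eps)}\bigr)$ is chosen precisely to zero this quadratic. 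With your $\beta$ the same quantity equals $\alpha^2 n - d\eps(1-2\eps)/(1-\eps)^2 \approx -d\eps$, which is \emph{not} small, and then $(\,\cdot\,)^2/d^2$ in the exponent is $\Theta(\eps^2)$, blowing up the $(\cdot)^{-d/2}$. Only after the paper's cancellation do the surviving $\gamma$-dependent terms produce exactly the thresholds $d^{2/3}\eps^{2/3}/\alpha^{8/3}$ and $d\eps/\alpha^2$.

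In short: integrate $z$ first to get a clean $4\times 4$ determinant, compute it, and let the algebra dictate $\beta$; the heuristic ``cancel the mean of the sum'' gives the wrong scaling.
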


This implies that no algorithm can successfully distinguish between $\mathcal{D}_0$ and $\mathcal{D}_1$ with probability more than $0.55$, which proves the desired lower bound when $c \cdot \min\left(\frac{d^{2/3} \eps^{2/3}}{\alpha^{8/3}}, \frac{d \eps}{\alpha^2}\right) \ge \frac{\sqrt{d}}{\alpha^2} + \frac{d \eps^3}{\alpha^4}.$ Alternatively, we may either use the non-robust lower bound~\cite{DiakonikolasKS17} or \Cref{thm:huber-lb}.
Finally, when $\alpha \ge \eps$, it is well-known that this problem is impossible, since the null and alternative distributions have total variation distance $\leq \cor$.

We will prove the lower bound via a chi-square computation. For this, we must compute likelihood ratios, which we will do in the next subsection.

\subsection{Likelihood Ratio Computation}

First, we will compute a formula for the likelihood ratio between $\cD_1$ and $\cD_0$, if we condition on the set $A \subset [n]$ in the alternative hypothesis.

\begin{definition}
Recall that $p_{\cD_0}(X), p_{\cD_1}(X)$ denote the joint PDF of the points $X = (X_1, \dots, X_n)$ drawn according to $\cD_0$ and $\cD_1$, respectively.
We also define $p_{A}(X)$ to denote the PDF of $X_1, \dots, X_n$ drawn according to $\cD_1$, conditioned on the first step selecting $A$.

In addition, we will define $\mathbf{R}_A := \Sum(A) = \sum_{i \in A} X_i$, and $\mathbf{T}_A(X) := \sum_{i \in A^c} X_i$. Usually, the choice of $X_1, \dots, X_n$ will be clear, in which case we will drop the argument $X$.
\end{definition}

\begin{lemma} \label{lem:likelihood-conditioned-subset}
    Conditioned on $A$, the likelihood ratio is
\[\frac{p_{A}(X)}{p_{\cD_0}(X)} = \left(1 + \frac{(1-\eps) n \cdot \alpha^2}{d}\right)^{-d/2} \cdot \exp\left(-\frac{(1-\eps) n \beta^2 d \cdot \|\mathbf{R}_A\|_2^2 + 2 \beta d \cdot \langle \mathbf{R}_A, \mathbf{T}_A \rangle - \alpha^2 \cdot \|\mathbf{T}_A\|_2^2}{2((1-\eps) \alpha^2 n+d)}\right).\]
\end{lemma}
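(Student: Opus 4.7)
The plan is to carry out a direct density computation: write the conditional likelihood as a Gaussian integral over the nuisance variable $z \sim \cN(0,(\alpha^2/d) I)$, evaluate it using \cref{prop:shifted-Gaussian-integral}, and then algebraically match the result to the stated formula.

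First I will write out $p_A(X)$. Conditioned on $A$, the samples $\{X_i\}_{i \in A}$ are i.i.d. $\cN(0,I)$ and, conditioned further on $z$, the samples $\{X_i\}_{i \in A^c}$ are i.i.d. $\cN(\mu, I)$ with $\mu = -\beta \mathbf{R}_A - z$. Using $e^{-\|X_i - \mu\|^2/2} = e^{-\|X_i\|^2/2}\cdot e^{\langle X_i,\mu\rangle - \|\mu\|^2/2}$ and summing over $i \in A^c$, I get
\[
\frac{p_A(X)}{p_{\cD_0}(X)} = \E_z \exp\!\Big(\langle \mathbf{T}_A, \mu\rangle - \tfrac{(1-\eps)n}{2}\|\mu\|^2\Big).
\]
Substituting $\mu = -\beta \mathbf{R}_A - z$ and expanding $\|\mu\|^2$, the exponent becomes (writing $N := (1-\eps)n$ for brevity)
\[
-\beta \langle \mathbf{T}_A,\mathbf{R}_A\rangle - \tfrac{N\beta^2}{2}\|\mathbf{R}_A\|^2 - \langle \mathbf{T}_A + N\beta \mathbf{R}_A,\, z\rangle - \tfrac{N}{2}\|z\|^2.
\]

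Next I will integrate out $z$. The first two terms pull out of the expectation, and the remaining expectation fits exactly the template of \cref{prop:shifted-Gaussian-integral} with $a = N$, $s = \mathbf{T}_A + N\beta \mathbf{R}_A$, and $\delta^2 = \alpha^2/d$. Applying that proposition gives a prefactor $(1 + N\alpha^2/d)^{-d/2}$ (matching the claimed $(1+(1-\eps)n\alpha^2/d)^{-d/2}$) and an extra exponential factor $\exp\!\big(\alpha^2 \|\mathbf{T}_A + N\beta \mathbf{R}_A\|^2 / (2(N\alpha^2 + d))\big)$.

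Finally, I will collect exponents and simplify. Putting everything over the common denominator $2(N\alpha^2+d)$ and expanding $\|\mathbf{T}_A + N\beta \mathbf{R}_A\|^2 = \|\mathbf{T}_A\|^2 + 2N\beta \langle \mathbf{T}_A,\mathbf{R}_A\rangle + N^2\beta^2\|\mathbf{R}_A\|^2$, the coefficient of $\langle \mathbf{T}_A,\mathbf{R}_A\rangle$ becomes $-2\beta(N\alpha^2+d) + 2N\alpha^2 \beta = -2\beta d$, the coefficient of $\|\mathbf{R}_A\|^2$ becomes $-N\beta^2(N\alpha^2 + d) + N^2\alpha^2\beta^2 = -N\beta^2 d$, and the coefficient of $\|\mathbf{T}_A\|^2$ is $\alpha^2$. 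Dividing back by $2(N\alpha^2 + d)$ yields exactly the exponent in the statement, completing the proof.

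This is essentially a bookkeeping exercise; the only substantive ingredient is \cref{prop:shifted-Gaussian-integral}. The main thing to be careful about is sign tracking when substituting $\mu = -\beta \mathbf{R}_A - z$ and the identification of $s$ in the shifted Gaussian integral; every other step is a routine algebraic simplification.
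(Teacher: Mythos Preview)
Your proposal is correct and follows essentially the same approach as the paper: condition on $z$, reduce the likelihood ratio to a Gaussian integral in $z$, evaluate it via \cref{prop:shifted-Gaussian-integral} with $a=(1-\eps)n$, $s=\mathbf{T}_A+(1-\eps)n\beta\mathbf{R}_A$, $\delta^2=\alpha^2/d$, and then collect the quadratic terms over the common denominator $2((1-\eps)\alpha^2 n + d)$. The paper's proof is the same computation written out a bit more explicitly.
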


\begin{proof}
    Suppose we additionally condition on the value $z \sim \cN(0, \frac{\alpha^2}{d} \cdot I)$. Then,
\begin{align*}
    \log \frac{p_A(X|z)}{p_{\cD_0}(X)}
    &= \sum_{i \in A^c} \left(-\frac{1}{2} \left\|X_i + \beta \cdot \mathbf{R}_A + z\right\|_2^2 + \frac{1}{2} \left\|X_i\right\|_2^2 \right) \\
    &= \sum_{i \in A^c} \left(-\langle X_i, \beta \cdot \mathbf{R}_A+z \rangle - \frac{1}{2} \left\|\beta \cdot \mathbf{R}_A + z\right\|_2^2\right) \\
    &= -\langle \mathbf{T}_A, \beta \cdot \mathbf{R}_A + z \rangle - \frac{(1-\eps) n}{2} \cdot \|\beta \cdot \mathbf{R}_A+z\|_2^2 \\
    &= -\beta \cdot \langle \mathbf{T}_A, \mathbf{R}_A \rangle - \frac{(1-\eps) n}{2} \cdot \beta^2 \cdot \|\mathbf{R}_A\|_2^2 - \langle \mathbf{T}_A + (1-\eps) n \cdot \beta \cdot \mathbf{R}_A, z \rangle - \frac{(1-\eps) n}{2} \cdot \|z\|_2^2.
\end{align*}
    So,
\[\frac{p_A(X|z)}{p_{\cD_0}(X)} = \exp\left(-\beta \cdot \langle \mathbf{T}_A, \mathbf{R}_A \rangle - \frac{(1-\eps) n}{2} \cdot \beta^2 \cdot \|\mathbf{R}_A\|_2^2 - \langle \mathbf{T}_A + (1-\eps) n \cdot \beta \cdot \mathbf{R}_A, z \rangle - \frac{(1-\eps) n}{2} \cdot \|z\|_2^2\right).\]

    Next, we remove the conditioning on $z$. Indeed, by using the above equation followed by \Cref{prop:shifted-Gaussian-integral}, we have
\begin{align*}
    &\hspace{0.5cm} \frac{p_A(X)}{p_{\cD_0}(X)} \\
    &= \BE_{z \sim \cN(0, \frac{\alpha^2}{d} \cdot I)} \exp\left(-\beta \cdot \langle \mathbf{T}_A, \mathbf{R}_A \rangle - \frac{(1-\eps) n}{2} \cdot \beta^2 \cdot \|\mathbf{R}_A\|_2^2 - \langle \mathbf{T}_A + (1-\eps) n \cdot \beta \cdot \mathbf{R}_A, z \rangle - \frac{(1-\eps) n}{2} \cdot \|z\|_2^2\right)
    \\
    &= \exp\left(-\beta \langle \mathbf{T}_A, \mathbf{R}_A \rangle - \frac{(1-\eps) n}{2}  \beta^2  \|\mathbf{R}_A\|_2^2\right)  \BE_{z \sim \cN(0, \frac{\alpha^2}{d} \cdot I)} \exp\left(- \frac{(1-\eps) n}{2}  \|z\|_2^2 +\langle \mathbf{T}_A + (1-\eps) n \beta \mathbf{R}_A, z \rangle\right) \\
    &= \exp\left(-\beta  \langle \mathbf{T}_A, \mathbf{R}_A \rangle - \frac{(1-\eps) n}{2}  \beta^2 \|\mathbf{R}_A\|_2^2\right) \cdot \left(1 + \frac{(1-\eps) n \cdot \alpha^2}{d}\right)^{-d/2} \cdot \exp\left(\frac{\|(1-\eps) n \beta \mathbf{R}_A + \mathbf{T}_A \|_2^2}{2 ((1-\eps) n + d/\alpha^2)}\right).
\end{align*}
We can combine the terms that are in terms of $\|\mathbf{R}_A\|_2^2, \|\mathbf{T}_A\|_2^2,$ and $\langle \mathbf{R}_A, \mathbf{T}_A \rangle$, to simplify this as
\[\left(1 + \frac{(1-\eps) n \cdot \alpha^2}{d}\right)^{-d/2} \cdot \exp\left(-\frac{(1-\eps) n \beta^2 d \cdot \|\mathbf{R}_A\|_2^2 + 2 \beta d \cdot \langle \mathbf{R}_A, \mathbf{T}_A \rangle - \alpha^2 \cdot \|\mathbf{T}_A\|_2^2}{2((1-\eps) \alpha^2 n+d)}\right). \qedhere\]
\end{proof}

Now, recall that the $\chi^2$ divergence $\dchi(\cD_1||\cD_0)$ equals 
\[\mathop{\BE}\limits_{X \sim \cD_0} \left(\frac{p_{\cD_1}(X)}{p_{\cD_0}(X)}\right)^2 = \mathop{\BE}\limits_{X_1, \dots, X_n \sim \cN(0, I)} \mathop{\BE}\limits_{A, B \subset [n]} \left(\frac{p_{A}(X) p_{B}(X)}{p_{\cD_0}(X)^2}\right),\]
where $A, B$ will always denote random subsets of size $\eps n$ in $[n]$. Using \Cref{lem:likelihood-conditioned-subset}, we can write this as
\begin{multline} 
    \hspace{-0.4cm} \left(1 + \frac{(1-\eps) n \alpha^2}{d}\right)^{-d} \cdot \mathop{\BE}\limits_{A, B} \mathop{\BE}\limits_{X_1, \dots, X_n \sim \cN(0, I)} 
    \exp\Biggr[ \\
    \hspace{0.5cm} -\frac{1}{2} \cdot \frac{(1-\eps) n \beta^2 d \cdot (\|\mathbf{R}_A\|_2^2+\|\mathbf{R}_B\|_2^2) + 2 \beta d \cdot (\langle \mathbf{R}_A, \mathbf{T}_A \rangle + \langle \mathbf{R}_B, \mathbf{T}_B \rangle) - \alpha^2 \cdot (\|\mathbf{T}_A\|_2^2 + \|\mathbf{T}_B\|_2^2)}{(1-\eps) \alpha^2 n+d}\Biggr].
    \label{eq:chi-square-bash1}  
\end{multline}

Now, the exponential term can be decomposed coordinate-wise, and since each coordinate of $X_1, \dots, X_n$ is independent if we condition on $A, B$, we can therefore write \eqref{eq:chi-square-bash1} after removing the expectation on $A, B$ as
\begin{equation} \label{eq:chi-square-bash2}
    \left(1 + \frac{(1-\eps) n \alpha^2}{d}\right)^{-d} \cdot \left(\mathop{\BE}_{x_1, \dots, x_n \sim \cN(0, 1)} \exp\left(-\frac{1}{2} \cdot \frac{x^\top (M_A + M_B) x}{(1-\eps) \alpha^2 n + d}\right)\right)^d
\end{equation}
Above, each $x_i$ is a standard univariate Gaussian, and $M_A$ is the $n \times n$ matrix with blocks
\[
\newcommand\explainA{\overbrace{\hphantom{\begin{matrix}0&0&0&0\end{matrix}}}^{A}}
\newcommand{\explainB}{\overbrace{\hphantom{\begin{matrix}0&0&0&0\end{matrix}}}^{A^c}}
\newcommand{\explainC}{A\hphantom{^c}\left\{\vphantom{\begin{matrix}0\\0\end{matrix}}\right.}
\newcommand{\explainD}{A^c\left\{\vphantom{\begin{matrix}0\\0\end{matrix}}\right.}
\settowidth{\dimen0}{$\begin{pmatrix}\vphantom{\begin{matrix}0\\0\\0\end{matrix}}\end{pmatrix}$}
\settowidth{\dimen2}{$\explainB$}
\begin{matrix}
\hspace*{2em}\begin{matrix}\hspace*{0.5\dimen0}\explainA&\explainB\hspace*{0.5\dimen0}\end{matrix}
\\[-0.5ex]
\hspace*{-1em}\begin{matrix}\explainC\\ \explainD \end{matrix}
\begin{pmatrix}&\\[-0.5ex]
\makebox[\dimen2]{$(1-\eps) n \beta^2 d$}
&
\makebox[\dimen2]{$\beta d$} \\[-0.5ex]&\\
\vphantom{\begin{matrix}0\\0\end{matrix}}\text{$\beta d$} & \text{$-\alpha^2$}
\end{pmatrix}
\end{matrix} \,
\]
and $M_B$ is defined similarly.
Here, each block is dependent on whether the row/column indices are in $A$ or $A^c$, and all entries in the same block are the same.
Note that $M_A$ has rank at most $2$. Moreover, by projecting onto the space of vectors $v$ where $v_i$ is constant for all $i \in A$, and constant for all $i \in A^c$, we have that $M_A$ has the same nonzero eigenvalues as $\sqrt{D_A} \Sigma_{A} \sqrt{D_A}$, where
\[D_A = \begin{pmatrix} \eps n & 0 \\ 0 & (1-\eps) n\end{pmatrix}, \hspace{0.5cm} \Sigma_{A} = \begin{pmatrix} (1-\eps) n \beta^2 d & \beta d \\ \beta d & -\alpha^2 \end{pmatrix}.\]
If we define $M_{A, B} = M_A+M_B$, we can write $M_{A, B}$ in a similar block-diagonal format, where the rows/columns are split based on the index being in $A \cap B, A \cap B^c, A^c \cap B,$ or $A^c \cap B^c$.
Therefore, if $|A \cap B| = \gamma \cdot n$ for some $0 \le \gamma \le \eps$, $M_{A, B}$ has the same nonzero eigenvalues as $\sqrt{D_{A, B}} \Sigma_{A, B} \sqrt{D_{A, B}}$, where
\[D_{A, B} = \begin{pmatrix} \gamma n & 0 & 0 & 0\\ 0 & (\eps-\gamma) n & 0 & 0 \\ 0 & 0 & (\eps-\gamma) n & 0 \\ 0 & 0 & 0 & (1-2\eps+\gamma) n \end{pmatrix}\]
and
\[\Sigma_{A, B} := \begin{pmatrix} 2(1-\eps) n \beta^2 d & (1-\eps) n \beta^2 d + \beta d & (1-\eps) n \beta^2 d + \beta d & 2 \beta d \\ (1-\eps) n \beta^2 d + \beta d & (1-\eps) n \beta^2 d - \alpha^2 & 2\beta d & \beta d - \alpha^2 \\ (1-\eps) n \beta^2 d + \beta d & 2 \beta d & (1-\eps) n \beta^2 d - \alpha^2 & \beta d - \alpha^2 \\ 2 \beta d & \beta d - \alpha^2 & \beta d - \alpha^2 & -2 \alpha^2 \end{pmatrix}.\]

Now, for any subsets $A, B \subset [n]$ of size $\eps \cdot n$, we define $G_{A} = \frac{1}{(1-\eps) \alpha^2 n + d} \cdot M_{A}$ and $G_{A, B} = G_A+G_B = \frac{1}{(1-\eps) \alpha^2 n + d} \cdot M_{A, B}$.
We note the following basic proposition.

\begin{proposition} \label{prop:no-huge-neg-eigs}
    Suppose that $n \le \frac{0.1 d}{\alpha^2}$ and $0 \le \beta \le \frac{0.1}{n}$. Then, all eigenvalues of $G_A$ are strictly greater than $-\frac{1}{2}$.

    As a direct corollary, all eigenvalues of $G_{A, B}$, for any $A, B,$ are strictly greater than $-1$.
\end{proposition}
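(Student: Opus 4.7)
The plan is to reduce the spectral question for the large block matrix $M_A$ to a $2\times 2$ eigenvalue computation by exploiting the fact already noted in the excerpt that the nonzero eigenvalues of $M_A$ coincide with those of the $2\times 2$ matrix $\sqrt{D_A}\,\Sigma_A\sqrt{D_A}$. The zero eigenvalues of $M_A$ (and hence of $G_A$) are automatically greater than $-1/2$, so it suffices to bound the smaller of the two eigenvalues of
\[
N_A := \sqrt{D_A}\,\Sigma_A\,\sqrt{D_A} \;=\; \begin{pmatrix} (1-\eps)\eps n^2 \beta^2 d & \beta d\sqrt{\eps(1-\eps)}\,n \\ \beta d\sqrt{\eps(1-\eps)}\,n & -(1-\eps) n \alpha^2 \end{pmatrix},
\]
and then divide by $K := (1-\eps)\alpha^2 n + d$.

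Write $N_A = \begin{pmatrix} a & b \\ b & c \end{pmatrix}$ with $a = (1-\eps)\eps n^2 \beta^2 d \geq 0$, $c = -(1-\eps)n\alpha^2 \leq 0$, and $b = \beta d\sqrt{\eps(1-\eps)}\,n$. The smallest eigenvalue of $N_A$ is $\frac{(a+c) - \sqrt{(a-c)^2 + 4b^2}}{2}$, so the condition $\lambda_{\min}(G_A) > -1/2$ is equivalent to $\lambda_{\min}(N_A) > -K/2$, which, after checking $a+c+K > 0$ and squaring, reduces to the clean algebraic inequality
\[
(2a + K)(2c + K) \;>\; 4b^2.
\]
This is the key step I would carry out.

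Now I would plug in the hypotheses $n\alpha^2 \leq 0.1\,d$ and $\beta n \leq 0.1$ to verify the inequality with room to spare. Specifically, $2c + K = -(1-\eps)\alpha^2 n + d \geq 0.9\,d$, and $2a + K \geq K \geq d$, so $(2a+K)(2c+K) \geq 0.9\,d^2$. Meanwhile $4b^2 = 4\beta^2 n^2 \eps(1-\eps)\,d^2 \leq 4 \cdot 0.01 \cdot d^2 = 0.04\,d^2$, which is comfortably smaller than $0.9\,d^2$. One should also briefly check $a+c+K = a + d > 0$ to justify squaring. There is no real obstacle here — the main (very minor) care needed is tracking signs carefully and making sure the $2\times 2$ reduction correctly accounts for the zero eigenvalues of $M_A$ that arise because $\Sigma_A$ is multiplied by the projection onto the two-dimensional subspace spanned by $\mathbf{1}_A$ and $\mathbf{1}_{A^c}$.

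For the corollary, $G_{A,B} = G_A + G_B$, and by Weyl's inequality $\lambda_{\min}(G_{A,B}) \geq \lambda_{\min}(G_A) + \lambda_{\min}(G_B) > -1/2 - 1/2 = -1$, since the bound on $G_A$ applies equally to $G_B$ by symmetry. This gives the stated corollary immediately.
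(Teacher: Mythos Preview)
Your proof is correct and follows essentially the same approach as the paper: both reduce to the $2\times 2$ matrix $\sqrt{D_A}\Sigma_A\sqrt{D_A}$ and verify that its eigenvalues, after dividing by $K=(1-\eps)\alpha^2 n + d$, exceed $-1/2$. The paper phrases this via the trace and determinant of $\hat G_A$ (checking $(\lambda_1+\tfrac12)+(\lambda_2+\tfrac12)>0$ and $(\lambda_1+\tfrac12)(\lambda_2+\tfrac12)>0$), while you use the explicit quadratic formula for $\lambda_{\min}$ and arrive at the equivalent inequality $(2a+K)(2c+K)>4b^2$; the numerical verifications are the same, and your Weyl-inequality argument for the corollary is exactly what the paper intends.
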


\begin{proof}
    It suffices to prove the claim for $\hat{G}_A := \frac{1}{(1-\eps) \alpha^2 n + d} \cdot \sqrt{D_A} \Sigma_{A} \sqrt{D_A}$.
    Note that $\hat{G}_A$ is a $2 \times 2$ symmetric matrix. If $\hat{G}_A$ has eigenvalues $\lambda_1, \lambda_2$, then we need to show that $\lambda_1+\frac{1}{2}, \lambda_2+\frac{1}{2} > 0$. It therefore suffices to show that $(\lambda_1+\frac{1}{2}) + (\lambda_2+\frac{1}{2}) = \Tr(\hat{G}_A)+1$ and $(\lambda_1+\frac{1}{2}) \cdot (\lambda_2+\frac{1}{2}) = \det(\hat{G}_A) + \frac{1}{2} \Tr(\hat{G}_A) + \frac{1}{4}$ are both strictly greater than $0$.

    Note that 
\[\Tr(\hat{G}_A) = \frac{1}{(1-\eps) \alpha^2 n + d} \cdot \Tr(\Sigma_{A} \cdot D_A) = \frac{1}{(1-\eps) \alpha^2 n + d} \cdot \left[(1-\eps) n \beta^2 d \cdot \eps n - \alpha^2 \cdot (1-\eps) n\right] \ge \frac{-\alpha^2 (1-\eps) n}{(1-\eps) \alpha^2 n + d}.\]
    We are assuming that $n \le \frac{0.1 d}{\alpha^2},$ which means that $(1-\eps) \alpha^2 n \le 0.1 d$. So in fact, $\Tr(\hat{G}_A) \ge -0.1,$ so $\Tr(\hat{G}_A)+1 \ge 0.9 > 0$.

    Next,
\begin{align*}
    \det(\hat{G}_A) &= \frac{1}{((1-\eps) \alpha^2 n + d)^2} \cdot \det(D_A) \cdot \det(\Sigma_{A})  \\
    &= \frac{\eps n \cdot (1-\eps) n}{((1-\eps) \alpha^2 n + d)^2} \cdot \left((1-\eps) n \beta^2 d \cdot (-\alpha^2) - (\beta d)^2\right) \\
    &= -\frac{\eps (1-\eps) n^2 \cdot \beta^2 d \cdot ((1-\eps) \alpha^2 n + d)}{((1-\eps) \alpha^2 n + d)^2}\\
    &= - \frac{\eps (1-\eps) n^2 \cdot \beta^2 d}{(1-\eps) \alpha^2 n + d}.
\end{align*}
    Since $0 \le \beta \le 0.1/n$, this means $\det(\hat{G}_A) \ge -\frac{0.01 \eps (1-\eps) d}{(1-\eps) \alpha^2 n + d} \ge -\frac{0.01 d}{d} = -0.01$. So, $\det(\hat{G}_A) + \frac{1}{2} \Tr(\hat{G}_A) + \frac{1}{4} \ge -0.01 - 0.05 + 0.25 > 0$.
\end{proof}

As a result of \Cref{prop:no-huge-neg-eigs}, we can apply \Cref{prop:exponential-expectation-determinant} to obtain the following.

\begin{lemma} \label{lem:chi-square-det-calc-1}
    Assuming that $n \le \frac{0.1d}{\alpha^2}$ and $\beta \le \frac{0.1}{n},$ the $\chi^2$ divergence $\dchi(\cD_1||\cD_0)$ equals
\[\mathop{\BE}\limits_{A, B} \left[\left(\left(\frac{d + (1-\eps) n \alpha^2}{d}\right)^2 \cdot \det(I+G_{A,B})\right)^{-d/2}\right].\]
\end{lemma}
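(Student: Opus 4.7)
The plan is to combine the expression \eqref{eq:chi-square-bash2} derived just before the lemma with the Gaussian quadratic-form identity from \Cref{prop:exponential-expectation-determinant}, after checking that the eigenvalue hypothesis of that proposition holds thanks to \Cref{prop:no-huge-neg-eigs}.

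Concretely, I would first reinstate the expectation over $A,B$ that was dropped to get \eqref{eq:chi-square-bash2}, so that
\[
\dchi(\cD_1\|\cD_0) \;=\; \mathop{\BE}_{A,B}\Biggl[\,\Bigl(1+\tfrac{(1-\eps)n\alpha^2}{d}\Bigr)^{-d}\!\cdot\Bigl(\mathop{\BE}_{x\sim\cN(0,I_n)} \exp\!\bigl(-\tfrac12 x^\top G_{A,B}\, x\bigr)\Bigr)^{d}\,\Biggr],
\]
where I used that the $d$ coordinates of the $X_i$'s are \iid so the $n$-dimensional Gaussian integral appearing in \eqref{eq:chi-square-bash2} factors as the $d$-th power of a single-coordinate integral, and that $G_{A,B} = M_{A,B}/((1-\eps)\alpha^2 n + d)$.

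Next I would invoke \Cref{prop:no-huge-neg-eigs}, whose hypotheses $n\leq 0.1d/\alpha^2$ and $\beta\leq 0.1/n$ are exactly the standing assumptions of the lemma, to conclude that every eigenvalue of $G_{A,B}$ is strictly greater than $-1$. This legitimizes applying \Cref{prop:exponential-expectation-determinant} with $M \leftarrow G_{A,B}$ to the inner integral, giving
\[
\mathop{\BE}_{x\sim\cN(0,I_n)} \exp\!\bigl(-\tfrac12 x^\top G_{A,B}\, x\bigr) \;=\; \det(I + G_{A,B})^{-1/2}.
\]
Raising to the $d$-th power yields $\det(I+G_{A,B})^{-d/2}$.

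Finally, I would absorb the prefactor $(1+(1-\eps)n\alpha^2/d)^{-d}$ into the determinant by rewriting it as
\[
\Bigl(1+\tfrac{(1-\eps)n\alpha^2}{d}\Bigr)^{-d} \;=\; \Bigl(\tfrac{d+(1-\eps)n\alpha^2}{d}\Bigr)^{-d} \;=\; \Bigl(\bigl(\tfrac{d+(1-\eps)n\alpha^2}{d}\bigr)^{2}\Bigr)^{-d/2},
\]
and combine with the determinant term to obtain exactly the claimed expression. There is no real obstacle here beyond the eigenvalue check, which is already handled by \Cref{prop:no-huge-neg-eigs}; the proof is essentially a bookkeeping step that packages the Gaussian integral as a determinant.
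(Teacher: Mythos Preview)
Your proposal is correct and follows essentially the same approach as the paper: reinstating the expectation over $A,B$ in \eqref{eq:chi-square-bash2}, applying \Cref{prop:exponential-expectation-determinant} (justified via the eigenvalue bound from \Cref{prop:no-huge-neg-eigs}), and then absorbing the scalar prefactor into the $-d/2$ power. The paper's proof is just a terser chain of four equalities doing exactly these steps.
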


\begin{proof}
    We have the following chain of equalities. The first equality follows by combining \eqref{eq:chi-square-bash1} and \eqref{eq:chi-square-bash2}, the second follows by the definition of $G_{A, B}$, the third follows by \Cref{prop:exponential-expectation-determinant}, and the final follows by basic manipulation.
\begin{align*}
    \dchi(\cD_1||\cD_0)
    &= \mathop{\BE}\limits_{A, B} \left[\left(1 + \frac{(1-\eps) n \alpha^2}{d}\right)^{-d} \cdot \left(\mathop{\BE}_{x_1, \dots, x_n \sim \cN(0, 1)} \exp\left(-\frac{1}{2} \cdot \frac{x^\top (M_A + M_B) x}{(1-\eps) \alpha^2 n + d}\right)\right)^d\right] \\
    &= \mathop{\BE}\limits_{A, B} \left[\left(1 + \frac{(1-\eps) n \alpha^2}{d}\right)^{-d} \cdot \left(\mathop{\BE}_{x_1, \dots, x_n \sim \cN(0, 1)} \exp\left(-\frac{1}{2} \cdot x^\top \cdot G_{A, B} \cdot x\right)\right)^d\right] \\
    &= \mathop{\BE}\limits_{A, B} \left[\left(1 + \frac{(1-\eps) n \alpha^2}{d}\right)^{-d} \cdot \det (I+G_{A, B})^{-d/2}\right] \\
    &= \mathop{\BE}\limits_{A, B} \left[\left(\left(\frac{d + (1-\eps) n \alpha^2}{d}\right)^2 \cdot \det (I+G_{A, B})\right)^{-d/2}\right] \, . \qedhere
\end{align*}
\end{proof}

\subsection{Final Computation}
    \label{sec:final:computation:mathematica}

Through some tedious computations, one can show the following:
\begin{lemma}
    \label{lemma:tedious:compputation}
Suppose that $|A \cap B| = \gamma \cdot n$, for some $0 \le \gamma \le \eps$. Then,
\begin{multline} \label{eq:factoring}
    \det(I+G_{A, B}) = \frac{1}{(d+(1-\eps) \alpha^2 n)^2} \cdot \\
    \left[(d+\beta^2 d (\eps^2-\gamma) n^2)^2 - (\alpha^2 n - 2 \beta d \eps n + \beta^2 d \eps n^2 - 2 \alpha^2 \eps n + \alpha^2 \gamma n + 2 \beta d \gamma n - 2 \beta^2 d \eps^2 n^2 + \beta^2 d \eps \gamma n^2)^2\right]
\end{multline}
\end{lemma}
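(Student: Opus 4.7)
The plan is to exploit the low-rank structure of $M_{A,B}$ together with the symmetry under swapping $A$ and $B$. First, observe that $M_A$ admits the rank-two factorization
\[
M_A = V_A\, \Sigma_A\, V_A^\top, \qquad V_A := (\mathbf{1}_A,\, \mathbf{1}_{A^c}) \in \R^{n \times 2},
\]
with $\Sigma_A$ as defined in the paper, and likewise $M_B = V_B \Sigma_B V_B^\top$. Concatenating gives $M_{A,B} = V D V^\top$, where $V := (\mathbf{1}_A,\, \mathbf{1}_{A^c},\, \mathbf{1}_B,\, \mathbf{1}_{B^c}) \in \R^{n\times 4}$ and $D := \mathrm{diag}(\Sigma_A,\Sigma_B)$. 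Setting $c := d + (1-\eps)\alpha^2 n$, the Weinstein--Aronszajn (Sylvester) identity yields
\[
\det(I + G_{A,B}) \;=\; \det\!\bigl(I_n + \tfrac{1}{c} V D V^\top\bigr) \;=\; \det\!\bigl(I_4 + \tfrac{1}{c}\, D\, V^\top V\bigr),
\]
reducing the determinant computation to a $4\times 4$ one. The Gram matrix $V^\top V$ is completely determined by the intersection sizes: with $|A\cap B|=\gamma n$, $|A\cap B^c|=|A^c\cap B|=(\eps-\gamma)n$, and $|A^c\cap B^c|=(1-2\eps+\gamma)n$, its entries are explicit polynomials in $\eps,\gamma, n$ (the same quantities that define $D_{A,B}$ in the paper).

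The key structural observation is the $A\leftrightarrow B$ symmetry. Let $P$ be the involution on $\{1,2,3,4\}$ that swaps $(1,2)$ with $(3,4)$. Because $\Sigma_A = \Sigma_B$, we have $PDP^{-1}=D$, and because swapping the $A$-labels with the $B$-labels acts symmetrically on the rows and columns of $V^\top V$, we also have $P\,V^\top V\, P^{-1} = V^\top V$. Therefore $D\,V^\top V$ commutes with $P$, and in the basis adapted to the $\pm 1$ eigenspaces of $P$ (spanned respectively by $e_i + e_{i+2}$ and $e_i - e_{i+2}$ for $i=1,2$), the matrix $D\,V^\top V$ block-diagonalizes into two $2\times 2$ blocks $B_+$ and $B_-$. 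Consequently,
\[
\det\!\bigl(I_4 + \tfrac{1}{c}\, D\,V^\top V\bigr) \;=\; \det\!\bigl(I_2 + \tfrac{1}{c} B_+\bigr)\cdot \det\!\bigl(I_2 + \tfrac{1}{c} B_-\bigr),
\]
a product of two quadratic polynomials in the remaining parameters.

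The final step is the direct computation of these two $2\times 2$ determinants, which is mechanical but algebraically tedious and is most cleanly verified with a computer algebra system (hence the title of~\Cref{sec:final:computation:mathematica}). The two resulting quadratics turn out to be of the form $X+Y$ and $X-Y$, with $X = d + \beta^2 d(\eps^2-\gamma)n^2$ and $Y$ equal to the second parenthesized expression in~\eqref{eq:factoring}; multiplying them yields the difference-of-squares $X^2-Y^2$, and the overall prefactor $1/(d+(1-\eps)\alpha^2 n)^2$ comes from collecting the $1/c$ in each $2\times 2$ block. The main obstacle is simply managing the polynomial bookkeeping; the symmetry-driven block diagonalization is what isolates the source of the difference-of-squares structure and makes the otherwise opaque $4\times 4$ expansion transparent.
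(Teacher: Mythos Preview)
Your approach is correct and is genuinely different from the paper's. The paper reduces to the $4\times 4$ matrix $I+\tfrac{1}{c}D_{A,B}\Sigma_{A,B}$ via the partition basis $(\mathbf 1_{A\cap B},\mathbf 1_{A\cap B^c},\mathbf 1_{A^c\cap B},\mathbf 1_{A^c\cap B^c})$ and then expands that determinant directly with Mathematica, only afterwards recognizing the difference-of-squares factorization. Your Sylvester reduction instead uses the basis $(\mathbf 1_A,\mathbf 1_{A^c},\mathbf 1_B,\mathbf 1_{B^c})$, and the $A\leftrightarrow B$ involution $P$ then block-diagonalizes $D\,V^\top V$ into $B_\pm=\Sigma_A(D_A\pm L)$. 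This is more informative: it \emph{explains} the difference-of-squares structure rather than discovering it after the fact. Two points worth recording to make your argument self-contained: (i) $D_A-L=(\eps-\gamma)n\bigl(\begin{smallmatrix}1&-1\\-1&1\end{smallmatrix}\bigr)$ has rank one, so $\det(I_2+\tfrac1c B_-)=1+\tfrac1c\operatorname{tr}B_-$ is automatically linear in $1/c$; and (ii) for $B_+$ the quadratic term $\det(B_+)/c^2=\det(\Sigma_A)\det(D_A+L)/c^2$ collapses to a $1/c$ term because $\det(\Sigma_A)=-\beta^2 d\,c$. With these observations the two $2\times 2$ factors are visibly $(X+Y)/c$ and $(X-Y)/c$, and the remaining algebra is small enough to do by hand rather than by computer algebra.
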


\begin{proof}
    Note that $G_{A, B}$ has the same eigenvalues as $\hat{G}_{A, B} := \frac{1}{(1-\eps) \alpha^2 n + d} \cdot \sqrt{D_{A, B}} \Sigma_{A, B} \sqrt{D_{A, B}}$. Hence,
\[\det(I+G_{A, B}) = \det\left(I + \frac{1}{(1-\eps) \alpha^2 n + d} \cdot \sqrt{D_{A, B}} \Sigma_{A, B} \sqrt{D_{A, B}}\right) = \det\left(I+\frac{1}{(1-\eps) \alpha^2 n + d} D_{A, B} \cdot \Sigma_{A, B}\right).\]
    We can then can compute and factor the determinant as an expression of $\alpha, \eps, \beta, \gamma, d,$ and $n$. Writing the output as a difference of squares, one then obtains \eqref{eq:factoring}.\footnote{Some Mathematica code to verify the computation is provided in~\cref{app:mathematica}.}
\end{proof}

Now, we will set $\beta$ to satisfy the quadratic equation $\alpha^2 n - 2 \beta d \eps n + \beta^2 d \eps n^2 = 0$. This is equivalent to $\beta^2 (d \eps n) - (2 d \eps) \beta + \alpha^2 = 0$, for which we will set $\beta$ to be the solution
\[\beta = \frac{d \eps - \sqrt{d^2 \eps^2 - \alpha^2 \cdot d \eps n}}{d \eps n} = \frac{1}{n} \cdot \left(1 - \sqrt{1 - \frac{\alpha^2 n}{d \eps}}\right).\]
Note that this is only possible if $\alpha^2 n < d \eps$, so $n < \frac{d \eps}{\alpha^2}.$
In this case, we can simplify our expression as
\[\det(I+G_{A, B}) = \frac{(d+\beta^2 d(\eps^2-\gamma)n^2)^2 - O(\alpha^2 \eps n + \alpha^2 \gamma n + \beta d \gamma n + \beta^2 d \eps^2 n^2 + \beta^2 d \eps \gamma n^2)^2}{(d+(1-\eps) \alpha^2 n)^2}.\]
Using the fact that $\gamma \le \eps$, we can ignore the terms $\alpha^2 \gamma n$ (smaller than $\alpha^2 \eps n$) and $\beta^2 d \eps \gamma n^2$ (smaller than $\beta^2 d \eps^2 n^2$). So, this simplifies to 
\[\det(I+G_{A, B}) = \frac{(d+\beta^2 d(\eps^2-\gamma)n^2)^2 - O(\alpha^2 \eps n + \beta d \gamma n + \beta^2 d \eps^2 n^2)^2}{(d+(1-\eps) n \alpha^2)^2}.\]
In addition, note that if $n < \frac{d \eps}{\alpha^2}$, then $\sqrt{1 - \frac{\alpha^2 n}{d \eps}} \ge 1 - \frac{\alpha^2 n}{d \eps}$, which means $\beta \le \frac{1}{n} \cdot \frac{\alpha^2 n}{d \eps} = \frac{\alpha^2}{d \eps}.$ Hence, if $n < \frac{d \eps}{\alpha^2}$,
\begin{equation} \label{eq:some-more-det-bash}
    \det(I+G_{A, B}) \ge \frac{(d - \frac{\alpha^4}{d \eps^2} \cdot \max(0, \gamma-\eps^2) \cdot n^2)^2 - O(\alpha^2 \eps n + \frac{\alpha^2}{\eps} \cdot \gamma n + \frac{\alpha^4}{d} \cdot n^2)^2}{(d+(1-\eps) n \alpha^2)^2}.
\end{equation}

Note that if $n \le \frac{0.1 d \eps}{\alpha^2},$ then $\beta \le \frac{\alpha^2}{d \eps} \le \frac{0.1}{n}$. So, by combining \eqref{eq:some-more-det-bash} with \cref{lem:chi-square-det-calc-1}, we have the following lemma.

\begin{lemma} \label{lem:chi-square-bash-3}
    Suppose that $n \le \frac{0.1 d \eps}{\alpha^2}$ and $\beta = \frac{1}{n} \cdot (1-\sqrt{1-(\alpha^2 n)/(d \eps)})$. Then, for $\gamma := \frac{|A \cap B|}{n}$, we have
\begin{equation} \label{eq:chi-square-bash3}
    \dchi(\cD_1||\cD_0) \le \mathop{\BE}_{A, B} \left[\left(\frac{(d - \frac{\alpha^4}{d \eps^2} \cdot \max(0, \gamma-\eps^2) \cdot n^2)^2 - O(\alpha^2 \eps n + \frac{\alpha^2}{\eps} \cdot \gamma n + \frac{\alpha^4}{d} \cdot n^2)^2}{d^2}\right)^{-d/2}\right].
\end{equation}
\end{lemma}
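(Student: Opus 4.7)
The proof is essentially an exercise in bookkeeping, since \cref{lem:chi-square-det-calc-1} already reduces the chi-square bound to the evaluation of $\det(I+G_{A,B})$, and \cref{lemma:tedious:compputation} gives an explicit closed form. The plan is to (i) verify that the hypotheses of \cref{lem:chi-square-det-calc-1} are satisfied under our choice of $\beta$, (ii) apply the determinant formula from \cref{lemma:tedious:compputation}, (iii) use the defining quadratic of $\beta$ to kill a key term inside the subtracted square, and (iv) discard dominated terms using $\gamma \le \eps$ to land on the stated expression.

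For step (i), the hypothesis $n \le 0.1 d \eps/\alpha^2$ obviously implies $n \le 0.1 d/\alpha^2$. For the bound on $\beta$, since $n < d\eps/\alpha^2$, we have $\alpha^2 n/(d\eps) < 1$ so the square root is real, and using $\sqrt{1-x} \ge 1-x$ for $x \in [0,1]$ gives $\beta \le \tfrac{1}{n}\cdot \tfrac{\alpha^2 n}{d\eps} = \tfrac{\alpha^2}{d\eps} \le \tfrac{0.1}{n}$. So the hypotheses of \cref{lem:chi-square-det-calc-1} hold, and
\[
  \dchi(\cD_1\|\cD_0) \;=\; \E_{A,B}\left[\left(\tfrac{d+(1-\eps)\alpha^2 n}{d}\right)^{-d}\det(I+G_{A,B})^{-d/2}\right].
\]

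For step (ii), plugging \cref{lemma:tedious:compputation} in and pulling the prefactor $\bigl(\tfrac{d+(1-\eps)\alpha^2 n}{d}\bigr)^{2}$ inside the determinant cancels the $(d+(1-\eps)\alpha^2 n)^2$ in the denominator of \eqref{eq:factoring}, leaving
\[
  \left(\tfrac{d+(1-\eps)\alpha^2 n}{d}\right)^{2}\det(I+G_{A,B}) \;=\; \frac{(d+\beta^2 d(\eps^2-\gamma)n^2)^2 - \bigl(\alpha^2 n - 2\beta d\eps n + \beta^2 d \eps n^2 + E(\alpha,\eps,\beta,\gamma,n,d)\bigr)^2}{d^2},
\]
where $E$ collects the remaining terms $-2\alpha^2\eps n+\alpha^2\gamma n+2\beta d\gamma n-2\beta^2 d\eps^2 n^2+\beta^2 d\eps\gamma n^2$ from the second square of \eqref{eq:factoring}. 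For step (iii), the definition of $\beta$ is precisely the positive quadratic root that makes $\alpha^2 n - 2\beta d\eps n + \beta^2 d\eps n^2 = 0$, so the subtracted square reduces to $E^2$.

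For step (iv), I use $\gamma\le\eps$ and the uniform bound $\beta\le\alpha^2/(d\eps)$ shown in step (i). Since $\alpha^2\gamma n \le \alpha^2 \eps n$ and $\beta^2 d\eps\gamma n^2\le \beta^2 d\eps^2 n^2$, these terms are absorbed into $O(\alpha^2\eps n + \beta^2 d\eps^2 n^2)$. Further using $\beta \le \alpha^2/(d\eps)$, we get $\beta^2 d\eps^2 n^2 \le \alpha^4 n^2/d$ and $\beta d\gamma n \le \alpha^2 \gamma n/\eps$, which gives
\[
  |E| \;\le\; O\!\left(\alpha^2 \eps n + \tfrac{\alpha^2}{\eps}\gamma n + \tfrac{\alpha^4}{d}n^2\right).
\]
Applying the same bound $\beta\le \alpha^2/(d\eps)$ inside the main square yields $\beta^2 d n^2 \le \alpha^4 n^2/(d\eps^2)$, and since we only need a lower bound on the main square (as the determinant appears with exponent $-d/2$), we can replace $\beta^2 d(\eps^2-\gamma)n^2$ by $-\tfrac{\alpha^4}{d\eps^2}\max(0,\gamma-\eps^2)n^2$. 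Combining, we obtain \eqref{eq:chi-square-bash3}.

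The only genuinely nontrivial step is verifying that dropping the squared terms with coefficients $\alpha^2\gamma n$ and $\beta^2 d\eps\gamma n^2$ is valid when $(\gamma-\eps^2)$ is non-negligible, since we want a one-sided (lower) bound on the main square; this works because $\beta^2 d(\eps^2-\gamma)n^2 \ge -\beta^2 d\gamma n^2$ and all retained absolute values are lined up correctly. Everything else is direct substitution and arithmetic, so I expect no serious obstacle beyond the tedious algebraic manipulation already encapsulated in \cref{lemma:tedious:compputation}.
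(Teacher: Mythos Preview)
Your proposal is correct and follows essentially the same route as the paper: verify the hypotheses of \cref{lem:chi-square-det-calc-1} (in particular $\beta \le \alpha^2/(d\eps) \le 0.1/n$), plug in the closed form from \cref{lemma:tedious:compputation}, use the quadratic definition of $\beta$ to annihilate $\alpha^2 n - 2\beta d\eps n + \beta^2 d\eps n^2$, and then bound the residual terms via $\gamma \le \eps$ and $\beta \le \alpha^2/(d\eps)$. The paper performs exactly these steps in the text between \cref{lemma:tedious:compputation} and \cref{lem:chi-square-bash-3}, arriving at \eqref{eq:some-more-det-bash} before invoking \cref{lem:chi-square-det-calc-1}.
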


Recall that $\gamma$ is the fraction of $[n]$ in both $A$ and $B$, so the distribution of $\gamma$ is $\frac{1}{n} \cdot \hgeom(n, \eps n, \eps n)$. Hence, $\gamma \in [0, \eps]$ with probability $1$, and by \Cref{cor:hypergeometric}, $\BP(\max(\gamma-\eps^2, 0) > t) \le \exp\left(-\min\left(\frac{t^2 \cdot n}{4\eps^2}, \frac{t \cdot n}{4}\right)\right).$

We note the following simple proposition.

\begin{proposition} \label{prop:something_dumb}
    Suppose that $n \le c \cdot \frac{d \eps}{\alpha^2}$ for some small constant $c$. Then, for any $A$ and $B$, each of $\frac{\alpha^4}{d \eps^2} \cdot \max(0, \gamma-\eps^2) \cdot n^2$, $\alpha^2 \eps n,$ $\frac{\alpha^2}{\eps} \cdot \gamma \cdot n$, and $\frac{\alpha^4}{d} \cdot n^2$ is smaller than $c \cdot d.$
\end{proposition}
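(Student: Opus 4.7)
}
The plan is to simply verify each of the four inequalities by plugging in the hypothesis $n \le c \cdot \frac{d\eps}{\alpha^2}$ and using the trivial bounds $\gamma \le \eps$ and $\eps \le \alpha \le 1$. No probabilistic reasoning is needed; this is a deterministic check valid for every pair $(A,B)$.

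I will handle the four quantities in turn. First, $\alpha^2 \eps n \le \alpha^2 \eps \cdot c \cdot \frac{d\eps}{\alpha^2} = c\eps^2 d \le c d$. Second, $\frac{\alpha^4}{d} n^2 \le \frac{\alpha^4}{d} \cdot c^2 \cdot \frac{d^2 \eps^2}{\alpha^4} = c^2 \eps^2 d \le c d$ for $c \le 1$. Third, since $\gamma \le \eps$ (because $|A \cap B| \le |A| = \eps n$), we have $\frac{\alpha^2}{\eps}\gamma n \le \alpha^2 n \le c \cdot d\eps \le c d$. Finally, for the remaining term, $\max(0,\gamma-\eps^2) \le \gamma \le \eps$, so
\[
\frac{\alpha^4}{d\eps^2}\cdot \max(0,\gamma-\eps^2)\cdot n^2 \;\le\; \frac{\alpha^4}{d\eps^2}\cdot \eps \cdot n^2 \;=\; \frac{\alpha^4 n^2}{d\eps} \;\le\; \frac{\alpha^4}{d\eps}\cdot c^2 \cdot \frac{d^2 \eps^2}{\alpha^4} \;=\; c^2 d\eps \;\le\; cd.
\]

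There is no obstacle here; the only thing to be mildly careful about is absorbing constants consistently (replacing $c^2$ by $c$ is fine since we are free to choose $c$ small, and all four bounds are in fact upper-bounded by $c\eps d$ or smaller, which is much less than $cd$). This proposition will then be used in the subsequent analysis to justify expanding $\log(1-x)$ or equivalently using $(1-x)^{-d/2}$-type inequalities via the fact that the arguments stay bounded away from $1$, which is exactly what the bound ``$\le cd$'' with small $c$ ensures.
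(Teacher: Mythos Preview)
Your proof is correct and takes essentially the same approach as the paper: both verify each of the four bounds by using $\gamma \le \eps$, $\eps \le 1$, and the hypothesis $n \le c \cdot \tfrac{d\eps}{\alpha^2}$. The only cosmetic difference is that the paper phrases each step as a contrapositive (``if the quantity were $\ge cd$ then $n$ would exceed $c\cdot \tfrac{d\eps}{\alpha^2}$''), while you substitute directly; your version is if anything slightly cleaner.
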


\begin{proof}
    Since $\gamma \le \eps$, $\frac{\alpha^4}{d \eps^2} \cdot \max(0, \gamma-\eps^2) \cdot n^2 \le \frac{\alpha^4 n^2}{d \eps}.$ If $\frac{\alpha^4 n^2}{d \eps} \ge c \cdot d$, then $n \ge \sqrt{c} \cdot \frac{d \sqrt{\eps}}{\alpha^2} > c \cdot \frac{d \eps}{\alpha^2}$. Next, $\alpha^2 \eps n, \frac{\alpha^2}{\eps} \cdot \gamma n \le \alpha^2 n$. If $\alpha^2 n \ge c \cdot d,$ then $n \ge c \cdot \frac{d}{\alpha^2} > c \cdot \frac{d \cdot \eps}{\alpha^2}.$ Finally, if $\frac{\alpha^4}{d} \cdot n^2 \le c \cdot d$, then $n \ge \sqrt{c} \cdot \frac{d}{\alpha^2} > c \cdot \frac{d \cdot \eps}{\alpha^2}$.
\end{proof}

The importance of \Cref{prop:something_dumb} is that if $0 < x \le c$ for a sufficiently small constant $c$, $1-x \ge e^{-2x}$. Therefore, we can rewrite the right-hand side of \eqref{eq:chi-square-bash3} as at most
\begin{align}
    &\hspace{0.5cm} \mathop{\BE}\limits_{A, B} \left[\exp\left(O\left(\frac{\frac{\alpha^4}{d \eps^2} \cdot \max(0, \gamma-\eps^2) \cdot n^2}{d} + \left(\frac{\alpha^2 \eps n + \frac{\alpha^2}{\eps} \cdot \gamma n + \frac{\alpha^4}{d} \cdot n^2}{d}\right)^2\right) \cdot \frac{d}{2}\right)\right] \nonumber \\
    &= \mathop{\BE}\limits_{A, B} \left[\exp\left(O\left(\frac{\alpha^4}{d \eps^2} \cdot \max(0, \gamma-\eps^2) \cdot n^2 + \left(\frac{\alpha^2 \eps n + \frac{\alpha^2}{\eps} \cdot \gamma n + \frac{\alpha^4}{d} \cdot n^2}{\sqrt{d}}\right)^2\right)\right)\right] . \label{eq:chi-square-bash4}
\end{align}

First, note that if we additionally have $n \le c \cdot \min\left(\frac{\sqrt{d}}{\alpha^2 \eps}, \frac{d^{3/4}}{\alpha^2}\right)$ for a small constant $c$, then $\alpha^2 \eps n \le c \sqrt{d}$ and $\frac{\alpha^4}{d} \cdot n^2 \le c^2 \sqrt{d}$. Also, note that $\frac{\alpha^2}{\eps} \cdot \gamma n \le \alpha^2 \eps n + \frac{\alpha^2}{\eps} \cdot \max(0, \gamma-\eps^2) n,$ and that $\left[(\frac{\alpha^2}{\eps} \cdot \max(0, \gamma-\eps^2) n)/\sqrt{d}\right]^2 = \frac{\alpha^4}{d \eps^2} \cdot \max(0, \gamma-\eps^2)^2 \cdot n^2 \le \frac{\alpha^4}{d \eps^2} \cdot \max(0, \gamma-\eps^2) \cdot n^2,$ since $\max(0, \gamma-\eps^2) \le 1.$ As a result, we can bound \eqref{eq:chi-square-bash4} as at most
\[\mathop{\BE}_{A, B} \left[\exp\left(O\left(\frac{\alpha^4}{d \eps^2} \cdot \max(0, \gamma-\eps^2) \cdot n^2 + c^2\right)\right)\right],\]
assuming $n \le c \cdot \min\left(\frac{d \eps}{\alpha^2}, \frac{\sqrt{d}}{\alpha^2 \eps}, \frac{d^{3/4}}{\alpha^2}\right)$.

Now, for any value $t > 0$, we have that by \Cref{cor:hypergeometric},
\begin{align*}
    \BP_{A, B}\left(\frac{\alpha^4}{d \eps^2} \cdot \max(0, \gamma-\eps^2) \cdot n^2 > t\right)
    &= \BP_{A, B}\left(\gamma-\eps^2 > t \cdot \frac{d \eps^2}{\alpha^4 n^2}\right) \\
    &\le \exp\left(-\min\left(t^2 \cdot \frac{d^2 \eps^4}{\alpha^8 n^4} \cdot \frac{n}{4 \eps^2}, t \cdot \frac{d \eps^2}{\alpha^4 n^2} \cdot \frac{n}{4}\right)\right) \\
    &= \exp\left(-\min\left(t^2 \cdot \frac{d^2 \eps^2}{4 \alpha^8 n^3}, t \cdot \frac{d \eps^2}{4 \alpha^4 n}\right)\right).
\end{align*}
If we additionally assume that $n \le c \cdot \frac{d^{2/3} \eps^{2/3}}{\alpha^{8/3}}$ and $n \le c \cdot \frac{d \eps^2}{\alpha^4},$ this is at most $\exp\left(-\min(t^2/4 c^3, t/4 c)\right).$ So, for $t \ge c$, the probability that $\frac{\alpha^4}{d \eps^2} \cdot \max(0, \gamma-\eps^2) \cdot n^2 \ge t$ is at most $e^{-t/4 c}$, which means that
\[\mathop{\BE}_{A, B} \left[\exp\left(O\left(\frac{\alpha^4}{d \eps^2} \cdot \max(0, \gamma-\eps^2) \cdot n^2 + c^2\right)\right)\right] \le e^{O(c)} \le 1.01.\]

To summarize what we have proved, in combination with \Cref{lem:chi-square-bash-3}, we have the following.

\begin{lemma} \label{lem:almost-lb}
    Assuming that
\begin{equation} \label{eq:assumptions-big}
    n \le c \cdot \min\left(\frac{d \eps}{\alpha^2}, \frac{\sqrt{d}}{\alpha^2 \eps}, \frac{d^{3/4}}{\alpha^2}, \frac{d^{2/3} \eps^{2/3}}{\alpha^{8/3}}, \frac{d \eps^2}{\alpha^4}\right),
\end{equation}    
    for some sufficiently small constant $c$, we have that 
\[\dchi(\cD_1||\cD_0) \le 1.01.\]
\end{lemma}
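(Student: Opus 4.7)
The plan is to combine \Cref{lem:chi-square-bash-3} with the deterministic bounds from \Cref{prop:something_dumb} and the hypergeometric tail bound in \Cref{cor:hypergeometric}. Starting from
\[\dchi(\cD_1\|\cD_0) \le \mathop{\BE}_{A,B}\!\left[\left(\frac{(d - \frac{\alpha^4}{d\eps^2}\max(0,\gamma-\eps^2) n^2)^2 - O(\alpha^2\eps n + \frac{\alpha^2}{\eps}\gamma n + \frac{\alpha^4}{d} n^2)^2}{d^2}\right)^{-d/2}\right],\]
\Cref{prop:something_dumb} applied to each of the four terms inside implies (since $n \le c \cdot d\eps/\alpha^2$) that the quantity being raised to the $-d/2$ power is of the form $1 - x$ for $x$ at most a small constant. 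Hence I can use $1-x \ge e^{-2x}$ and obtain an exponential upper bound on the integrand.

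Next, I would split the combined error term $\frac{\alpha^2}{\eps}\gamma n$ via $\gamma \le \eps^2 + \max(0,\gamma-\eps^2)$, so that its contribution is absorbed into the two deterministic terms $\alpha^2\eps n$ and $\frac{\alpha^4}{d}n^2$ plus a random term whose squared normalization $\big(\tfrac{\alpha^2}{\eps}\max(0,\gamma-\eps^2)n/\sqrt{d}\big)^2 = \tfrac{\alpha^4}{d\eps^2}\max(0,\gamma-\eps^2)^2 n^2$ is dominated by $\tfrac{\alpha^4}{d\eps^2}\max(0,\gamma-\eps^2) n^2$ (using $\max(0,\gamma-\eps^2)\le 1$). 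The assumptions $n \le c\sqrt{d}/(\alpha^2\eps)$ and $n \le c \, d^{3/4}/\alpha^2$ then force the two deterministic contributions $\alpha^2 \eps n/\sqrt d$ and $\alpha^4 n^2/(d\sqrt d)$ to be at most $O(c)$, leaving
\[\dchi(\cD_1\|\cD_0) \le e^{O(c^2)} \cdot \mathop{\BE}_{A,B}\!\left[\exp\left(O\!\left(\tfrac{\alpha^4}{d\eps^2}\max(0,\gamma-\eps^2) n^2\right)\right)\right].\]

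The remaining and only nontrivial step is to bound this expectation, for which I would invoke \Cref{cor:hypergeometric} on $n\gamma \sim \hgeom(n,\eps n,\eps n)$ to get
\[\BP\!\left(\tfrac{\alpha^4}{d\eps^2}\max(0,\gamma-\eps^2)n^2 > t\right) \le \exp\!\left(-\min\!\left(t^2 \cdot \tfrac{d^2\eps^2}{4\alpha^8 n^3},\, t \cdot \tfrac{d\eps^2}{4\alpha^4 n}\right)\right).\]
The assumptions $n \le c \, d^{2/3}\eps^{2/3}/\alpha^{8/3}$ and $n \le c \, d\eps^2/\alpha^4$ exactly ensure that both $d^2\eps^2/(\alpha^8 n^3)$ and $d\eps^2/(\alpha^4 n)$ are $\Omega(1/c)$, so the tail decays as $e^{-\Omega(t/c)}$ for $t \ge c$. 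Integrating the tail (or equivalently, comparing against a geometric series) yields $\BE_{A,B}[\exp(O(\cdot))] \le 1 + O(c)$, and for $c$ chosen small enough we conclude $\dchi(\cD_1\|\cD_0) \le 1.01$.

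The main obstacle is purely the bookkeeping of which of the five constraints on $n$ is responsible for killing which term — the calculation itself is a straightforward chain of the pre-lemma derivations, and the only genuine probabilistic ingredient (the subgaussian/Bernstein tail for the hypergeometric overlap) is already packaged in \Cref{cor:hypergeometric}. No new technical ideas are required beyond carefully tracking constants and verifying that taking $c$ small (independent of $d,\eps,\alpha,n$) makes every $O(c)$ and $O(c^2)$ contribution at most, say, $0.005$.
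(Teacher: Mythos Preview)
Your proposal is correct and follows essentially the same approach as the paper: you invoke \Cref{lem:chi-square-bash-3}, apply \Cref{prop:something_dumb} together with $1-x\ge e^{-2x}$ to pass to an exponential bound, split the $\tfrac{\alpha^2}{\eps}\gamma n$ term via $\gamma \le \eps^2 + \max(0,\gamma-\eps^2)$, absorb the deterministic pieces using $n \le c\sqrt{d}/(\alpha^2\eps)$ and $n \le c\,d^{3/4}/\alpha^2$, and then integrate against the hypergeometric tail from \Cref{cor:hypergeometric} using the remaining two constraints on $n$. The paper's argument is exactly this chain of estimates, presented inline in the text preceding the lemma statement.
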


However, we note that we can remove several of the terms in \eqref{eq:assumptions-big}. More precisely, we have the following proposition.

\begin{proposition} \label{prop:remove-assumptions}
    Suppose that $n \le c \cdot \min\left(\frac{d^{2/3} \eps^{2/3}}{\alpha^{8/3}}, \frac{d \eps}{\alpha^2}\right)$ for some sufficiently small constant $c > 0$, and that $\eps \le 1$ and $n \ge \frac{\sqrt{d}}{\alpha^2} + \frac{d \eps^3}{\alpha^4}$. Then, $n \le c \cdot \min\left(\frac{d \eps}{\alpha^2}, \frac{\sqrt{d}}{\alpha^2 \eps}, \frac{d^{3/4}}{\alpha^2}, \frac{d^{2/3} \eps^{2/3}}{\alpha^{8/3}}, \frac{d \eps^2}{\alpha^4}\right).$
\end{proposition}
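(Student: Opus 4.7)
The plan is to reduce the five required bounds $n \le c M_i$ (for $M_i$ ranging over the five expressions in the conclusion) to a single base bound. Two of them, namely $n \le c \frac{d\eps}{\alpha^2}$ and $n \le c \frac{d^{2/3}\eps^{2/3}}{\alpha^{8/3}}$, are given by hypothesis, so only three remain: $n \le c M_3 := c\frac{\sqrt{d}}{\alpha^2\eps}$, $n \le c M_4 := c\frac{d^{3/4}}{\alpha^2}$, and $n \le c M_5 := c\frac{d\eps^2}{\alpha^4}$. Since $n \le cM_1 := c \frac{d^{2/3}\eps^{2/3}}{\alpha^{8/3}}$ is already assumed, each of the three remaining bounds reduces to the ratio comparison $M_1 \le M_i$, which in turn reduces to a simple polynomial inequality in $\alpha,\eps,d$.

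The first step is to extract two consequences of the compatibility of the given upper and lower bounds on $n$. Plugging $n \ge \sqrt{d}/\alpha^2$ into $n \le cM_1$ and rearranging yields $\alpha \le c^{3/2} d^{1/4}\eps$, hence $\alpha \le d^{1/4}\eps$ once $c \le 1$. Plugging $n \ge d\eps^3/\alpha^4$ into $n \le cM_1$ and rearranging yields $d\eps^{7} \le c^3 \alpha^4$. These are the only facts about the hypothesis I will ever need.

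Next I compute each ratio $M_i/M_1$ and verify it is at least $1$. For $M_5$: $M_5/M_1 = (d^{1/4}\eps/\alpha)^{4/3}$, which is $\ge 1$ by the first auxiliary inequality. For $M_4$: $M_4/M_1 = d^{1/12}(\alpha/\eps)^{2/3}$, which is $\ge 1$ since $\eps \le \alpha$ and $d \ge 1$. For $M_3$: $M_3/M_1 = \alpha^{2/3}/(d^{1/6}\eps^{5/3})$, i.e.\ the bound is equivalent to $d\eps^{10}\le \alpha^4$; and $d\eps^{10} = (d\eps^7)\cdot \eps^{3} \le c^{3}\alpha^{4}\cdot 1 \le \alpha^4$ using the second auxiliary inequality together with $\eps \le 1$. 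In each case, combining $M_i \ge M_1$ with $n \le cM_1$ gives the desired $n \le cM_i$.

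The main obstacle, such as it is, is the bound $n \le cM_3$: it is the only one of the three that genuinely requires the lower bound $n \ge d\eps^3/\alpha^4$ (and not merely $n \ge \sqrt{d}/\alpha^2$), because controlling $d\eps^{10}$ demands a seventh-power control on $\eps$ that only this lower bound provides. Once both auxiliary inequalities are extracted, everything else is routine exponent bookkeeping, so the write-up will essentially consist of three short displayed inequalities with one-line justifications each.
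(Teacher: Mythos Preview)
Your proposal is correct and follows essentially the same approach as the paper: both argue that each of the three missing quantities $M_3 = \frac{\sqrt{d}}{\alpha^2\eps}$, $M_4 = \frac{d^{3/4}}{\alpha^2}$, $M_5 = \frac{d\eps^2}{\alpha^4}$ dominates $M_1 = \frac{d^{2/3}\eps^{2/3}}{\alpha^{8/3}}$, and your two auxiliary inequalities are exactly the rearrangements of the paper's comparisons $M_1 > \frac{\sqrt d}{\alpha^2}$ and $M_1 > \frac{d\eps^3}{\alpha^4}$. The only noteworthy difference is the treatment of $M_4$: the paper observes the identity $M_4 = \sqrt{M_2 M_3}$ and combines the given $n \le cM_2$ with the just-proved $n \le cM_3$, whereas you invoke $\eps \le \alpha$ (which is a standing assumption in the paper, and incidentally also follows from the hypotheses via $\frac{d\eps^3}{\alpha^4} \le n \le c\,\frac{d\eps}{\alpha^2}$, giving $\eps \le \sqrt{c}\,\alpha$). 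Either route is fine; the paper's avoids appealing to anything outside the stated hypotheses, while yours is slightly more direct once $\eps \le \alpha$ is granted.
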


\begin{proof}
    First, note that $\left(\frac{\sqrt{d}}{\alpha^2 \eps}\right)^{2/3} \cdot \left(\frac{d \eps^3}{\alpha^4}\right)^{1/3} = \frac{d^{2/3} \eps^{1/3}}{\alpha^{8/3}} \ge \frac{d^{2/3} \eps^{2/3}}{\alpha^{8/3}},$ since $\eps \le 1$. Therefore, if $\frac{d^{2/3} \eps^{2/3}}{\alpha^{8/3}} > \frac{d \eps^3}{\alpha^4}$, then $\frac{\sqrt{d}}{\alpha^2 \eps} > \frac{d^{2/3} \eps^{2/3}}{\alpha^{8/3}}$. Thus, if $n \le c \cdot \min\left(\frac{d \eps}{\alpha^2}, \frac{d^{2/3} \eps^{2/3}}{\alpha^{8/3}}\right)$ and $n \ge \frac{d \eps^3}{\alpha^4} + \frac{\sqrt{d}}{\alpha^2}$, then also $n \le c \cdot \frac{\sqrt{d}}{\alpha^2 \eps}.$ 
    In addition, $\frac{d^{3/4}}{\alpha^2} = \sqrt{\frac{\sqrt{d}}{\alpha^2 \eps} \cdot \frac{d \eps}{\alpha^2}},$ which means we also obtain $n \le c \cdot \frac{d^{3/4}}{\alpha^2}$, because we just showed that $n \le c \cdot \frac{\sqrt{d}}{\alpha^2 \eps}$ and we are assuming that $n \le c \cdot \frac{d \eps}{\alpha^2}$. Finally, $\left(\frac{\sqrt{d}}{\alpha^2}\right)^{2/3} \cdot \left(\frac{d \eps^2}{\alpha^4}\right)^{1/3} = \frac{d^{2/3} \eps^{2/3}}{\alpha^{8/3}},$ which means that if $\frac{d^{2/3} \eps^{2/3}}{\alpha^{8/3}} > \frac{\sqrt{d}}{\alpha^2},$ then $\frac{d \eps^2}{\alpha^4} > \frac{d^{2/3} \eps^{2/3}}{\alpha^{8/3}}$. So, by our assumptions, $n \le c \cdot \frac{d \eps^2}{\alpha^4}$ as well.
\end{proof}

From here, the proof of \Cref{thm:oblivious-lb} is straightforward.

\begin{proof}[Proof of~\Cref{thm:oblivious-lb}]
    By \Cref{lem:almost-lb} and \Cref{prop:remove-assumptions}, we have that under the assumptions of \Cref{thm:oblivious-lb}, $\dchi(\cD_1||\cD_0) \le 1.01$. By \Cref{prop:chi-tv}, we have that $\dtv(\cD_1||\cD_0) \le 0.1$.

    Finally, note that we created the adversarial samples and the mean vector $\mu$ first, and then generated the uncorrupted data, so the adversary is oblivious. Finally, $\|\mu\|_2 \le \|z\|_2+\beta \cdot \|\mathbf{R}_A\|_2$. However, $z \sim \cN(0, \frac{\alpha^2}{d} \cdot I)$ means $\|z\|_2 \le 2 \alpha$ with very high probability. Moreover, $\beta \le \frac{\alpha^2}{\eps d}$ and $\mathbf{R}_A$ is the sum of $\eps \cdot n$ i.i.d. $\cN(0, 1)$, so $\|\mathbf{R}_A\|_2 \le 2 \sqrt{\eps n d}$ with very high probability. So, $\beta \cdot \|\mathbf{R}_A\|_2 \le \frac{\alpha^2 \cdot 2 \sqrt{\eps n d}}{\eps d} = 2 \alpha^2 \sqrt{\frac{n}{\eps d}}.$ Assuming that $n \le \frac{d \eps}{\alpha^2},$ this is at most $2 \alpha$. So overall, $\|\mu\|_2 \le 4 \alpha$. We can replace $\alpha$ with $\alpha/4$ in the construction to finish the proof.
\end{proof}

\section{The Sample Complexity under Strong Contamination}
\label{sec:strong-sample-complexity}

In this section, we leverage the tight sample complexity bounds for differentially private mean testing~\cite{narayanan2022privatetesting}, along with the robust-private equivalence of~\cite{georgiev2022privatetorobust, hopkins2023robusttoprivate, asi2023robusttoprivate}, to obtain the optimal sample complexity of robust mean testing under the strong contamination model:
\begin{theorem} \label{thm:adaptive-sample-complexity-main}
    For $\dst \ge \cor \cdot \poly\!\log (d, \frac{1}{\cor}, \frac{1}{\dst})$,\footnote{While this condition may seem somewhat restrictive, it is in fact inconsequential. Indeed, for $\cor\leq \dst\leq \cor \cdot \poly\!\log (d, {1}/{\cor}, {1}/{\dst})$, one can use a robust \emph{learning} algorithm with sample complexity $\bigO{{\dims}/{\dst^2}}$, which in this parameter regime becomes $\tilde{O}\mleft({\dims \cor^2}/{\dst^4}\mright)$.} the sample complexity of Gaussian mean testing in the  adaptive contamination model is
\[
\tilde{\Theta}\left(\frac{d^{1/2}}{\alpha^2} + \frac{d \eps^2}{\alpha^4}\right).
\]
\end{theorem}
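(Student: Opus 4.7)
The plan is to derive both the upper and lower bounds from the corresponding sample complexity bounds for $(\eps,\delta)$-differentially private (DP) Gaussian mean testing, via two-way black-box reductions. Narayanan~\cite{narayanan2022privatetesting} establishes that DP mean testing has optimal sample complexity $\tilde{\Theta}(\sqrt{d}/\alpha^2 + d\eps/\alpha^2 + \sqrt{d}/(\alpha\eps) + d/(\alpha\eps))$ under pure DP, and analogous bounds under approximate DP that, after setting the privacy parameter appropriately relative to the corruption level, match the claimed $\tilde{\Theta}(\sqrt{d}/\alpha^2 + d\eps^2/\alpha^4)$ rate. Our approach is to invoke this and translate between the DP and robust worlds using the reductions of~\cite{georgiev2022privatetorobust,hopkins2023robusttoprivate,asi2023robusttoprivate}.

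\textbf{Upper bound.} First I would take Narayanan's differentially private tester for mean testing, instantiated with an appropriate privacy budget (roughly, privacy parameter matched to $\eps$ up to logarithmic factors). Then I would apply the private-to-robust reduction of~\cite{georgiev2022privatetorobust}, which shows that any $(\eps_{\mathrm{DP}}, \delta_{\mathrm{DP}})$-DP algorithm with good utility on the i.i.d.\ model is automatically robust to an $\Omega(\eps_{\mathrm{DP}})$-fraction of adaptive corruptions with only polylogarithmic loss in sample complexity (via group-privacy-style arguments combined with Markov's inequality). Tuning the privacy parameter to $\eps_{\mathrm{DP}} \asymp \eps$ yields a tester tolerating an $\eps$-fraction of strong contamination with sample complexity $\tilde{O}(\sqrt{d}/\alpha^2 + d\eps^2/\alpha^4)$, provided the assumption $\alpha \ge \eps \cdot \poly\!\log(\cdot)$ holds so that Narayanan's DP guarantees kick in.

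\textbf{Lower bound.} In the reverse direction, I would invoke the robust-to-private reductions of~\cite{hopkins2023robusttoprivate,asi2023robusttoprivate}: any sample-efficient robust tester against adaptive contamination can be converted, again with only polylogarithmic overhead in sample complexity, into a DP tester for the same problem. Applying this to our putative robust tester and then invoking Narayanan's DP lower bound of $\tilde{\Omega}(\sqrt{d}/\alpha^2 + d\eps^2/\alpha^4)$ gives the matching lower bound. The non-robust $\tilde{\Omega}(\sqrt{d}/\alpha^2)$ term can also be read off directly from the standard lower bound of~\cite{DiakonikolasKS17}, so the remaining substance is in the $d\eps^2/\alpha^4$ term.

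\textbf{Additive-only adaptive adversaries.} For the auxiliary claim (discussed after the statement in the main text) that the lower bound persists even against adaptive adversaries which may only \emph{add} samples, I would follow the lower bound construction used in~\cite{narayanan2022privatetesting}: the hard instance there in fact only produces a larger dataset (it does not need to remove good points), so the same chi-squared / likelihood-ratio computation goes through verbatim in the additive-only adaptive model. The main obstacle throughout is tracking the various polylogarithmic losses through the reductions and verifying that the parameter regime $\alpha \ge \eps \cdot \poly\!\log(d, 1/\alpha, 1/\eps)$ assumed in~\cref{thm:adaptive-sample-complexity-main} is precisely what is needed for both directions to compose cleanly; below the threshold $\alpha \lesssim \eps \cdot \poly\!\log(\cdot)$, as noted in the footnote, one simply uses a robust $O(d/\alpha^2)$-sample learner, which matches $\tilde{O}(d\eps^2/\alpha^4)$ in that regime.
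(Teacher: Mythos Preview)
Your high-level strategy matches the paper's: both directions come from the DP--robustness equivalence applied to Narayanan's tight bounds for DP mean testing. However, two specific points in your sketch are off and would prevent the argument from going through as written.

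First, the relationship between the privacy parameter and the corruption level is not $\eps_{\mathrm{DP}} \asymp \eps$. The paper works with $(0,\delta)$-DP and uses the elementary observation that any $(0,\delta)$-DP decision rule is automatically robust to $\eps n$ adaptive corruptions provided $\eps n \cdot \delta \le 0.1$ (changing one point shifts the acceptance probability by at most $\delta$, so $\eps n$ changes shift it by at most $\eps n \delta$). Hence one must set $\delta = \Theta(1/(\eps n))$, which depends on $n$ itself. Plugging this into Narayanan's bound $\tilde{O}\bigl(\sqrt{d}/\alpha^2 + d^{1/3}/(\alpha^{4/3}\delta^{2/3}) + 1/(\alpha\delta)\bigr)$ gives an implicit inequality in $n$, and solving it is what produces $\tilde{O}(\sqrt{d}/\alpha^2 + d\eps^2/\alpha^4)$ together with the condition $\alpha \ge \tilde{\Omega}(\eps)$. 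The lower bound is symmetric: the score-based robust-to-private transformation yields a $(0, 1/(\eps n))$-DP tester from an $\eps$-robust one, and the DP lower bound then forces the same implicit inequality. Your stated pure-DP sample complexity formula is also not the one in play; the argument runs entirely through $(0,\delta)$-DP with this $n$-dependent $\delta$.

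Second, for the additive-only extension, the paper does not re-run a chi-squared or likelihood-ratio computation. It instead extracts an intermediate coupling lemma from~\cite{narayanan2022privatetesting} (a coupling of null and alternative datasets with small expected Hamming distance), applies Markov to get $d_{\mathrm{H}}(X,Y) \le \eps n$ with constant probability, and on that event sets $\hat{X}=\hat{Y}=X\cup Y$, which is obtained from each side by \emph{adding} at most $\eps n$ points. This is a different and cleaner route than the one you describe.
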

The rest of this section is dedicated to the proof of this theorem. First, we recall the definition of differential privacy: for simplicity, and as it suffices for our purposes, we focus on ``fully approximate'' differentially private decision algorithms.
\begin{definition}
    A randomized algorithm $\cA\colon \cX^\ns \to \{0, 1\}$ is $(0, \delta)$-\emph{differentially private} (DP) if for all datasets $X, X'\in\cX^\ns$ that only differ in a single data point $X_i \neq X_i'$, 
\[
    \left|\BP(\mathcal{A}(X) = 1) -\BP(\mathcal{A}(X') = 1)\right| \le \delta.
\]
\end{definition}
\paragraph{Upper bound.} To prove our upper bound, we will require the tight upper bound for DP mean testing:
\begin{theorem}[\cite{narayanan2022privatetesting}]\label{thm:narayanan-private-testing}
    For any parameters $0 < \alpha, \delta \le \frac{1}{2}$, there exists a $(0, \delta)$-DP algorithm $\cA$ that on
\[
n = \tilde{O}\mleft(\frac{d^{1/2}}{\alpha^2} + \frac{d^{1/3}}{\alpha^{4/3}  \delta^{2/3}} + \frac{1}{\alpha \delta}\mright)
\]
    samples $X_1, \dots, X_n$, satisfies:
\begin{itemize}
    \item If $X_1, \dots, X_n \overset{i.i.d.}{\sim} \cN(0, I)$, then with probability at least $0.99$ (over both the randomness of the samples and the algorithm), $\mathcal{A}(X) = 0$.\footnote{While~\cite{narayanan2022privatetesting} did not state a $0.99$ success probability, one can amplify the success probability by running several independent copies and using the majority output.}
    \item For any vector $\mu$ with $\|\mu\|_2 \ge \alpha$, if $X_1, \dots, X_n \overset{i.i.d.}{\sim} \cN(\mu, I)$, then with probability at least $0.99$, $\mathcal{A}(X) = 1$.
\end{itemize}
Moreover, this is tight: any $(0, \delta)$-DP algorithm with these guarantees must take $\tilde{\Omega}\mleft(\frac{d^{1/2}}{\alpha^2} + \frac{d^{1/3}}{\alpha^{4/3}  \delta^{2/3}} + \frac{1}{\alpha \delta}\mright)$ samples.
\end{theorem}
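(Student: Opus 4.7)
The plan is to treat this as the composition of a matching upper and lower bound for $(0,\delta)$-DP Gaussian mean testing, each of which decomposes along the three terms of the sample complexity. The statement is a verbatim restatement of Narayanan's theorem~\cite{narayanan2022privatetesting}, which the subsequent subsection invokes as a black box; what follows is a description of what a self-contained re-proof would look like.

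For the upper bound, the starting point is the non-private tester based on $T(X) = \|\sum_i X_i\|_2^2 - nd$, whose expectation is $0$ under $\mathbf{H}_0$ and $(1-o(1))\, n^2 \alpha^2$ under $\mathbf{H}_1$, with null fluctuations of order $n\sqrt d$ by Hanson--Wright (\Cref{lem:hanson-wright}). To achieve $(0,\delta)$-DP, one would (i) clip each sample to a ball of radius $\sqrt{d} + O(\sqrt{\log(n/\delta)})$, which by \Cref{fact:bounded-norm} acts as the identity on true Gaussian data except with probability $\ll \delta$, and (ii) release the threshold decision via a Laplace-type or stability-based mechanism. A direct Laplace mechanism on the global clipped statistic, whose sensitivity is controlled by the typical norm $\sqrt{nd}$ of the running sum, recovers the first and third terms $\sqrt d/\alpha^2$ and $1/(\alpha\delta)$ once one balances the gap $\alpha^2 n^2$ against both the null fluctuation and the injected noise. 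The middle term $d^{1/3}/(\alpha^{4/3}\delta^{2/3})$ appears to require a more refined design: split the samples into $m = \Theta(d^{1/3})$ blocks, run a non-private tester per block, and privately aggregate the block decisions via a threshold-count mechanism whose per-block sensitivity is only $1$, saving a $\sqrt m$ factor relative to the naive global-sensitivity analysis.

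For the lower bound, the three terms correspond to three separate hard instances. The $\sqrt d/\alpha^2$ bound is the classical non-private one from~\cite{DiakonikolasKS17}, which transfers verbatim since $(0,\delta)$-DP is no stronger a constraint than no privacy. The $1/(\alpha\delta)$ bound is a one-dimensional coupling argument: neighboring datasets must yield decision probabilities within $\delta$ of each other, so transforming a draw from $\cN(0,1)^{\otimes n}$ into a draw from $\cN(\alpha,1)^{\otimes n}$ one sample at a time and applying the triangle inequality forces $n = \Omega(1/(\alpha\delta))$. The $d^{1/3}/(\alpha^{4/3}\delta^{2/3})$ bound is the most delicate: the plan is to average over a random mean $\mu$ drawn from a prior that keeps the $\chi^2$ divergence against the null small, then upgrade information-theoretic indistinguishability to a DP indistinguishability bound via a packing/coupling argument in the style of the private hypothesis-testing literature. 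The main obstacle in both directions is exactly this middle term, which in both the algorithm and the lower bound requires a more elaborate construction than the first-principles arguments that handle the two endpoint terms.
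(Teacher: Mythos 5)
The paper does not prove this theorem: it is imported verbatim from~\cite{narayanan2022privatetesting} and used as a black box in~\cref{sec:strong-sample-complexity}. You correctly identify this at the outset, so there is no proof in the paper to compare against.

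Your reconstruction sketch is reasonable in spirit but incomplete, as you yourself flag. A few concrete notes on where it would need work if it were to be filled in. On the upper bound, a vanilla Laplace mechanism gives $(\varepsilon,0)$-DP, not $(0,\delta)$-DP; achieving the latter from the clipped statistic requires a stability-based mechanism (propose-test-release style), and the sensitivity bound needs care because one corrupted sample shifts $\|\sum_i X_i\|_2^2$ by roughly $\|\sum_j X_j\|_2 \cdot \|X_i - X_i'\|_2 \approx \sqrt{nd}\cdot\sqrt{d} = d\sqrt{n}$ after clipping, not $\sqrt{nd}$ as written; one also needs to argue that the sensitivity bound itself holds with high probability rather than worst-case. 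Narayanan's actual algorithm is substantially more involved than a block-and-aggregate scheme (it uses a recursive private filtering of ``good enough'' subsets), so your proposed path for the $d^{1/3}/(\alpha^{4/3}\delta^{2/3})$ term is speculative. On the lower bound, the $1/(\alpha\delta)$ coupling argument and the transfer of the non-private $\sqrt{d}/\alpha^2$ bound are both correct and standard. For the middle term, the right ingredient is essentially the coupling-based construction that the paper cites later in~\cref{lem:adaptive-main-lemma} (Theorem~D.6 of~\cite{narayanan2022privatetesting}): a pair of distributions that are information-theoretically close (so $\chi^2$ alone gives nothing) but between which there is a low-Hamming-distance coupling, which is then leveraged against the $(0,\delta)$-DP constraint. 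Your phrase ``packing/coupling argument'' gestures in this direction, but you should be aware that a pure $\chi^2$ calculation on a Gaussian prior, as your sketch suggests, does not by itself produce a $\delta$-dependent lower bound; the Hamming-distance coupling step is load-bearing and is the genuinely nontrivial part of the lower bound.
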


It is essentially folklore (see also~\cite{georgiev2022privatetorobust}) that any $(0, \delta)$-DP decision algorithm that succeeds with $0.99$ probability given $\ns$ samples is automatically $\cor$-robust in the strong (adaptive) corruption model for $\cor \eqdef \frac{1}{10 \delta n}$, and succeeds with at least $2/3$ probability over the input. For completeness, we briefly reproduce the argument here for the algorithm $\cA$. If $X_1, \dots, X_n $ are \iid, then with at least $0.9$ probability, $\BP(\mathcal{A}(X_1, \dots, X_n) = 0) \ge 0.9$ over the randomness of the algorithm $\mathcal{A}$. Hence, for any $\cor$-corruption of the data $X'$ (i.e., $\cor\ns = \frac{1}{10 \delta}$ individual data points are possibly adaptively changed from $X$ to $X'$), by the definition of privacy, 
\[
|\BP(\mathcal{A}(X_1, \dots, X_n) = 0) - \BP(\mathcal{A}(X_1', \dots, X_n') = 0)| \le \delta \cdot \frac{1}{10 \delta} = \frac{1}{10}
\]
Hence, for any such corruption $X'$, $\BP(\cA(X') = 0) \ge 0.89 > 2/3$.
The same argument can be used to show that if $X_1, \dots, X_n \overset{\iid}{\sim} \cN(\mu,I)$ where $\|\mu\|_2 \ge \alpha$, with probability at least $0.9$ over $X_1, \dots, X_n$, $\BP(\mathcal{A}(X') = 1) > 2/3$ for any $\cor$-corruption of $X$.

Thus, the algorithm of~\cref{thm:narayanan-private-testing} readily implies an $\cor$-robust one for robust mean testing, for $\cor = \frac{1}{10 \delta n}$. Plugging $\delta = \frac{1}{10\cor\ns}$ in its sample complexity, it suffices for $\ns$ to satisfy
\[
    n \ge \tilde{O}\left(\frac{d^{1/2}}{\alpha^2} + \frac{d^{1/3}}{\alpha^{4/3} \cdot (1/\eps n)^{2/3}} + \frac{1}{\alpha \cdot (1/\eps n)}\right) = \tilde{O}\left(\frac{d^{1/2}}{\alpha^2} + \frac{d^{1/3} \eps^{2/3} n^{2/3}}{\alpha^{4/3}} + \frac{\eps}{\alpha} \cdot n\right).
\]
This is equivalent to requiring 
\[\alpha \ge \tilde{O}(\eps) \hspace{0.5cm} \text{and} \hspace{0.5cm} n \ge \tilde{O}\left(\frac{d^{1/2}}{\alpha^2} + \frac{d \eps^2}{\alpha^4}\right),\]
where $\tilde{O}$ may hide polylogarithmic factors in $d, \alpha^{-1}, \eps^{-1}$.
Hence, there exists a robust algorithm against strong contamination with sample complexity $\tilde{O}\mleft(\frac{d^{1/2}}{\alpha^2} + \frac{d \eps^2}{\alpha^4}\mright)$.

\paragraph{Lower bound.} We next show this sample complexity is optimal, again by a reduction between robust and private algorithms. 
Suppose there exists a robust mean testing algorithm $\mathcal{A}$ that uses $n$ samples. We set $\delta \eqdef \frac{1}{\cor\ns}$, and construct a $(0, \delta)$-differentially private algorithm for mean testing using a black-box robustness-to-privacy transformation~\cite{hopkins2023robusttoprivate, asi2020instance}. We will then use a lower bound from~\cite{narayanan2022privatetesting}, which will create a contradiction if $n$ is too small.

To explain this transformation, first, for any two datasets $X, X'$ of size $n$, we define the Hamming distance $d_{\rm{}H}(X, X')$ to be the number of indices $i$ such that $X_i \neq X_i'$.
Now, for any dataset $X = (X_1, \dots, X_n)$, define the \emph{score} $\mathcal{S}(X ; \cA)$ of $X$ (for $\cA$) to be the smallest nonnegative integer $k$ such that there exists a dataset $X'$ of size $n$ with $d_{\rm{}H}(X, X') = k$ and $\mathcal{A}(X') = 1$. Equivalently, $\mathcal{S}(X ; \cA)$ represents the smallest number of points we need to alter from $X$ to obtain some $X'$ on which the robust algorithm would reject. (Note that if $\mathcal{A}(X) = 1$, then the score of $X$ is simply $0$.) 

The differentially private algorithm $\mathcal{A}'$ on $X$ computes $\mathcal{S}(X ; \cA)$, and then outputs $1$ with probability $\min(0, 1 - \delta \cdot \mathcal{S}(X ; \cA))$.

Note that $\mathcal{S}(X ; \cA)$ changes by at most $1$ between adjacent datasets $X, X'$, because if $\mathcal{S}(X ; \cA) = k$, there exists $X''$ with $d_{\rm{}H}(X, X'') = k$ and $\mathcal{A}(X'') = 1$. But then, $d_{\rm{}H}(X', X'') \le k+1$, so $\mathcal{S}(X' ; \cA) \le k+1$. Likewise, we can show $\mathcal{S}(X' ; \cA) \ge k-1$. This proves that the algorithm is $(0, \delta)$-differentially private, since the probability of outputting $1$ changes by at most $\delta$ if the score changes by at most $1$.

Next, if $X = (X_1, \dots, X_n) \overset{i.i.d.}{\sim} \cN(0, I)$, then by the property of the robust algorithm, with probability at least $2/3$, every dataset $X'$ of Hamming distance at most $\cor\ns$ from $X$ satisfies $\mathcal{A}(X') = 0$. Whenever this happens, $\mathcal{S}(X ; \cA) \ge \cor\ns$, and thus conditioned on this the algorithm $\cA'$ outputs $1$ with probability $0$, and hence always outputs $0$.

Finally, if $X = (X_1, \dots, X_n) \overset{i.i.d.}{\sim} \cN(\mu, I)$, then with probability at least $2/3$, $\mathcal{A}(X) = 1$. Hence, with probability at least $2/3$ we have $\mathcal{S}(X ; \cA) = 0$, and conditioned on this the algorithm $\cA'$ outputs $1$ with probability $1$.

Overall, this means that if $\cA$ is robust against strong contamination, then there exists an algorithm $\cA'$ that is $(0, \frac{1}{\eps n})$-differentially private for the Gaussian mean testing problem, with the same number of samples $n$.

However, we can now invoke the lower bound part of~\cref{thm:narayanan-private-testing} for DP Gaussian mean testing.
From the above reduction, a robust algorithm using $n$ samples yields an $(0, \frac{1}{\eps n})$-DP algorithm with the same sample complexity, which by~\cref{thm:narayanan-private-testing} means that one must have
\[
n = \tilde{\Omega}\left(\frac{d^{1/2}}{\alpha^2} + \frac{d^{1/3}}{\alpha^{4/3}/(\eps n)^{2/3}} + \frac{1}{\alpha/(\eps n)}\right) = \tilde{\Omega}\left(\frac{d^{1/2}}{\alpha^2} + \frac{d^{1/3}\eps^{2/3}}{\alpha^{4/3}} \cdot n^{2/3} + \frac{\eps}{\alpha} \cdot n\right).
\]
This implies that
\[n = \tilde{\Omega}\left(\frac{d^{1/2}}{\alpha^2}\right) \hspace{1cm} \text{and} \hspace{1cm} n = \tilde{\Omega}\left(\frac{d \eps^2}{\alpha^4}\right).\]
This concludes the proof of~\cref{thm:adaptive-sample-complexity-main}. \qed

\paragraph{Lower Bound against Additive Adversaries.} Finally, we note that the same lower bound holds even if we restrict ourselves to adaptive adversaries that only can \emph{add} points, and can never remove points. This again follows readily from the results of~\cite{narayanan2022privatetesting}, but is a consequence of an intermediate result proven in the paper rather than a direct black-box application of their private sample complexity lower bound. The lemma that we require is the following.

\begin{lemma}[{\cite[Theorem D.6, restated]{narayanan2022privatetesting} }]\label{lem:adaptive-main-lemma}
    Fix any $\alpha, \delta \le 1$ and any dimension $d$. There exists a distribution $\mathcal{D}$ over $\R^d$ with support only on $\{\mu\in\R^d: \|\mu\|_2 \ge \alpha\}$, with the following property. Suppose $\mathcal{U}$ is the distribution over $(X_1, \dots, X_n) \in (\R^d)^n$ where each $X_i \sim \cN(0, I)$, and $\mathcal{V}$ is the distribution over $(X_1, \dots, X_n) \in (\R^d)^n$ where we first draw $\mu \sim \mathcal{D}$ and then draw each $X_i \sim \cN(\mu, I)$.

    Then, for some universal constants $c_1, c_2 > 0$, if $n \le c_1 \cdot \frac{d^{1/3}}{\alpha^{4/3} \cdot \delta^{2/3}}$ there exist distributions $\mathcal{U}', \mathcal{V}'$ over $(\R^d)^n$ such that $\dtv(\mathcal{U}, \mathcal{U}') \le 1/4$, $\dtv(\mathcal{V}, \mathcal{V}') \le 1/4$, and there is a coupling of $(\mathcal{U}', \mathcal{V}')$ such that $\BE_{(X, Y) \sim (\mathcal{U}', \mathcal{V}')} [d_{\rm{}H}(X, Y)] \le {c_2}/{\delta}$, where $d_{\rm{}H}$ denotes the Hamming distance, i.e., the number of points that differ between $X$ and $Y$.
\end{lemma}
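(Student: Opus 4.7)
\medskip

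The plan is to exhibit, for a suitably chosen prior $\cD$, a pair of coupled distributions $(\cU',\cV')$ that are each within $1/4$ total variation of the corresponding marginal and whose coupling has expected Hamming distance $O(1/\delta)$. This is the coupling form of a privacy/robustness lower bound, so the natural prior is the obvious Gaussian one: take $\cD = \cN(0,(\alpha^2/d)I)$ conditioned on $\|\mu\|_2 \geq \alpha$ (the conditioning event has $\Omega(1)$ probability by Gaussian concentration of $\|\mu\|_2$), which makes analytic computations tractable while still matching the desired signal strength.

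First I would compute $\dchi(\cV\|\cU)$ directly: because $\cN(\mu,I)^{\otimes n}/\cN(0,I)^{\otimes n}(X) = \exp(\langle \mu,\sum X_i\rangle - n\|\mu\|^2/2)$, the second moment reduces to $\E_{\mu,\mu'\sim\cD}[\exp(n\langle \mu,\mu'\rangle)]$, which for independent $\cN(0,(\alpha^2/d)I)$ priors evaluates in closed form to $(1 - n^2\alpha^4/d^2)^{-d/2}$ --- this is $O(1)$ when $n \ll \sqrt{d}/\alpha^2$, which is a necessary (but not the binding) condition. Then I would define $\cU'$ and $\cV'$ by truncating to the typical event $\{\|\sum X_i\|_2 \leq C\sqrt{nd} + n\alpha\}$ together with a bounded-likelihood-ratio event, so that $\dtv(\cU,\cU'),\dtv(\cV,\cV')\leq 1/4$ via Markov and \cref{prop:chi-tv}.

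The heart of the proof, and the main obstacle, is designing the coupling between $\cU'$ and $\cV'$ achieving $\E[d_H] \lesssim 1/\delta$. A naive coupling (e.g.\ shift every sample by $\mu$, or match sufficient statistics using \cref{prop:sufficient_statistic}) gives $d_H = n$: shifting $\bar X$ changes every coordinate. The plan is instead to ``concentrate'' the entire mean shift into $k = O(1/\delta)$ carrier indices. Concretely, I would generate a common Gaussian background $Z_1,\ldots,Z_n \sim \cN(0,I)$ shared between $\cU'$ and $\cV'$, draw $\mu\sim\cD$ only on the $\cV'$ side, pick a random subset $S$ of size $k$, and for $i \in S$ replace $Z_i$ by $Z_i + (n/k)\mu$; the remaining $n-k$ samples are shared exactly, so $d_H \leq k$. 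It then remains to show that this modified distribution is within $1/4$ TV of the true $\cV'$. Bounding this residual TV distance is where the numerology bites: the carriers' marginal becomes $\cN(n\mu/k,I)$ instead of $\cN(\mu,I)$, and the induced chi-squared between the modified mixture and $\cN(\mu,I)^{\otimes n}$ is roughly $\exp(n^2\|\mu\|^2/k) - 1$ (plus a second-moment cross term from the prior); setting $k = c/\delta$ and recalling $\|\mu\|\approx\alpha$, this is $O(1)$ precisely when $n^2\alpha^2\delta = O(k\|\mu\|^2 \cdot(\text{something involving }d))$, which after working through the $d$-dependent corrections from the prior bounds rearranges to $n \lesssim d^{1/3}/(\alpha^{4/3}\delta^{2/3})$. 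Finally, assembling the two TV bounds ($\cV$ vs.\ $\cV'$ via truncation, $\cV'$ vs.\ the carrier-based simulation via the chi-squared above) with the triangle inequality gives the stated conclusion.
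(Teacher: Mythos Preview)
First, note that the paper does \emph{not} prove this lemma: it is quoted verbatim from \cite[Theorem~D.6]{narayanan2022privatetesting} and used as a black box. So there is no ``paper's own proof'' to compare against; you are attempting to reprove a cited result.

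Your carrier construction is natural, and the numerology you are aiming for is plausible: if one checks when the $k$ carriers are detectable (by projecting onto $\hat\mu=\bar X$, or by looking at $\sum_i\|X_i\|^2$), the threshold $n\asymp d^{1/3}/(\alpha^{4/3}\delta^{2/3})$ does emerge. But the justification you give for $\dtv(\cV,\cV')\le 1/4$ has a real gap. You write that the relevant $\chi^2$ is ``roughly $\exp(n^2\|\mu\|^2/k)-1$ (plus a second-moment cross term from the prior)'' and that ``$d$-dependent corrections'' then yield the right bound. This is not how the calculation goes. For \emph{fixed} $\mu$, the bound $\exp(n^2\alpha^2/k)$ is the whole story and gives $n\lesssim 1/(\alpha\sqrt\delta)$, with no $d$ at all; indeed, for fixed $\mu$ the carriers are trivially detectable by projecting onto $\mu$, so $\cV'_\mu$ and $\cV_\mu$ are \emph{far} in TV whenever $n\alpha\gg k$. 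The $d$ enters only because the tester does not know $\mu$, i.e.\ only through the averaging over the prior --- so the ``corrections from the prior'' are not corrections to a main term but are the entire mechanism. You would need to bound $\dchi(\cV'\Vert\cV)$ (or a conditional version) with both $\mu$ and the random subset $S$ integrated out, and this is a genuinely different and more delicate computation than what you sketched; it is not clear it even yields a finite unconditional $\chi^2$ without further truncation on per-point statistics (not just on $\|\sum X_i\|$).

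In short: the coupling idea is reasonable, but the step ``$\dtv(\cV,\cV')\le 1/4$'' is asserted rather than proved, and the heuristic you give for it points to the wrong threshold. That step is exactly where the content of the lemma lives.
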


Now, we show why \Cref{lem:adaptive-main-lemma} implies that any robust algorithm cannot distinguish between i.i.d. samples from $\cN(0, I)$ and $\cN(\mu, I)$, where $\mu$ is drawn from the distribution $\mathcal{D}$ in \Cref{lem:adaptive-main-lemma} under adaptive $\eps$-additive contamination, unless the number of samples is at least $\Omega\left(\frac{d \eps^2}{\alpha^4}\right)$. This would conclude the claim.

Fix $\alpha, \eps \le 1$, and define $\delta = \frac{10 c_2}{\eps n}$, so that $\frac{10 c_2}{\delta} = \eps n$. Suppose that $n \le c_1 \cdot \frac{d^{1/3}}{\alpha^{4/3} \delta^{2/3}},$ which for $\delta = \frac{10 c_2}{\eps n}$ is equivalent to $n \le \frac{c_1^3}{100 c_2^2} \cdot \frac{d \eps^2}{\alpha^4}$.
By \Cref{lem:adaptive-main-lemma} and Markov's inequality, $\BP_{(X, Y) \sim (\mathcal{U}', \mathcal{V}')} \left[d_{\rm{}H}(X, Y) \ge {10 c_2}/{\delta}\right] \le 1/10$, which means by the coupling between $\mathcal{U}$ and $\mathcal{U}'$ and between $\mathcal{V}$ and $\mathcal{V}'$, there exists a coupling between $\mathcal{U}$ and $\mathcal{V}$ with $\BP_{(X, Y) \sim (\mathcal{U}, \mathcal{V})} \left[d_{\rm{}H}(X, Y) \ge {10 c_2}/{\delta}\right] \le 1/4+1/4+1/10=3/5,$ i.e., such that $\BP_{(X, Y) \sim (\mathcal{U}, \mathcal{V})} \left[d_{\rm{}H}(X, Y) \le {10 c_2}/{\delta}\right] \ge 2/5.$

Consider such a coupling between $\mathcal{U}$ and $\mathcal{V}$. Suppose we generate $(X, Y) \sim (\mathcal{U}, \mathcal{V})$, and in the $2/5$ probability event $\left\{d_{\rm{}H}(X, Y) \le {10 c_2}/{\delta}\right\}$, we let $\hat{X} = \hat{Y} = X \cup Y$. Note that $\hat{X}$ can be created by adding at most ${10 c_2}/{\delta}$ points to $X$ and at most ${10 c_2}/{\delta}$ points to $Y$. Otherwise, we let $\hat{X} = X$ and $\hat{Y} = Y$. Importantly, this means there exists a distribution over $\hat{X}$ and $\hat{Y}$ (which are generated only by additive adaptive contamination of $\frac{10 c_2}{\delta} = \eps n$ points) such that with $2/5$ probability, $\hat{X}$ and $\hat{Y}$ are the same. So, the total variation distance between the distributions is at most $3/5$, which means no algorithm can successfully distinguish between the two distributions with more than $80\%$ probability.

In summary, there cannot exist an algorithm that uses $n \le \frac{c_1^3}{100 c_2^2} \cdot \frac{d \eps^2}{\alpha^4}$ samples and distinguishes between samples from $\cN(0, I)$ and $\cN(\mu, I)$ where $\mu \sim \mathcal{D}$, under $\eps$-additive adaptive contamination. Finally, because there exists an $\Omega(\sqrt{d}/\alpha^2)$-lower bound even against uncorrupted samples~\cite{srivastava2008test, DiakonikolasKS17}, we conclude that the sample complexity of robust Gaussian mean testing against additive adaptive adversaries is
\[\Omega\left(\frac{\sqrt{d}}{\alpha^2} + \frac{d \eps^2}{\alpha^4}\right)\,,\]
as claimed. 
\section{Polynomial-Time Algorithm}
\label{sec:poly-time}

\begin{theorem}
\label{thm:poly-time-main}
    Let $d \in \N, \delta > 0$. Let $\alpha = O(1)$ and assume that $C \e \sqrt{\log 1 / \e} \leq \alpha$ and 
    \[
    n \ge \Omega\left( \frac{\sqrt{d} \log 1 / \delta}{\alpha^2} + \frac{d \e^2 \log 1 / \delta}{\alpha^4} \poly \log (d, 1 / \e, 1 / \alpha, \log 1 / \delta) \right ) \, .
    \]
Then, there is an algorithm which runs in time $O(\e n^2 d \min (n, d) + n d)$ with the following guarantees:
  \begin{itemize}
      \item For every $\mu \in \R^d$ with $\|\mu\| = \alpha$, with probability $1-\delta$ over $n$ independent samples $X_1,\ldots,X_n$ from $\cN(\mu,I)$, given any adaptive $\e$-corruption of $X_1,\ldots,X_n$, the algorithm outputs \textsc{yes}.
      \item With probability $1-\delta$ over $n$ independent samples $X_1,\ldots,X_n$ from $\cN(0,I)$, given any adaptive $\e$-corruption of $X_1,\ldots,X_n$, the algorithm outputs \textsc{no}.
  \end{itemize}
\end{theorem}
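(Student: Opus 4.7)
The plan is to make Narayanan's exponential-time tester polynomial-time by replacing the combinatorial ``good-enough'' condition of Definition~\ref{def:intro-good-enough} (which quantifies over all $\e n$-subsets) with an efficiently checkable spectral certificate, and then running a standard filtering loop to produce a subset satisfying the certificate. Concretely, given $X_1,\ldots,X_n$, form the Gram matrix $M \in \R^{n\times n}$ with $M_{ij} = \langle X_i, X_j \rangle$. For $S \subseteq [n]$ the algorithm will compute $\Sum(S) = \sum_{i \in S} X_i$, threshold $\|\Sum(S)\|^2 - |S| d$ against $c \alpha^2 n^2$, and output \textsc{yes}/\textsc{no} accordingly; standard $O(\log 1/\delta)$-fold amplification handles the failure probability. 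The reduction in Section~\ref{subsec:wlog} lets us assume $\|\mu\|\in[\alpha,2\alpha]$, and WLOG $\alpha \le O(1)$, $\eps \le \alpha$.

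\textbf{Certificate.} Say $S$ is \emph{spectrally clean} if (i) every eigenvalue of the principal submatrix $M_S$ lies in $d \pm \tilde O(\sqrt{nd}+n)$, and (ii) every off-diagonal row-sum $\sum_{j\in S\setminus\{i\}} M_{ij}$ has magnitude $\tilde O(\sqrt{nd}+n)$. Following the sketch at the end of Section~\ref{ssec:techniques}, for any $T\subseteq S$ with $|T|\le\eps n$ cleanliness gives $\|\Sum(T)\|^2 = 1_T^\top M_S 1_T \le |T| d + \tilde O(|T|(\sqrt{nd}+n))$ and $|\langle \Sum(S\setminus T),\Sum(T)\rangle| \le \sum_{i\in T}|\sum_{j\in S\setminus\{i\}} M_{ij}| + O(|T|\kappa d)$, which are enough to drive Narayanan's Main Lemma (reproduced in Section~\ref{ssec:techniques}) once $n \gg d\eps^2/\alpha^4$. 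I will then verify via matrix-Bernstein / Marchenko--Pastur for (i) and Hanson--Wright plus a union bound for (ii) that the uncorrupted Gaussian samples $G \subset [n]$ themselves form a spectrally clean set with probability $1-\delta$ (in the alternative case the additional $\mu$-dependent terms contribute at most $\tilde O(\alpha n \sqrt n)$, absorbed by the threshold when $n \gg \sqrt d/\alpha^2$).

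\textbf{Filter and running time.} The algorithm maintains $S\gets[n]$ and iteratively removes indices while $S$ is not spectrally clean. At each step it computes the top eigenpair $(\lambda,w)$ of $M_S - d\cdot I_S$ and the largest off-diagonal row-sum; if $|\lambda - d|$ exceeds the threshold it scores each $i\in S$ by $w_i^2$ (equivalently the squared projection $\langle X_i,v\rangle^2$ onto the corresponding direction $v = \sum_{i\in S} w_i X_i / \|\cdot\|$) and applies a randomized ``soft'' filter that removes $i$ with probability proportional to its score, and similarly for row-sum violations. Because the uncorrupted $G$ is itself clean, the total score contributed by $G$ is bounded, while any violation beyond threshold must have its score mass dominated by the $B=[n]\setminus G$ side; this is the standard Diakonikolas--Kamath--Kane--Li--Moitra--Stewart filter analysis, and guarantees that in expectation each iteration removes at least as many bad indices as good ones. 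Iterating at most $O(\eps n)$ times yields $|S| \ge (1-2\eps)n$ with $S$ spectrally clean, after which one more application of Narayanan's Main Lemma finishes the testing argument. The running time is dominated by the repeated top-eigenpair computations on $M_S$ (equivalently on $X_S X_S^\top$ or $X_S^\top X_S$), each of cost $O(nd\min(n,d))$ via power iteration, giving the claimed $O(\eps n^2 d \min(n,d) + nd)$ bound.

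\textbf{Main obstacle.} The bulk of the work is Step 3: proving that the spectral/row-sum filter is \emph{safe}, i.e., that whenever cleanliness fails the mass of the offending direction is controlled by $B$ and not by $G$. This must be done \emph{uniformly} over the filter's adaptive choices of $S$, which is where the hypothesis $\alpha \ge C\eps\sqrt{\log 1/\eps}$ enters --- it ensures the adversary cannot pay too much contamination into a direction that is spectrally indistinguishable from Gaussian noise at the $\tilde O(\sqrt{nd}+n)$ scale. Making this safety argument quantitatively match the $\tilde{O}\bigl(\sqrt d/\alpha^2 + d\eps^2/\alpha^4\bigr)$ sample complexity, while paying only logarithmic factors beyond the adaptive-adversary lower bound, is the delicate part.
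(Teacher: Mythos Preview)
Your proposal is the paper's approach: replace Definition~\ref{def:intro-good-enough} with an efficiently checkable certificate (a spectral bound on the Gram/covariance matrix together with a row-sum bound), run a filter until both hold, then threshold $\|\Sum(S)\|^2-|S|d$. The paper carries this out with soft weights $w\in[0,1]^n$ and the invariant $\|\bone_{G\setminus R}-w_{G\setminus R}\|_1\le 5\|\bone_B-w_B\|_1$, which is precisely the safety condition you flag as the main obstacle; the argument is the standard score-domination one you sketch.

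One point your write-up does not handle: the Gram-matrix certificate ``all eigenvalues of $M_S$ in $d\pm\tilde O(\sqrt{nd}+n)$'' becomes vacuous when $n>d$. In that regime $M_S$ has $|S|-d$ zero eigenvalues and its nonzero eigenvalues concentrate near $n$, not $d$; your interval $d\pm\tilde O(n)$ contains both, so the certificate yields only $\|\Sum(T)\|^2\le|T|(d+\tilde O(n))$, an error of $\tilde O(\eps n^2)$ that would force $\eps\lesssim\alpha^2$ rather than the stated $\eps\sqrt{\log 1/\eps}\lesssim\alpha$. (The same $+n$ slack in your row-sum threshold causes the same loss.) The paper deals with $n>d$ by running a separate filter (Algorithm~\ref{alg:spectralfilter2}) on the $d\times d$ covariance $\sum_i w_iX_iX_i^\top$ centered at $nI$, and controls the good-point contribution via a uniform bound on $\sum_{i}w_i\iprod{v,X_i}^2$ over all directions $v$ and all weights supported on $O(\eps n)$ good points (Fact~\ref{fact:small-subset-deviations}). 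With that split in place, the remainder of your outline --- row-sum pruning, then thresholding as in Lemma~\ref{lem:regular-weights-imply-tester} --- goes through.
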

We briefly mention that the assumption that $\norm{\mu}$ is exactly $\alpha$ is largely for notational convenience this section.
It is straightforward to verify that the same arguments also extend to testing when the mean of the alternative hypothesis satisfies $\alpha \leq \norm{\mu} \leq O(1)$, which implies the same results for $\alpha \leq \norm{\mu}$ (see Subsection~\ref{subsec:wlog}).

\subsection{Regularity conditions}
We will seek to algorithmically enforce a set of regularity conditions which are guaranteed to be satisfied by any set of uncorrupted points, from either the null or alternate hypothesis.
We demonstrate that if these regularity conditions are satisfied, then the norm of the sum of the samples will suffice to distinguish between the two cases, with high probability.
Concretely, the regularity condition we will require is the following:
\begin{definition}
\label{def:regularity}
Let $S = \{X_1, \ldots, X_n\}$ be a set of points in $\R^d$.
We say that $S$ is $(\e, \beta_1, \beta_2)$-regular if for all sets $T \subset S$ with $|T| \leq \e n$, we have:
\begin{enumerate}[(i)]
    \item $\sum_{i \in T} \norm{X_i}^2 = |T| d \pm O(\beta_1)$,
    \item $\norm{\Sum(T)}^2 = |T| d \pm O(\beta_2)$, and
    \item $\abs{\iprod{\Sum(T), \Sum (S)}} = |T| d \pm O(\sqrt{n} \beta_1)$.
\end{enumerate}
\end{definition}
\noindent
We first note the following bound:
\begin{lemma}
\label{lem:good-set-regularity}
    Let $\alpha = O(1)$, let $\e, \delta > 0$ be at most a sufficiently small constant, and let $S = \{ X_1, \ldots, X_n \} \subset \R^d$ be a set of $n$ independent draws from $\cN(\mu, I)$, where $\|\mu\|_2 \le \alpha$, and suppose that $n \geq \log (1 / \delta) / \e$.
    Then, with probability $1 - \delta$, $S$ is 
    \[
    \left(\e,  \e n \sqrt{d} \log (n / \delta), (\e n)^2 \log 1 / \e + \e n \sqrt{\e n d \log 1 / \e} \right) \mbox{-regular} \; .
    \]
\end{lemma}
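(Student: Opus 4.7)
The plan is to verify each of the three conditions in Definition~\ref{def:regularity} separately, using Gaussian concentration combined with a union bound over all subsets $T \subseteq S$ with $|T| \leq \e n$. For each size $k \le \e n$ we have $\binom{n}{k} \le (en/k)^{k}$, so the union-bound cost in log-probability is $O(k \log(1/\e))$; thus for a fixed $T$ of size $k$ we need concentration statements that fail with probability at most $\exp(-\Omega(k \log(1/\e) + \log(1/\delta)))$. Throughout, I write $X_i = \mu + Y_i$ with $Y_i \sim \cN(0,I)$ independently and $\|\mu\| \leq \alpha = O(1)$, and I assume $\log(1/\delta) \leq O(\e n \log(1/\e))$, which holds since $n \ge \log(1/\delta)/\e$.

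For condition (i), each $\|X_i\|^2 - d$ is a shifted sub-exponential random variable: by Proposition~\ref{prop:chi-square-concentration} applied to $\|Y_i\|^2$ and standard Gaussian tail bounds on $\iprod{\mu,Y_i}$, a union bound over $i \in [n]$ gives $|\|X_i\|^2 - d| \leq O(\sqrt{d}\log(n/\delta))$ simultaneously for every $i$, using $\alpha = O(1)$. Summing over any $|T| \leq \e n$ yields $|\sum_{i \in T}\|X_i\|^2 - |T|d| \leq O(\e n \sqrt{d}\log(n/\delta)) = O(\beta_1)$.

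For condition (ii), fix $T$ with $|T|=k$ and expand $\Sum(T) = k\mu + G_T$ with $G_T = \sum_{i \in T} Y_i \sim \cN(0,kI)$, so
\[
\|\Sum(T)\|^2 = k^2\|\mu\|^2 + 2k\iprod{\mu,G_T} + \|G_T\|^2.
\]
The first summand is $O(k^2) \leq O((\e n)^2)$. For the middle summand, $\iprod{\mu,G_T}$ is a mean-zero Gaussian with variance $k\|\mu\|^2 \leq k\alpha^2$, so with the union-bound-adjusted failure probability it is $O(\alpha\sqrt{k(k\log(1/\e)+\log(1/\delta))})$, giving a total contribution $O(\alpha k^{3/2}\sqrt{k\log(1/\e)+\log(1/\delta)})$. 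The last summand is $k$ times a $\chi^2_d$, and Hanson--Wright (Lemma~\ref{lem:hanson-wright}) gives $|\|G_T\|^2 - kd| \leq O(k\sqrt{d\cdot t} + k t)$ with $t = k\log(1/\e) + \log(1/\delta)$. Plugging in $k \leq \e n$, all three contributions are bounded by $O((\e n)^2\log(1/\e) + \e n\sqrt{\e n d \log(1/\e)}) = O(\beta_2)$. Union-bounding over the $\binom{n}{k}$ choices of $T$ for each $k \leq \e n$ and summing over $k$ completes the argument.

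For condition (iii), decompose $\iprod{\Sum(T),\Sum(S)} = \|\Sum(T)\|^2 + \iprod{\Sum(T),\Sum(S \setminus T)}$. The first term equals $|T|d \pm O(\beta_2)$ by condition (ii), and one checks that $\beta_2 = O(\sqrt{n}\beta_1)$ in the relevant regime. For the cross term, condition on $\Sum(T)$; since the two halves involve disjoint $Y_i$'s, $\Sum(S\setminus T)$ is independent with distribution $\cN((n-|T|)\mu,(n-|T|)I)$, so
\[
\iprod{\Sum(T),\Sum(S\setminus T)} = (n-|T|)\iprod{\Sum(T),\mu} + Z, \qquad Z \mid \Sum(T) \sim \cN\!\bigl(0,(n-|T|)\|\Sum(T)\|^2\bigr).
\]
Using $\|\Sum(T)\|^2 = O(|T|d)$ and $|\iprod{\Sum(T),\mu}| = O(|T|\alpha^2 + \alpha\sqrt{|T|(|T|\log(1/\e)+\log(1/\delta))})$ from condition (ii) and a Gaussian tail bound, both parts are of order $O(\sqrt{n}\beta_1)$ after the union bound, yielding the claimed estimate.

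The main obstacle is purely bookkeeping: one must carefully track the union-bound cost at each size $k$ and verify that the resulting tail inflation $\sqrt{k\log(1/\e) + \log(1/\delta)}$ is absorbed by the stated $\beta_1, \beta_2$ at every step. The underlying probabilistic inputs (Hanson--Wright and one-dimensional sub-exponential tails) are entirely routine.
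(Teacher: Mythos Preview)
Your treatment of conditions (i) and (ii) is essentially identical to the paper's: pointwise norm bounds for (i), and for (ii) the same decomposition $\|\Sum(T)\|^2 = k^2\|\mu\|^2 + 2k\iprod{\mu,G_T} + \|G_T\|^2$ followed by Gaussian and $\chi^2$ tail bounds with a union bound over all $T$.

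For condition (iii) the paper takes a cleaner route. Rather than writing $\iprod{\Sum(T),\Sum(S)} = \|\Sum(T)\|^2 + \iprod{\Sum(T),\Sum(S\setminus T)}$ and union-bounding over all $T$, the paper establishes a \emph{per-sample} bound: for each $i$, it writes $\iprod{X_i,\Sum(S)} = \|X_i\|^2 + \iprod{X_i,\Sum(S\setminus\{i\})}$, observes the second term is a sub-exponential with variance proxy $\approx nd$, and gets $|\iprod{X_i,\Sum(S\setminus\{i\})}| \le O(\sqrt{nd}\log(n/\delta))$ simultaneously for all $i$ via a union bound over only $n$ events. Then for any $T$ one simply sums the $|T|\le\e n$ pointwise bounds. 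This avoids your detour through condition (ii) and the accompanying claim ``one checks that $\beta_2 = O(\sqrt{n}\beta_1)$''---a claim which actually does \emph{not} follow from the lemma's hypotheses alone (e.g.\ take $d$ fixed and $n\to\infty$: the $(\e n)^2\log(1/\e)$ term in $\beta_2$ eventually dominates $\sqrt{n}\beta_1=\e n^{3/2}\sqrt{d}\log(n/\delta)$), though it does hold in the parameter regime where the lemma is applied. The paper's per-sample route sidesteps this entirely and needs only an $n$-fold rather than an $\exp(\Theta(\e n\log(1/\e)))$-fold union bound for (iii).
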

\begin{proof}
    We prove that $S$ satisfies each bullet point in sequence.
To prove the first bullet point, 
    by Fact~\ref{fact:bounded-norm}, with probability $1-\delta/3$, for all $i \in [n]$ $\left|\|X_i\|^2 - d\right| \le 10 \left(\sqrt{\log(3 n/\delta) d} + \log(3 n/\delta)\right) \le 30 \sqrt{d} \cdot \log(n/\delta)$. Assuming this holds for all $i$, then for any subset $T \subset S$, $\sum_{i \in T} \|X_i\|^2 = |T| d \pm |T| \cdot 30 \sqrt{d} \log(n/\delta)$. Hence, if $|T| \le \eps n$, this equals $|T| d \pm O(\eps n \sqrt{d} \log (n/\delta))$, as desired.

    We now prove the second bullet point.
    Fix any $T$ satisfying $|T| \leq \e n$.
    Then, $\Sum(T) \sim \cN(|T| \mu, |T| I)$, so if we let $Z = |T|^{-1/2} \left( \Sum(T) - |T| \mu \right)$, we have that $Z \sim \cN(0, I)$, and 
    \[
    \norm{\Sum(T)}^2 = \|\mu\|^2 \cdot |T|^2 + |T|^{3/2} \iprod{\mu, Z} + |T| \norm{Z}^2 \; .
    \]
    Hence, we have that, for any $C > 0$, there exists $C' > 0$ so that
    \begin{align*}
    \Pr \left[ \abs{\iprod{\mu, Z}} \geq C' \alpha \sqrt{\e n \log 1 / \e} \right] & \leq \exp (-C \e n \log 1 / \e) \; , \; \mbox{and} \\
    \Pr \left[ \abs{ \norm{Z}^2 - d} > C' \sqrt{\e n d \log 1 / \e}  + C' \e n \log 1 / \e \right]  &\leq \exp (-C \e n \log 1 / \e) \; .
    \end{align*}
    The number of subsets $T \subset S$ of size at most $\eps n$ is at most $\exp\left(O(\eps n \log 1/\eps)\right)$.
    Therefore, by a union bound over all choices of $T$, with probability $1 - \exp (- \Omega (\e n \log 1 / \e)) = 1 - \delta / 3$, we have that
    \begin{equation}
\abs{\norm{\Sum(T)}^2 - (\|\mu\|^2 \cdot |T|^2 + |T| d)} \leq O (\e n \sqrt{\e n d \log 1 / \e} + (\e n)^2 \log 1 / \e)  \; ,
    \end{equation}
    for all $T$ with $|T| \leq \e n$.
    Since $\|\mu\|^2 \cdot |T|^2 \le \alpha^2 |T|^2 = O(\e n)^2$, this implies that
    \begin{equation}
        \abs{\norm{\Sum(T)}^2 - |T| d} \leq O (\e n \sqrt{\e n d \log 1 / \e} + (\e n)^2 \log 1 / \e) \; ,
    \end{equation}
    as claimed.
    Condition on this event holding for the rest of the proof.

    Finally, we prove the third bullet point.
    For any $i = 1, \ldots, n$, note that
    \begin{align*}
    \iprod{X_i, \Sum (S)} &= \norm{X_i}^2 + \iprod{X_i, \Sum (S \setminus \{i\})} \; .        
    \end{align*}
    The second term on the RHS is the inner product of two independent Gaussians, and hence is subexponential, with variance proxy $(n-1)d$.
    Therefore, with probability $1 - \delta / 3$, we have that $\abs{\iprod{X_i, \Sum (S \setminus \{i\})}} \leq \sqrt{n d} \log (n / \delta)$ for all $i = 1, \ldots, n$.
    Condition on this holding.
    Then, for any fixed $T$ satisfying $|T| \leq \e n$, we have that 
    \begin{align*}
        \left\langle \Sum (T), \Sum(S) \right\rangle &= \sum_{i \in T} \iprod{X_i, \Sum (S)} \\
        &= \sum_{i \in T} \norm{X_i}_2^2 + \sum_{i \in T} \iprod{X_i, \Sum (S \setminus \{i\})} \\
        &= d |T| \pm O \left( \e n \sqrt{d} \log (n / \delta) + \e n \sqrt{n d } \log (n / \delta) \right) \\
        &= d |T| \pm O \left( \e n \sqrt{n d } \log (n / \delta) \right) \; ,
    \end{align*}
    as claimed.
    Combining these bounds immediately yields the desired claim.
    \end{proof}
\noindent

Next, we note some simple consequences of regularity. For any subset $T \subset S$, we let $\bone_T$ denote the indicator vector of $T$, and $\bone = \bone_S$ denote the vector which is all $1$'s. We also define $X_T := \sum_{i \in T} X_i$.

\begin{proposition} \label{prop:inner-product-regularity}
    Suppose $S$ is $(2 \eps, \beta_1, \beta_2)$-regular. Then, for any sets $T, T'$ of size at most $\eps n$,
\[\langle X_T, X_{T'} \rangle = d \cdot |T \cap T'| \pm O(\beta_2).\]
\end{proposition}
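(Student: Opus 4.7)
The plan is to reduce the desired inner product identity to three applications of the norm-bound condition (ii) in the definition of $(2\eps,\beta_1,\beta_2)$-regularity, via a standard polarization trick. Set $A := T \cap T'$, $B := T \setminus T'$, and $C := T' \setminus T$. These three sets are pairwise disjoint, each has size at most $\eps n$, and their pairwise unions $A \cup B = T$, $A \cup C = T'$, and $B \cup C = T \triangle T'$ all have size at most $2\eps n$. Writing $X_T = X_A + X_B$ and $X_{T'} = X_A + X_C$ gives the expansion
\begin{equation*}
\langle X_T, X_{T'}\rangle = \|X_A\|^2 + \langle X_A, X_C\rangle + \langle X_B, X_A\rangle + \langle X_B, X_C\rangle.
\end{equation*}

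The main term $\|X_A\|^2$ is handled directly by regularity condition (ii) applied to $A$, yielding $\|X_A\|^2 = |A| d \pm O(\beta_2) = d \cdot |T \cap T'| \pm O(\beta_2)$. The plan for the three cross terms is to use the polarization identity $\langle X_U, X_V\rangle = \tfrac{1}{2}(\|X_{U \cup V}\|^2 - \|X_U\|^2 - \|X_V\|^2)$, valid whenever $U, V$ are disjoint, so that $X_U + X_V = X_{U \cup V}$. For each of the pairs $(A,C)$, $(A,B)$, $(B,C)$, condition (ii) then provides three norm bounds, each with additive error $O(\beta_2)$, whose combination via polarization yields a cross term of size $\pm O(\beta_2)$. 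The $(2\eps,\cdot,\cdot)$-regularity (rather than $(\eps,\cdot,\cdot)$) is needed precisely to handle $B \cup C$, which can be as large as $2\eps n$.

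Summing the four contributions gives $\langle X_T, X_{T'}\rangle = d\cdot|T \cap T'| \pm O(\beta_2)$, as required. There is no substantive obstacle here: the argument is purely algebraic bookkeeping once the right polarization decomposition is written down, and the bookkeeping is arranged so that regularity condition (ii) is always applied to a set of size at most $2\eps n$. Conditions (i) and (iii) of regularity play no role in this particular proposition.
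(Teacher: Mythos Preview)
Your proof is correct and takes essentially the same approach as the paper: both reduce $\langle X_T, X_{T'}\rangle$ to norm bounds via polarization and then invoke regularity condition~(ii) on sets of size at most $2\eps n$. The only cosmetic difference is that the paper packages the whole thing into the single four-term identity $\langle X_T, X_{T'}\rangle = \tfrac{1}{2}\bigl(\|X_{T\cup T'}\|^2 + \|X_{T\cap T'}\|^2 - \|X_{T\setminus T'}\|^2 - \|X_{T'\setminus T}\|^2\bigr)$, whereas you first split off $\|X_A\|^2$ and then polarize each of the three cross terms separately.
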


\begin{proof}
    It is straightfoward to verify that
\[\langle X_T, X_{T'} \rangle = \frac{1}{2}\left[\|X_{T \cup T'}\|^2 + \|X_{T \cap T'}\|^2 - \|X_{T \backslash T'}\|^2 - \|X_{T' \backslash T}\|^2\right].\]
    Each of $T \cup T', T \cap T', T \backslash T', T' \backslash T$ have size at most $2 \eps n$. So, by regularity (Part ii of Definition~\ref{def:regularity}),
\[\langle X_T, X_{T'} \rangle = \frac{1}{2} \cdot d \cdot \left(|T \cup T'| + |T \cap T'| - |T \backslash T'| - |T' \backslash T|\right) \pm O(\beta_2)= d \cdot |T \cap T'| \pm O(\beta_2). \qedhere\]
\end{proof}

We note the following useful convexity lemma.

\begin{fact}\label{fact:convexity}
    Let $k \le n$ be a nonnegative integer, and $w \in [0, 1]^n$ be an $n$-dimensional vector with $\|w\|_1 \le k$. Then, $w$ is a convex combination of the points $\bone_T$ over $T \subseteq S, |T| \le k$.
\end{fact}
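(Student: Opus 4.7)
The plan is to prove Fact~\ref{fact:convexity} by induction on the number of fractional coordinates of $w$, i.e., coordinates $i$ with $w_i \in (0,1)$. This is essentially the statement that the polytope $P = \{w \in [0,1]^n : \|w\|_1 \le k\}$ has its vertices contained in the set of indicator vectors $\bone_T$ with $|T| \le k$; since $P$ is bounded, every point is a convex combination of its vertices. I want to present a direct inductive argument that avoids invoking general polyhedral theory.

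In the base case, $w$ has no fractional coordinates, so $w \in \{0,1\}^n$ and equals $\bone_T$ for $T = \{i : w_i = 1\}$; the constraint $\|w\|_1 \le k$ forces $|T| \le k$, and then $w$ is trivially such a convex combination. For the inductive step, I will split into two cases depending on whether $w$ has at least two fractional coordinates or exactly one. If there exist distinct $i,j$ with $w_i, w_j \in (0,1)$, I will consider the perturbations $w \pm t(e_i - e_j)$. Since these keep $\|w\|_1$ fixed and the sign of $t$ can be chosen, there exist maximal $t_+, t_- > 0$ such that $w^{+} = w + t_+(e_i - e_j)$ and $w^{-} = w - t_-(e_i - e_j)$ lie in $P$; maximality forces one of the $[0,1]$ box constraints to become tight at each endpoint, strictly reducing the fractional-coordinate count, and $w$ is a convex combination of $w^+$ and $w^-$.

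If $w$ has exactly one fractional coordinate $w_i \in (0,1)$, then because the remaining coordinates of $w$ are integers and $k$ is an integer, I will perturb only along $e_i$. The point $w^{-} = w - w_i e_i$ has all integer coordinates and satisfies $\|w^-\|_1 \le k$, reducing fractional count to zero. For $w^{+} = w + t e_i$, I choose the largest $t > 0$ such that $w_i + t \le 1$ and $\|w\|_1 + t \le k$. At the maximum, either $w_i + t = 1$, or $\|w\|_1 + t = k$, which forces $w_i + t$ to be an integer because all other coordinates are integers and $k$ is integral. Either way, $w^+$ has all integer coordinates. Thus $w$ is again a convex combination of two points with strictly fewer fractional coordinates, completing the induction.

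I do not anticipate a genuine obstacle here; the only thing to check carefully is that in the single-fractional-coordinate case, the integrality of $k$ is what lets the line segment $\{w + t e_i\}$ reach an integer point while remaining in $P$. Once the induction is closed, expanding convex combinations of convex combinations yields the representation of $w$ as a convex combination of vectors $\bone_T$ with $|T| \le k$.
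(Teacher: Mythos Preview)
Your argument is correct. The induction on the number of fractional coordinates is the standard way to identify the vertices of the polytope $\{w\in[0,1]^n:\|w\|_1\le k\}$, and you handle the one-fractional-coordinate case properly by using the integrality of $k$ to guarantee that the forward perturbation along $e_i$ lands on an integer point while remaining feasible.

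The paper takes a genuinely different, constructive route: it sorts the coordinates $0\le w_1\le\cdots\le w_n$, defines ``sliding-window'' index sets $S_i$ of size at most $k$, and builds coefficients $a_i$ recursively by $a_i=w_i-\sum_{j=i-k}^{i-1}a_j$ so that $w=\sum_i a_i\,\bone_{S_i}$; it then checks $a_i\ge 0$ by induction and $\sum_i a_i\le 1$ by counting how many windows each index belongs to. Your approach is cleaner conceptually and immediately reusable (it is exactly the vertex characterization of a $k$-capped simplex), while the paper's approach has the advantage of producing an explicit decomposition with at most $n$ terms and concrete weights, rather than one obtained by recursively unwinding convex combinations.
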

\begin{proof}
    Without loss of generality assume that the coordinates of $w$ are sorted in increasing order, i.e. $0 \leq w_1 \leq \ldots, w_n \leq 1$, and additionally define $w_0 = 0$ and $w_{n + k} = 1$ for all $k \geq 0$.
    For all integers $i \geq 0$, let $S_i = \{i, \ldots, i + k\} \cap [n]$, so that in particular $S_{n + j} = \emptyset$ for all $j > 0$.
    Recursively define weights $a_1, \ldots, a_n$ by $a_1 = w_1$, and $a_i = w_i - \sum_{j = i - k}^{i - 1} a_j$, where we set $a_j = 0$ for $j < 0$.
    Then by construction, we have that $w = \sum_{i = 1}^n a_i \bone_{S_i}.$
    We will show that $a_i \geq 0$ for all $i$, and that $\sum_{i = 1}^n a_i \leq 1$, from which the claim immediately follows.
    To prove the first claim, we proceed by induction. 
    Note that the base case is trivial, and moreover, if the claim is true for some $i < n$, then
    \[
    \sum_{j = i - k + 1}^i a_i = a_i + \sum_{j = i - k + 1}^{i - 1} a_i = w_i - a_{i - k} \leq w_{i} \leq w_{i + 1} \; ,
    \]
    so in particular $a_{i + 1} \geq 0$, which proves the induction.

    To prove the second claim, we simply observe that by nonnegativity, we have that
    \[
        k \geq \sum_{i = 1}^n w_i = \sum_{i = 1}^n \sum_{j = i - k}^i a_i \geq k \sum_{i = 1}^n \; , 
    \]
    where the last inequality follows from the fact that each $a_i$ appears at most $k$ times in the sum.
    Simplifying then immediately yields the claim.
\end{proof}

We use Fact~\ref{fact:convexity} to generalize Proposition~\ref{prop:inner-product-regularity} as follows.

\begin{proposition} \label{prop:weighted-inner-product-regularity}
    Suppose $S$ is $(2 \eps, \beta_1, \beta_2)$-regular, and $a, b \in [0, 1]^n$ are $n$-dimensional vectors such that $\sum a_i, \sum b_i \le \eps \cdot n$. Then, we have
\[\left\langle \sum_{i \in S} a_i X_i, \sum_{i \in S} b_j X_j \right\rangle = d \cdot \left(\sum_{i \in S} a_i b_i\right) \pm O(\beta_2) \hspace{0.5cm} \text{and} \hspace{0.5cm} \left\langle \sum_{i \in S} a_i X_i, X_S \right\rangle = d \cdot \left(\sum_{i \in S} a_i\right) \pm O(\sqrt{n} \cdot \beta_1).\]
\end{proposition}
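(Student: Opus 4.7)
The plan is to prove both identities by reducing them to the corresponding statements for indicator vectors (Proposition~\ref{prop:inner-product-regularity} for the first, and part~(iii) of Definition~\ref{def:regularity} for the second) via the convex-combination decomposition provided by Fact~\ref{fact:convexity}. This is precisely the purpose for which Fact~\ref{fact:convexity} was established, so the heavy lifting has already been done.

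Concretely, since $a,b \in [0,1]^n$ with $\|a\|_1,\|b\|_1 \le \eps n$, Fact~\ref{fact:convexity} (augmented, if necessary, by adding a coefficient on the empty set) lets me write
\[
a \;=\; \sum_{\ell} \lambda_\ell \bone_{T_\ell}, \qquad b \;=\; \sum_{m} \mu_m \bone_{T'_m},
\]
where each $T_\ell, T'_m \subseteq S$ has size at most $\eps n$, and $\lambda_\ell,\mu_m \ge 0$ with $\sum_\ell \lambda_\ell \le 1$ and $\sum_m \mu_m \le 1$. Linearity gives $\sum_i a_i X_i = \sum_\ell \lambda_\ell X_{T_\ell}$ and similarly for $b$.

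For the first identity, I would expand
\[
\Big\langle \sum_i a_i X_i, \sum_j b_j X_j\Big\rangle \;=\; \sum_{\ell, m} \lambda_\ell \mu_m \langle X_{T_\ell}, X_{T'_m}\rangle
\]
and apply Proposition~\ref{prop:inner-product-regularity} to each inner product, which is valid because $|T_\ell|, |T'_m| \le \eps n$ and $S$ is $(2\eps, \beta_1, \beta_2)$-regular. This yields
\[
\sum_{\ell,m} \lambda_\ell \mu_m \bigl( d\,|T_\ell \cap T'_m| \pm O(\beta_2) \bigr) \;=\; d \sum_{\ell,m} \lambda_\ell \mu_m |T_\ell \cap T'_m| \;\pm\; O(\beta_2),
\]
where the error collapses because $\sum_{\ell,m}\lambda_\ell \mu_m = (\sum_\ell \lambda_\ell)(\sum_m \mu_m) \le 1$. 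For the main term, interchanging sums gives
\[
\sum_{\ell,m}\lambda_\ell \mu_m |T_\ell \cap T'_m| \;=\; \sum_i \Bigl(\sum_\ell \lambda_\ell \bone_{T_\ell}(i)\Bigr)\Bigl(\sum_m \mu_m \bone_{T'_m}(i)\Bigr) \;=\; \sum_i a_i b_i,
\]
as desired.

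For the second identity, I would use part~(iii) of Definition~\ref{def:regularity}, which states $|\langle X_T, X_S\rangle - d|T|| = O(\sqrt{n}\beta_1)$ for every $T$ with $|T| \le 2\eps n$. Thus
\[
\Big\langle \sum_i a_i X_i, X_S\Big\rangle \;=\; \sum_\ell \lambda_\ell \langle X_{T_\ell}, X_S\rangle \;=\; d \sum_\ell \lambda_\ell |T_\ell| \;\pm\; O\Big(\sqrt{n}\,\beta_1 \sum_\ell \lambda_\ell\Big),
\]
and since $\sum_\ell \lambda_\ell \le 1$ the error bound is $O(\sqrt{n}\beta_1)$, while $\sum_\ell \lambda_\ell |T_\ell| = \sum_\ell \lambda_\ell \sum_i \bone_{T_\ell}(i) = \sum_i a_i$. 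There is really no obstacle here; the only point to double-check is that the decomposition in Fact~\ref{fact:convexity} only produces sets of size at most $\eps n$ (not $2\eps n$), so the invocation of Proposition~\ref{prop:inner-product-regularity} and of part~(iii) of $(2\eps,\beta_1,\beta_2)$-regularity is safely within the allowed range.
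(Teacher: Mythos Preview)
Your proposal is correct and is essentially identical to the paper's own proof: both decompose $a$ and $b$ as convex combinations of indicator vectors via Fact~\ref{fact:convexity}, then apply Proposition~\ref{prop:inner-product-regularity} and part~(iii) of regularity termwise, using that the convex coefficients sum to at most $1$ to control the aggregated error. The paper merely phrases the convex combination probabilistically (as a distribution $\mathcal{T}_1$ over subsets with $a_i = \Pr[i \in T_1]$), whereas you use explicit sums over $(\lambda_\ell, T_\ell)$; the two are the same argument in different notation.
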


\begin{proof}
    By Fact~\ref{fact:convexity}, we can write $a$ as a convex combination of $\bone_T$ over $T \subset S, |T| \le \eps n$. In other words, there exists a distribution $\mathcal{T}_1$ over $T_1 \subset S, |T_1| \le \eps n$ such that $a_i = \BP_{T_1 \sim \mathcal{T}_1} (i \in T_1)$. Likewise, there exists an (independent) distribution $\mathcal{T}_2$ over $T_2 \subset S, |T_2| \le \eps n$ such that $b_j = \BP_{T_2 \sim \mathcal{T}_2} (j \in T_2)$. Now,
\[\left\langle \sum a_i X_i, \sum b_j X_j \right\rangle = \BE_{T_1 \sim \mathcal{T}_1, T_2 \sim \mathcal{T}_2} \langle X_{T_1}, X_{T_2}\rangle.\]
    By Proposition~\ref{prop:inner-product-regularity}, 
\[\BE_{T_1 \sim \mathcal{T}_1, T_2 \sim \mathcal{T}_2} \langle X_{T_1}, X_{T_2}\rangle = \BE_{T_1 \sim \mathcal{T}_1, T_2 \sim \mathcal{T}_2} \left[d \cdot |T_1 \cap T_2|\right] \pm O(\beta_2).\]
    Next, the expectation of $|T_1 \cap T_2|,$ using linearity of expectation and independence of $\mathcal{T}_1, \mathcal{T}_2$, equals
\[\sum_{i \in S} \BP_{T_1 \sim \mathcal{T}_1, T_2 \sim \mathcal{T}_2} (i \in T_1 \cap T_2) = \sum_{i \in S} \BP_{T_1 \sim \mathcal{T}_1} (i \in T_1) \cdot \BP_{T_2 \sim \mathcal{T}_2} (i \in T_2) = \sum_{i \in S} a_i b_i.\]
    Overall, this implies that
\[\left\langle \sum a_i X_i, \sum b_j X_j \right\rangle = d \cdot \left(\sum_{i \in S} a_i b_i\right) \pm O(\beta_2).\]

    Next, by regularity (part iii of Definition~\ref{def:regularity}), we have that
\[\left\langle \sum a_i X_i, X_S \right\rangle = \BE_{T_1 \sim \mathcal{T}_1} \langle x_{T_1}, x_S \rangle = \BE_{T_1 \sim \mathcal{T}_1} \left(d \cdot |T_1|\right) \pm O(\sqrt{n} \cdot \beta_1) = d \cdot \left(\sum_{i \in S} a_i\right) \pm O(\sqrt{n} \cdot \beta_1). \qedhere\]
\end{proof}

\subsection{Filtering preliminaries}
Our algorithm for doing so will be based on the (soft) filter framework developed for robust estimation in other contexts.
Here we establish the notation and preliminaries we will require to design and analyze our algorithm.
We note that it will be convenient for us to use slightly nonstandard versions of the notation compared to the literature.

Our algorithm will assign weights to each point, that we will monotonically decrease over time.
For any $n$, let $\Gamma_n$ denote the set of valid weights:
\[
\Gamma_n = \{w \in \R^n: w_i \in [0, 1] \mbox{ for all $i = 1, \ldots, n$}\} \; .
\]
Recall that for any set $T \subseteq S$, $\bone_T \in \Gamma_n$ denotes the indicator vector for $T$, and $\bone = \bone_S$.

Let $K$ be some value to be specified later.
Given a set of points $S = \{ X_1, \ldots, X_n \}$, we associate it weight vectors $w^{(t)} \in \Gamma_n$, for $i = 1, \ldots, n$ and $t = 1,  \ldots, K$ where initially we set $w^{(1)} = \bone$.\footnote{As is common in this literature, for simplicity of notation we will conflate $S$ with the set of indices in $S$.}
For any such weight vector $w$, we let
\[
\Sum(w, S) = \sum_{i \in S} \sqrt{w_i} X_i \; , \mbox{and} \; M (w, S) = \sum_{i \in S} w_i X_i X_i^\top \; .
\]
When the context is clear, we will drop the $S$ from the notation for simplicity, i.e. we will let $\Sum (w) = \Sum (w, S)$.
For any set $T$, and for any set of weights $w$ on $S$, we let $w_T$ denote the set of weights restricted to the indices in $T \cap S$.
We also let $\Gram (w, S) = \Gram (w)$  be the $n \times n$ matrix given by
\[
\Gram (w)_{ij} = \sqrt{w_i w_j} \iprod{X_i, X_j} \; .
\]
Note that by design the nontrivial eigenvalues of $\Gram (w)$ and $M(w)$ are identical.

Recall that when the samples $S$ are an $\e$-corruption of $G$, this means that there are sets $B, R$ so that $|B| = |R| = \e n$ so that $R \subset G$ and so that $S = (G \setminus R) \cup B$.
For the remainder of this section, $S, G, B, R$ will always refer to these sets.
We define the following important set:
\[
\fS_n = \{ w \in \Gamma_n : \norm{\bone_{G \setminus R}- w_{G \setminus R}}_1 \leq 5 \norm{\bone_B - w_B}_1 \} \; ,
\]
that is, $\fS_n$ is the set of weights where we have removed at most five times as much weight from the good samples as we have removed from the the bad samples.

Finally, we will also seek to enforce regularity conditions on weighted subsets of points.
We will require the following natural generalization of Definition~\ref{def:regularity}:
\begin{definition}
    Let $S = \{X_1, \ldots, X_n\}$ be a set of points in $\R^d$, and let $w \in \Gamma_n$.
We say that $w$ is $(\e, \beta_1, \beta_2)$-regular if for all sets $T \subset S$ with $|T| \leq \e n$, we have:
\begin{enumerate}[(i)]
    \item $\sum_{i \in T} \norm{X_i}^2 = |T| d \pm O(\beta_1)$,
    \item $\norm{\Sum(w, T)}^2 = \norm{w_T}_1 d \pm O(\beta_2)$, and
    \item $\abs{\iprod{\Sum(w, T), \Sum (w, S)}} = \norm{w_T}_1 d \pm O(\sqrt{n} \beta_1)$.
\end{enumerate}
\end{definition}

The key fact we will use is the following:
\begin{lemma}
\label{lem:regular-weights-imply-tester}
Let $\alpha = O(1)$, let $\e, \delta \in [0, 1)$, and let $G$ be $(4 \e, \beta_1, \beta_2)$-regular, where
\[
\beta_1 \geq \e n \sqrt{d} \log (n / \delta), \beta_2 \geq \e n \sqrt{\e n d \log 1 / \e} + (\e n)^2 \log 1 / \e \; . 
\]
Further, assume that
    \[
    \norm{\Sum(G)}^2 = d n + \norm{\mu}^2 n^2 \pm O( \alpha n^{3/2} \sqrt{\log 1 / \delta} + n\sqrt{d} \log (1 / \delta)) \; .
    \]
Let $S$ be an $\e$-contamination of $G$, and let $w \in \Gamma_n$ be a set of weights on $w$ that satisfy $\norm{w}_1 \geq (1 - \e) n$, and $w$ is $(\e, \beta_1, \beta_2)$-regular.
Then, we have that
\[
\norm{\Sum(w, S)}^2 = d \norm{w}_1 + \norm{\mu}^2 n^2 \pm O(\alpha n^{3/2} \sqrt{\log 1 / \delta} + n\sqrt{d} \log (1 / \delta) + \sqrt{n} \beta_1 + \beta_2) \; . 
\]
In particular, if $n \geq C \cdot \tfrac{\sqrt{d} \cdot \log 1 / \delta}{\alpha^2}$ and $\sqrt{n} \beta_1, \beta_2 < \frac{1}{C} \cdot \alpha^2 n^2$ for $C$ sufficiently large, then:
\begin{itemize}
    \item if $\mu = 0$, then $\abs{\norm{\Sum (w, S)}^2 - d \norm{w}_1} \leq 0.4 \alpha^2 n ^2$ \; , and
    \item if $\norm{\mu} = \alpha$, then $\abs{\norm{\Sum (w, S)}^2 - d \norm{w}_1} \geq 0.7 \alpha^2 n^2$ \; .
\end{itemize}
In other words, the norm of the sum of the set of points distinguishes between the null and alternative hypotheses.
\end{lemma}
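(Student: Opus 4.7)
The plan is to reduce $\norm{\Sum(w, S)}^2$ to $\norm{\Sum(G)}^2$ through two successive algebraic decompositions, invoking the hypothesis on $\norm{\Sum(G)}^2$ only at the end. First, decompose $\Sum(w, S) = A + C$, where $A = \sum_{i \in G \setminus R} \sqrt{w_i} X_i$ is the contribution from the surviving good points and $C = \Sum(w, B) = \sum_{i \in B} \sqrt{w_i} X_i$ the contribution from the bad points. Using the identity $\|A + C\|^2 = \|A\|^2 + 2\iprod{\Sum(w, S), C} - \|C\|^2$ together with conditions (ii) and (iii) of $w$-regularity applied to $T = B$ (which has size $\e n$), both $\|C\|^2$ and $\iprod{\Sum(w, S), C}$ equal $\|w_B\|_1 d$ up to errors $O(\beta_2)$ and $O(\sqrt{n}\beta_1)$ respectively, yielding $\|\Sum(w, S)\|^2 = \|A\|^2 + \|w_B\|_1 d \pm O(\sqrt{n}\beta_1 + \beta_2)$.

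Next, relate $A$ to $\Sum(G)$ via $\Sum(G) = A + D + E_0$, where $D = \sum_{i \in G \setminus R}(1 - \sqrt{w_i}) X_i$ and $E_0 = \Sum(R)$. Since $\sqrt{w_i} \ge w_i$ for $w_i \in [0,1]$, the assumption $\|w\|_1 \ge (1-\e)n$ forces $\sum_{i \in G \setminus R}(1 - \sqrt{w_i}) \le \e n$, so both weight vectors $\bone_R$ and $(1 - \sqrt{w_i})_{i \in G \setminus R}$ lie in $[0,1]^G$ with $\ell^1$-norm at most $\e n$. The $(4\e, \beta_1, \beta_2)$-regularity of $G$ and Proposition~\ref{prop:weighted-inner-product-regularity} then control each term in $\|A\|^2 = \|\Sum(G)\|^2 - 2\iprod{\Sum(G), D + E_0} + \|D + E_0\|^2$ up to total error $O(\sqrt{n}\beta_1 + \beta_2)$, with the cross term $\iprod{D, E_0}$ vanishing to leading order because $R$ and $G \setminus R$ are disjoint in $G$.

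The key algebraic step is the identity
\[
|G \setminus R| - 2 \sum_{i \in G \setminus R} (1 - \sqrt{w_i}) + \sum_{i \in G \setminus R} (1 - \sqrt{w_i})^2 \;=\; \sum_{i \in G \setminus R} w_i \;=\; \|w_{G \setminus R}\|_1,
\]
which, once combined with the $-2\e n d + \e n d = -\e n d$ contribution from the $E_0$ terms and the $dn$ from $\|\Sum(G)\|^2$, collapses the coefficient of $d$ in $\|A\|^2$ to exactly $\|w_{G \setminus R}\|_1$. Plugging in the hypothesis on $\|\Sum(G)\|^2$ for the $\|\mu\|^2 n^2$ term, and combining with the first step and $\|w\|_1 = \|w_{G \setminus R}\|_1 + \|w_B\|_1$, yields the main claim.

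For the ``in particular'' part, each error term is driven below $\alpha^2 n^2/C$ by the hypotheses: $n\sqrt{d}\log(1/\delta) < \alpha^2 n^2/C$ follows directly from $n \ge C\sqrt{d}\log(1/\delta)/\alpha^2$, the same bound gives $\alpha\sqrt{n} \gg \sqrt{\log(1/\delta)}$ and hence $\alpha n^{3/2}\sqrt{\log(1/\delta)} \ll \alpha^2 n^2$, while $\sqrt{n}\beta_1, \beta_2 < \alpha^2 n^2/C$ is directly assumed. Taking $C$ large enough makes the total error at most $0.1\alpha^2 n^2$, producing the $0.4/0.7$ thresholds for $\mu = 0$ and $\|\mu\| = \alpha$. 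The main technical obstacle is the algebraic bookkeeping in the $\|A\|^2$ expansion: while each term is individually bounded by the regularity conditions, recovering exactly $d\|w_{G \setminus R}\|_1$ requires matching the $-2\iprod{\Sum(G), D+E_0}$ contribution against the $\|D+E_0\|^2$ contribution using the quadratic identity above, so the approach crucially exploits that $w$ enters the weighted sum through square roots rather than directly.
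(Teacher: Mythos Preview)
Your proposal is correct and follows essentially the same approach as the paper's proof: both argue via the two-stage decomposition $\Sum(w,S) = \Sum(w, G\setminus R) + \Sum(w,B)$ and $\Sum(G) = \Sum(w, G\setminus R) + (\text{correction})$, invoking $w$-regularity for the first and $G$-regularity (via Proposition~\ref{prop:weighted-inner-product-regularity}) for the second, with the quadratic identity $\sum_{i}(1-(1-\sqrt{w_i}))^2 = \|w_{G\setminus R}\|_1$ collapsing the coefficient of $d$. The only cosmetic differences are that you perform the two stages in the opposite order and keep the correction split as $D + E_0$, whereas the paper merges these into a single weight vector $a$ (with $a_i = 1$ on $R$ and $a_i = 1-\sqrt{w_i}$ on $G\setminus R$), which makes the algebra marginally cleaner but is otherwise the same argument.
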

\begin{proof}
    First, we bound $\|\Sum(w, G \setminus R)\|^2$. Let $a$ be the vector that equals $1$ on the indices in $R$ and $1-\sqrt{w_i}$ on other indices, so that $\sum_{i \in G} a_i G_i + \Sum(w, G \setminus R) = \Sum(G)$. Then,
\[\|\Sum(w, G \setminus R)\|^2 = \|\Sum(G)\|^2 - 2 \bigg\langle \Sum(G), \sum_{i \in G} a_i G_i \bigg\rangle + \bigg\|\sum_{i \in G} a_i G_i\bigg\|^2.\]
    Let $\beta_3 := \alpha n^{3/2} \sqrt{\log 1 / \delta} + n\sqrt{d} \log (1 / \delta)$. Because $\|a\|_1 \le 2 \eps$ and $G$ is $(4 \eps, \beta_1, \beta_2)$ regular, Proposition~\ref{prop:weighted-inner-product-regularity} and our assumption on $\|\Sum(G)\|^2$ imply that
 \begin{align*}
        \norm{\Sum(w, G \setminus R)}^2 &= d n + \|\mu\|^2 n^2 \pm O(\beta_3) - 2 d \sum a_i \pm O(\sqrt{n} \beta_1) + d \sum a_i^2 \pm O(\beta_2)\\
        &= d \cdot \left(\sum (1-a_i)^2\right) + \|\mu\|^2 n^2 \pm O(\sqrt{n} \beta_1 + \beta_2 + \beta_3) \\
        &= d \norm{w_{G \setminus R}}_1 + \norm{\mu}^2 n^2 \pm O( \alpha n^{3/2} \sqrt{\log 1 / \delta} + n\sqrt{d} \log (1 / \delta) + \sqrt{n} \beta_1 + \beta_2) \; ,
    \end{align*}
    since $(1-a_i)^2 = w_i$ for $i \in G \setminus R$ and $0$ otherwise.
    
    Then, the regularity of $w$ implies that 
    \begin{align*}
        \norm{\Sum (w, S)}^2 &= \norm{\Sum (w, G \setminus R)}^2 + 2 \iprod{\Sum (w, G \setminus R), \Sum (w, B)} + \norm{\Sum (w, B)}^2 \\
        &= \norm{\Sum (w, G \setminus R)}^2 + 2 \iprod{\Sum (w, S), \Sum (w, B)} - \norm{\Sum(w, B)}^2 \\
        &= \norm{w}_1 d + \norm{\mu}^2 n^2 \pm O(\alpha n^{3/2} \sqrt{\log 1 / \delta} + n\sqrt{d} \log (1 / \delta) + \sqrt{n} \beta_1 + \beta_2) \; .
    \end{align*}
The second half of the claim then follows from straightforward calculations.
\end{proof}

\subsection{Additional preliminaries}
We first prove bounds on the eigenvalues of the Gram matrix of random Gaussian samples. To do so, we require properties about \emph{Wishart matrices}, which we now define.

\begin{definition}
    A Wishart matrix $W = \mathrm{W}_d(n)$ has distribution $W = H^\top H$, where $H \in \R^{n \times d}$ has every entry drawn as an i.i.d. standard Gaussian $\cN(0, 1)$. Note that $W \in \R^{d \times d}$, and $W$ is positive semidefinite.
\end{definition}

We will need the following concentration bound for the eigenvalues of a Wishart matrix.

\begin{lemma}(Follows from \cite[Theorem II.13]{wishartconcentration})
    If $W \sim \mathrm{W}_d(n),$ then with probability $1-\delta$,
\[\left\|S - n \cdot I\right\| \le O\left(\sqrt{nd} + \sqrt{n \log (1/\delta)} + d + \log (1/\delta)\right).\]
\end{lemma}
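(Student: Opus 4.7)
The plan is to derive this operator-norm bound from the standard singular-value concentration inequality for Gaussian matrices, then translate to eigenvalues of the Wishart matrix by squaring, and finally collect terms into the stated form.

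First I would invoke the well-known non-asymptotic bound on singular values of tall Gaussian matrices: if $H \in \R^{n \times d}$ has i.i.d.\ standard Gaussian entries, then for every $t \geq 0$,
\[
\Pr\!\left[\sqrt{n} - \sqrt{d} - t \leq \sigma_{\min}(H) \leq \sigma_{\max}(H) \leq \sqrt{n} + \sqrt{d} + t\right] \geq 1 - 2 e^{-t^2/2}.
\]
(This is exactly the content of the cited Theorem~II.13, typically attributed to Davidson--Szarek.) Setting $t = \sqrt{2 \log(2/\delta)} = O(\sqrt{\log(1/\delta)})$, with probability at least $1-\delta$ every singular value of $H$ lies in the interval $[\sqrt{n} - \sqrt{d} - t,\; \sqrt{n} + \sqrt{d} + t]$.

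Next I would square to pass to $W = H^\top H$. The eigenvalues of $W$ are precisely the squared singular values of $H$, so on the above high-probability event, every eigenvalue $\lambda$ of $W$ satisfies
\[
\bigl(\sqrt{n} - \sqrt{d} - t\bigr)^2 \leq \lambda \leq \bigl(\sqrt{n} + \sqrt{d} + t\bigr)^2.
\]
Thus $\|W - nI\|$ is bounded by
\[
\max\!\Bigl(\bigl(\sqrt{n} + \sqrt{d} + t\bigr)^2 - n,\; n - \bigl(\sqrt{n} - \sqrt{d} - t\bigr)^2\Bigr) \leq 2\sqrt{n}\bigl(\sqrt{d} + t\bigr) + \bigl(\sqrt{d} + t\bigr)^2.
\]

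Finally I would expand and simplify: the right-hand side equals
\[
2\sqrt{nd} + 2\sqrt{n}\,t + d + 2\sqrt{d}\,t + t^2.
\]
Substituting $t = O(\sqrt{\log(1/\delta)})$ and absorbing the cross term $\sqrt{d}\,t$ into $d + t^2$ via AM-GM (since $2\sqrt{d}\,t \leq d + t^2$) yields the desired bound $O\!\left(\sqrt{nd} + \sqrt{n\log(1/\delta)} + d + \log(1/\delta)\right)$. No step here is a real obstacle: the entire argument is bookkeeping on top of the classical Davidson--Szarek singular-value bound, and the only mild care needed is making sure the simplification of the squared deviation absorbs the mixed term cleanly into the four terms appearing in the statement.
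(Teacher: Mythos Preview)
Your proposal is correct and is exactly the standard derivation the paper has in mind: the paper gives no proof at all for this lemma, merely citing the Davidson--Szarek singular-value bound (Theorem~II.13 of the reference), and your argument is precisely the routine squaring-and-bookkeeping that makes ``follows from'' explicit. One tiny caveat worth noting (though it does not affect the final bound): when $\sqrt{n} - \sqrt{d} - t < 0$ the lower singular-value bound is vacuous and some eigenvalues of $W$ may be zero, but in that regime $n \le (\sqrt{d}+t)^2 = O(d + \log(1/\delta))$, so the deviation $n$ is already absorbed by the stated bound.
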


  We now provide eigenvalue bounds for samples drawn from $\cN(\mu, I)$.
\begin{fact}
\label{fact:gram-conc}
  Let $\mu \in \R^d$ have $\|\mu\| \le \alpha$ and let $\delta > 0$.
  If $n \leq d$ and $X_1, \dots, X_n \overset{i.i.d.}{\sim} \cN(\mu, I)$, then with probability at least $1-\delta$, we have 
\[\|\Gram(\{X_1, \dots, X_n\}) - d \cdot I\| \leq O(\max(\sqrt{n d}, \sqrt{d  \log(1/\delta)}, \log(1/\delta), \alpha^2 n)).\]
  If $n \geq d$ and $X_1, \dots, X_n \overset{i.i.d.}{\sim} \cN(\mu, I)$, then with probability at least $1-\delta$, we have 
\[\bigg\|\sum_{i \in [n]} X_i X_i^\top - n \cdot I\bigg\| \leq O(\max(\sqrt{n d}, \sqrt{n  \log(1/\delta)}, \log(1/\delta), \alpha^2 n)).\]    
\end{fact}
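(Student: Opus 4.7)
The plan is to reduce both cases to standard non-asymptotic eigenvalue concentration for Wishart matrices plus elementary bounds on rank-one perturbations. Write each sample as $X_i = \mu + Z_i$ with $Z_i \sim \cN(0,I)$ i.i.d., and collect the $Z_i$'s as rows of $Z \in \R^{n \times d}$ and the $X_i$'s as rows of $H = \bone_n \mu^\top + Z$. Then $\Gram(\{X_i\}) = HH^\top$ and $\sum_i X_i X_i^\top = H^\top H$, so expanding gives
\[
HH^\top = ZZ^\top + \bone_n(Z\mu)^\top + (Z\mu)\bone_n^\top + \|\mu\|^2\bone_n\bone_n^\top,
\]
\[
H^\top H = Z^\top Z + \mu(Z^\top\bone_n)^\top + (Z^\top\bone_n)\mu^\top + n\mu\mu^\top .
\]

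For the first case ($n \le d$), I would bound $\|ZZ^\top - dI_n\|_{\mathrm{op}}$ using the cited non-asymptotic bound on Wishart eigenvalues (Theorem II.13 of \cite{wishartconcentration}), which yields $O(\sqrt{nd} + \sqrt{d\log(1/\delta)} + \log(1/\delta))$ after absorbing a lower-order $n$ term into $\sqrt{nd}$ since $n \le d$. The rank-one mean square $\|\mu\|^2\bone_n\bone_n^\top$ has operator norm at most $\alpha^2 n$. For the cross terms, $Z\mu \in \R^n$ has i.i.d.\ $\cN(0,\|\mu\|^2)$ coordinates, so $\|Z\mu\|_2 \le \alpha(\sqrt{n} + O(\sqrt{\log(1/\delta)}))$ with probability $1-\delta$; since the cross term is rank one, its operator norm is at most $\|\bone_n\|_2\|Z\mu\|_2 \le \alpha n + O(\alpha\sqrt{n\log(1/\delta)})$. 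Because $\alpha \le O(1)$ and $n \le d$, $\alpha n \le \sqrt{nd}$, and $\alpha\sqrt{n\log(1/\delta)} \le \sqrt{d\log(1/\delta)}$, so these terms are absorbed into the stated bound. Summing the three pieces via the triangle inequality gives the claim.

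For the second case ($n \ge d$), the argument is symmetric but uses the cited Wishart bound directly in its stated form: $\|Z^\top Z - nI_d\|_{\mathrm{op}} \le O(\sqrt{nd} + \sqrt{n\log(1/\delta)} + d + \log(1/\delta))$, and since $d \le n$ we have $d \le \sqrt{nd}$, so the $d$ term is absorbed. The mean piece $n\mu\mu^\top$ contributes at most $\alpha^2 n$ in operator norm. For the cross term, $Z^\top \bone_n \in \R^d$ has i.i.d.\ $\cN(0,n)$ coordinates, so $\|Z^\top \bone_n\|_2 \le \sqrt{nd} + O(\sqrt{n\log(1/\delta)})$, giving a rank-one operator-norm bound of at most $\alpha\sqrt{nd} + O(\alpha\sqrt{n\log(1/\delta)})$. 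Since $\alpha = O(1)$, these are absorbed into $\sqrt{nd}$ and $\sqrt{n\log(1/\delta)}$. Triangle inequality again yields the bound.

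The only nontrivial ingredient is the Wishart concentration inequality, which we invoke as a black box; everything else reduces to Gaussian tail bounds for $\|Z\mu\|_2$ and $\|Z^\top \bone_n\|_2$ and trivial rank-one estimates, so I do not expect any genuine obstacle. The main care needed is merely bookkeeping to confirm that the cross-term contributions fall under the four terms appearing in the max, which follows from the standing assumption $\alpha = O(1)$ together with the size regime ($n \le d$ or $n \ge d$) used in each case.
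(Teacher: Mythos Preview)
Your proposal is correct and follows essentially the same approach as the paper: decompose $X_i = \mu + Z_i$, expand the relevant matrix into a Wishart piece plus rank-one cross terms plus the rank-one mean term, invoke the cited Wishart concentration for the first piece, and control the rank-one pieces by Gaussian norm bounds and the triangle inequality. The only cosmetic difference is that for the $n\le d$ case you decompose the $n\times n$ Gram matrix $HH^\top$ directly, whereas the paper works throughout with the $d\times d$ matrix $H^\top H$ and then transfers to the Gram matrix via the shared nonzero spectrum; this makes your cross term $\bone_n (Z\mu)^\top$ rather than $(\sum_i Y_i)\mu^\top$, but the resulting bounds are the same.
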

\begin{proof}
    First, note that when $n \le d$, the nonzero (top $n$) eigenvalues of $\sum X_i X_i^\top$ match the eigenvalues of $\Gram(\{X_1, \dots, X_n\})$. So, in the $n \le d$ case we can focus on the top $n$ eigenvalues of $\sum_{i \in [n]} X_i X_i^\top$. This will allow us to consolidate calculations for both the $n \le d$ and $n \ge d$ case.

  Let $Y_i := X_i-\mu$. We can write
  \[
  \sum_{i \leq n} X_i X_i^\top = \sum_{i \leq n} Y_i Y_i^\top + \sum_{i \leq n} Y_i \mu^\top + \sum_{i \leq n} \mu Y_i^\top + n \cdot \mu \mu^\top\mper
  \]
  Note that $\sum_{i \leq n} Y_i Y_i^\top \sim \mathrm{W}_d(n)$, so with probability at least $1-\delta$,
\[\bigg\|\sum_{i \in [n]} Y_i Y_i^\top - n \cdot I\bigg\| \le O\left(\sqrt{nd} + \sqrt{n \log (1/\delta)} + d + \log(1/\delta)\right).\]
    In the $n \le d$ case, we note that $\sum Y_i Y_i^\top$ has the same nonzero eigenvalues as $\Gram(\{Y_1, \dots, Y_n\}) \sim W_n(d)$. So, with probability at least $1-\delta$, the top $n$ eigenvalues of $\sum Y_i Y_i^\top$ are in the range
\[d \pm O\left(\sqrt{nd} + \sqrt{d \log(1/\delta)} + n + \log(1/\delta)\right).\]
    
  Next, $\sum_{i \leq n} Y_i \mu^\top$ is a rank-1 matrix with operator norm $\|\sum_{i \leq n} Y_i \| \cdot \|\mu\| \le \alpha \cdot \|\sum_{i \leq n} Y_i\|$. Since $\sum_{i \leq n} Y_i \sim \cN(0, n \cdot I)$, with probability at least $1-\delta$ it has norm at most $O(\sqrt{nd + n\log(1/\delta)})$, which means $\|\sum_{i \leq n} Y_i \mu^\top\| \le O(\alpha \sqrt{nd + n\log(1/\delta)}) \le O(\sqrt{nd} + \sqrt{n \log(1/\delta)})$. The same bound holds for $\sum_{i \leq n} \mu Y_i^\top$. 
  Finally, $\|n \mu \mu^\top \| = n \cdot \|\mu\| \cdot \|\mu^\top\| \le n \alpha^2$.
  
  In the $n \ge d$ case, adding the bounds together completes the proof. In the $n \le d$ case, adding the bounds together tells us the top $n$ eigenvalues of $\sum X_i X_i^\top$ are in the desired range, which completes the proof.
\end{proof}

The next fact we will need is a direct corollary of Lemma 4.1 in~\cite{dong2019quantum}.
\begin{fact}
\label{fact:small-subset-deviations}
    Let $\alpha = O(1)$, let $\mu \in \R^d$ have $\norm{\mu} \le \alpha$, and let $\e, \delta > 0$.
    Suppose that $n \geq \Omega (1 / \e)$.
    Then, there is some universal constant $c > 0$ so that with probability $1 - \delta$, we have that for all $v$ with $\norm{v} = 1$, and all $w$ supported on $G$ with $\norm{w}_1 \leq 10 \e n$, it holds that
    \[
    \sum_{i \in G} w_i \iprod{v, X_i}^2 \leq c \cdot (\e n \log 1 / \e + d + \log 1 / \delta) \; .
    \]
\end{fact}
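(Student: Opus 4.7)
}
The plan is to reduce to a supremum of quadratic forms and then apply Gaussian concentration plus a union bound. First, by Fact~\ref{fact:convexity}, any $w$ supported on $G$ with $\|w\|_1 \le 10\e n$ is a convex combination of indicator vectors $\bone_T$ with $T \subseteq G$ and $|T| \le 10\e n$. Since $\langle v, X_i\rangle^2 \ge 0$, the quantity $\sum_{i \in G} w_i \langle v, X_i\rangle^2$ is convex in $w$, so it suffices to establish the bound uniformly over such indicator vectors $\bone_T$ and unit $v$.

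Next I would split $X_i = \mu + Y_i$ with $Y_i \sim \cN(0,I)$ i.i.d., so that $\langle v, X_i\rangle^2 \le 2 \langle v,\mu\rangle^2 + 2 \langle v, Y_i\rangle^2$. Summing over $i \in T$ gives
\[
\sum_{i \in T} \langle v, X_i\rangle^2 \le 2 \alpha^2 |T| + 2 \sum_{i \in T} \langle v, Y_i\rangle^2,
\]
and since $\alpha = O(1)$ and $|T| \le 10 \e n$, the deterministic term is absorbed into the claimed $O(\e n \log 1/\e)$ bound. It therefore suffices to show that with probability $1-\delta$, for every unit $v$ and every $T \subseteq G$ with $|T| \le 10\e n$,
\[
\sum_{i \in T} \langle v, Y_i\rangle^2 \le c'\bigl(\e n \log(1/\e) + d + \log(1/\delta)\bigr)
\]
for some absolute constant $c'$.

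For fixed $T$ and fixed $v$, $\sum_{i \in T} \langle v, Y_i\rangle^2 \sim \chi^2_{|T|}$, and standard Gaussian/chi-square concentration (Proposition~\ref{prop:chi-square-concentration}) gives that this sum exceeds $|T| + O(\sqrt{|T|\,t} + t)$ with probability at most $2e^{-t}$. I would then take a union bound over two covers: a standard $\tfrac12$-net over unit vectors $v \in \R^d$ of size $e^{O(d)}$ (inflating the supremum by a constant factor of at most $4$, by the standard net argument for operator norms), and the set of subsets $T \subseteq G$ with $|T| \le 10\e n$, of cardinality at most $\binom{n}{10\e n} \le e^{O(\e n \log 1/\e)}$. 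Setting $t = C(d + \e n \log 1/\e + \log 1/\delta)$ for a sufficiently large constant $C$ drives the union bound below $\delta$. Since $|T| \le 10\e n$, the resulting bound becomes $|T| + O(t) = O(\e n \log 1/\e + d + \log 1/\delta)$, as desired.

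The only mildly delicate step is the net argument for $v$: we need the supremum of the positive semidefinite quadratic form $v^\top (\sum_{i \in T} Y_i Y_i^\top) v$ over the unit sphere to be controlled by its supremum over a $\tfrac12$-net, which is standard. Everything else is routine Gaussian concentration and entropy counting. Alternatively, since the statement asserts this is a direct corollary of \cite[Lemma~4.1]{dong2019quantum}, one can simply invoke that lemma with the appropriate rescaling.
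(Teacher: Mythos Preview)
Your argument is correct. The reduction via Fact~\ref{fact:convexity} to indicator vectors $\bone_T$ with $|T|\le 10\e n$ is valid because $\sum_i w_i\langle v,X_i\rangle^2$ is linear in $w$, so its maximum over the polytope is attained at a vertex; the mean shift contributes at most $2\alpha^2|T|=O(\e n)$, which is dominated by $\e n\log(1/\e)$ for $\e$ bounded away from~$1$; and the $\chi^2$-tail combined with a $\tfrac12$-net of size $e^{O(d)}$ and the subset count $\binom{n}{\le 10\e n}\le e^{O(\e n\log 1/\e)}$ yields the stated bound after choosing $t=C(d+\e n\log 1/\e+\log 1/\delta)$ for large enough $C$. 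The PSD net argument you describe (inflation factor $4$ for a $\tfrac12$-net on $v\mapsto v^\top(\sum_{i\in T}Y_iY_i^\top)v$) is standard.

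As for comparison with the paper: the paper does not give a proof at all---it simply asserts that the fact is a direct corollary of \cite[Lemma~4.1]{dong2019quantum}. Your write-up is therefore strictly more informative, supplying a self-contained proof in place of a bare citation. The argument you outline (fix $T$, net over $v$, $\chi^2$ concentration, union bound over subsets) is essentially the standard route and almost certainly what the cited lemma does internally, so there is no substantive methodological difference to discuss.
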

Finally, we will require the following downweighting scheme:
\begin{fact}
    \label{fact:1d-filter}
    Let $w \in \fS_n$, and let $\tau_1, \ldots, \tau_n$ be a set of nonnegative scores satisfying $\sum_{i \in G \setminus R} w_i \tau_i < 5 \sum_{i \in B} w_i \tau_i$.
    Let $w' \in \Gamma_n$ be defined by 
    \[
    w'_i = \left(1 - \frac{\tau_i}{\max_{i \in S} \tau_i} \right) w_i \; .
    \]
    Then $\supp{w'} \subset \supp{w}$, and moreover $w' \in \fS_n$.
\end{fact}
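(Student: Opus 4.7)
The first conclusion, $\mathrm{supp}(w') \subset \mathrm{supp}(w)$, is immediate: for every $i$ we have $\tau_i / \max_j \tau_j \in [0,1]$, so $w'_i = (1 - \tau_i / \max_j \tau_j)\, w_i$ is a nonnegative multiple of $w_i$, and in particular $w'_i = 0$ whenever $w_i = 0$. This also confirms $w' \in \Gamma_n$, since $w'_i \in [0, w_i] \subset [0,1]$.

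For the second conclusion, the plan is a short bookkeeping argument comparing, separately on $G \setminus R$ and on $B$, how much mass the update removes. The key identity is that for any subset $T \subseteq S$,
\[
\|\bone_T - w'_T\|_1 - \|\bone_T - w_T\|_1 \;=\; \sum_{i \in T} (w_i - w'_i) \;=\; \frac{1}{\max_{j \in S} \tau_j} \sum_{i \in T} \tau_i w_i \, ,
\]
using that $w_i \geq w'_i \geq 0$ and the explicit formula for $w'$. Applying this with $T = G \setminus R$ and with $T = B$, and then invoking the hypothesis $\sum_{i \in G \setminus R} w_i \tau_i < 5 \sum_{i \in B} w_i \tau_i$, yields
\[
\|\bone_{G \setminus R} - w'_{G \setminus R}\|_1 - \|\bone_{G \setminus R} - w_{G \setminus R}\|_1 \;<\; 5\bigl(\|\bone_B - w'_B\|_1 - \|\bone_B - w_B\|_1\bigr) .
\]

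To close the argument, add this inequality to the hypothesis $w \in \fS_n$, which says $\|\bone_{G \setminus R} - w_{G \setminus R}\|_1 \leq 5 \|\bone_B - w_B\|_1$. The $w$-terms cancel on both sides, leaving exactly $\|\bone_{G \setminus R} - w'_{G \setminus R}\|_1 \leq 5 \|\bone_B - w'_B\|_1$, i.e.\ $w' \in \fS_n$. There is no real obstacle here; the only point to notice is that the update is applied uniformly using the same normalization $\max_{j \in S} \tau_j$ to all indices, so the denominator cancels cleanly when one compares the change on $G \setminus R$ to the change on $B$, and the factor $5$ is exactly what is needed to preserve the $\fS_n$-invariant.
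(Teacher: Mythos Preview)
Your proof is correct. The paper states this as a Fact without proof (it is a standard filtering invariant), and your short bookkeeping argument is exactly the intended justification: the mass removed on any subset $T$ equals $\tfrac{1}{\max_j \tau_j}\sum_{i\in T} w_i\tau_i$, so the score hypothesis guarantees at most five times as much mass is removed from $G\setminus R$ as from $B$, preserving membership in $\fS_n$.
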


\subsection{The filtering algorithm for $n \leq d$}
In this case, the filtering algorithm proceeds as follows.
Let $\delta > 0$, and let 
\begin{equation}
\label{eq:gamma2}
    \gamma_2 := C \left(\sqrt{nd} + \alpha^2 n + \sqrt{(n+d) \log (1/\delta)} + \log (1 / \delta) + \e n \log 1 / \e\right) \; ,
\end{equation}
for some constant $C$ sufficiently large.
Initialize weights $w^{(1)} = \bone$.
Then, for $t = 1$ until termination, we proceed as follows.
For any $w \in \Gamma_n$, let $D(w) = d \cdot \mathrm{diag} (w)$.
Let $\lambda$ denote the top singular value of $\Gram (w, S) - D(w)$, and let $v$ be its associated singular unit vector (if there are multiple, choose any).
If $\lambda < 5 \gamma_2$, then terminate.
Otherwise, for all $i \in S$, let $\tau_i = \frac{v_i^2}{w_i^{(t)}} \mathbb{I}[w^{(t)}_i > 0]$ (where $\tau_i$ defaults to $0$ when $w_i^{(t)} = 0$), and proceed to sort the samples in decreasing order of $\tau_i$.
Then, define $w^{(t + 1)}$ by 
\[
w^{(t + 1)}_i = \left( 1  - \frac{\tau_i}{\max_i \tau_i}\right) w^{(t)}_i \; .
\]
The formal pseudocode for this algorithm appears in Algorithm~\ref{alg:spectralfilter1}.

\begin{algorithm}[H]

\caption{Spectral filtering for $n \leq d$. Input: $X_1,\ldots,X_n \in \R^d$, $\gamma_2 > 0$.}
\label{alg:spectralfilter1}

    \begin{algorithmic}[1]
\State Let $w^{(1)} = \bone$, and let $t = 1$
\While{$\norm{\Gram (w^{(t)}, S) - D(w^{(t)})} \geq 5 \gamma_2$}
    \State Let $v$ be the top singular vector of $\Gram (w^{(t)}, S) - D(w^{(t)})$
    \State For all $i$, let $\tau_i = \frac{v_i^2}{w_i^{(t)}} \mathbb{I}[w^{(t)}_i > 0]$\Comment{If $w_i^{(t)} = 0$, we set $\tau_i = 0$.}
    \State Let
    \[
    w^{(t + 1)}_i = \left( 1  - \frac{\tau_i}{\max_i \tau_i}\right) w^{(t)}_i \; .
    \]
    \State Let $t \gets t + 1$
\EndWhile
\State \textbf{Return} $w^{(t)}$
\end{algorithmic}

\end{algorithm}

For the rest of the section, let us assume that $G$ is $(\e, \e n \sqrt{d} \log (n / \delta), (\e n)^2 \log 1 / \e + \e n \sqrt{\e n d \log 1 / \e})$-regular, and additionally assume that
\begin{equation}
\label{eq:gram-assume}
\norm{\Gram(G) - d I} \leq \frac{\gamma_2}{10} \; .
\end{equation}
By Lemma~\ref{lem:good-set-regularity} and Fact~\ref{fact:gram-conc}, these two conditions hold together with probability $1 - \delta$.
Then, our main claim for this algorithm is the following:
\begin{lemma}
\label{lem:spectralfilter1}
    Under the above assumptions, Algorithm~\ref{alg:spectralfilter1} terminates in $K$ iterations for some $K \leq 6 \e n$, runs in time $O(d n^2)$ per iteration, and moreover, at termination, we have that
    \begin{itemize}
        \item $w^{(K)} \in \fS_n$, and
        \item for all $T \subset S$ with $|T| \leq \e n$, we have that
        \[
            \norm{\Sum(w^{(K)}, T)}^2 = \norm{w^{(K)}(T)}_1 d \pm O(\e n \cdot \gamma_2) .
        \]
    \end{itemize}
\end{lemma}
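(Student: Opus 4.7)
The proof decomposes into three parts, of which the safety claim is the main obstacle: (i) a \emph{safety} claim that each iterate $w^{(t)}$ stays in $\fS_n$; (ii) a \emph{termination} bound of $K \leq 6\eps n$ iterations; and (iii) a \emph{regularity} bound on the output, read off from the stopping criterion. The runtime is incidental.

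For (i), I proceed by induction on $t$; the base case $w^{(1)} = \bone \in \fS_n$ is immediate. For the inductive step, assume $w = w^{(t)} \in \fS_n$ and that the loop has not terminated, so $M := \Gram(w, S) - D(w)$ satisfies $\|M\|_{op} \geq 5\gamma_2$, and let $v$ be the top singular vector, which is an eigenvector with $|v^\top M v| = \|M\|_{op}$. Since the scores satisfy $w_i \tau_i = v_i^2$, by Fact~\ref{fact:1d-filter} it is enough to prove
\[
\sum_{i \in G\setminus R} v_i^2 \;<\; 5 \sum_{i \in B} v_i^2 \, ,
\]
which in turn will yield $w^{(t+1)} \in \fS_n$. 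To prove this inequality, split $v = v^{(G)} + v^{(B)}$ according to $S = (G\setminus R) \cup B$ and write $\delta = \|v^{(B)}\|^2$. The crucial input is
\[
\bigl\| M\big|_{G \setminus R} \bigr\|_{op} \;\leq\; \gamma_2/10,
\]
which follows because $\Gram(w, S)\big|_{G \setminus R} = D_{\sqrt{w}} \Gram(G \setminus R) D_{\sqrt{w}}$ has operator norm at most $\|\Gram(G\setminus R)\|_{op}$ (conjugation by a diagonal matrix with entries in $[0,1]$), and $\Gram(G \setminus R) - dI$ is a principal submatrix of $\Gram(G) - dI$, whose norm is at most $\gamma_2/10$ by the assumption \eqref{eq:gram-assume}. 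Expanding $v^\top M v$ into $(G\setminus R)$--$(G\setminus R)$, cross, and $B$--$B$ contributions and using (a) the restricted operator-norm bound above, (b) Cauchy--Schwarz together with $\|M v^{(B)}\| \leq \|M\|_{op} \|v^{(B)}\| = \sqrt{\delta}\, \|M\|_{op}$, and (c) $|v^{(B)\top} M v^{(B)}| \leq \delta \, \|M\|_{op}$, yields
\[
\|M\|_{op} \bigl(1 - 2\sqrt{\delta(1-\delta)} - \delta\bigr) \;\leq\; (1-\delta)\, \gamma_2 / 10 \, .
\]
The left-hand bracket is at least an absolute positive constant (roughly $0.08$) for all $\delta \leq 1/6$, which would force $\|M\|_{op} < 5\gamma_2$ and contradict the hypothesis. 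Therefore $\delta > 1/6$, so $\sum_B v_i^2 > 1/6 > \tfrac{1}{5}\sum_{G\setminus R} v_i^2$, as needed.

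For (ii), each iteration strictly zeroes out at least one previously nonzero sample --- the one attaining $\max_i \tau_i$ --- because indices with $w_i = 0$ have $\tau_i = 0$ by convention. By (i), over the whole run the total weight removed from $G \setminus R$ is at most five times the total weight removed from $B$, and the latter is at most $|B| = \eps n$. Any sample that is ever zeroed contributes a full unit to the corresponding $\|\bone - w\|_1$, so at most $5\eps n$ good samples and $\eps n$ bad samples can ever be zeroed, giving $K \leq 6\eps n$. For (iii), at termination $\|M(w^{(K)})\|_{op} < 5\gamma_2$, and for any $T \subset S$ with $|T| \leq \eps n$, the quadratic-form identities $\bone_T^\top \Gram(w^{(K)}, S) \bone_T = \|\Sum(w^{(K)}, T)\|^2$ and $\bone_T^\top D(w^{(K)}) \bone_T = d \, \|w^{(K)}_T\|_1$ give
\[
\bigl|\, \|\Sum(w^{(K)}, T)\|^2 - d \, \|w^{(K)}_T\|_1 \,\bigr| \;=\; |\bone_T^\top M(w^{(K)}) \bone_T| \;\leq\; |T| \cdot 5 \gamma_2 \;=\; O(\eps n \, \gamma_2) \, .
\]
Finally, the runtime is controlled by precomputing $\Gram(S)$ once in $O(n^2 d)$ time, then reforming $D_{\sqrt{w}}\Gram(S)D_{\sqrt{w}}$ and computing the top singular vector via power iteration (on $M^2$, to handle either sign of the leading eigenvalue) in $O(n^2)$ operations per iteration, both within the claimed $O(dn^2)$ per-iteration budget.
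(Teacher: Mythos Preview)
Your proof is correct and follows essentially the same approach as the paper's: both prove the invariant $w^{(t)} \in \fS_n$ by induction, bounding the $(G\setminus R)$-block of the centered Gram matrix via the assumption \eqref{eq:gram-assume} (the paper cites Cauchy interlacing; you spell out the conjugation $M|_{G\setminus R} = D_{\sqrt{w}}(\Gram(G\setminus R) - dI)D_{\sqrt{w}}$ explicitly), then argue by contradiction that $\|v^{(B)}\|^2 > 1/6$ so that Fact~\ref{fact:1d-filter} applies; termination and the output bound are derived identically.
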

\begin{proof}
    The runtime per iteration is clearly dominated by the time it takes to find the top singular vector of the centered gram matrix, which can be done in time $O(d n^2)$.
    
    We will show that for all $t = 1, \ldots, K$, we have that $w^{(t)} \in \fS_n$.
    First, we demonstrate how this proves the overall lemma.
    First, note that after each iteration, some new $w_i$ (with the maximum $\tau_i$) becomes $0$, so after $6 \eps n$ iterations, we have removed at least $6 \eps n$ mass from $w$. By definition of $\fS_n$, this means we have removed at least $\e n$ mass from the bad coordinates $w_B$, at which point no further updates can maintain the invariant that $w^{(i)} \in \fS_n$.
    
    Next, we observe that if $w^{(K)} \in \fS_n$, then since we terminated, we must have that
    \[\norm{\Gram (w^{(K)}) - D(w^{(K)})} \leq 5 \gamma_2 \; .\]
    But then, for all $T$ with $|T| \leq \e n$, let $\bone_T \in \R^n$ be the indicator vector $T$.
    Then, we have that 
    \begin{equation*}
        \norm{\Sum(w^{(K)}, T)}^2 = \bone_T^\top \Gram(w^{(K)}) \bone_T = \norm{w^{(K)}_T}_1 d \pm O(\e n \cdot \gamma_2) \; ,
    \end{equation*}
    as claimed.

    Thus, it suffices to prove the invariant that $w^{(t)} \in \fS_n$ for all $t = 1, \ldots, K$.
    We proceed by induction.
    Clearly $w^{(1)} \in \fS_n$.
    Now, suppose $w^{(t)} \in \fS_n$ for some $t < K$.
    Since we have not yet terminated, this implies that 
    \[
    \lambda = \norm{\Gram(w^{(t)}) - D(w^{(t)})} \ge 5 \gamma_2 \; .
    \]
    But by~\eqref{eq:gram-assume} and the Cauchy interlacing theorem, we have that
    \begin{align*}
        \norm{\Gram (w^{(t)}, G \setminus R) - D (w^{(t)}_{G \setminus R})} \leq \frac{\gamma_2}{10} \le \frac{\lambda}{50} \; .
    \end{align*}
    We claim that this implies that $5 \sum_{i \in B} v_i^2 > \sum_{i \in G \setminus R} v_i^2$.
    Indeed, suppose not, and let $v_G$ denote the restriction of $v$ onto the coordinates in $G \setminus R$, and let $v_B$ denote the restriction of $v$ onto the coordinates in $B$.
    This means that
    \begin{align*}
        \left| v^\top \left(\Gram (w^{(t)}) - D(w) \right) v \right| &= \left| v_G^\top \left( \Gram (w^{(t)}, G \setminus R) - D(w^{(t)}_{G \setminus R}) \right) v_G \right. \\
        &\left. +2 v_G^\top \left(\Gram (w^{(t)}) - D(w^{(t)})\right) v_B + v_B^\top \left( \Gram(w^{(t)}, B) - D(w^{(t)}_B) \right) v_B \right| \\
        &\leq \norm{v_G}^2 \cdot \frac{\lambda}{50} + 2 \norm{v_G} \norm{v_B} \lambda + \norm{v_B}^2 \lambda \leq 0.96 \lambda \; ,
    \end{align*}
    where the last inequality holds because $\|v_B\|^2+\|v_G\|^2 = \|v\|^2 = 1$ and $\|v_B\|^2 \le \frac{1}{6}.$
    But this is a contradiction since $v$ is the top singular vector of the centered Gram matrix.
    Therefore, by Fact~\ref{fact:1d-filter} and the definition of $\tau_i$, we obtain that $w^{(t + 1)} \in \fS_n$, as claimed.
    (Note that if $w_i^{(t)} = 0$, the $i$th row and column of both $\Gram(w^{(t)}, S)$ and $D(w^{(t)})$ are $0$ so $v_i = 0$, which means $w_i^{(t)} \cdot \tau_i = v_i^2$ even if $w_i^{(t)} = 0$.)
    This completes the proof.
\end{proof}
 
\subsection{The filtering algorithm for $n > d$}
The filtering algorithm proceeds similarly to above.
Let $\gamma_2$ be as in~\eqref{eq:gamma2}.
Initialize weights $w^{(1)} = \bone$.
Then, for $t =1$ until termination, we proceed as follows. 
Let $\lambda$ be the top singular value of $M(w^{(t)}) - n I$, and let $v$ be its associated singular value (if there are multiple, again choose one arbitrarily).
If $\lambda < 5 \gamma_2$, then terminate.
Otherwise, for all $i \in G$, let $\tau_i = \iprod{v, X_i}^2 \mathbb{I}[w_i > 0]$.
Proceed to sort the samples in decreasing order of $\tau_t$.
As before, by relabeling indices, assume that $\tau_1 \geq \tau_2 \geq \dots \geq \tau_n$.
Let $I$ be the smallest index so that $\sum_{i \leq I} w_i^{(t)} \geq 2 \e n$, and define $w^{(t + 1)}$ by 
\begin{equation}
\label{eq:filter-step}
w^{(t + 1)}_i = \left\{ \begin{array}{ll}
         \left( 1 - \frac{\tau_i}{\tau_1} \right) w^{(t)}_i & \mbox{if $i \leq I$};\\
        w^{(t)}_i & \mbox{if $i > I$}.\end{array} \right. 
\end{equation}
The formal pseudocode for the algorithm appears in Algorithm~\ref{alg:spectralfilter2}.

\begin{algorithm}[H]

\caption{Spectral filtering for $n > d$. Input: $X_1,\ldots,X_n \in \R^d$, $\gamma_2 > 0$.}
\label{alg:spectralfilter2}

    \begin{algorithmic}[1]
\State Let $w^{(1)} = \bone$, and let $t = 1$
\While{$\norm{M (w, S) - n I} \geq 5 \gamma_2$}
    \State Let $v$ be the top singular vector of $M(w, S) - n I$
    \State For all $i$, let $\tau_i = \iprod{v, X_i}^2 \mathbb{I}[w^{(t)}_i > 0]$
    \State Let $w^{(t + 1)}$ be given by~\eqref{eq:filter-step} 
    \State Let $t \gets t + 1$
\EndWhile
\State \textbf{Return} $w^{(t)}$
\end{algorithmic}

\end{algorithm}

As before, for the rest of this section, let us assume that $G$ is $(\e, \e n \sqrt{d} \log (n / \delta), (\e n)^2 \log 1 / \e + \e n \sqrt{\e n d \log 1 / \e})$-regular, and additionally assume that
\begin{equation}
    \label{eq:cov-assume}
    \norm{M (G) - n I} \leq \frac{\gamma_2}{10} \; ,
\end{equation}
and that Fact~\ref{fact:small-subset-deviations} holds.
As before, direct applications of Lemma~\ref{lem:good-set-regularity} and Facts~\ref{fact:gram-conc} and~\ref{fact:small-subset-deviations} immediately imply that these conditions hold together with probability at least $1 - \delta$.
Then, our main claim for this algorithm is the following:
\begin{lemma}
\label{lem:spectralfilter2}
    Under the above regularity conditions, Algorithm~\ref{alg:spectralfilter2} terminates in $K$ iterations for some $K \leq 6 \e n$, runs in time $O(nd^2)$ per iteration, and moreover, at termination, we have that
    \begin{itemize}
        \item $w^{(K)} \in \fS_n$, and
        \item for all $T \subset S$ with $|T| \leq \e n$, we have that
        \[
        \norm{\Sum (w^{(K)}, T)}^2 \leq 10 \gamma_2 \cdot \e n.
        \]
    \end{itemize}
\end{lemma}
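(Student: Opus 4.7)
The proof closely mirrors that of Lemma~\ref{lem:spectralfilter1}; the essential new ingredients arise from working with the $d \times d$ covariance-like matrix $M(w, S) - nI$ instead of the Gram matrix, and from the top-$I$ truncation built into the update rule. I would first record that each iteration runs in $O(nd^2)$ time: constructing $M(w^{(t)}, S)$ takes $O(nd^2)$ operations, while extracting its top singular vector is $O(d^3) = O(nd^2)$ since $d \le n$. Termination within $6 \e n$ iterations follows exactly as in Lemma~\ref{lem:spectralfilter1}, because the index with the largest $\tau_i$ has its weight zeroed at every step, and the $\fS_n$ invariant (to be established) permits at most $6\e n$ such zeros.

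The heart of the argument is the inductive claim that $w^{(t)} \in \fS_n$ for all $t$. Assuming this at step $t$ with the algorithm not yet terminated, let $v$ be the top singular vector, so $|v^\top (M(w^{(t)}) - nI) v| \ge 5 \gamma_2$. I would first show that the deviation is positive: decomposing $M(w^{(t)}, S) = M(w^{(t)}, G \setminus R) + M(w^{(t)}, B)$, the assumption~\eqref{eq:cov-assume} gives $v^\top M(G) v \ge n - \gamma_2 / 10$, and Fact~\ref{fact:small-subset-deviations} applied to the deficit $u := (\bone - w^{(t)})_{G \setminus R} + \bone_R$ (which has $\norm{u}_1 \le 6\e n \le 10 \e n$ by the inductive hypothesis) bounds the missing good contribution by $O(\gamma_0)$, where $\gamma_0 := c(\e n \log 1/\e + d + \log 1/\delta)$. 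Hence $v^\top M(w^{(t)}) v \ge n - O(\gamma_0)$, forcing the positive branch once the constant $C$ in~\eqref{eq:gamma2} is chosen so that $\gamma_2 \gg \gamma_0$.

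With positive deviation in hand, set $\tau_i = \iprod{v, X_i}^2 \mathbb{I}[w^{(t)}_i > 0]$. Fact~\ref{fact:small-subset-deviations} applied to $w^{(t)}$ restricted to $(G \setminus R) \cap \{i \le I\}$ (whose $\ell_1$ weight is at most $2\e n$) gives $\sum_{i \le I, \, i \in G\setminus R} w^{(t)}_i \tau_i \le O(\gamma_0)$. On the other hand, an averaging argument over the sorted $\tau$ values shows that the top-$I$ total $\sum_{i \le I} w^{(t)}_i \tau_i$ captures at least a $2\e$ fraction of $v^\top M(w^{(t)}) v \ge n + 5\gamma_2$, and in particular is $\ge 2\e n + 10 \e \gamma_2$. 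Choosing $C$ large enough that $\gamma_2 \ge 100 \gamma_0$, the bad top-$I$ contribution dominates the good one by more than a factor of $5$, so Fact~\ref{fact:1d-filter} applied to the truncated scores $\tau'_i := \tau_i \mathbb{I}[i \le I]$ preserves $w^{(t+1)} \in \fS_n$; this is precisely the update~\eqref{eq:filter-step}.

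Finally, for the bound $\norm{\Sum(w^{(K)}, T)}^2 \le 10 \gamma_2 \cdot \e n$ for $|T| \le \e n$, I would combine the termination condition $\norm{M(w^{(K)}) - nI} \le 5 \gamma_2$ with Cauchy--Schwarz along $v = \Sum(w^{(K)}, T) / \norm{\Sum(w^{(K)}, T)}$: one has $\norm{\Sum(w^{(K)}, T)}^2 \le \norm{w^{(K)}_T}_1 \cdot \sum_{i \in T} w^{(K)}_i \iprod{v, X_i}^2 \le |T| \cdot v^\top M(w^{(K)}, S) v$, and then controlling this via the guarantees on $M(w^{(K)})$ together with the good/bad split (again using Fact~\ref{fact:small-subset-deviations} on the good portion and the filter guarantee on the bad portion). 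The main obstacle I expect is calibrating the constants so that the $O(\e n \log 1/\e)$ overhead from Fact~\ref{fact:small-subset-deviations} is absorbed by the $\e n \log 1/\e$ term already present in~\eqref{eq:gamma2}; this forces the constant $C$ in the definition of $\gamma_2$ to be taken much larger than the universal constant $c$ in Fact~\ref{fact:small-subset-deviations}, which must be carefully threaded through the inductive step above.
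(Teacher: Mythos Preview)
Your overall strategy matches the paper's, but the inductive step contains a real gap. The averaging claim that $\sum_{i \le I} w^{(t)}_i \tau_i \ge 2\e\, v^\top M(w^{(t)})v \ge 2\e n + 10\e\gamma_2$ is correct, yet this lower bound is far too weak to conclude that the bad top-$I$ contribution dominates the good one. In the regime $\e n \ll d$ (which is entirely compatible with $n>d$, e.g.\ $n=2d$, $\e=d^{-1/2}$), you only get a top-$I$ total of order $\e n + \e\gamma_2$, whereas Fact~\ref{fact:small-subset-deviations} bounds the good top-$I$ score only by $c(\e n\log 1/\e + d + \log 1/\delta)$; the $d$ term alone already swamps $2\e n$, and $10\e\gamma_2 \ll \gamma_2$ gives no leverage. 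So from your bound one cannot even conclude the bad top-$I$ score is positive, let alone satisfy the hypothesis of Fact~\ref{fact:1d-filter}.

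The missing idea is to average against $B$ rather than against $[n]$. Since $v^\top M(w^{(t)},G\setminus R)v \le v^\top M(G)v \le n+\gamma_2/10$ while $v^\top M(w^{(t)},S)v > n+5\gamma_2$, the bad part alone satisfies $\sum_{i\in B} w^{(t)}_i\tau_i \ge 3\gamma_2$. Now use sorting: every $\tau_i$ with $i\le I$ is at least every $\tau_j$ with $j\in B\setminus[I]$, and $\sum_{i\le I} w^{(t)}_i \ge 2\e n \ge 2\sum_{j\in B\setminus[I]} w^{(t)}_j$, so $\sum_{j\in B\setminus[I]} w^{(t)}_j\tau_j \le \tfrac12\sum_{i\le I} w^{(t)}_i\tau_i$, giving $\sum_{i\le I} w^{(t)}_i\tau_i \ge \tfrac23\cdot 3\gamma_2 = 2\gamma_2$. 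Combined with the good top-$I$ bound $<\gamma_2$ this yields bad top-$I$ $\ge \gamma_2 \ge$ good top-$I$, which is what Fact~\ref{fact:1d-filter} needs. Your sketch for the second bullet is also too loose: bounding $\sum_{i\in T} w_i\iprod{v,X_i}^2$ by $v^\top M(w^{(K)},S)v \le n+5\gamma_2$ gives only $\|\Sum(w^{(K)},T)\|^2 \le \e n(n+5\gamma_2)$, far larger than $10\gamma_2\e n$. The paper instead runs Cauchy--Schwarz in the reverse direction---if $\|\Sum(w^{(K)},T)\|^2 > 10\gamma_2\e n$ then $\sum_{i\in T} w_i\iprod{v,X_i}^2 \ge 10\gamma_2$, and adding this to $v^\top M(w^{(K)},G\setminus(R\cup T))v \ge n-\gamma_2/5$ contradicts termination.
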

\begin{proof}
    As before, the per-iteration runtime is dominated by the runtime of PCA, which is $O(n d^2)$.

    We will again inductively show that for all iterations $t$, we have that $w^{(t)} \in \fS_n$.
    We first show how to prove the lemma, assuming this claim.
    In this case, the number of iterations $K$ can be bounded identically as in Lemma~\ref{lem:spectralfilter1}.
Moreover, by construction, at termination we have that $\norm{M(w^{(K)}) - n I} \le 5 \gamma_2$.
    Now suppose that there was some subset $T$ with $|T| \leq \e n$ that had
    \[
    \norm{\Sum (w^{(K)}, T)}^2 > 10 \gamma_2 \cdot \e n.
    \]
    Then, there is a unit vector $v \in \R^d$ so that
\begin{equation} \label{eq:unit-v-big}
    \sum_{i \in T} (w^{(K)}_i)^{1/2} \iprod{v, X_i} > \sqrt{10 \gamma_2 \cdot \e n} \; .
\end{equation}
    Thus, we have that
\begin{equation} \label{eq:v-bound}
    \sum_{i \in T \cup B} w_i^{(K)} \langle v, X_i \rangle^2 \ge \sum_{i \in T} w^{(K)}_i \iprod{v, X_i}^2 \geq 10 \gamma_2 \; .
\end{equation}
    Above, the first inequality holds because every $w_i^{(K)}$ is nonnegative, and the second inequality follows from \eqref{eq:unit-v-big} and the Cauchy-Schwarz inequality.
    However, as $|R \cup T| \le 2 \eps$ and $\|\bone-w^{(K)}\|_1\le 6 \eps n$, \eqref{eq:cov-assume} and Fact~\ref{fact:small-subset-deviations} together imply
\begin{equation} \label{eq:M-RT-bound}
    \norm{ M \left(w^{(K)}, G \setminus (R \cup T)\right) - n I } \leq \frac{\gamma_2}{5} \; . 
\end{equation}
    Since $G \backslash (R \cup T) = S \backslash (B \cup T)$, the inequalities \eqref{eq:v-bound} and \eqref{eq:M-RT-bound} together imply that
    \[
    \norm{ M \left(w^{(K)}, S\right) - n I} > 5 \gamma_2 \; ,
    \]
    which is a contradiction.

    Thus, as before, it suffices to prove that $w^{(t)} \in \fS_n$ for all iterations until termination.
    We will do so inductively.
    As before, the base case $t = 1$ is trivial.
    Now suppose that $w^{(t)} \in \fS_n$ for some $t < K$.
    Since we have not yet terminated, this means that $\norm{M (w^{(t)}, S) - n I} > 5 \gamma_2$.
    Then,~\eqref{eq:cov-assume} and Fact~\ref{fact:small-subset-deviations} together immediately imply that
\begin{equation} \label{eq:B_bound}
    \sum_{i \in B} w_i^{(t)} \iprod{v, X_i}^2 \geq 3 \gamma_2 \; ,
\end{equation}
    for $v$ the top eigenvector of $M(w^{(t)}, S)-n \cdot I$. On the other hand, since $\sum_{i \le I} w_i^{(t)} \le 2 \eps n + 1$ and since $w^{(t)} \in \fS_n$ means we have removed at most $6 \eps n$ mass from all samples, this means $I \le 8 \eps n + 1 \le 10 \eps n$. So, Fact~\ref{fact:small-subset-deviations} implies that 
\begin{equation} \label{eq:G_bound}
    \sum_{i \leq I, i \in G} w_i^{(t)} \iprod{v, X_i}^2 < \gamma_2 \; .
\end{equation}
    By definition of $I$, every $\langle v, X_i \rangle^2$ for $i \le I$ is larger than every $\langle v, X_i \rangle^2$ for $i \in B \backslash [I]$. Therefore, since $\sum_{i \in B \backslash [I]} w_i^{(t)} \le |B \backslash [I]| \le \eps n$ but $\sum_{i \le I} w_i^{(t)} \ge 2 \eps n$, we have
\begin{equation} \label{eq:BI_bound}
    2 \cdot \left(\sum_{i \in B} w_i^{(t)} \langle v, X_i \rangle^2 - \sum_{i \le I} w_i^{(t)} \langle v, X_i \rangle^2\right) \le 2 \cdot \sum_{i \in B \backslash [I]} w_i^{(t)} \langle v, X_i \rangle^2 \le \sum_{i \in I} w_i^{(t)} \langle v, X_i \rangle^2.
\end{equation}
    Along with \eqref{eq:B_bound}, \eqref{eq:BI_bound} implies that
\begin{equation} \label{eq:I_bound}
    \sum_{i \le I} w_i^{(t)} \langle v, X_i \rangle^2 \ge 2 \gamma_2.
\end{equation}
    Hence, by combining \eqref{eq:I_bound} with \eqref{eq:G_bound}, we have
    \[
    \sum_{i \leq I, i \in B} w_i^{(t)} \iprod{v, X_i}^2 \geq \gamma_2 \geq \sum_{i \leq I, i \in G} w_i^{(t)} \iprod{v, X_i}^2 \; ,
    \]
    and so the result for $w^{(t+1)}$ immediately follows from Fact~\ref{fact:1d-filter}.
\end{proof}

\subsection{Bounding row sums}
We now have a way to ensure that small subsets of points have means with bounded norm.
We also need to enforce that row sums are bounded.
To do so, we will simply remove the set of $O(\e n)$ points whose row sums have largest deviation from what we expect.
More formally, given a set of weights $w \in \fS_n$, we will let 
\begin{equation}
\label{eq:rowsumscore}
\tau_i = \abs{\iprod{\sqrt{w_i} X_i, \sum_{j \in S} \sqrt{w_j} X_j} - w_i d} \cdot \mathbb{I} [w_i > 0] \; .
\end{equation}
We then sort the indices in decreasing order by $\tau_i$.
Again for simplicity of notation, assume that after some suitable reindexing we have that $\tau_1 \geq \tau_2 \geq \ldots \geq \tau_n$.
Then, we replace $w_i$ with $0$ for all $i \le \eps n$.
We give the formal pseudocode for this algorithm in Algorithm~\ref{alg:rowsumfilter}.

\begin{algorithm}[H]

\caption{Bounding row sums. Input: $X_1,\ldots,X_n \in \R^d$}
\label{alg:rowsumfilter}

    \begin{algorithmic}[1]
        \State For all $i$, let $\tau_i$ be as in~\eqref{eq:rowsumscore}.
        \State Sort the indices in decreasing order by $\tau_i$.
        \Comment{By relabeling indices, for simplicity of notation assume that the $i$'s are initally sorted}
\State Set $w_i = 0$ for all $i \leq \eps n$.
        \State \textbf{return} $w$
    \end{algorithmic}

\end{algorithm}

\begin{lemma}
\label{lem:rowsumfilter-1}
    Assume $G$ is $(12 \eps, \beta_1, \beta_2)$-regular, that Fact~\ref{fact:small-subset-deviations} holds for $G$, and $S = (G \backslash R) \cup B$, where $|R| = |B| = \eps n$.
    Let $w \in \fS_n$, and assume that for all $T \subset S$ with $|T| \leq 2 \e n$, we have that $\norm{\Sum(w, T)}^2 = \norm{w_T}_1 d \pm O( \beta_2 )$.
Then for all $T \subset S \backslash B$ with $|T| \leq \e n$, we have that 
\[
\sum_{i \in T, j \in S} \sqrt{w_i w_j} \iprod{X_i, X_j} = d \cdot \|w_T\|_1 \pm O(\sqrt{n} \beta_1 + \beta_2) \; .
\] 
\end{lemma}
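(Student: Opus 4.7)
The sum in question is exactly $\iprod{\Sum(w, T), \Sum(w, S)}$. Since $T \subset S \setminus B = G \setminus R$, I will split along $S = (G \setminus R) \sqcup B$ into a ``cross'' term $\iprod{\Sum(w,T), \Sum(w,B)}$ and a ``good--good'' term $\iprod{\Sum(w,T), \Sum(w, G \setminus R)}$, and handle them separately.

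For the cross term, note that $T$ and $B$ are disjoint subsets of $S$ with $|T|, |B| \le \eps n$ and $|T \cup B| \le 2\eps n$, so the hypothesis on $\|\Sum(w, T')\|^2$ for $|T'| \le 2\eps n$ applies to all three of $T$, $B$, $T \cup B$. Using the polarization identity
\[
2\iprod{\Sum(w,T), \Sum(w, B)} = \|\Sum(w, T \cup B)\|^2 - \|\Sum(w, T)\|^2 - \|\Sum(w, B)\|^2,
\]
and observing that the $d$-contributions cancel because $\|w_{T \cup B}\|_1 = \|w_T\|_1 + \|w_B\|_1$, this immediately gives $|\iprod{\Sum(w,T), \Sum(w, B)}| = O(\beta_2)$.

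For the good--good term, I rewrite both factors as weighted sums indexed by $G$ so as to invoke the regularity of $G$ via Proposition~\ref{prop:weighted-inner-product-regularity}. Specifically, define $a \in [0,1]^G$ by $a_i = 1$ for $i \in R$ and $a_i = 1 - \sqrt{w_i}$ for $i \in G \setminus R$, so that $\Sum(w, G \setminus R) = \Sum(G) - \sum_{i \in G} a_i X_i$; and define $b \in [0,1]^G$ by $b_i = \sqrt{w_i}$ for $i \in T$ and $b_i = 0$ otherwise, so that $\Sum(w, T) = \sum_{i \in G} b_i X_i$. Since $w \in \fS_n$, we have $\|a\|_1 \le |R| + 5|B| \le 6\eps n$, and trivially $\|b\|_1 \le |T| \le \eps n$. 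Applying Proposition~\ref{prop:weighted-inner-product-regularity} with parameter $6\eps$ in place of $\eps$ (which is justified because $G$ is $(12\eps, \beta_1, \beta_2)$-regular by hypothesis) gives
\[
\iprod{\Sum(w,T), \Sum(G)} = d \sum_{i} b_i \pm O(\sqrt{n}\,\beta_1), \qquad \iprod{\Sum(w,T), \sum_i a_i X_i} = d \sum_{i} a_i b_i \pm O(\beta_2).
\]
Subtracting, the main term is $d \sum_i b_i(1 - a_i)$; since $b_i(1-a_i)$ equals $\sqrt{w_i}\cdot\sqrt{w_i} = w_i$ on $T$ and $0$ elsewhere, this collapses to $d \|w_T\|_1$. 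Hence $\iprod{\Sum(w,T), \Sum(w, G \setminus R)} = d\|w_T\|_1 \pm O(\sqrt{n}\,\beta_1 + \beta_2)$, and combining with the cross-term bound yields the claim.

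The only real subtlety is the bookkeeping around which regularity level is needed: because the weight vector $a$ has $L^1$ mass up to $6\eps n$, Proposition~\ref{prop:weighted-inner-product-regularity} must be invoked at its parameter $6\eps$, which requires $G$ to be $(12\eps, \beta_1, \beta_2)$-regular---matching the lemma's hypothesis exactly. (Fact~\ref{fact:small-subset-deviations} is stated as a hypothesis for uniformity with the related lemma that follows, but is not actually needed in this argument.)
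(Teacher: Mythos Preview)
Your proof is correct and follows essentially the same approach as the paper's: bound the inner product with $\Sum(w,B)$ via the polarization identity and the assumed control on $\|\Sum(w,T')\|^2$ for $|T'|\le 2\eps n$, and handle the remaining ``good'' part using Proposition~\ref{prop:weighted-inner-product-regularity} together with the $(12\eps,\beta_1,\beta_2)$-regularity of $G$. The paper uses a three-way split $A_1-A_2+A_3$ over $j\in G$, $j\in R$, $j\in B$ (extending $w$ to $R$ by the convention $w_j:=w_{j'}$ for the replacing $j'\in B$), whereas you combine $A_1-A_2$ into a single term by directly writing $\Sum(w,G\setminus R)=\Sum(G)-\sum_i a_i X_i$ with $a_i=1$ on $R$; this is a minor repackaging, not a different argument.
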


\begin{proof}
    Fix $T \in S \backslash B$ with $|T|\le \eps n$. Since $S = (G \backslash R) \cup B$, we can write
\begin{align*}
    \sum_{i \in T, j \in S} \sqrt{w_i w_j} \iprod{X_i, X_j} &= \underbrace{\sum_{i \in T, j \in G} \sqrt{w_i w_j} \iprod{X_i, X_j}}_{A_1} - \underbrace{\sum_{i \in T, j \in R} \sqrt{w_i w_j} \iprod{X_i, X_j}}_{A_2} + \underbrace{\sum_{i \in T, j \in B} \sqrt{w_i w_j} \iprod{X_i, X_j}}_{A_3} \, .
\end{align*}    
    Above, for $j \in R$, $w_j$ is defined to equal $w_{j'}$ for the $j' \in B$ that replaces $j$.

    Let $b_i := \sqrt{w_i}$ and $a_i := 1-\sqrt{w_i}$. Since $|T| \le \eps n$, we know that $\sum_{i \in T} b_i \le \eps n$. Moreover, $\sum_{j \in G} w_j \ge (1-6 \eps) n$, so $\sum_{j \in G} a_i \le 6 \eps n.$ Because $G$ is $(12 \eps, \beta_1, \beta_2)$-regular and $T \subset G$, Proposition~\ref{prop:weighted-inner-product-regularity} implies that
\begin{align}
    A_1 &= \left\langle \sum_{i \in T} b_i X_i, \sum_{j \in G} X_j \right\rangle - \left\langle \sum_{i \in T} b_i X_i, \sum_{j \in G} a_j X_j \right\rangle \nonumber \\
    &= d \cdot \sum_{i \in T} b_i - d \sum_{i \in T} a_i b_i \pm O(\sqrt{n} \beta_1 + \beta_2) \nonumber \\
    &= d \cdot \sum_{i \in T} w_i \pm O(\sqrt{n} \beta_1 + \beta_2). \label{eq:a1-bound}
\end{align}

    Next, we bound $A_2$. Since $T, R \subset G$ are disjoint and $|T|, |R| \le \eps n$, Proposition~\ref{prop:weighted-inner-product-regularity} implies that
\begin{align}
    A_2
    &= \left\langle \sum_{i \in T} b_i X_i, \sum_{j \in R} b_j X_j \right\rangle = \pm O(\sqrt{n} \beta_1 + \beta_2), \label{eq:a2-bound}
\end{align}
    since the $b_i$ terms in $T$ and the $b_j$ terms in $R$ are from disjoint sets.

    Finally, we bound $A_3$. We will only use the fact that $T, B$ are disjoint sets in $S$ of size at most $\eps n$ and our assumption on $w$ in the lemma statement. We can write
\begin{align}
    A_3 &= \frac{1}{2} \left(\left\|\Sum(w, T \cup B)\right\|^2 - \left\|\Sum(w, T)\right\|^2 - \left\|\Sum(w, B)\right\|^2\right) \nonumber \\
    &= \frac{1}{2} \left(\|w_{T \cup B}\|_1 \cdot d - \|w_T\|_1 \cdot d - \|w_B\|_1 \cdot d\right) \pm O(\beta_2) \nonumber \\
    &= \pm O(\beta_2). \label{eq:a3-bound}
\end{align}
    Adding \eqref{eq:a1-bound}, \eqref{eq:a2-bound}, and \eqref{eq:a3-bound} completes the proof.
\end{proof}

\begin{lemma}
\label{lem:rowsumfilter}
    Assume Lemma~\ref{lem:rowsumfilter-1} holds, and let $w'$ be the output of Algorithm~\ref{alg:rowsumfilter}. Then, for all $T \subset S$ with $|T| \leq \e n$, we have that 
\[
\sum_{i \in T, j \in [n]} \sqrt{w_i' w_j'} \iprod{X_i, X_j} = d \cdot \|w_T\|_1 \pm O(\sqrt{n} \beta_1 + \beta_2) \; .
\]
\end{lemma}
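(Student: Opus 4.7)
My plan is to combine a pointwise bound on the scores $\tau_i$ for good samples with a pigeonhole argument and a careful decomposition of the target quantity into good and bad pieces. First, applying Lemma~\ref{lem:rowsumfilter-1} with the singleton $T = \{i\}$ for each $i \in G \setminus R$ yields $|\iprod{\sqrt{w_i} X_i, \Sum(w, S)} - w_i d| \leq O(\sqrt{n}\beta_1 + \beta_2)$, i.e.\ $\tau_i \leq O(\sqrt{n}\beta_1 + \beta_2)$ for every good sample. Since $|B| = \e n$, among the top $\e n + 1$ indices ranked by $\tau_i$ at least one must be good, so the cutoff $\tau^* := \min_{i \in I} \tau_i$ also satisfies $\tau^* \leq O(\sqrt{n}\beta_1 + \beta_2)$, and every surviving sample $i \in S \setminus I$ has $\tau_i \leq \tau^*$.

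Second, I would check that the filtered weight vector $w'$ still satisfies a constant-factor relaxation of the hypotheses of Lemma~\ref{lem:rowsumfilter-1}: the norm bound $\|\Sum(w', T)\|^2 = \|w'_T\|_1 d \pm O(\beta_2)$ for $|T| \leq 2\e n$ transfers directly from $w$ since $\Sum(w', T) = \Sum(w, T \setminus I)$, and $w'$ lies in a mild enlargement of $\fS_n$ (one has $\|\bone_G - w'_G\|_1 \leq 7\e n$ after the removal). Re-running the proof of Lemma~\ref{lem:rowsumfilter-1} with $w$ replaced by $w'$ then handles any $T_g \subset G \setminus R$ with $|T_g| \leq \e n$. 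For the general case, decompose $T = T_g \sqcup T_b$ with $T_g = T \cap (G \setminus R)$ and $T_b = T \cap B$; writing the bad contribution as
\[
\sum_{i \in T_b,\, j \in [n]} \sqrt{w'_i w'_j}\iprod{X_i, X_j} = \iprod{\Sum(w, T_b), \Sum(w, S)} - \iprod{\Sum(w, T_b), \Sum(w, I)} \; ,
\]
the second term is $\pm O(\beta_2)$ via the norm bound applied to the disjoint pair $(T_b, I)$ of combined size $\leq 2\e n$.

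The main obstacle is controlling the signed sum $\iprod{\Sum(w, T_b), \Sum(w, S)} - d\|w_{T_b}\|_1 = \sum_{i \in T_b} r_i$, where $r_i := \iprod{\sqrt{w_i} X_i, \Sum(w, S)} - w_i d$ satisfies $|r_i| = \tau_i \leq \tau^*$. A pointwise application of the $\tau^*$ bound gives only $|T_b| \tau^* \leq \e n \cdot O(\sqrt{n}\beta_1 + \beta_2)$, which is a factor of $\e n$ too loose. To overcome this, I would re-express the signed sum as the bad--good cross inner product by writing $\Sum(w, S) = \Sum(w, G \setminus R) + \Sum(w, B)$ and using the identity $\iprod{\Sum(w, T_b), \Sum(w, B)} = \|\Sum(w, T_b)\|^2 + \iprod{\Sum(w, T_b), \Sum(w, B \setminus T_b)} = d\|w_{T_b}\|_1 \pm O(\beta_2)$ (both pieces controlled by the norm condition on $w$ since $|B| \leq \e n$ and $T_b \sqcup (B \setminus T_b) = B$). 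This reduces the task to bounding the cross term $\iprod{\Sum(w, T_b), \Sum(w, G \setminus R)}$, which does not involve the filtering step at all and should follow from the $G$-regularity: expressing $\Sum(w, G \setminus R) = \Sum(G) - \sum_{j \in G}(1 - \sqrt{w_j}) X_j - \Sum(R)$ with all deficit weights summing to at most $6 \e n$, and invoking Proposition~\ref{prop:weighted-inner-product-regularity}-style inequalities with $v = \Sum(w, T_b)$ playing the role of a vector of norm $O(\sqrt{d \e n})$ determined before the good samples' randomness is revealed, should yield the required $\pm O(\sqrt{n}\beta_1 + \beta_2)$ bound and close the argument.
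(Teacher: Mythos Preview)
Your proposal has a genuine gap in the handling of the bad piece $T_b$. You correctly identify that the pointwise bound $\tau_i \le \tau^\ast \le O(\sqrt{n}\beta_1 + \beta_2)$ only yields $\sum_{i \in T_b}|r_i| \le \e n \cdot O(\sqrt{n}\beta_1 + \beta_2)$, which is a factor $\e n$ too weak. But your proposed workaround---bounding $\iprod{\Sum(w,T_b),\Sum(w,G\setminus R)}$ by treating $\Sum(w,T_b)$ as ``a vector determined before the good samples' randomness is revealed''---does not work here. This section is about the \emph{adaptive} contamination model: the bad points $\{X_i\}_{i\in B}$ are chosen after seeing the good points, and the filter weights $w$ depend on all of the data. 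Hence $\Sum(w,T_b)$ is fully entangled with the randomness of $\{X_j\}_{j\in G}$, and no independence-based concentration is available. Proposition~\ref{prop:weighted-inner-product-regularity} does not help either: it controls $\iprod{\sum a_i X_i,\sum b_j X_j}$ only when \emph{both} weighted sums lie inside the regular set, whereas $T_b\subset B$ is outside $G$. The remaining deterministic tool---the norm condition $\|\Sum(w,T)\|^2=\|w_T\|_1 d\pm O(\beta_2)$---applies only to subsets of size $\le 2\e n$, so it gives nothing for the pairing with $\Sum(w,G\setminus R)$, which has $(1-\e)n$ terms.

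The paper closes this gap by a different, purely combinatorial mechanism. Instead of the pointwise bound, it records the \emph{sum} bound: for every $T\subset S\setminus B$ with $|T|\le\e n$, splitting $T$ by the sign of $r_i$ and invoking Lemma~\ref{lem:rowsumfilter-1} on each half gives $\sum_{i\in T}\tau_i\le O(\sqrt{n}\beta_1+\beta_2)$. Then an exchange argument extends this to every $T\subset S\setminus[\e n]$ of size $\le\e n$: any surviving bad point has $\tau_i$ no larger than any removed point, and since $|[\e n]| = |B| = \e n$, the surviving bad points in $T$ can be swapped one-for-one with removed good points (there are at least as many of the latter), yielding a set $T'\subset S\setminus B$ with $\sum_{i\in T}\tau_i\le\sum_{i\in T'}\tau_i\le O(\sqrt{n}\beta_1+\beta_2)$. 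This directly gives $\bigl|\sum_{i\in T,\,j\in S}\sqrt{w'_i w_j}\iprod{X_i,X_j}-d\|w'_T\|_1\bigr|\le\sum_{i\in T\setminus[\e n]}\tau_i\le O(\sqrt{n}\beta_1+\beta_2)$, and the passage from $w_j$ to $w'_j$ on the $j$-side is handled by the polarization identity for the disjoint pair $(T\setminus[\e n],[\e n])$ exactly as in your step for the second term. No re-run of Lemma~\ref{lem:rowsumfilter-1} with $w'$ is needed, and no independence between good and bad samples is invoked.
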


\begin{proof}
    Recall the definition of $\tau_i$ from \eqref{eq:rowsumscore}. First, we note that for any $T \subset S \backslash B$ with $|T| \le \eps n$, $\sum_{i \in T} \tau_i \le O(\sqrt{n} \beta_1 + \beta_2)$. To see why, we can split $T$ into $T^+$ and $T^-$, where $i \in T^+$ if $\langle \sqrt{w_i} X_i, \sum_{j \in S} \sqrt{w_j} X_j \rangle \ge w_i d$ and $i \in T^-$ otherwise. Then, since $|T^+|, |T^-| \le \eps n$, Lemma~\ref{lem:rowsumfilter-1} implies that both $\sum_{i \in T^+} \tau_i$ and $\sum_{i \in T^-} \tau_i$ are at most $O(\sqrt{n} \beta_1 + \beta_2)$.

    Since $\sum_{i \in T} \tau_i \le O(\sqrt{n} \beta_1 + \beta_2)$ for any subset $T$ of $S \backslash B$ of size at most $\eps n$, and since we sorted the $\tau_i$'s in decreasing order, this implies $\sum_{i \in T} \tau_i \le O(\sqrt{n} \beta_1 + \beta_2)$ for any subset $T$ of $S \backslash [\eps n]$ of size at most $\eps n$. If $w_i$ represents the values of $w$ before setting the top $\eps n$ indices to $0$, and $w_i'$ represents the values of $w$ afterwards (i.e., $w_i' = 0$ for $i \le \eps n$ and $w_i' = w_i$ for $i > \eps n$), then 
\begin{align}
    \left|\sum_{i \in T, j \in S} \sqrt{w_i' w_j} \langle X_i, X_j \rangle - d \cdot \|w'_T\|_1\right| 
    &\le \sum_{i \in T} \left|\left\langle \sqrt{w_i'} X_i, \sum_{j \in S} \sqrt{w_j} X_j \right\rangle - w_i' d\right| \nonumber \\
    &= \sum_{i \in T \backslash [\eps n]} \tau_i \le O(\sqrt{n} \beta_1 + \beta_2). \label{eq:w'-bound-1}
\end{align}
    Next, we have
\begin{align}
    \sum_{i \in T, j \in [\eps n]} \sqrt{w_i' w_j} \langle X_i, X_j \rangle &= \sum_{i \in T \backslash [\eps n], j \in [\eps n]} \sqrt{w_i w_j} \langle X_i, X_j \rangle = \pm O(\beta_2), \label{eq:w'-bound-2}
\end{align}
    by the same argument as in \eqref{eq:a3-bound}, since $T \backslash [\eps n]$ and $[\eps n]$ are disjoint sets in $S$ and have size at most $\eps n$.

    By subtracting \eqref{eq:w'-bound-2} from \eqref{eq:w'-bound-1}, we obtain the desired bound
\[\left|\sum_{i \in T, j \in S} \sqrt{w_i' w_j'} \langle X_i, X_j \rangle - d \|w'_T\|_1\right| = \left|\sum_{i \in T, j \in S \backslash [\eps n]} \sqrt{w_i' w_j} \langle X_i, X_j \rangle - d \|w'_T\|_1\right| \le O(\sqrt{n} \beta_1 + \beta_2). \qedhere\]
\end{proof}

\subsection{Putting it all together}
We now have all the necessary pieces for the full algorithm.
It will proceed as follows.
First, remove any points whose norms differ from $d$ by too much.
Then, run the appropriate spectral filtering algorithm depending on whether or not $n \leq d$.
Next, use Algorithm~\ref{alg:rowsumfilter} to bound row sums.
Finally, check the sum of all entries in the (centered) Gram matrix, and if it is too large, then output \textsc{YES}, otherwise output $\textsc{NO}$.
The full pseudocode is given in Algorithm~\ref{alg:efficient-mean-tester}.

\begin{algorithm}[H]

\caption{Robust mean testing $X_1,\ldots,X_n \in \R^d$}
\label{alg:efficient-mean-tester}
    \begin{algorithmic}[1]
        \State Remove any $i$ satisfying $\abs{\norm{X_i}^2 - d} \geq O(\sqrt{d} \log n / \delta)$
        \State Let $\gamma_2$ be as in~\eqref{eq:gamma2}.
        \If{$n \leq d$}
            \State Let $w$ be the output of Algorithm~\ref{alg:spectralfilter1} with parameter $\gamma_2$.
        \Else 
            \State Let $w$ be the output of Algorithm~\ref{alg:spectralfilter2} with parameter $\gamma_2$.
        \EndIf 
        \State Let $w'$ be the output of Algorithm~\ref{alg:rowsumfilter} with input $w$
         \If{$\abs{\norm{\Sum (w', S)}^2 - d \norm{w}_1} \geq 0.7 \alpha^2 n^2$} 
            \State \textbf{return} $\textsc{YES}$ 
        \Else
            \State \textbf{return} $\textsc{NO}$
        \EndIf
    \end{algorithmic}
\end{algorithm}

\begin{proof}[Proof of Theorem~\ref{thm:poly-time-main}]
    The runtime of the algorithm is clearly dominated by the runtime of the spectral filters, which both run in time $O(\e n^2 d \min (n, d))$, and the runtime of computing $\Sum (w, S)$, which is $O(n d)$.
    We now prove correctness.

    By standard arguments, the first step (Line 1) of Algorithm~\ref{alg:efficient-mean-tester} removes no uncorrupted points with probability $1 - \delta / 3$.

    Assume that $G$ is $(12 \e, \beta_1, \beta_2)$-regular for $\beta_1 = \e n \sqrt{d} \log (n / \delta)$ and $\beta_2 = \e n \sqrt{n d} \log (n / \delta) + (\e n)^2 \log 1 / \e$.
    By Lemma~\ref{lem:good-set-regularity}, this occurs with probability $1 - \delta / 3$.
    As argued above, for our choice of $n$, the conditions for Lemma~\ref{lem:spectralfilter1} and Lemma~\ref{lem:spectralfilter2} are also satisfied with probability $1 - \delta / 3$ (and will even hold for all $|T| \le 2 \e n$).
    Thus, we obtain that $w \in \fS_n$.
    Finally, regularity and standard sub-exponential moment bounds imply that the conditions for Lemma~\ref{lem:rowsumfilter} are satisfied, as long as $\beta_1 = \eps n \sqrt{d} \log(n/\delta)$ and $\beta_2 = O(\eps n \cdot \gamma_2) = O(\eps n \sqrt{nd} + \alpha^2 \eps n^2 + \eps n \sqrt{(n+d) \log(1/\delta)} + \eps n \log(1/\delta) + (\eps n)^2 \log (1/\eps)$.
    Therefore, the resulting set of weights $w'$ is $(O(\e), \beta_1 + \frac{\beta_2}{\sqrt{n}}, \beta_2)$-regular.
    Plugging this into Lemma~\ref{lem:regular-weights-imply-tester} yields the claim, since by our choice of $n$, one can verify that $\sqrt{n} \beta_1, \beta_2 \leq \frac{1}{C} \cdot \alpha^2 n^2.$
\end{proof} 
\section{Computational Lower Bound}

We refer to \cite{kunisky2022notes} for definitions concerning the low-degree method and the low-degree likelihood ratio, reviewing here only a little background.
Originating in \cite{barak2019nearly,hopkins2018statistical,hopkins2017efficient}, the \emph{low-degree method} is a heuristic for understanding computational complexity of average-case problems, in this case a hypothesis testing problem.
The heuristic unconditionally rules out a class of algorithms based on low-degree evaluating low-degree polynomials in the input; in this case, low-degree polynomials in the $nd$-length vector $(X_1,\ldots,X_n)$.
This low-degree model captures a surprisingly broad range of algorithms, including spectral methods, making it a powerful heuristic for detecting computational hardness.

\begin{theorem}
    \label{thm:lb:lowdegree}
  For $\ns, \dims \in \N$ and $\cor, \dst > 0$, consider the following probability distributions on $D_0, D_1$ on $(\R^{\ns})^\dims$.
  \begin{itemize}
      \item $D_0 = \cN(0,1)^{\otimes \ns \dims}$
      \item $D_1$: first, sample a random unit vector $v \in \R^{\dims}$. Then, draw $\ns$ i.i.d. vectors $X_1,\ldots,X_{\ns}$ from the distribution $(1-\cor) \cN(\dst v, I) + \cor ( -\dst (1-\cor) \cor^{-1} v, I)$.
  \end{itemize}
  For $D \in \N$, $D > 1$, let $L^{\leq D}$ be the degree-$D$ truncated likelihood ratio for $D_1$ with respect to $D_0$.
  Then
  \[
    \|L^{\leq D} - 1\| \leq D^{O(D)} \cdot \frac{\sqrt{\ns} \dst^2}{\sqrt{\dims}{\cor}} \cdot \exp \left ( \frac {\sqrt{\ns} \dst^2}{\sqrt{\dims} \cor}  + \frac{\dst^2}{\sqrt{\dims} \cor^2} \right ) \, .
  \]
  Consequently, if $\ns \leq o(D^{-O(D)} \cdot \tfrac{\dims \cor^2}{\dst^4})$ and $\ns \geq \sqrt{\dims}/{\dst^2}$ (since otherwise testing is information-theoretically impossible), we have $\|L^{\leq D} - 1 \| \leq o(1)$.
\end{theorem}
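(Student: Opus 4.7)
The plan is to compute $\|L^{\le D}-1\|^2$ via Hermite expansion against the Gaussian base measure $D_0 = \cN(0,I)^{\otimes nd}$. Writing $\gamma = \dst$ and $f_v(x) = (1-\cor)\phi(x-\gamma v) + \cor\,\phi(x+\tfrac{\gamma(1-\cor)}{\cor} v)$, the per-sample likelihood ratio $L_v(x) = f_v(x)/\phi(x)$ expands in the normalized product Hermite basis $\{H_\beta\}_{\beta\in\N^d}$ as
\[
  L_v(x) \;=\; \sum_{\beta\in\N^d} \frac{m_{|\beta|}\,v^\beta}{\sqrt{\beta!}}\,H_\beta(x), \qquad m_k \;\eqdef\; (1-\cor)\gamma^k + \cor\Bigl(-\tfrac{(1-\cor)\gamma}{\cor}\Bigr)^{k},
\]
using the generating-function identity $e^{\langle u,x\rangle - \|u\|^2/2} = \sum_\beta \tfrac{u^\beta}{\sqrt{\beta!}}H_\beta(x)$. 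The critical structural fact is that $m_1 = 0$: the mixture was calibrated to have zero mean, so every per-sample multi-index that contributes nontrivially must have $|\beta|\ge 2$. For $k\ge 2$ one has the clean bound $|m_k|\le 2\gamma^k(1-\cor)^k\cor^{1-k}$.

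Next I would write $L = \E_v \prod_{i=1}^n L_v(X_i)$ and read off Hermite coefficients of $L$ on $\R^{nd}$: for a tuple $\alpha=(\alpha_1,\dots,\alpha_n)$ with $\alpha_i\in\N^d$,
\[
  \hat L(\alpha) \;=\; \E_v\!\left[ v^{\sum_i \alpha_i}\right]\cdot\prod_{i} \frac{m_{|\alpha_i|}}{\sqrt{\alpha_i!}}.
\]
By Parseval,
$\|L^{\le D}-1\|^2 = \sum_{1\le |\alpha|\le D}\hat L(\alpha)^2 = \sum_{k=2}^{D}\,\E_{v,v'}\!\left[\,\sum_{|\alpha|=k}\prod_i \frac{m_{|\alpha_i|}^2\,(v^{\alpha_i})(v'^{\alpha_i})}{\alpha_i!}\right],$
where $v,v'$ are i.i.d.\ uniform on $S^{d-1}$. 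Grouping the indices $i$ by whether $\alpha_i = 0$ or not and using the identity $\sum_{\beta\in\N^d: |\beta|=j}\tfrac{v^\beta v'^\beta}{\beta!} = \tfrac{\langle v,v'\rangle^j}{j!}$, the inner sum collapses to a symmetric function of $t \eqdef \langle v,v'\rangle$: letting $r_j \eqdef m_j^2/j!$ for $j\ge 2$, a clean bookkeeping gives
\[
  \|L^{\le D}-1\|^2 \;=\; \sum_{k=2}^{D}\;\sum_{s=1}^{\lfloor k/2\rfloor}\binom{n}{s}\!\!\sum_{\substack{j_1,\dots,j_s\ge 2\\ j_1+\dots+j_s=k}}\!\!\Bigl(\prod_{\ell}r_{j_\ell}\Bigr)\;\E_{v,v'}\bigl[t^{k}\bigr],
\]
where the constraint $j_\ell\ge 2$ is forced by $m_1=0$, which in turn forces $s\le k/2\le D/2$.

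The main remaining task is the spherical moment $\E[t^{k}]$ for $t=\langle v,v'\rangle$, which satisfies $\E[t^{2q}] = \tfrac{(2q-1)!!}{d(d+2)\cdots(d+2q-2)} \le (2q/d)^q$ and vanishes for odd powers, contributing the crucial $d^{-k/2}$ decay. Substituting $|r_j|\le \tfrac{4\gamma^{2j}}{j!\,\cor^{2j-2}}$ and bounding the inner sum over compositions of $k$ into parts $\ge 2$ by a generating-function argument (the exponential generating function $\sum_{j\ge 2}\tfrac{(\gamma^2/\cor^2)^j z^j}{j!}\le \cor^2 e^{\gamma^2 z/\cor^2}$), I would split the sum into the \emph{collision} regime $s=1$ (a single sample contributes all $k$ degrees) and the \emph{non-collision} regime $s\ge 2$. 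The $s=1$ terms contribute at most $n\cdot\sum_{k\ge 2}\tfrac{\gamma^{2k}}{k!\,\cor^{2k-2} d^{k/2}} \lesssim \tfrac{\gamma^2}{\sqrt d\,\cor^2}\exp(\gamma^2/(\sqrt d\,\cor^2))$ after absorbing $D^{O(D)}$ slack from the $(2q-1)!!$ factors, while the $s\ge 2$ terms sum to $\sum_{s\ge 1}\tfrac{1}{s!}(n\gamma^4/(\cor^2 d))^{s}\lesssim \exp(\sqrt n\gamma^2/(\sqrt d\,\cor))\cdot \sqrt n\gamma^2/(\sqrt d\,\cor)$ (again modulo a $D^{O(D)}$ factor from moment combinatorics), matching the claimed bound.

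The main obstacle I anticipate is the careful combinatorial bookkeeping in the second display above, in particular ensuring the $D^{O(D)}$ factor is not worse: the $(2q-1)!!$ blow-up from the spherical moments and the multinomial coefficients from distributing degree $k$ among $s$ samples must both be controlled uniformly in $k\le D$, and the two regimes $s=1$ versus $s\ge 2$ must be analyzed separately to get the additive-in-exponent form stated. Once these two regimes are bounded, the final implication ($\|L^{\le D}-1\|\le o(1)$ under $n\le o(D^{-O(D)}d\cor^2/\gamma^4)$) follows by observing that both $\tfrac{\sqrt n\gamma^2}{\sqrt d\,\cor}$ and $\tfrac{\gamma^2}{\sqrt d\,\cor^2}$ are $o(1)$ in that regime (the latter via the assumed lower bound $n\ge \sqrt d/\gamma^2$, which implies $\sqrt d\,\cor^2/\gamma^2\ge \cor^2 n\ge 1$ whenever $\cor^2 n\ge 1$; otherwise the Huber model is effectively noiseless and the standard non-robust lower bound applies).
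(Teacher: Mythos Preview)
Your approach is correct and leads to the same bound, but it takes a different route from the paper. The paper invokes the Gaussian additive model black-box (Theorem~2.6 of \cite{kunisky2022notes}): writing $D_1$ as $P + \cN(0,I)^{\otimes nd}$ where $P$ is the distribution of the random mean matrix (each column is $\dst v$ or $-\dst(1-\cor)\cor^{-1}v$), one immediately gets $\|L^{\le D}-1\|^2 = \sum_{t=1}^D \tfrac{1}{t!}\,\E_{X,X'\sim P}\langle X,X'\rangle^t$. Your Hermite expansion is exactly a re-derivation of this formula; indeed, expanding your composition sum $\sum_{j_1+\dots+j_n=k}\prod_i m_{j_i}^2/j_i!$ and comparing with the multinomial expansion of $\E[(\sum_i c_i c_i')^k]$ shows they coincide term by term. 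From there the paper factors $\langle X,X'\rangle = \langle v,w\rangle\cdot Z$ with $Z$ a centered sum of $n$ bounded i.i.d.\ variables (the zero mean is exactly your $m_1=0$), bounds $\E Z^{2t}$ in one shot via Rosenthal's inequality for binomial deviations, and multiplies by the spherical moment. Your plan instead stratifies by the support size $s$ and bounds the composition sum by a generating-function argument; this is equivalent but less direct, and your split into $s=1$ versus $s\ge 2$ corresponds to the two Rosenthal terms ($L^1$-type versus $L^2$-type) in the paper's bound.

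One small slip to watch: your stated bound for the $s=1$ contribution drops the factor of $n$. The $s=1$ sum is $n\cor^2\sum_{k\ge 2}\tfrac{1}{k!}\bigl(\tfrac{\dst^2}{\cor^2\sqrt d}\bigr)^{k}$, whose leading term is $\tfrac{n\dst^4}{\cor^2 d}$, matching the squared prefactor $\bigl(\tfrac{\sqrt n\,\dst^2}{\sqrt d\,\cor}\bigr)^2$ in the statement; the exponential $\exp(\dst^2/(\sqrt d\,\cor^2))$ then comes from the tail of this series. Likewise your $s\ge 2$ estimate should read $\sum_{s\ge 1}\tfrac{1}{s!}\bigl(\tfrac{n\dst^4}{\cor^2 d}\bigr)^{s}$, whose growth is $\exp(n\dst^4/(\cor^2 d))$ rather than $\exp(\sqrt n\,\dst^2/(\sqrt d\,\cor))$; the latter form is what appears after taking square roots for the unsquared norm. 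These are bookkeeping issues, not gaps in the strategy.
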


\begin{proof}
  We define the following auxiliary distribution $P$ over $\ns \times \dims$ matrices -- to draw a sample from $P$, first draw a random unit vector $v \in \R^{\dims}$, then sample each column of $P$ independently to be equal to $\dst \cdot v$ with probability $(1-\cor)$ and otherwise equal to $-\dst (1-\cor) \cor^{-1}$.
  Note that an equivalent way to sample from $D_1$ is to first draw $X \sim P$ and output $X + G$, where $G \in \R^{\ns \times \dims}$ has independent entries distributed as $\cN(0,1)$.
  Thus, $D_1$ fits into the \emph{Gaussian additive model.}
  
  Using Theorem 2.6 of \cite{kunisky2022notes}, 
  concerning low-degree likelihood ratio for Gaussian additive models, we have
  \[
    \|L^{\leq D} - 1\|^2 = \sum_{t = 1}^D \frac 1 {t!} \cdot \E_{X, X' \sim P} \iprod{X,X'}^t \, ,
  \]
  where $X,X'$ are independent draws from $P$ and $\iprod{ \cdot, \cdot}$ is the Euclidean inner product in $\ns \dims$ dimensions.

  Let $v,w$ be the independent random unit vectors associated to separate draws $X,X' \sim P$, let $S_v \subseteq [n]$ be the columns of $X$ equal to $\dst v$, and similarly for $S_w$.
  Then
  \[
  \iprod{X,X'}^t = \iprod{v,w}^t \cdot (\dst^2 |S_v \cap S_w| - \dst^2 (1- \cor) \cor^{-1} (|S_v \cap \overline{S_w}| + |S_w \cap \overline{S_v}) + \dst^2 (1-\cor)^2 \cor^{-2} |\overline{S_v} \cap \overline{S_w}|)^t \, .
  \]
  Furthermore, $v,w$ are independent from $S_v,S_w$, so
  \[
  \E \iprod{X,X'}^t = \E \iprod{v,w}^t \cdot \E (\dst^2 |S_v \cap S_w| - \dst^2 (1- \cor) \cor^{-1} (|S_v \cap \overline{S_w}| + |S_w \cap \overline{S_v}|) + \dst^2 (1-\cor)^2 \cor^{-2} |\overline{S_v} \cap \overline{S_w}|)^t  \, .
  \]
  The whole quantity is equal to zero for odd $t$, and for even $t$ it's at most
  \[
  \frac{O(t)^{t/2}} {\dims^{t/2} } \cdot \dst^{2t} \cdot \E (|S_v \cap S_w| - (1- \cor) \cor^{-1} (|S_v \cap \overline{S_w}| + |S_w \cap \overline{S_v}|) + (1-\cor)^2 \cor^{-2} |\overline{S_v} \cap \overline{S_w}|)^t  \, ,
  \]
  using Fact~\ref{fact:unif-sphere-moments}.
  We have
  \begin{align*}
    \E |S_v \cap S_w| & = n (1-\cor)^2 \\
    \E |S_v \cap \overline{S_w}| & = n \cor (1-\cor) \\
    \E |\overline{S_v} \cap S_w| & = n \cor (1-\cor) \\
    \E |\overline{S_v} \cap \overline{S_w}| & = n \cor^2
  \end{align*}
  and hence in particular
  \[
  \E |S_v \cap S_w| - (1-\cor)\cor^{-1} ( \E |S_v \cap \overline{S_w}| + \E |S_w \cap \overline{S_v}|) + (1-\cor)^2 \cor^{-2} \E |\overline{S_v} \cap \overline{S_w}| = 0 \, .
  \]
  Therefore,
  \begin{align*}
  & \E \iprod{X,X'}^t \leq \frac{O(t)^{t/2} \alpha^{2t}}{d^{t/2}} \cdot \\
   & \sqrt{\E \left ( |S_v \cap S_w| - \E |S_v \cap S_w| \right )^{2t}}\\
   & + \sqrt{(1-\cor)^{2t} \cor^{-2t} \E \left ( |S_v \cap \overline{S_w}| - \E |S_v \cap \overline{S_w}| \right )^{2t}}\\
   &+ \sqrt{(1-\cor)^{4t} \cor^{-4t} \E \left ( |\overline{S_v} \cap \overline{S_w}| - \E |\overline{S_w} \cap \overline{S_v}| \right)^{2t} }
  \end{align*}
  Observe that:
  \begin{itemize}
      \item $|S_v \cap S_w|$ follows a Binomial distribution $\operatorname{Bin}(\ns,(1-\cor)^2)$, so 
      \[
      \E (|S_v \cap S_w| - \E |S_v \cap S_w|)^{2t} \leq O(t)^{2t} \ns (1-\cor)^2 + O(t)^t (\ns (1-\cor)^2 (2 \cor - \cor^2))^{t}
      \]
      using Fact~\ref{fact:binomial-moments}.
      \item $|S_v \cap \overline{S_w}|$ follows a Binomial distribution $\operatorname{Bin}(\ns, \cor(1-\cor))$, so
      \[
        \E (|S_v \cap \overline{S_w}| - \E |S_v \cap \overline{S_w}|)^{2t} \leq O(t)^{2t} \ns \cor (1-\cor) + O(t)^t (\ns \cor (1-\cor) (1 - \cor (1-\cor)))^t
      \]
      using Fact~\ref{fact:binomial-moments}.
      \item $|\overline{S_v} \cap \overline{S_w}|$ follows a Binomial distribution $\operatorname{Bin}(\ns, \cor^2)$, so
      \[
        \E (|\overline{S_v} \cap \overline{S_w}| - \E |\overline{S_v} \cap \overline{S_w}|)^{2t} \leq O(t)^{2t} \ns \cor^2 + O(t)^t (\ns \cor^2 (1-\cor^2))^t
      \]
      using Fact~\ref{fact:binomial-moments}.
  \end{itemize}
  Substituting these moment bounds and simplifying using $t \geq 1$, we get
  \[
    \E \iprod{X,X'}^t \leq \frac{O(t)^{O(t)} \dst^{2t}}{\dims^{t/2}} \left (  \sqrt{\ns} \cdot \eta^{-2t+1}  + \ns^{t/2} \cdot \cor^{-t}  \right ) \, .
  \]
  Summing across $t \in [1,D]$ gives the result.
\end{proof}

\begin{fact}
\label{fact:unif-sphere-moments}
Let $u,v$ be independent random unit vectors in $\dims$ dimensions and let $t \in \N$.
Then $\E \iprod{u,v}^t \leq O(t)^{t/2} \cdot O(\dims)^{-t/2}$.
\end{fact}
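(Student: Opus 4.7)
The plan is to reduce to computing the $t$-th moment of a single coordinate of a uniform random vector on $S^{d-1}$. First I will use rotational invariance of the uniform measure to condition on $v$ and set $v = e_1$ without loss of generality, so that $\langle u, v\rangle = u_1$. For odd $t$ the expectation vanishes by the symmetry $u \mapsto -u$, so I may restrict attention to even $t$.

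The main trick will be the standard Gaussian representation of the uniform measure on the sphere: if $g \sim \cN(0, I_d)$, then $u := g/\|g\|$ is uniform on $S^{d-1}$, and crucially the direction $g/\|g\|$ is independent of the magnitude $\|g\|$. This independence gives the clean factorization $\E[g_1^t] = \E[u_1^t] \cdot \E[\|g\|^t]$, and therefore
\[
\E[u_1^t] \;=\; \frac{\E[g_1^t]}{\E[\|g\|^t]}.
\]

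Both moments on the right are classical. For the numerator, $\E[g_1^t] = (t-1)!! \le t^{t/2}$ is the standard Gaussian moment bound. For the denominator, a direct Gamma-function computation gives
\[
\E[\|g\|^t] \;=\; 2^{t/2}\, \frac{\Gamma(d/2 + t/2)}{\Gamma(d/2)} \;=\; 2^{t/2} \prod_{i=0}^{t/2 - 1} \!\bigl(d/2 + i\bigr) \;\geq\; d^{t/2}
\]
for even $t \geq 0$. Dividing yields $\E[u_1^t] \leq (t/d)^{t/2}$, matching the stated bound $O(t)^{t/2}\cdot O(d)^{-t/2}$. There is no real obstacle here; the only care needed is to lower-bound the Gamma ratio in the correct direction (using $\Gamma(x+k)/\Gamma(x) \geq x^k$ for $x \geq 1$ and integer $k$) and to handle odd $t$ by symmetry rather than attempting an absolute-value bound, which would lose the exact $(t/d)^{t/2}$ rate.
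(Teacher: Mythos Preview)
Your proof is correct. Both you and the paper start from the same Gaussian representation $u = g/\|g\|$ with $g \sim \cN(0,I_d)$, but then diverge. The paper applies Cauchy--Schwarz to $\E[g_1^t/\|g\|^t]$, which forces it to bound the \emph{negative} moment $\E\|g\|^{-2t}$; this is then handled by a somewhat informal concentration argument. Your route is cleaner: you exploit the independence of $g/\|g\|$ and $\|g\|$ to obtain the exact factorization $\E[g_1^t] = \E[u_1^t]\,\E[\|g\|^t]$, so you only need \emph{positive} moments of $\|g\|$, for which the Gamma-ratio lower bound is immediate. This yields the sharp constant $(t/d)^{t/2}$ rather than an unspecified $O(t)^{t/2}O(d)^{-t/2}$, and avoids any tail estimate on small $\|g\|$. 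A minor remark: the inequality $\Gamma(x+k)/\Gamma(x) = x(x+1)\cdots(x+k-1) \geq x^k$ holds for all $x>0$, not just $x\geq 1$, so the restriction you mention is unnecessary.
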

\begin{proof}
    The random variable $\iprod{u,v}$ has the same distribution as $\iprod{x,g} / \|g\|$ where $x$ is any fixed unit vector and $g \sim \cN(0,I)$. We have
    \[
    \E \frac{\iprod{x,g}^t}{\|g\|^t} \leq \left ( \E \iprod{x,g}^{2t} \right )^{1/2} \cdot \left ( \E \|g\|^{-2t} \right )^{1/2} \, .
    \]
    Since $\iprod{x,g}$ is distributed as $\cN(0,1)$, the first term is at most $\sqrt{(2t)^{t}} = O(t)^{t/2}$.
    For the second term, $\|g\| \geq \Omega(\sqrt{\dims})$ with probability at least $0.9$, so $\E \|g\|^{-2t} \leq O(1)^t \cdot \dims^{-t}$.
\end{proof}

The following is a special case of Rosenthal's inequality; see e.g. \cite{pinelis1994optimum}.
\begin{fact}[Moments of binomial distribution]
   \label{fact:binomial-moments}
   There is a constant $C > 0$ such that for all $n,t \in \N$ and $p \in [0,1]$, 
   if $Y \sim \operatorname{Bin}(n,p)$ be a binomial random variable,
   $\E (Y - \E Y)^t \leq (Ct)^t \cdot np + (Ct)^{t/2} \cdot (np(1-p))^{t/2}$.
\end{fact}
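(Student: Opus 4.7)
\medskip

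\noindent\textbf{Proof plan for Fact~\ref{fact:binomial-moments}.} The statement is the special case of Rosenthal's inequality for sums of i.i.d.\ centered Bernoullis, and the cleanest route is to invoke Rosenthal directly. Concretely, write $Y = \sum_{i=1}^n B_i$ where $B_1,\ldots,B_n$ are i.i.d.\ Bernoulli$(p)$, and let $X_i \eqdef B_i - p$. Then the $X_i$ are independent, centered, and almost surely bounded: $|X_i|\leq 1$. Rosenthal's inequality (in the form established in \cite{pinelis1994optimum}, which gives the optimal dependence of the constant on $t$) says that there is a universal $C>0$ with
\[
  \E\Big|\sum_{i=1}^n X_i\Big|^t \;\leq\; (Ct)^{t}\sum_{i=1}^n \E|X_i|^t \;+\; (Ct)^{t/2}\Big(\sum_{i=1}^n \E X_i^2\Big)^{t/2}
\]
for every $t\geq 2$. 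Applying this to our setting reduces the task to elementary moment computations, which I describe next.

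\smallskip

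\noindent For the two moment sums, I would simply compute: $\E X_i^2 = p(1-p)$, hence $\sum_i \E X_i^2 = np(1-p)$, which is exactly the second term in the bound. For the first term, since $|X_i|\leq 1$ we have $|X_i|^t \leq X_i^2$ for $t\geq 2$, so $\E|X_i|^t \leq p(1-p) \leq p$, giving $\sum_i \E|X_i|^t \leq np$. Plugging in yields
\[
  \E|Y-\E Y|^t \;\leq\; (Ct)^{t}\cdot np \;+\; (Ct)^{t/2}\cdot (np(1-p))^{t/2},
\]
and since $\E(Y-\E Y)^t \leq \E|Y-\E Y|^t$ for all $t\in\N$ (with equality for even $t$), this is the claimed bound. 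The base case $t=1$ is trivial because $\E(Y-\E Y) = 0$.

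\smallskip

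\noindent\textbf{Where the work actually lives.} No part of this proof is genuinely hard; the only nontrivial ingredient is Rosenthal's inequality with the stated $(Ct)^t$ and $(Ct)^{t/2}$ constants. The sharp constants (of order $t/\log t$ or $\sqrt{t/\log t}$ respectively, improved to matching the Gaussian second-term scaling) are exactly what is provided by \cite{pinelis1994optimum}; weaker constants would suffice for the downstream applications in the proof of Theorem~\ref{thm:lb:lowdegree} but would clutter the statement, so it is cleanest to cite the sharp form directly. If one wanted a fully self-contained proof of Rosenthal for bounded variables, the standard route is symmetrization followed by a Khintchine-type bound on Rademacher sums combined with Hoffmann--J\o{}rgensen's inequality, or alternatively a direct induction on $t$ using the combinatorial expansion of $\E(Y-np)^{2t}$ in terms of set partitions. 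Either route recovers the same $(Ct)^t$ dependence; the latter is a touch cleaner for binomials specifically since all joint cumulants of the $X_i$ are known explicitly, but both are entirely routine.
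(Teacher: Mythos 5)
Your proposal matches the paper's approach exactly: the paper simply cites Rosenthal's inequality (in the form from \cite{pinelis1994optimum}) applied to centered Bernoullis, and you do the same, with the added courtesy of spelling out the elementary moment bounds $\sum_i \E|X_i|^t \leq np$ and $\sum_i \E X_i^2 = np(1-p)$. Your argument is correct and complete.
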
 
\section*{Acknowledgments}

The authors would like to thank Guy Blanc and Gautam Kamath for some helpful suggestions. 

\printbibliography

\appendix

\section{Mathematica code to verify the computation from~\cref{sec:final:computation:mathematica}}
    \label{app:mathematica}
For completeness, we here provide some Mathematica code which can be used to verify the computations from the proof of~\cref{lemma:tedious:compputation}:
\begin{lstlisting}[extendedchars=true,language=Mathematica]
s11[\[Alpha]_, \[Beta]_, \[Epsilon]_, n_, d_] = 
  2*(1 - \[Epsilon])*n*\[Beta]^2*d;
s12[\[Alpha]_, \[Beta]_, \[Epsilon]_, n_, 
   d_] = (1 - \[Epsilon])*n*\[Beta]^2*d + \[Beta]*d;
s13[\[Alpha]_, \[Beta]_, \[Epsilon]_, n_, 
   d_] = (1 - \[Epsilon])*n*\[Beta]^2*d + \[Beta]*d;
s14[\[Alpha]_, \[Beta]_, \[Epsilon]_, n_, d_] = 2*\[Beta]*d;
s21[\[Alpha]_, \[Beta]_, \[Epsilon]_, n_, 
   d_] = (1 - \[Epsilon])*n*\[Beta]^2*d + \[Beta]*d;
s22[\[Alpha]_, \[Beta]_, \[Epsilon]_, n_, 
   d_] = (1 - \[Epsilon])*n*\[Beta]^2*d - \[Alpha]^2;
s23[\[Alpha]_, \[Beta]_, \[Epsilon]_, n_, d_] = 2*\[Beta]*d;
s24[\[Alpha]_, \[Beta]_, \[Epsilon]_, n_, d_] = \[Beta]*d - \[Alpha]^2;
s31[\[Alpha]_, \[Beta]_, \[Epsilon]_, n_, 
   d_] = (1 - \[Epsilon])*n*\[Beta]^2*d + \[Beta]*d;
s32[\[Alpha]_, \[Beta]_, \[Epsilon]_, n_, d_] = 2*\[Beta]*d;
s33[\[Alpha]_, \[Beta]_, \[Epsilon]_, n_, 
   d_] = (1 - \[Epsilon])*n*\[Beta]^2*d - \[Alpha]^2;
s34[\[Alpha]_, \[Beta]_, \[Epsilon]_, n_, d_] = \[Beta]*d - \[Alpha]^2;
s41[\[Alpha]_, \[Beta]_, \[Epsilon]_, n_, d_] = 2*\[Beta]*d;
s42[\[Alpha]_, \[Beta]_, \[Epsilon]_, n_, d_] = \[Beta]*d - \[Alpha]^2;
s43[\[Alpha]_, \[Beta]_, \[Epsilon]_, n_, d_] = \[Beta]*d - \[Alpha]^2;
s44[\[Alpha]_, \[Beta]_, \[Epsilon]_, n_, d_] = -2*\[Alpha]^2;
\end{lstlisting}
\begin{lstlisting}[extendedchars=true,language=Mathematica]
diag[\[Epsilon]_, \[Gamma]_, n_] = 
  DiagonalMatrix[{\[Gamma]*n, (\[Epsilon] - \[Gamma])*
     n, (\[Epsilon] - \[Gamma])*n, (1 - 2*\[Epsilon] + \[Gamma])*n}];
I4 = DiagonalMatrix[{1, 1, 1, 1}];
sigma[\[Alpha]_, \[Beta]_, \[Epsilon]_, n_, 
   d_] = {{s11[\[Alpha], \[Beta], \[Epsilon], n, d], 
    s12[\[Alpha], \[Beta], \[Epsilon], n, d], 
    s13[\[Alpha], \[Beta], \[Epsilon], n, d], 
    s14[\[Alpha], \[Beta], \[Epsilon], n, 
     d]}, {s21[\[Alpha], \[Beta], \[Epsilon], n, d], 
    s22[\[Alpha], \[Beta], \[Epsilon], n, d], 
    s23[\[Alpha], \[Beta], \[Epsilon], n, d], 
    s24[\[Alpha], \[Beta], \[Epsilon], n, 
     d]}, {s31[\[Alpha], \[Beta], \[Epsilon], n, d], 
    s32[\[Alpha], \[Beta], \[Epsilon], n, d], 
    s33[\[Alpha], \[Beta], \[Epsilon], n, d], 
    s34[\[Alpha], \[Beta], \[Epsilon], n, 
     d]}, {s41[\[Alpha], \[Beta], \[Epsilon], n, d], 
    s42[\[Alpha], \[Beta], \[Epsilon], n, d], 
    s43[\[Alpha], \[Beta], \[Epsilon], n, d], 
    s44[\[Alpha], \[Beta], \[Epsilon], n, d]}};
final[\[Alpha]_, \[Beta]_, \[Epsilon]_, n_, d_, \[Gamma]_] = 
  I4 + Dot[diag[\[Epsilon], \[Gamma], n], 
     sigma[\[Alpha], \[Beta], \[Epsilon], n, 
      d]]/((1 - \[Epsilon])*\[Alpha]^2*n + d);
\end{lstlisting}
\begin{lstlisting}[extendedchars=true,language=Mathematica]
FullSimplify[( (d + \[Beta]^2 d (\[Epsilon]^2 \[Minus] \[Gamma]) \
n^2)^2 \[Minus] (\[Alpha]^2 n \[Minus] 
       2 \[Beta] d \[Epsilon] n + \[Beta]^2 d \[Epsilon] n^2 \[Minus] 
       2 \[Alpha]^2 \[Epsilon] n + \[Alpha]^2 \[Gamma] n + 
       2 \[Beta] d \[Gamma] n \[Minus] 
       2 \[Beta]^2 d \[Epsilon]^2 n^2 + \[Beta]^2 d \[Epsilon]\
  \[Gamma]  n^2)^2 ) == 
  Factor[Det[
     final[\[Alpha], \[Beta], \[Epsilon], n, 
      d, \[Gamma]]]*(d + \[Alpha]^2*n - \[Alpha]^2 *\[Epsilon]*
        n)^2] ] 
\end{lstlisting} 
\end{document}